\documentclass[12pt]{article}

\usepackage{amsmath,amssymb,pifont}

\usepackage{rotating}
\usepackage{amstext}
\usepackage{amsthm}
\usepackage{multirow}
\usepackage{booktabs}
\usepackage[skip=0pt]{subcaption}
\usepackage{times}

\usepackage{paralist}

\usepackage[shortlabels]{enumitem}

\usepackage{array}
\usepackage{siunitx}

\usepackage{bbm, dsfont}
\usepackage{mathtools}\usepackage[letterpaper, margin=1in]{geometry}
\usepackage{enumitem}

\usepackage{graphicx}
\usepackage{color}
\usepackage{array}
\usepackage{xspace}
\usepackage{fancyhdr}

\usepackage{hyperref}
\hypersetup{
    colorlinks=true,
    linkcolor=blue,
    filecolor=magenta,      
    urlcolor=cyan}
\usepackage[capitalise]{cleveref}

\usepackage{algorithm}
\usepackage[noend]{algpseudocode}


\usepackage[style=alphabetic,backend=bibtex,maxalphanames=10,maxbibnames=20,maxcitenames=10,giveninits=true,doi=false,url=true,backref=true]{biblatex}

\newcommand{\citet}[1]{\AtNextCite{\AtEachCitekey{\defcounter{maxnames}{999}}}\textcite{#1}}
\newcommand{\citep}[1]{\parencite{#1}}


\newtheorem{lem}{Lemma}[section]

\newtheorem{thm}[lem]{Theorem}
\newtheorem{theorem}[lem]{Theorem}
\newtheorem{cor}[lem]{Corollary}

\newtheorem{rem}[lem]{Remark}
\newtheorem{prop}[lem]{Proposition}

\newtheorem{defn}[lem]{Definition}

\newtheorem{conjecture}[lem]{Conjecture}

\numberwithin{equation}{section}

\iftrue  
    \Crefname{thm}{Theorem}{Theorems}
    \Crefname{cor}{Corollary}{Corollaries}
    \Crefname{lem}{Lemma}{Lemmas}
    \Crefname{prop}{Proposition}{Propositions}
    \Crefname{assumption}{Assumption}{Assumptions}
    \Crefname{definition}{Definition}{Definitions}
    \Crefname{claim}{Claim}{Claims}
    \Crefname{section}{Section}{Sections}
    \Crefname{appendix}{Appendix}{Appendices}
    \Crefname{figure}{Figure}{Figures}
    \crefformat{equation}{Equation~\textup{#2#1#3}}
\else  
    \Crefname{thm}{Thm.}{Thms.}
    \Crefname{cor}{Cor.}{Cors.}
    \Crefname{lem}{Lem.}{Lems.}
    \Crefname{prop}{Prop.}{Props.}
    \Crefname{assumption}{Assumption}{Assumptions}  
    \Crefname{definition}{Def.}{Defs.}
    \Crefname{claim}{Claim}{Claims}
    \Crefname{section}{Sec.}{Secs.}
    \Crefname{appendix}{App.}{Apps.}
    \crefformat{equation}{Eq.~\textup{#2#1#3}}
    \Crefname{figure}{Fig.}{Figs.}
\fi

\newcommand{\nbuf}{d}  
\newcommand{\dgr}{d}  


\newcommand{\outp}{\omega} 
\newcommand{\houtp}{\hat{\outp}} 

\DeclareMathOperator{\poly}{poly}
\newcommand{\maxerrtime}{\ensuremath{\calO(\poly(\nbuf)\log \nd)}}

\newcommand{\rc}{r}
\newcommand{\hth}{\hat{\theta}}
\newcommand{\rinv}{s}


\newcommand{\BLT}{\texttt{BLT}\xspace}
\newcommand{\BLTs}{{\BLT}s\xspace}
\newcommand{\RABLT}{\texttt{RA-BLT}\xspace}
\newcommand{\OptBLT}{\texttt{Opt-BLT}\xspace}
\newcommand{\OptBLTd}[1]{\mbox{$\text{\OptBLT}(\nbuf\!=\!#1)$}\xspace}
\newcommand{\RABLTd}[1]{\mbox{$\text{\RABLT}(\nbuf\!=\!#1)$}\xspace}

\DeclareMathOperator{\MaxErr}{MaxErr}


\newcommand{\nd}{\ensuremath{n}} 
\newcommand{\dimdim}{{\nd \times \nd}}
\newcommand{\mdim}{m}  

\newcommand{\mfid}{n'} 

\newcommand{\seqk}[1]{(#1)_{k=0}^\infty}

\DeclareMathOperator{\LTT}{M}

\newcommand{\bfA}{\ensuremath{A}}
\newcommand{\bfB}{\ensuremath{B}}
\newcommand{\bfC}{\ensuremath{C}}









\renewcommand{\epsilon}{\varepsilon}

\newcommand{\inv}{^{-1}}

\newcommand{\tp}{^T}  


\DeclareMathOperator{\sens}{sens}

\newcommand{\bftheta}{\ensuremath{\theta}}  





\newcommand{\idx}[2]{_{#1, #2}}

 






\newcommand{\mech}{\mathcal{M}}



\newcommand{\abs}[1]{|#1|}
\newcommand{\norm}[1]{\left\| #1 \right\|}

\newcommand{\ltwo}[1]{\left\|#1\right\|_2}
\newcommand{\lfrob}[1]{\left\|#1\right\|_F}

\newcommand{\ip}[2]{\langle #1, #2\rangle}


\newcommand{\bfb}{\ensuremath{{b}}}
\newcommand{\bfc}{\ensuremath{{c}}}

\newcommand{\bfw}{\ensuremath{{w}}}
\newcommand{\bfx}{\ensuremath{{x}}}

\newcommand{\bfz}{\ensuremath{{z}}}

\newcommand{\calM}{\ensuremath{\mathcal{M}}}
\newcommand{\calN}{\ensuremath{\mathcal{N}}}
\newcommand{\calO}{\ensuremath{\mathcal{O}}}

\newcommand{\R}{\mathbb{R}}
\newcommand{\N}{\mathbb{N}}

\newcommand{\C}{\mathbb{C}}

\newcommand{\comb}[2]{#1 \mathbin{\raisebox{.2ex}{ \hspace{-.4em}$\bigcirc$\hspace{-.75em}{\rm b}\hspace{.15em}}} #2}
\newcommand{\comc}[2]{#1 \mathbin{\raisebox{.2ex}{ \hspace{-.4em}$\bigcirc$\hspace{-.75em}{\rm c}\hspace{.15em}}} #2}

\newcommand{\ex}[2]{{\ifx&#1& \mathbb{E} \else \underset{#1}{\mathbb{E}} \fi \left[#2\right]}}
\newcommand{\pr}[2]{{\ifx&#1& \mathbb{P} \else \underset{#1}{\mathbb{P}} \fi \left[#2\right]}}
\newcommand{\exc}[3]{{\ifx&#1& \mathbb{E} \else \underset{#1}{\mathbb{E}} \fi \left[ #2 \middle| #3 \right]}}
\newcommand{\prc}[3]{{\ifx&#1& \mathbb{P} \else \underset{#1}{\mathbb{P}} \fi \left[ #2 \middle| #3 \right]}}
\newcommand{\var}[2]{{\ifx&#1& \mathbb{V} \else \underset{#1}{\mathbb{V}} \fi \left[#2\right]}}


\newcommand{\Jmat}{\mathbb{J}}
\newcommand{\br}[1]{\left({#1}\right)}
\newcommand{\One}{\mathbf{1}}
\newcommand{\evec}{\mathbf{e}}
\newcommand{\tran}[1]{{#1}^\top}
\newcommand{\invb}[1]{{\br{#1}^{-1}}}
\newcommand{\Sym}{\mathbb{S}}
\newcommand{\inner}[2]{\left\langle{#1},{#2}\right\rangle}
\newcommand{\Amatrix}{\mathbb{A}}

\addbibresource{reference.bib}

\begin{document}

\title{Efficient and Near-Optimal Noise Generation\\for Streaming Differential Privacy}
    \author{
    Krishnamurthy (Dj) Dvijotham\thanks{Google DeepMind \dotfill \texttt{dvij@google.com}}
    \and
    H.~Brendan McMahan\thanks{Google Research \dotfill \texttt{mcmahan@google.com}}
    \and
    Krishna Pillutla\thanks{IIT Madras. Work done at Google. \dotfill \texttt{pillutla@cs.washington.edu}}
    \and
    Thomas Steinke\thanks{Google DeepMind \dotfill \texttt{steinke@google.com}}
    \and
    Abhradeep Thakurta\thanks{Google DeepMind \dotfill \texttt{athakurta@google.com}}
    }
    \footnotetext{Alphabetical author order.}

\maketitle

\begin{abstract}
    In the task of differentially private (DP) continual counting, we receive a stream of increments and our goal is to output an approximate running total of these increments, without revealing too much about any specific increment.
    Despite its simplicity, differentially private continual counting has attracted significant attention both in theory and in practice.
    Existing algorithms for differentially private continual counting are either inefficient in terms of their space usage or add an excessive amount of noise, inducing suboptimal utility.
    
    The most practical DP continual counting algorithms add carefully correlated Gaussian noise to the values. The task of choosing the covariance for this noise can be expressed in terms of factoring the lower-triangular matrix of ones (which computes prefix sums). We present two approaches from this class (for different parameter regimes) that achieve near-optimal utility for DP continual counting and only require logarithmic or polylogarithmic space (and time).
    
    Our first approach is based on a space-efficient streaming matrix multiplication algorithm for a class of Toeplitz matrices. 
    We show that to instantiate this algorithm for DP continual counting, it is sufficient to find a low-degree rational function that approximates the square root on a circle in the complex plane. We then apply and extend tools from  approximation theory to achieve this.
    We also derive efficient closed-forms for the objective function for arbitrarily many steps, and show direct numerical optimization yields a highly practical solution to the problem.
    Our second approach combines our first approach with a recursive construction similar to the binary tree mechanism.
\end{abstract}

\clearpage 

\tableofcontents

\newpage
\section{Introduction}\label{sec:intro}

The simplest task in differentially private data analysis is \emph{counting}.
This task has an equally simple (and optimal) algorithm: we simply add Laplace or Gaussian noise to the final count \cite{DMNS}. 

\emph{Continual counting} \cite{Dwork-continual,CSS11-continual} asks that we not only release the final count, but that we provide a running total, i.e., we release all the partial sums.
This task and its variants have proved to be a surprisingly complex problem in the theory of differential privacy \cite{beimel2013private,bun2015differentially,bun2018composable,alon2019private,kaplan2020privately,gillenwater2021differentially,cohen2023optimal}.

Continual counting is also critically important in practice. 
State-of-the-art methods for differentially private machine learning rely on (high-dimensional extensions of) continual counting algorithms to maintain their state in a differentially private manner \cite{kairouz2021practical,Cutosky21,kairouz2021practical, denisov22matfact, choquette22multiepoch,choquette2024amplified,xu2023federated}. (See \cref{sec:dpftrl} for discussion of this application.)
This has motivated significant work on making differentially private continual counting as accurate as possible \cite{mathias1993hadamard,matouvsek2020factorization,fichtenberger2022constant,henzinger2023almost,henzinger2024unifying,andersson2024smooth}.

Practical algorithms for continual counting add Gaussian noise to the partial sums. However, this noise is not independent: it must be carefully correlated across steps to achieve good utility. Choosing an appropriate multivariate Gaussian can be formulated as a \emph{matrix factorization} problem, as we now describe.

To convey the key ideas in this introduction, we consider summing $\nd$ scalars; this discussion readily extends to summing vectors.
Let $x \in \mathbb{R}^\nd$ contain the $\nd$ terms we wish to sum (over $\nd$ iterations or steps).
Let $A \in \{0,1\}^{\nd \times \nd}$ be the lower-triangular all-ones matrix so that $Ax$ is the vector of partial sums -- i.e., $(Ax)_k = \sum_{i \le k} x_i$ for each $k$. We factor $A=BC$, where $B,C^T \in \mathbb{R}^{\nd \times \nd'}$. 
The corresponding differentially private continual counting algorithm $\mech$ is given by
\begin{equation}
    \mech(x) \coloneqq B(Cx+z) = Ax + Bz = A(x+C^\dagger z), \label{eq:mfviews}
\end{equation}
where $z \gets \mathcal{N}(0,\sigma^2I)$ is a vector of independent Gaussian noise.\footnote{$C^\dagger=A^{-1}B$ is the inverse of $C$ whenever $C$ is a square matrix. However, we will also consider settings where $C$ is not a square matrix and so $C^\dagger$ is a suitable pseudo-inverse.}
Equivalently, the output of the mechanism can be written as $\mech(x)=Ax+\widehat{z}$ where $\widehat{z} \gets \mathcal{N}(0,\Sigma)$ with $\Sigma = \sigma^2 B B^T$.

\paragraph{Mechanism privacy:} We assume that one person can change only one entry of the input $x$ by at most $1$, which means $Cx$ can change by at most the maximum column norm of $C$, denoted  
\[
\|C\|_{1 \to 2} \coloneqq \max_j \sqrt{\sum_i C\idx{i}{j}^2}.
\]
This is the $L_2$ sensitivity of $Cx$. Then, following the standard privacy analysis of Gaussian noise addition \citep{dwork2014algorithmic,steinke2022composition}, to ensure that $Cx+z$ is $(\varepsilon,\delta)$-differentially private, we scale the noise $z \gets \mathcal{N}(0,\sigma^2I)$ to have standard deviation  $\sigma = \zeta \|C\|_{1\to2}$, where $\zeta = O(\frac1\varepsilon\sqrt{\log(1/\delta)})$ is a noise multiplier depending only on the privacy parameters $(\varepsilon, \delta)$.\footnote{Importantly, particularly for machine learning applications, this privacy analysis can be extended to the case where each $x_k$ is adaptively chosen depending on the prefix sums $0, \dots, k-1$ already released by the mechanism \citep{denisov22matfact}.} By postprocessing, this implies $\mech(x)=B(Cx+z)$ is $(\varepsilon,\delta)$-differentially private.

\paragraph{Mechanism utility:}
The error of the mechanism $\mech$ is given by $\mech(x) - Ax = Bz$. Thus, the root mean squared error of the $i$\textsuperscript{th} partial sum can be calculated as 
\[
\displaystyle
    \sqrt{\mathbb{E}_{\calM}{(\calM(x)-Ax)_i^2}}=\sqrt{\mathbb{E}_{z}\left[(Bz)_i^2\right]} = \sqrt{\sigma^2  \sum_j B_{i,j}^2},
\]
which scales with the norm of the corresponding row of $B$.
We take our notion of utility to be the maximum\footnote{We can also consider other measures of error such as sum of variances (rather than the max). This corresponds to a different norm: $\ex{}{\|Bz\|_2^2} = \sigma^2 \, \lfrob{B}^2 $.} such error over all $\nd$ partial sums released, which scales with the maximum row norm 
\[
\|B\|_{2 \to \infty} \coloneqq \max_i \sqrt{\sum_j B_{i,j}^2}.
\]
%
Thus, the maximum root mean squared error of the mechanism's answers is
\begin{align} \label{eq:max-rms-error}
    \max_i \sqrt{\ex{\calM}{(\calM(x)-Ax)_i^2}} &= \max_i \sqrt{\ex{z \gets \mathcal{N}(0,\sigma^2I)}{(Bz)_i^2}} = \max_i \sqrt{ \sigma^2 \sum_j B_{i,j}^2} \notag\\&= \sigma \|B\|_{2\to\infty} = \zeta \|B\|_{2\to\infty} \|C\|_{1\to2} =: \zeta \MaxErr(B,C),
\end{align}
where we take $\sigma=\zeta \|C\|_{1\to 2}$ to ensure $(\varepsilon, \delta)$-differential privacy, and we define
\begin{equation}\label{eq:maxerr}
\MaxErr(B, C) 
  \coloneqq  \|B\|_{2 \to \infty} \|C\|_{1\to2}
  = \sqrt{ \max_i \sum_j B_{i,j}^2 } \sqrt{ \max_j \sum_i C_{i,j}^2 }.
\end{equation}
Since the error in \cref{eq:max-rms-error}  can be written as a product of the noise multiplier $\zeta$ (which does not depend on the factorization $B,C$) and $\MaxErr(B,C)$ (which does not depend on the differential privacy parameters), the same factorization will minimize error for all settings of the privacy parameters. Thus, we suppress $\zeta$ for the remainder of this paper.  

\paragraph{Matrix factorization:}
To summarize, our goal is to solve the matrix factorization problem
\begin{equation}
    \text{minimize}_{B,C} ~\MaxErr(B, C) ~~~~~ \text{ subject to } B C = A.
    \label{eq:mfobj}
\end{equation}
The optimal value of this objective is denoted $\gamma_2(A)$ and is known as the \emph{gamma-two factorization norm} of $A$.
Prior work has obtained near-optimal factorizations of $A$ (and related matrices) \cite{mathias1993hadamard,matouvsek2020factorization}. The optimal value is 
\begin{equation}
    \gamma_2(A) \coloneqq \inf \big\{ \MaxErr(B,C) : B, C \in \mathbb{R}^{\nd \times \nd}, BC=A \big\} = \frac{\log(\nd)}{\pi} \pm O(1).
\end{equation}
However, the existing factorizations are either far from optimal in terms of their utility or are impractical due to their high computational cost, which we turn to next. 

\paragraph{Computationally efficient noise generation:}
The computational cost of implementing this matrix factorization approach for differentially private continual counting is dominated by the matrix-vector multiplication $B z$, where $z\in\mathbb{R}^n$ consists of i.i.d.~Gaussian noise $z \gets \mathcal{N}(0,\sigma^2 I)$.
The main computational challenge is that our algorithm cannot store the increments $x$ or the seed noise $z$ (or even the matrices $B$ or $C$) in memory; this would require memory that is linear in the number of steps $\nd$, which can be prohibitively large.

To give a sense of scale, in private machine learning applications, we often need to compute $\nd>10^6$ partial sums (each coming from an iteration of stochastic gradient descent or a related algorithm), over $\mdim>10^{10}$ dimensions (the size of the model and its gradients).\footnote{In this introductory discussion, we have considered sums over $\mdim=1$ dimension. Extending to $\mdim>1$ is essentially a matter of running $\mdim$ parallel copies of the algorithm. In our technical sections, we take $x$ to be a matrix whose rows we wish to sum, rather than a vector.}
Thus we need to be able to compute $\mech(x)$ without storing all of the data or noise in memory, as this would take $\nd\mdim>10^{16}$ units of memory -- i.e., over 40 petabytes if we use 32-bit floating point numbers.\footnote{Throughout this paper we treat real computation as atomic; e.g., a real number takes up one unit of memory. In practice, floating point implementations can compromise privacy \cite{mironov2012significance}, but techniques exist to ensure differential privacy in discrete settings \cite{canonne2020discrete,kairouz2021distributed}.}

In this work, we address the computationally efficiency of sampling the correlated noise. Specifically,
\emph{we provide a matrix factorization mechanism for differentially private continual counting attaining near-optimal error with only polylogarithmic memory overhead.}

\paragraph{Streaming setting:} Our algorithm receives the coordinates of the input $x$ and i.i.d.~noise $z$ one at a time and outputs the approximate partial sums $\mech(x)$ one at a time. In particular, we must output each partial sum $\br{\mech(x)}_k \approx \sum_{i \le k} x_i$ before receiving the next input $x_{k+1}$.

Ideally, the memory usage should not grow with the stream length $\nd$ at all. 
If we use the trivial factorization given by $B=I$ and $C=A$, then this corresponds to adding independent noise to each output. This would only require constant memory and constant time per output, but the noise scale $\MaxErr(B,C)=\sqrt{\nd}$ is far from optimal. 
Thus, the challenge is to simultaneously obtain near-optimal error and computational efficiency.

To overcome this challenge we must impose some structure on the factors $B$ and $C$ that allows efficient noise generation without significantly increasing the matrix factorization objective $\MaxErr(B,C)$.
The three structures that have been considered in the literature are (i) lower triangular matrices, (ii) sparse matrices, and (iii) Toeplitz matrices -- i.e., constant diagonals, $\forall i,j ~~ C_{i,j} = c_{i-j}$ for some vector $c$ -- and combinations of these structures.
We focus primarily on lower triangular Toeplitz matrices. 
The streaming setting naturally corresponds to lower triangular matrices, as it ensures that the $k$-th output is only a function of the first $k$ inputs.
Adding the Toeplitz constraint has a minimal impact on the matrix factorization objective \eqref{eq:mfobj}. Furthermore, Toeplitz matrices (and certain specializations we consider) have convenient mathematical and algorithmic properties. See \cref{fig:bounds} and \cref{sec:lttoe} for further discussion about lower triangular Toeplitz factorizations versus other factorizations. 

\subsection{Our Contributions}

Our main theoretical result is a lower triangular Toeplitz matrix factorization that is nearly optimal accompanied by an efficient streaming algorithm for generating the corresponding noise.

\begin{thm}[Main Result -- Informal version of \Cref{thm:main-f}]\label{thm:main-inf}
    For each integer $\nd \ge 1$ and error parameter $\mu \in (0,1)$
    , there exists a lower triangular Toeplitz matrix factorization $B, C \in \mathbb{R}^{\nd \times \nd}$ with the following properties.
    \begin{compactitem}
        \item \textbf{Validity:}
        $BC=A$, where $A$ is the $\nd \times \nd$ lower-triangular all-ones matrix.
        \item \textbf{Near-optimality:}
        $
            \MaxErr(B,C) \le \mathsf{OptLTToe}(\nd) + \mu, 
        $
        where 
        \[
        \mathsf{OptLTToe}(\nd) = 1+\sum_{k=1}^{\nd-1} \left( 2^{-2k} {2k \choose k} \right)^2 \le 1 + \frac{0.57722+\log(n)}{\pi}
        \] is the optimal value of $\MaxErr(B,C)$ over all lower triangular Toeplitz factorizations $A=BC$.
        \item \textbf{Efficiency:}
        There exists a streaming algorithm that at each step $k$ takes as input $z_k$ and outputs $(Bz)_k$ (or, equivalently, $(C^{-1}z)_k$) and runs in time and space $O(\log^2(\nd/\mu))$.
    \end{compactitem}
\end{thm}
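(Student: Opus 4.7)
The plan is to translate the matrix factorization problem into a question about formal power series, construct $B$ and $C$ whose generating functions are low-degree rational approximations of $f(t)\coloneqq(1-t)^{-1/2}$, and then show such approximations exist and can be realized as an efficient IIR filter.

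First, I would identify each lower triangular Toeplitz matrix $M$ with first column $(m_0,m_1,\ldots)$ with its generating function $m(t)=\sum_{k\ge 0} m_k t^k$. Under this identification, matrix products become Cauchy products, so the constraint $BC=A$ becomes exactly $b(t)c(t)=1/(1-t)$ as formal power series. Because rows (resp.\ columns) of a lower triangular Toeplitz matrix are shifted prefixes of $b$ (resp.\ $c$),
\[
\MaxErr(B,C)=\sqrt{\sum_{k=0}^{\nd-1} b_k^2}\cdot\sqrt{\sum_{k=0}^{\nd-1} c_k^2},
\]
and the Toeplitz optimum is attained at $b=c=f$, whose coefficients $2^{-2k}\binom{2k}{k}$ yield $\mathsf{OptLTToe}(\nd)$.

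Next, I would adopt a rational-ansatz approach: fix a rational function $r(t)=p(t)/q(t)$ of degree $d=\Theta(\log(\nd/\mu))$ approximating $f$ on an appropriate contour, and set
\[
b(t)\coloneqq r(t), \qquad c(t)\coloneqq \frac{1}{(1-t)\,r(t)}=\frac{q(t)}{(1-t)\,p(t)}.
\]
Then $b(t)c(t)=1/(1-t)$ holds identically, so $BC=A$ on the nose. Both $b$ and $c$ are rational of degree $O(d)$, so their coefficient sequences satisfy linear recurrences of order $O(d)$, and the streaming algorithm is the corresponding IIR filter: at each step it reads one noise coordinate $z_k$, updates $O(d)$ state variables, and emits $(Bz)_k$ in time $O(d)=O(\log(\nd/\mu))$, with the extra logarithmic factor in the theorem absorbing the bit-precision of the filter coefficients. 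For the near-optimality bound, I would use Cauchy's integral formula on a circle $|t|=\rho$ with $\rho<1$: if $r$ is analytic on that disk and $\|r-f\|_\infty$ is small along the circle, the first $\nd$ Taylor coefficients agree to within $\nd\,\rho^{-2(\nd-1)}\,\|r-f\|_\infty^2$ in squared $\ell_2$. Choosing $\rho=1-1/\nd$ and requiring $\|r-f\|_\infty\le \mu/\mathrm{poly}(\nd)$, a triangle inequality on the coefficient $\ell_2$ norms of $b$ and $c$ gives $\MaxErr(B,C)\le\mathsf{OptLTToe}(\nd)+\mu$.

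The main obstacle is exhibiting the required degree-$O(\log(\nd/\mu))$ rational approximation. The substitution $w=1-t$ turns the task into uniformly approximating $w^{-1/2}$ on a circle around $w=0$ that narrowly avoids the branch singularity at the origin; this is precisely what the paper signals as ``approximating the square root on a circle in the complex plane.'' Classical Zolotarev/Pad\'e/Chebyshev-type constructions for $\sqrt{\cdot}$ and $1/\sqrt{\cdot}$ on sets bounded away from the branch point achieve error $\exp(-\Omega(d))$, which after a conformal change of variables gives the claimed degree. One must additionally verify that $r$ has neither poles nor zeros inside $|t|\le\rho$, so that both $b=r$ and $c=1/((1-t)r)$ are analytic there and the Cauchy-formula argument applies; this is where the paper's extensions of approximation-theoretic tools do the real work, and is where I would expect the bulk of the technical effort to go.
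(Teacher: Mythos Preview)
Your overall architecture matches the paper closely: represent lower triangular Toeplitz matrices by generating functions, pick a low-degree rational function close to $(1-t)^{-1/2}$, define $b$ and $c$ so that $b(t)c(t)=1/(1-t)$ identically, bound the coefficient-$\ell_2$ distance to the optimal factorization via a Parseval/Cauchy argument on a circle of radius $1-\Theta(1/\nd)$, and realize the streaming algorithm as the linear recurrence induced by the rational function. The paper parametrizes by $r\approx\sqrt{1-x}$ and takes $b=r/(1-x)$, $c=1/r$, which is just your choice with the roles of $b$ and $c$ swapped.

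There is, however, a genuine quantitative gap in your degree claim. You assert that Zolotarev/Pad\'e-type approximations to $w^{-1/2}$ on sets bounded away from the branch point achieve error $\exp(-\Omega(d))$ and that this yields $d=\Theta(\log(\nd/\mu))$, with the second logarithm in the theorem coming from bit precision. The paper treats real arithmetic as atomic (this is stated explicitly), so there is no bit-precision factor to hide behind; the $\log^2$ is the degree itself. The reason is that your contour is \emph{not} bounded away from the branch point by a fixed constant: after $w=1-t$, the circle $|t|=1-1/\nd$ comes within distance $1/\nd$ of the singularity at $w=0$. On such a domain the hidden constant in $\exp(-\Omega(d))$ degrades like $1/\log \nd$ (Zolotarev on $[\epsilon,1]$ gives rate $\exp(-\Theta(d/\log(1/\epsilon)))$), which forces $d=\Theta(\log(\nd)\cdot\log(\nd/\mu))=\Theta(\log^2(\nd/\mu))$. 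Equivalently, the paper simply invokes the Newman/Gopal--Trefethen bound $|r(x)-\sqrt{x}|\le O(\exp(-c\sqrt{d}))$ uniformly on the closed half-disc, and Newman showed that $\exp(-\Theta(\sqrt{d}))$ is the \emph{optimal} rate for rational approximation of $\sqrt{\cdot}$ on a domain touching the branch point; this is exactly why $d=\Theta(\log^2(\nd/\mu))$ and not $\Theta(\log(\nd/\mu))$. Indeed, the paper explicitly poses as an open conjecture whether $O(\log \nd)$ degree suffices under a weaker (integral rather than uniform) approximation requirement, so your claimed degree is not known to be achievable.

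A secondary point you gloss over: bounding the coefficient-$\ell_2$ distance for $c(t)=1/((1-t)r(t))$ is genuinely more delicate than for $b$, because the factor $1/(1-t)$ blows up near $t=1$ on your contour and dividing by $r$ requires control of $|r|$ from below there. The paper's Proposition~4.1 handles both integrals carefully and this is where the precise choice of $\tau=1/(2\nd)$ and the assumption $\gamma_\tau\le((1-e^{-2\tau})/4)^2$ enter; your sketch should at least flag that the $c$-side needs its own argument.
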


In order to ensure that the error term $\mu$ in our result is $o(1)$, we have space and time complexity $d=\Theta(\log^2 \nd)$.
It is natural to wonder whether this computational complexity can be improved.
We show that it can be improved to $\widetilde{O}(\log \nd)$ at the expense of a weaker multiplicative near-optimality guarantee for the matrix factorization objective.\footnote{
In this problem, the multiplicative constants in the error have a larger practical impact on the mechanism's utility than an additive error.
For instance, the binary tree mechanism is suboptimal by a multiplicative factor of $\pi / \log 2 \approx 4.5$, which yields a much worse $\MaxErr$ than the factorization of \citet{fichtenberger2022constant} (cf. \cref{fig:bounds}, left).
}

\begin{thm}[Secondary Result -- Informal version of \Cref{thm:main2-f}]\label{thm:main2-inf}
    For each integer $\nd \ge 1$, there exists an integer $\nd'=O(\nd)$ and a matrix factorization $B, C^T \in \mathbb{R}^{\nd \times \nd'}$ with the following properties.
    \begin{compactitem}
        \item \textbf{Validity:}
        $BC=A$, where $A$ is the $\nd \times \nd$ lower-triangular all-ones matrix.
        \item \textbf{Near-optimality:}
        $
            \MaxErr(B,C) \le (1+o(1)) \cdot \mathsf{Opt}(\nd) , 
        $
        where $\mathsf{Opt}(\nd) = \frac{\log(n)}{\pi} \pm O(1)$ is the optimal value of $\MaxErr(B,C)$ over all factorizations.
        \item \textbf{Efficiency:}
        There exists a streaming algorithm that at each step $k$ takes as input some coordinates of $z$ (but never reads the same coordinate more than once) and outputs $(Bz)_k$ and runs in space (and amortized time per iteration) $\widetilde{O}(\log \nd)$.
    \end{compactitem}
\end{thm}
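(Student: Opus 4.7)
The plan is to build the factorization via a recursive block decomposition that stitches together the near-optimal Toeplitz factorizations of \Cref{thm:main-inf} at every level of a multiscale construction. The core structural observation is that for any block size $b$ dividing $n$, the lower-triangular all-ones matrix admits the decomposition
\begin{equation*}
A_n \;=\; I_{n/b} \otimes A_b \;+\; L_{n/b} \otimes J_b,
\end{equation*}
where $A_b$ is the $b\times b$ lower-triangular all-ones matrix, $L_{n/b}$ is the $(n/b)\times(n/b)$ strictly lower-triangular all-ones matrix, and $J_b=\mathbf{1}_b\mathbf{1}_b^T$ is the rank-one $b\times b$ all-ones matrix. Given any sub-factorizations $A_b=B_b C_b$ and $L_{n/b}=B' C'$, the rank-one structure yields $L_{n/b}\otimes J_b=(B'\otimes\mathbf{1}_b)(C'\otimes\mathbf{1}_b^T)$, and stacking the two contributions gives
\begin{equation*}
B \;=\; \bigl[\,B'\otimes\mathbf{1}_b \;\big|\; I_{n/b}\otimes B_b\,\bigr], \qquad C \;=\; \begin{bmatrix} C'\otimes\mathbf{1}_b^T \\ I_{n/b}\otimes C_b \end{bmatrix},
\end{equation*}
with $BC = A_n$. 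Solving the inner-dimension recursion gives $\mfid \le n + n/b + n/b^2 + \cdots = O(n)$.

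The central observation is that $\MaxErr$ \emph{combines additively} across the two sub-factorizations. A direct calculation shows that the row of $B$ indexed by $(q,r)\in[n/b]\times[b]$ has squared $\ell_2$ norm equal to $\|B'_{q,:}\|_2^2 + \|(B_b)_{r,:}\|_2^2$, because the block-diagonal factor $I_{n/b}\otimes B_b$ only touches block $q$; symmetrically, the column of $C$ indexed by $(q,r)$ has squared norm $\|C'_{:,q}\|_2^2 + \|(C_b)_{:,r}\|_2^2$. Taking maxima separately,
\begin{equation*}
\|B\|_{2\to\infty}^2 = \|B'\|_{2\to\infty}^2+\|B_b\|_{2\to\infty}^2,\qquad \|C\|_{1\to2}^2 = \|C'\|_{1\to2}^2+\|C_b\|_{1\to2}^2.
\end{equation*}
Rescaling each sub-factorization so that its row and column norms coincide does not alter the product $BC$ but tightens Cauchy--Schwarz to equality, giving the additive combination $\MaxErr(B,C)=\MaxErr(B',C')+\MaxErr(B_b,C_b)$.

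We now recurse on the outer factor $L_{n/b}$ using the same block size $b$, terminating after $k=\log n/\log b$ levels by invoking \Cref{thm:main-inf} at the base. Since the additive combination telescopes and $\mathsf{OptLTToe}(b)\le \log b/\pi + O(1)$,
\begin{equation*}
\MaxErr(B,C)\;\le\;k\cdot\mathsf{OptLTToe}(b)\;\le\;\tfrac{\log n}{\pi}+O(k).
\end{equation*}
Choosing $b=\log n$ gives $k=O(\log n/\log\log n)$, so the $O(k)$ slack is $o(\log n)$ and the total error is $(1+o(1))\cdot\mathsf{Opt}(n)$. For the streaming algorithm we run the \Cref{thm:main-inf} generator independently at every level: at each stream step the innermost level advances, and a level-$\ell$ sub-mechanism advances only every $b^\ell$ steps. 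Each active sub-mechanism maintains $O(\log^2 b)=O(\log^2\log n)$ state, so the total space is $O(k\log^2 b)=O(\log n\cdot\log\log n)=\widetilde O(\log n)$. Emitting the output requires summing the contributions of all $k$ active levels for $O(k)$ work; the amortized advancement cost forms a convergent geometric series dominated by the innermost $O(\log^2 b)$, so both space and amortized time per step are $\widetilde O(\log n)$.

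The main obstacle is reconciling \Cref{thm:main-inf}, which produces factorizations of the \emph{lower-triangular} Toeplitz matrix, with the \emph{strictly lower-triangular} outer matrix $L_{n/b}$ appearing in the decomposition. Either one extends the rational-approximation argument of \Cref{thm:main-inf} to the strictly lower-triangular Toeplitz setting (the generating function is multiplied by a single factor of $e^{-i\theta}$, so the approximation-theoretic machinery transfers essentially unchanged) or one absorbs a rank-one correction into the block-diagonal factor; either route introduces only lower-order error. A secondary bookkeeping point is enforcing the balanced scaling $\|B^{(\ell)}\|_{2\to\infty}=\|C^{(\ell)}\|_{1\to2}$ uniformly across all $k$ levels of recursion, so that the additive error combination remains tight throughout and no multiplicative slack accumulates from level to level.
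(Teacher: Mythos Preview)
Your proposal is correct and follows essentially the same approach as the paper: the recursive Kronecker decomposition, the additive combination of squared row/column norms, the choice of base block size $b=n_1=\Theta(\log n)$, and the instantiation with the \BLT factorization of \Cref{thm:main-inf} all match the paper's \Cref{sec:generalized-binary-tree}. The only notable difference is that the paper recurses on the \emph{inner} factor (keeping the outer size fixed at $n_1$) rather than the outer factor, which lets it dispatch your ``main obstacle'' in one line: writing the strictly lower-triangular matrix as $L_{n_1}=S^{(n_1)}A^{(n_1)}$ and factoring it as $(S^{(n_1)}B_1)\,C_1$, with $\|S^{(n_1)}B_1\|_{2\to\infty}\le\|B_1\|_{2\to\infty}$, so no separate treatment of $L$ is needed.
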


The optimal matrix factorization objective value over lower-triangular Toeplitz factorizations $\mathsf{OptLTToe}(\nd)$ is a small additive constant (specifically, $\le 0.365$) away from the optimal over all factorizations $\mathsf{Opt}(\nd)$ (cf. \cref{cor:toeplitz_gap}).
Thus, \Cref{thm:main-inf}'s approximation bound on the class of lower triangular Toeplitz matrices can be directly compared to \Cref{thm:main2-inf}'s approximation bound over all possible factorizations.

More generally, we can smoothly trade off between the matrix factorization objective and computational efficiency. That is, we can interpolate between \Cref{thm:main-inf,thm:main2-inf}; see \Cref{prop:instantiate-recursive} for a general statement.

\Cref{thm:main2-inf} does not generate a lower triangular Toeplitz factorization; in fact it does not even produce a square factorization. It also gives a weaker multiplicative near-optimality guarantee. We leave it as an interesting open problem whether it is possible to improve on $\log^2 \nd$ space complexity with a Toeplitz factorization or with $\MaxErr(B,C) \le \mathsf{Opt}(\nd)+O(1)$.


\begin{figure}[h!]
    \begin{center}
    \includegraphics[width=4in]{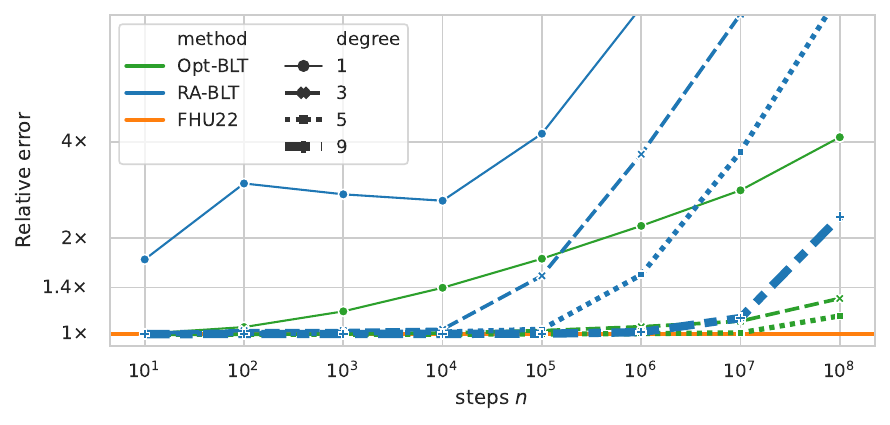}
    \caption{\label{fig:main_blt_results}
     Ratio of $\MaxErr(B,C)$ of our \RABLT and \OptBLT mechanisms for different numbers of steps $\nd$ and degrees $\nbuf$ (which corresponds directly to the space complexity) over that of the optimal Toeplitz mechanism of \citet{fichtenberger2022constant}.
     This illustrates that even with modest degree $d$, we obtain very good $\MaxErr(B,C)$ even for large numbers of steps $\nd$. For example, \OptBLT with $\nbuf=5$  is within $1\%$ of optimal for $n=10^7$ (we do not plot \OptBLT for $\nbuf=9$).
     }
     \end{center}
\end{figure}
\paragraph{Practical algorithms:}
Since our work is directly motivated by practical considerations, we also study the problem numerically.
Specifically, we show that the method behind \Cref{thm:main-inf} yields a factorization that is indistinguishable from optimal Toeplitz mechanism for practical purposes, but which has a highly efficient procedure for generating the noise. 
\Cref{fig:main_blt_results} shows the ratio of $\MaxErr(B,C)$ for our algorithm over the optimal $\mathsf{OptLTToe}(\nd)$.

To prove \cref{thm:main-inf} we provide closed-form parameters for the matrix factorization. This is already reasonably practical, but we further optimize the parameters numerically to obtain even better algorithms.
In order to optimize the matrix factorization we (i) choose an appropriate parameterization for the class of factorizations $B,C$ that we consider, (ii) give an efficiently computable expression for the objective $\MaxErr(B,C)$ in terms of this parameterization, and (iii) show that this expression is differentiable. This allows us to numerically optimize our factorization using gradient-based methods. While we do not prove that this optimization procedure converges, in practice it yields significantly better solutions than the closed-form parameters used to prove \cref{thm:main-inf}. 

To further illustrate the power of our approach, we show in \cref{sec:warmup} that even with constant memory, we can asymptotically attain $\MaxErr(B,C)=O(\nd^{1/6})$. This is already an improvement over $\MaxErr(I,A)=\sqrt{\nd}$ obtained by the trivial factorization.

\paragraph{Lower bounds:}
In \cref{sec:lttoe} we prove a lower bound that exactly characterizes $\mathsf{OptLTToe}(\nd)$. Specifically, we show that the lower triangular Toeplitz factorization of \citet{fichtenberger2022constant} is precisely optimal for this class.

Further, in \cref{sec:sdp_lower}, we develop \emph{numerical} lower bounds  on the objective of any matrix factorization for various classes of matrices for any fixed size $n$. In particular, we give lower bounds for arbitrary lower triangular matrices, arbitrary Toeplitz matrices, and Toeplitz matrices that correspond to our algorithm with a specific constant memory constraint.

\subsection{Our Algorithms \& Techniques}

The starting point for our main result is the lower triangular Toeplitz factorization of \citet{fichtenberger2022constant}, which we show is in fact the optimal lower triangular Toeplitz factorization in \Cref{sec:lttoe}.
This factorization is given by $B_{i,j}=C_{i,j}=f_{i-j}$ where the sequence $f_0, f_1, \cdots$ is the coefficients of a Taylor series:\footnote{We overload notation and use $x$ both for the input to the algorithm ($x \in \R^\nd$), and the indeterminate in polynomials and generating functions; the meaning should be clear from context.}
\begin{equation}
    \frac{1}{\sqrt{1-x}} = f_0 + f_1 x + f_2 x^2 + f_3 x^3 + \cdots. \label{eq:taylor1oversqrt}
\end{equation}
This sequence satisfies the recurrence $f_k=(1-1/2k) \cdot f_{k-1}$ for $k \ge 1$ with $f_0=1$. Our task in this case is to compute the correlated noise
$
(Bz)_k = \sum_{j=0}^{k} B_{k,j} \cdot z_j = \sum_{j=0}^{k} f_{k-j} \cdot z_j
$ 
in a streaming fashion for $k \in [\nd] \coloneqq \{0,1,\cdots,\nd-1\}$.\footnote{We zero-index sequences, vectors, and matrices throughout. See \cref{tab:notation} for a summary of symbols and notation.} Unfortunately, the factorization of \cref{eq:taylor1oversqrt} does not seem to admit an efficient sampling or noise generation procedure. 

\paragraph{\BLTs: Buffered Linear Toeplitz Matrices (\S\ref{sec:genfunc-framework}):}
Suppose the sequence instead satisfied a linear recurrence $f_k = q \cdot f_{k-1}$ for all $k \ge 1$.
Then
\begin{equation}
    (Bz)_k = \sum_{j=0}^{k} f_{k-j} \cdot z_j = f_0 \cdot z_k + \sum_{j=0}^{k-1} (q \cdot f_{k-j-1}) \cdot z_j = f_0 \cdot z_k + q \cdot (Bz)_{k-1}. \label{eq:singlerecurrencealg}
\end{equation}
This equation gives us an efficient algorithm: given the previous output $(Bz)_{k-1}$ and the current input $z_k$, we can compute the current output $(Bz)_k$. The memory requirement of this algorithm is simply to store the previous output in a single memory buffer. 

Next, suppose the sequence instead satisfied a linear recurrence of the form
\begin{equation}
    f_k = q_1 \cdot f_{k-1} + q_2 \cdot f_{k-2} + \cdots + q_d \cdot f_{k-d}. \label{eq:finiterecurrence-intro}
\end{equation}
As we explain next, this recurrence gives an algorithm where the memory requirement is to only store $\nbuf$ memory buffers. 

The recurrence $f_k = q f_{k-1}$ implies the closed form $f_k = q^k f_0$.
Similarly, the recurrence in \Cref{eq:finiterecurrence-intro} implies a closed form expression $f_k = u^T W^k v$ where $W \in \mathbb{R}^{d \times d}$ is a matrix and $u,v \in \mathbb{R}^d$ are vectors.
This closed form is what we use in \Cref{sec:algorithm} for our \Cref{alg:streamingmatrixpower}.
Specifically, we can extend \Cref{eq:singlerecurrencealg} to this matrix-power closed form
\begin{align*}
    (Bz)_k &= \sum_{j=0}^{k} f_{k-j} \cdot z_j 
    = \sum_{j=0}^{k} u^T W^{k-j} v \cdot z_j
    = u^T S_{k+1}
\end{align*}
for a suitable state vector $S_k \in \R^\nbuf$, stored in memory. Namely, we initialize $S_0=0$ and, at each iteration, our algorithm updates \[S_{k+1} = v \cdot z_k + W S_k\] and then outputs $(Bz)_k = u^T S_{k+1}$.
We refer to the entries of $S_k$ as the $\nbuf$ \emph{buffers} of our algorithm.\footnote{Recall that while we treat these as scalars here, in practical ML applications for example, each of these has size equal to the number of parameters of the model being trained, e.g. possibly $\mdim > 10^9$, so keeping $\nbuf$ to a small constant is critical.} The updates to the buffers on each step are an arbitrary linear function of the previous step's buffers ($WS_k$) and the current input ($v z_k$), and the output on each step is an arbitrary linear combination  of the buffers ($u^T S_{k+1}$). Hence, we term this class ``\textbf{Buffered Linear Toeplitz} matrices'' (\BLTs). We overload the acronym and write \BLTs as a shorthand encompassing matrices, factorizations, and mechanisms.

Unfortunately, the optimal factorization \cite{fichtenberger2022constant} does \emph{not} satisfy a recurrence like \cref{eq:finiterecurrence-intro} and cannot be expressed as a \BLT. Hence, our approach is to \emph{approximate} the optimal factorization using \BLTs.

\paragraph{Designing \BLTs via rational function approximation (\S\ref{sec:rationalfunctionapprox}):}
If we view the Toeplitz sequence $f_0, f_1, \dots$ as being defined by an ordinary generating function\footnote{We use the terms generating function, ordinary generating function, and Taylor series interchangeably; the generating function view of $f$ emphasizes the sequence being generated and requires (only) a formal power series, while the Taylor series view emphasizes $f$ is a real or complex function. As long as $f$ is analytic in a non-empty open neighborhood of zero, no ambiguity is introduced by these two views, see e.g. Thm. 2.8 (Transfer principle) of \citet{kauers11concrete}.}
 $f$ as in \Cref{eq:taylor1oversqrt}, it turns out that satisfying a linear recurrence as in \Cref{eq:finiterecurrence-intro} is equivalent to the function being rational with degree at most $\nbuf$, i.e., $f(x) = p(x)/q(x)$ for polynomials $p$ and $q$ of degree $\le \nbuf$.
(This equivalence is analogous to the fact that a real number is rational if and only if its decimal representation is repeating.)

This equivalence also suggests our first approach to designing \BLTs: we need a low-degree rational \emph{approximation} to the function $f(x) = {1}/{\sqrt{1-x}}$ from \cref{eq:taylor1oversqrt} that underlies the optimal factorization of \citet{fichtenberger2022constant}.

We appeal to known results in approximation theory. 
Specifically, it is known that the function $x\mapsto\sqrt{1-x}$ can be uniformly approximated on the unit complex disc $\{x \in \mathbb{C} : |x| \le 1\}$ with error $\eta>0$ by a rational function of degree $d=O(\log^2(1/\eta))$~\cite{Newman64,GopalT19}. That is, there exists a rational function $r$ of degree $\le d$, such that $|r(x)-\sqrt{1-x}|\le \eta$ for all $x\in \mathbb{C}$ with $|x|\le 1$.

It ``only'' remains to translate this approximation guarantee back to the matrix factorization objective.
Parseval's identity allows us to bound the difference between the sequences of Taylor coefficients in terms of an integral:
Suppose $f(x)=\sum_{k=0}^\infty f_k x^k$ and $\widetilde{f}(x)=\sum_{k=0}^\infty \widetilde{f}_k x^k$. Then
\begin{equation}
    \sum_{k=0}^\infty |f_k-\widetilde{f}_k|^2 = \frac{1}{2\pi} \int_{-\pi}^\pi \left|f(x(\theta))-\widetilde{f}(x(\theta))\right|^2 \, \mathrm{d}\theta \,~\text{ where }~ x(\theta) = \exp(\sqrt{-1} \theta). \label{eq:unweightedparseval}
\end{equation}
In our case, $f(x)={1}/{\sqrt{1-x}}$ corresponds to the optimal lower triangular Toeplitz factorization while $\widetilde{f}(x)$ is our rational approximation -- either $\widetilde{f}(x)=1/r(x)$ or $\widetilde{f}(x)=r(x)/(1-x)$, where $r(x) \approx \sqrt{1-x}$.
We are interested in a finite sum $\sum_{k=0}^{\nd-1} |f_k-\widetilde{f}_k|^2$, rather than the infinite sum in \Cref{eq:unweightedparseval} (which does not converge in our setting).
Thus we consider a weighted version of Parseval's identity where the integral goes around a circle in the complex plane centered at $0$ with radius $e^{-1/2n}$. 
Note that we require the approximation guarantee to hold on the complex plane, not just the real line.
Once we have this bound on $\sum_{k=0}^{\nd-1} |f_k-\widetilde{f}_k|^2$, the near optimality guarantee of \Cref{thm:main-inf} follows from the triangle inequality.

The time and space requirement of \Cref{thm:main-inf} is $O(\log^2 n)$. This dependence arises from the degree of the rational approximation.
A degree at least $\Omega(\log^2(1/\eta))$ is necessary for approximating $\sqrt{1-x}$ with error $\le\eta$ even for real values $x \in [-1,1]$ \cite{Newman64}.
Thus, unless we can exploit some slack in our analysis, it seems we require different techniques to bring the space down to $O(\log n)$.

The proof of \Cref{thm:main-inf} gives an explicit rational function approximation, which we can directly convert into a matrix factorization and feed into our algorithm. We term the \BLTs coming from this approach {\RABLT}s, with each choice of the degree $\nbuf$ leading to a different (and successively better) approximation to the optimal Toeplitz factorization. 

\paragraph{Designing \BLTs via direct optimization (\S\ref{sec:practical-optimization}):}
Recall that our goal is to minimize $\MaxErr(B,C)=\MaxErr(AC^{-1},C)$.
In order to ensure computational efficiency, we restrict the matrix $C$ to the class of \BLTs. While approximation theory lets us directly construct a near-optimal $C$ as outlined above, we can also approach this as an optimization problem and optimize the \BLT parameters that define $C$ numerically.

This optimization is far from straightforward. The class of \BLT matrices can be parameterized in multiple ways. Converting the parameters for $C$ into $\MaxErr(AC^{-1},C)$ is nontrivial to compute -- much less optimize -- when the size $\nd$ is large.
Nevertheless, the class of \BLTs is algebraically closed under multiplication and addition, and this structure combined with the connection to rational generating functions, provides powerful tools for reasoning about them.

We give a parameterization for the class of \BLTs that allows us to efficiently compute $\MaxErr$ in time practically independent of the size $\nd$, specifically \maxerrtime, and also to compute gradients.
Being able to compute gradients allows us to optimize $\MaxErr(AC^{-1},C)$ numerically.

The first challenge is that we need to be able to effectively parameterize both $C$ and $A C^{-1}$.
If $C$ is a \BLT, then, as discussed above, its Toeplitz coefficients are given by the Taylor series for a rational generating function $c(x) = p(x)/q(x)$ for polynomials $p$ and $q$. Further, these coefficients have a simple closed-form expression given by \cref{lem:rationaltoconstrec} (see also \cref{eq:closedcoefs}). 
Of course $1 / c(x) = q(x)/p(x)$ is also a rational function, and it in fact generates the Toeplitz coefficients of $C\inv$ (\cref{lem:mgfmult}), and so $C\inv$ is also a \BLT. The lower-triangular matrix of ones, $A$, is trivially a \BLT, and so \cref{lem:mgfmult} also implies $B = A C\inv$ is a \BLT.

In \cref{sec:practical-optimization}, \cref{lem:invogfs} shows that given a $\BLT$ $C$, we can (explicitly and in closed-form) derive the \BLT parameters of $C\inv$, and hence a closed form for its Toeplitz coefficients.\footnote{More precisely, \cref{lem:invogfs} uses a parameterization of $p$ and $q$ which leads to closed forms for both $C$ and $C\inv$.}
Using these closed-form expressions for the Toeplitz coefficients enables us to directly compute $\MaxErr(B, C)$ for $B, C \in \R^\dimdim$ in time \maxerrtime, see \cref{lem:closedsensitivity,lem:closederr}.

This is immediately useful, for example in \cref{fig:bounds} allowing us to plot the performance of our mechanisms for $\nd$ up to $10^8$ using only a few seconds of computer time.\footnote{We could have in fact scaled the $\MaxErr$ calculations for our mechanisms to arbitrary $\nd$; the bottleneck is in the computation of the exact $\MaxErr$ for \citep{fichtenberger2022constant}, which requires time $\calO(\nd)$.}

More importantly, however, \cref{sec:practical-optimization} shows computing $\MaxErr(B, C)$ is a differentiable function of the parameters of the \BLTs $B$ and $C$, and hence we can use a gradient-based optimization method such as L-BFGS to directly minimize $\MaxErr$ targeting a specific number of steps $\nd$, where $B$ and $C$ are $\BLTs$ with $\nbuf$ buffers defined by $2\nbuf$ parameters. We term these \OptBLT mechanisms, and they perform extremely well in practice. For example, for $\nd = 10^7$, a \OptBLTd{4} has $\MaxErr$ that is $1.032\!\times$ that of the optimal Toeplitz factorization \citep{fichtenberger2022constant}, and \OptBLTd{7} is  $1.001\!\times$ optimal; for smaller $\nd$ the results are even better, for example for $\nd = 10^4$, \OptBLTd{4} achieves $1.001\!\times$ optimal. \cref{fig:bounds} gives more complete results.

\paragraph{Generalizations of the binary tree mechanism (\S\ref{sec:generalized-binary-tree}):}
The starting point for \Cref{thm:main2-inf}  is the binary tree mechanism of \citet{Dwork-continual,CSS11-continual}.
The binary tree mechanism can be viewed as a recursive construction of a matrix factorization. 
A recursion of depth $\ell$ yields a matrix factorization of size $\nd=2^\ell$ and an algorithm running in time and space $O(\ell)$.
The binary tree mechanism does not produce a Toeplitz or square matrix factorization; the structure that it relies on for computational efficiency is sparsity.
The matrix factorization objective $\MaxErr(B,C)$ for this construction is $O(\log \nd)$ -- that is, it is within a constant factor of optimal. Specifically, the binary tree mechanism is asymptotically a factor of $\frac{\pi}{\log 2} \approx 4.5$ from optimal. This factor is significant in practice, as shown in \cref{fig:bounds}. 

We combine the binary tree mechanism's recursive approach with \Cref{thm:main-inf} to get the best of both worlds -- near-optimal constants and $\widetilde{O}(\log n)$ space. This proves \Cref{thm:main2-inf}.

We illustrate one step of the recursive construction using the following example for size $n=6$. We can decompose the $6 \times 6$ all-ones lower triangular matrix $A^{(6)}$ into a sum of expressions involving a $2 \times 2$ all-ones lower triangular matrix $A^{(2)}$ and a $3 \times 3$ all-ones lower triangular matrix $A^{(3)}$:
\begin{align}
A^{(6)} =
    \left(\begin{array}{cccccc}
    1 & 0 & 0 & 0 & 0 & 0\\
    1 & 1 & 0 & 0 & 0 & 0\\
    1 & 1 & 1 & 0 & 0 & 0\\
    1 & 1 & 1 & 1 & 0 & 0\\
    1 & 1 & 1 & 1 & 1 & 0\\
    1 & 1 & 1 & 1 & 1 & 1
    \end{array}\right)
    &=
    \left(\begin{array}{cccccc}
    1 & 0 & 0 & 0 & 0 & 0\\
    1 & 1 & 0 & 0 & 0 & 0\\
    0 & 0 & 1 & 0 & 0 & 0\\
    0 & 0 & 1 & 1 & 0 & 0\\
    0 & 0 & 0 & 0 & 1 & 0\\
    0 & 0 & 0 & 0 & 1 & 1
    \end{array}\right)
    + 
    \left(\begin{array}{cccccc}
    0 & 0 & 0 & 0 & 0 & 0\\
    0 & 0 & 0 & 0 & 0 & 0\\
    1 & 1 & 0 & 0 & 0 & 0\\
    1 & 1 & 0 & 0 & 0 & 0\\
    1 & 1 & 1 & 1 & 0 & 0\\
    1 & 1 & 1 & 1 & 0 & 0\\
    \end{array}\right) \label{eq:recursive6}\\
    &= 
    \left(\begin{array}{ccc}
    1 & 0 & 0 \\
    0 & 1 & 0 \\
    0 & 0 & 1
    \end{array}\right) 
    \otimes 
    \left(\begin{array}{cc}
    1 & 0 \\
    1 & 1
    \end{array}\right)
    +
    \left(\begin{array}{ccc}
    0 & 0 & 0 \\
    1 & 0 & 0 \\
    1 & 1 & 0
    \end{array}\right) 
    \otimes 
    \left(\begin{array}{cc}
    1 & 1 \\
    1 & 1
    \end{array}\right) \notag\\
    &= I \otimes A^{(2)} + \left( S^{(3)} A^{(3)} \right) \otimes \left( \mathbf{1} \mathbf{1}^T \right), \notag
\end{align}
where $\otimes$ denotes the Kronecker product,\footnote{A key property of the Kronecker product is that $(A \cdot B) \otimes (C \cdot D) = (A \otimes C) \cdot (B \otimes D)$.} $I$ is the identity matrix, $\mathbf{1}$ is the all-ones vector, and $S^{(3)} = \left(\begin{array}{ccc} 0 & 0 & 0 \\ 1 & 0 & 0 \\ 0 & 1 & 0 \end{array}\right)$ is a non-cyclic shift matrix.
\Cref{eq:recursive6} can be used to take factorizations of size $2$ and $3$ and combine them into a factorization of size $6$. 
Namely, if $A^{(2)}=B^{(2)}C^{(2)}$ and $A^{(3)}=B^{(3)}C^{(3)}$, then 
\begin{align}
    A^{(6)} &= I \otimes A^{(2)} + \left( S^{(3)} A^{(3)} \right) \otimes \left( \mathbf{1} \mathbf{1}^T \right) \tag{\cref{eq:recursive6}} \\
    &= (I \cdot I) \otimes (B^{(2)} \cdot C^{(2)}) + \left( S^{(3)} B^{(3)} \cdot C^{(3)} \right) \otimes \left( \mathbf{1} \cdot \mathbf{1}^T \right) \notag \\
    &= (I \otimes B^{(2)}) \cdot ( I \otimes C^{(2)}) + \left( S^{(3)} B^{(3)} \otimes \mathbf{1} \right) \cdot \left(C^{(3)} \otimes  \mathbf{1}^T \right) \notag \\
    &= \underbrace{\left( I \otimes B^{(2)} ~\mid~ S^{(3)} B^{(3)} \otimes \mathbf{1} \right)}_{=B^{(6)}} \cdot \underbrace{\left(\begin{array}{c} I \otimes C^{(2)} \\ C^{(3)} \otimes \mathbf{1}^T \end{array}\right)}_{=C^{(6)}}. 
\end{align}
The factors $B^{(6)}$ and $C^{(6)}$ are non-square matrices represented as block matrices.
The reason we move to non-square matrices is that \Cref{eq:recursive6} decomposes $A^{(6)}$ as a \emph{sum} of two matrix products and we must re-express this as a single matrix product.

Roughly speaking, the binary tree mechanism corresponds to starting with a factorization of size $2$ and applying a recursive step similar to the above $\ell-1$ times to obtain a factorization of size $2^\ell$. 
Rather than starting with a factorization of size $2$, we can start with a larger factorization of size $\nd_1$ given by \Cref{thm:main-inf} and then repeat the recursive step above $\ell-1$ times to obtain a factorization of size $\nd_1^\ell$.
Intuitively, by picking a larger factorization as the starting point we get closer to the optimal constant.
With careful analysis and the right choice of parameters, this yields \Cref{thm:main2-inf}.
 
This recursive construction attains excellent asymptotics, but, for practical parameter regimes, we find that the \BLT approach is practically indistinguishable from the optimal Toeplitz mechanism.

\subsection{An Empirical Comparison of Mechanisms}
In this section, we provide an empirical comparison of the primary DP mechanisms discussed, demonstrating their effectiveness in practical regimes.

\cref{fig:bounds} compares mechanisms (and lower bounds) in terms of $\MaxErr$ from $n=1$ to $n=10^6$ iterations. The sub-optimality of the binary tree mechanism is immediately clear. 
This plot also shows that little is lost by the restriction from general matrix mechanisms to Toeplitz mechanisms, where \citet{fichtenberger2022constant} provide the optimal (but inefficient) construction. Our \BLT mechanisms essentially match this performance, while requiring time and memory $\widetilde{\calO}(1)$ instead of $\calO(\nd)$.

\cref{fig:main_blt_results} provides a detailed comparison of our \BLT mechanisms and the optimal Toeplitz mechanism. Several important conclusions can immediately be drawn: (1) For both \RABLT and \OptBLT, increasing the degree (number of allowed memory buffers) increases performance. (2)  A larger number of steps $\nd$ requires a higher number of buffers $\nbuf$ for both of our approaches; this is expected and necessary, as shown by our theory, see \cref{rem:growth}. (3) A key point to emphasize is that each blue line for \RABLT corresponds to a single mechanism (a fixed rational approximation); for \OptBLT, we compute an optimized matrix factorization of the given degree for each different $\nd$. This specialization of the mechanism to the specific anticipated number of steps $\nd$ is critical to the advantage enjoyed by this approach. For example, we see \OptBLT with only 5 buffers outperforms \RABLT with 9 buffers across the full range of $\nd$.

It is of course possible to run \OptBLT mechanisms for a different number of steps than the optimization targeted. \cref{fig:bltcompare} (Left column) explores this. We construct three fixed \OptBLT factorizations, optimized for $\nd \in \{100, 1000, 10000\}$, and compare their performance (in terms of $\MaxErr$ relative to the optimal Toeplitz mechanism) across a range of steps, from $10^1$ to $10^5$. As expected, the mechanisms work best for the $\nd$s for which they were optimized; however, the excess error is highly asymetric; a mechanism optimized for $\nd^*$ will generally perform well for $\nd < \nd^*$ steps, but can quickly perform very badly when $\nd > \nd^*$. This is expected when one considers that \OptBLT should intuitively be ensuring a good approximation of the optimal Toeplitz coefficients $r_0, \dots, r_{\nd^*-1}$, but the approximation of the optimal Toeplitz coefficients for larger $\nd$ can become arbitrarily bad.  We see this in \cref{fig:bltcompare} (Middle column), where we compare the Toeplitz coefficients defining $C$ and $B$ to the optimal coefficients corresponding to the generating function $1/ \sqrt{1 - x}$. To emphasize this issue, we optimize for $\nd^* = 100$, and consider degree $\nbuf=2$. The \OptBLT factorization provides a better approximation to the optimal coefficients for $\nd$ up to 100 compared to \RABLTd{2} (which does not depend on $\nd$), and a generally worse approximation beyond that.  \cref{fig:bltcompare} (Right column) shows that while both \OptBLT and \RABLT correspond to ``reasonable'' approximations to $\sqrt{1 -x }$ (top), they distribute their errors in the approximation very differently (bottom).

\begin{figure}[h!]

    \begin{subfigure}[b]{0.49\linewidth}
    \includegraphics[width=\textwidth]{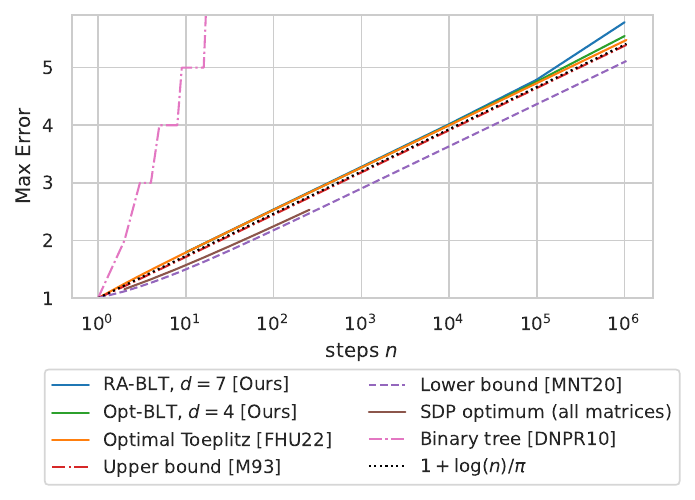}
    \end{subfigure}
    \hfill
    \begin{subfigure}[b]{0.49\linewidth}
    \includegraphics[width=\textwidth]{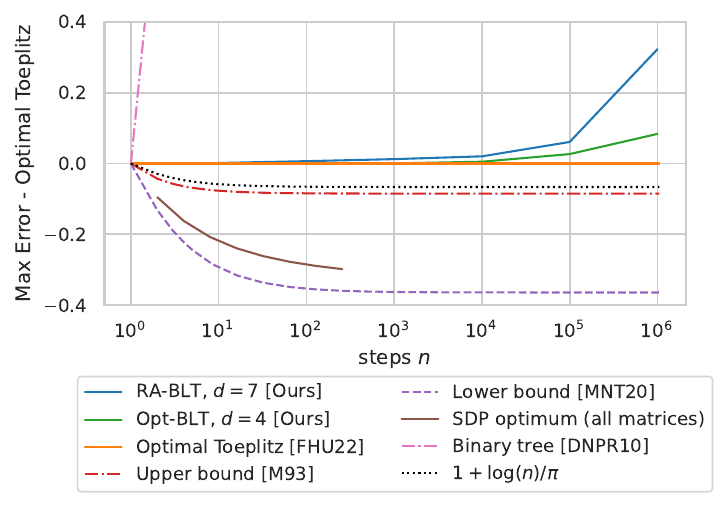}
    \end{subfigure}
    
    \caption{\label{fig:bounds}
        Comparison of known upper and lower bounds for factorizations $A=BC$ of the all-ones lower triangular matrix $A_{i,j}=\mathbb{I}[i\ge j]$. Note that this includes non-Toeplitz factorizations. This illustrates that there is a small gap between lower triangular Toeplitz factorizations and general factorizations; furthermore this gap is asymptotically constant.
        \textbf{Left:} Vertical axis is $\MaxErr(B,C)$.
        \textbf{Right:} Vertical axis is $\MaxErr(B,C)-\mathsf{OptLTToe}(\nd)$.
    }
\end{figure}

\begin{figure}[h!]
    \begin{center}
    \includegraphics[width=6in]{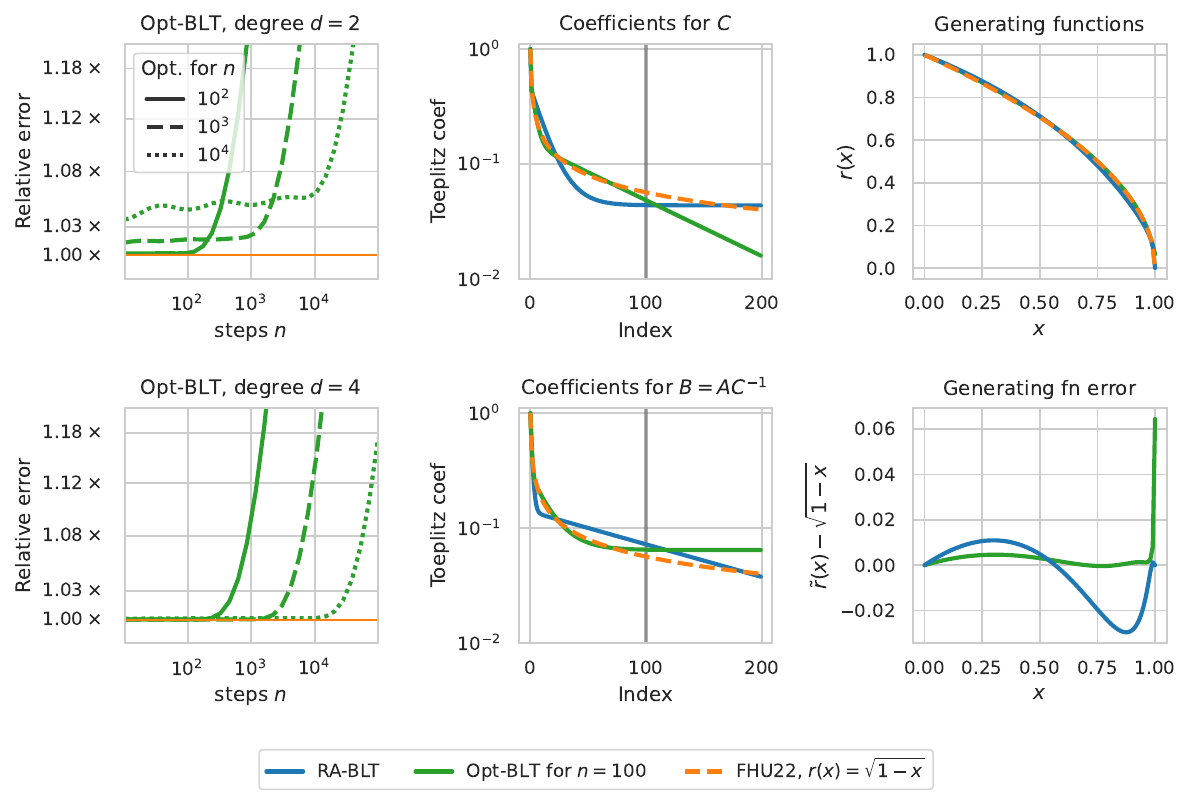}
    \caption{(Left column) Comparison of three fixed \OptBLT mechanisms across a range of $\nd$ (extending beyond the optimization targets).  (Center) Comparing the first 200 Toeplitz coefficients defining the $C$ and $B$ matrices for \OptBLT and \RABLT for degree $\nbuf=2$, with the \OptBLT mechanism optimized for $\nd^* = 100$.  (Right) Differences in the generating functions for the \OptBLT and \RABLT factorizations of the middle column.}\label{fig:bltcompare}
    \end{center}
\end{figure}

\clearpage
\section{Background}

We start by formally setting up the problem where each stream entry is a vector of dimension $\mdim$ (as opposed to a scalar in the introduction).
Fix a stream length $\nd \in \mathbb{N}$ and let 
\begin{equation}
    A = A^{(\nd)} \coloneqq \left( \begin{array}{ccccc} 1 & 0 & 0 & \cdots & 0 \\ 1 & 1 & 0 & \cdots & 0 \\ 1 & 1 & 1 & \cdots & 0 \\ \vdots & \vdots & \vdots & \ddots & \vdots \\ 1 & 1 & 1 & \cdots & 1 \end{array} \right) \in \{0,1\}^{\nd \times \nd} \label{eq:allonesM}
\end{equation} 
be the all-ones lower triangular matrix given by $A^{(\nd)}_{i,j} = 1 \iff i \ge j$ and $A_{i,j} = 0 \iff i < j$ for all $i,j \in [\nd]$.
Note that $A$ is invertible and its inverse is a lower triangular bi-diagonal matrix:
\begin{equation}
    A^{-1} = \left( \begin{array}{ccccc} 1 & 0 & 0 & \cdots & 0 \\ -1 & 1 & 0 & \cdots & 0 \\ 0 & -1 & 1 & \cdots & 0 \\ \vdots & \ddots & \ddots & \ddots & \vdots \\ 0 & \cdots & 0 & -1 & 1 \end{array} \right) \in \{-1,0,1\}^{\nd \times \nd}. \label{eq:allonesMinv}
\end{equation}

Our goal is to compute a matrix factorization $A=BC$ where $B,C^T \in \mathbb{R}^{\nd \times \mfid}$ that minimizes 
\begin{equation}
     \MaxErr(B,C) \coloneqq \|B\|_{2 \to \infty} \cdot \|C\|_{1 \to 2} \coloneqq \sqrt{ \max_{i \in [\nd]} \sum_{j \in [\mfid]} B_{i,j}^2 } \cdot \sqrt{ \max_{j \in [\nd]} \sum_{i \in [\mfid]} C_{i,j}^2 }. \label{eq:obj}
 \end{equation}

Simultaneously we want computational efficiency in the following sense. 
We need to generate samples from $BZ \in \mathbb{R}^{\nd \times \mdim}$ where $Z \in \mathbb{R}^{\mfid \times \mdim}$ is a matrix with independent standard Gaussian entries. 
We want to do this in a streaming setting where we output one row at a time and the memory is limited. Ideally the memory should be linear in $\mdim$ and constant or logarithmic in $\nd$.

We do not restrict $B$ and $C$ to be square matrices. 
Non-square factorizations may be advantageous from a computational perspective.
However, computational efficiency aside, we can assume without loss of generality that they are square (i.e., $\mfid=\nd$) by taking a singular value decomposition (SVD) of the factors and discarding the rows/columns that do not correspond to a non-zero singular value.

It is also natural to restrict $B$ and $C$ to be lower triangular matrices, like $A$.
Lower triangular structure implies each intermediate output $(Cx)_k$ is only a function of the inputs seen so far $x_0,x_1,\cdots,x_k$ and not future inputs $x_{k+1},\cdots,x_{n-1}$.
This is particularly valuable for the privacy analysis in the case where the input stream $x_1, \cdots, x_k$ is chosen adaptively, such as in machine learning applications. In the adaptive setting, each input $x_k$ may depend on the previous intermediate outputs $(Cx)_0,(Cx)_1, \cdots, (Cx)_{k-1}$. Lower triangular structure prevents a cyclic dependency.
Fortunately, we can assume the factors are lower triangular without loss of generality, see for example Proposition 2.2 of \citet{denisov22matfact}.

\subsection{Prior Work on Matrix Factorizations for Continual Counting}

There has been a \emph{lot} of work on factorizing the all-ones lower triangular matrix. 
However, most of that work optimizes the matrix factorization objective $\MaxErr(B,C)$ with little regard for the computational efficiency of sampling the correlated noise $Bz$ from i.i.d. seed noise $z \gets \mathcal{N}(0,\sigma^2I)$.

The binary tree mechanism \cite{Dwork-continual,CSS11-continual} implicitly constructs a factorization which optimizes $\MaxErr(B,C)$ up to constant factors and is efficiently computable. Prior to our work, this is the only known approach offering both efficient computation and some form of near-optimality.
The binary tree factorization can be expressed recursively as $B^{(1)}=C^{(1)}=(1) \in \{0, 1\}^{1 \times 1}$ and
\begin{equation}
    B^{(2\nd)} = \left(\begin{array}{ccc}
        B^{(\nd)} & 0 & 0 \\
        0 & B^{(\nd)} & \vec{1}
    \end{array} \right) \in \{0,1\}^{2\nd \times (4\nd-1)}
    ~~ \text{ and } ~~
    C^{(2\nd)} = \left(\begin{array}{cc}
        C^{(\nd)} & 0 \\
        0 & C^{(\nd)} \\
        \vec{1}^T & 0
    \end{array}\right) \in \{0,1\}^{(4\nd-1) \times 2\nd}. \label{eq:bintreeprior}
\end{equation}
Note that this only gives a factorization of $A^{(\nd)}$ when $\nd$ is a power of $2$. By discarding rows/columns we can extend this to any $\nd$. The objective value is given by \begin{equation}\MaxErr(B,C) = \|B^{(2^\ell)}\|_{2 \to \infty}^2 = \|C^{(2^\ell)}\|_{1 \to 2}^2 = \ell+1 = \left\lceil \frac{\log \nd}{\log 2} \right\rceil + 1.\label{eq:bintree}\end{equation}
The recursive formulation naturally leads to an efficient sampling algorithm for the correlated noise $Bz$ using only $O(\ell)$ space.
Subsequent work has attempted to improve the constants \cite{qardaji2013understanding,honaker2015efficient,andersson2024smooth}.
\citet{andersson2024smooth} improve the leading constant by a factor of $2$ compared to the standard bound in \cref{eq:bintree}. Their approach builds on the observation that the \emph{average} squared norm of a row of $B$ or a column of $C$ in the binary tree construction (as in \cref{eq:bintreeprior}) is roughly half the \emph{maximum} squared norm. Their construction starts with a binary tree factorization of size $O(n)$ (i.e., a constant factor larger than required) and discards large rows/columns of $B$/$C$, which leaves only rows/columns of average length. Their construction preserves sparsity and efficiency; however, the leading constant is still a constant factor far from optimal.

Observe that the binary tree factorization produces sparse matrices. Specifically, each row of $B^{(n)}$ and column of $C^{(n)}$ only has $O(\log n)$ nonzero entries.
Intuitively, this sparsity is what enables efficient computation; that is, $(Bz)_k$ only depends on logarithmically many elements of $z$.
However, our experience suggests that sparsity requires non-square matrices and cannot produce near-optimal factorizations like \BLTs. 

\citet{fichtenberger2022constant} give the following elegant explicit lower triangular Toeplitz factorization. 
Let $f_0 = 1$ and, for $k \in \mathbb{Z}_{\ge 1}$, let $f_k = f_{k-1} \cdot (1-1/2k)$ and $f_{-k}=0$. Equivalently, $f_k = 4^{-k} {2k \choose k} \le \frac{1}{\sqrt{\pi k}}$ for $k \ge 1$ \cite{CentralBinomialCoefficient}.
Define $B,C \in \mathbb{R}^{\nd \times \nd}$ by $B_{i,j} = C_{i,j} = f_{i-j}$ for all $i,j \in [\nd]$, which we denote $B = C =\LTT((f_i)_{i=0}^{n-1})$ or simply $\LTT(f, n)$.
Then, we have that $A=BC$ and\footnote{Unfortunately there appears to be an off-by-one error in the proof of the bound given by Theorem 1 of \citet{fichtenberger2022constant}; this has since been corrected.}
\begin{equation}
    \MaxErr(B,C) = \|B\|_{2 \to \infty}^2 = \|C\|_{1 \to 2}^2 = \sum_{k=0}^{\nd-1} f_k^2 \le 1+\frac{\log(\nd)+\gamma}{\pi},\label{eq:jalaj}
\end{equation}
where $\gamma \le 0.57722$ is the Euler-Mascheroni constant \cite{EulerMascheroniConstant}.
This factorization turns out to be optimal among the class of lower triangular Toeplitz factorizations; we prove this in \cref{lem:lbstruct}.

However, unlike the binary tree mechanism, these Toeplitz matrices are dense and we do not know how to efficiently generate noise according to this factorization in the streaming setting.

The $\gamma_2$ factorization norm is in fact a norm, i.e., it satisfies the triangle inequality \cite{tomczak1989banach}.
Furthermore, there is a dual characterization of the $\gamma_2$ factorization norm \cite[Theorem 9]{lee2008direct}. 
\begin{align}
    \gamma_2(A) &\coloneqq \inf\{ \MaxErr(B,C) : B,C \in \mathbb{R}^{\nd \times \nd}, A=BC \} \notag\\
    &= \sup\left\{ \| P^{1/2} \cdot A \cdot Q^{1/2} \|_{\text{trace}} : 
    \begin{aligned}
    P,Q \text{ non-negative diagonal matrices} \\
    \text{ with } \mathsf{trace}(P)=\mathsf{trace}(Q)=1
    \end{aligned}
    \right\},
    \label{eq:dual-characterization}
\end{align}
where $\| M \|_{\text{trace}} = \mathsf{trace}(\sqrt{M^TM})$ is the sum of the absolute singular values.

Using the triangle inequality, Mathias \cite[Corollary 3.5]{mathias1993hadamard} gives a non-constructive upper bound 
\begin{equation}
    \gamma_2(A) \le \frac12 + \frac{1}{2\nd} \sum_{j=1}^\nd \frac{1}{\sin\left(\pi \frac{2j-1}{2\nd}\right)} \le 1 + \frac{\log (\nd)}{\pi}.\label{eq:mathias_ub}
\end{equation}
On the other hand, using the dual characterization (\cref{eq:dual-characterization} with $P=Q=\tfrac1n I$), \citet{matouvsek2020factorization} give a lower bound\footnote{Proposition 4.1 of \citet{matouvsek2020factorization} simply states an $\Omega(\log \nd)$ lower bound, but these sharper expressions can easily be extracted from the proof. \citet{mathias1993hadamard} also gives a lower bound which differs from the upper bound in \cref{eq:mathias_ub} by less than an additive $\frac12$, but this seems slightly weaker than Equation \ref{eq:sasho}.}
\begin{equation}
    \gamma_2(A) \ge \frac{1}{2\nd} \sum_{j=1}^\nd \frac{1}{\sin\left(\pi \frac{2j-1}{4\nd+2}\right)}  \ge \frac{4n+2}{2\pi\nd} \sum_{j=1}^\nd \frac{1}{2j-1} \ge  \frac{\log(2\nd+1)}{\pi}.\label{eq:sasho}
\end{equation}
The upper bound in \cref{eq:jalaj} or \cref{eq:mathias_ub} and the lower bound in \cref{eq:sasho} match up to a small \emph{additive} constant. Thus the optimal value of the matrix factorization objective is $\gamma_2(A) = \frac{\log \nd}{\pi} \pm \mathsf{constant}$.
Numerically, the maximum gap between the Toeplitz upper bound in \cref{eq:jalaj} and the lower bound in \cref{eq:sasho} is less than $0.365$.

Figure \ref{fig:bounds} shows how these upper and lower bounds compare.
The binary tree mechanism has the advantage of computational efficiency, but as we can see, it is far from optimal for the objective.
The leading term for the binary tree is $\frac{\log \nd}{\log 2}$, while the optimal leading term is $\frac{\log \nd}{\pi}$. Thus the binary tree is asymptotically suboptimal by a multiplicative factor of $\pi/\log 2 \approx 4.5$.

\subsection{Lower Triangular Toeplitz Factorizations versus General Factorizations} \label{sec:lttoe}

Our work mainly focuses on lower triangular Toeplitz factorizations. In particular, \Cref{thm:main-inf} gives such a factorization and proves near-optimality with respect to this class. (Although we consider non-square factorizations in \Cref{sec:generalized-binary-tree}, where we prove \Cref{thm:main2-inf}.) 
This structure is essential to our algorithms in \Cref{sec:genfunc-framework,sec:rationalfunctionapprox,sec:practical-optimization}.
However, it is natural to wonder how much we lose in terms of the objective $\MaxErr(B,C)$ by restricting $B$ and $C$ to be lower triangular Toeplitz matrices.

In this subsection we discuss the gap between lower triangular Toeplitz factorizations and general factorizations.
We do not have an explicit construction for the optimal general matrix factorization or even a formula for optimal value $\gamma_2(A)$. 
For lower triangular Toeplitz matrix factorizations the best factorization we have is that of \citet{fichtenberger2022constant}.
We show that this is in fact optimal among the class of lower triangular Toeplitz factorizations.
This is -- to the best of our knowledge -- a novel result which may be of independent interest.

We begin with a lemma stating some basic properties of the factorization of \citet{fichtenberger2022constant} and we give a proof for completeness.

\begin{lem}\label{lem:fhu-properties}
    Define a sequence $f_0, f_1, \cdots$ by $f_0=1$ and $f_k = (1-1/2k)f_{k-1}$ for all $k \ge 1$.
    Then, for all $k \ge 1$, we have 
    \begin{equation}
        f_k = 4^{-k} {2k \choose k} \in \left[ \frac{1}{\sqrt{\pi(k+1)}}, \frac{1}{\sqrt{\pi k}}\right]. \label{eq:lem:fhu-properties1}
    \end{equation}
    For all integers $n \ge 0$, we have 
    \begin{equation}
        \sum_{k=0}^{n} f_k f_{n-k} = 1 \label{eq:lem:fhu-properties2}
    \end{equation}
    and
    \begin{equation}
        \frac{\gamma+\log(n) - 1}{\pi} \le \sum_{k=1}^\infty f_k^2 \le \frac{\gamma + \log(n)}{\pi}, \label{eq:lem:fhu-properties3}
    \end{equation}
    where $0.57721 \le \gamma \le 0.57722$ is the Euler-Mascheroni constant.
    Furthermore, for all $x\in\mathbb{C}$ with $|x|<1$, we have
    \begin{equation}
        \sum_{k=0}^\infty f_k x^k = \frac{1}{\sqrt{1-x}}. \label{eq:lem:fhu-properties4}
    \end{equation}
\end{lem}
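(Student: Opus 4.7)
The plan is to establish the four claims in the order $(1), (4), (2), (3)$, as each later claim uses an earlier one. For \cref{eq:lem:fhu-properties1}, I would unroll the recurrence to get $f_k = \prod_{j=1}^k \frac{2j-1}{2j}$, then multiply top and bottom of this product by $(2k)!! = 2^k k!$ to obtain the closed form $f_k = \frac{(2k)!}{4^k (k!)^2} = 4^{-k}\binom{2k}{k}$. The two-sided estimate $\frac{1}{\sqrt{\pi(k+1)}} \le f_k \le \frac{1}{\sqrt{\pi k}}$ is a standard consequence of Stirling's formula for the central binomial coefficient; it can alternatively be proved by induction on $k$, reducing both inequalities to elementary algebraic manipulations involving the ratio $(1 - 1/(2k))^2$ and the corresponding ratios $(k-1)/k$ and $k/(k+1)$.

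For \cref{eq:lem:fhu-properties4}, I would appeal to the generalized binomial theorem, which gives $(1-x)^{-1/2} = \sum_{k \ge 0} \binom{-1/2}{k}(-x)^k$ absolutely convergent for $|x|<1$. A direct computation of the generalized binomial coefficient shows
\begin{equation*}
  (-1)^k \binom{-1/2}{k} = \frac{(1/2)(3/2)\cdots((2k-1)/2)}{k!} = \frac{(2k-1)!!}{2^k\, k!} = \frac{(2k)!}{4^k (k!)^2} = f_k,
\end{equation*}
so the Taylor expansion of $1/\sqrt{1-x}$ at $0$ is exactly $\sum_k f_k x^k$.

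For \cref{eq:lem:fhu-properties2}, I would square the generating-function identity from the previous step: for $|x|<1$,
\begin{equation*}
  \Bigl(\sum_{k=0}^\infty f_k x^k\Bigr)^2 = \frac{1}{1-x} = \sum_{n=0}^\infty x^n.
\end{equation*}
Equating the coefficient of $x^n$ from the Cauchy product on the left with the coefficient $1$ on the right immediately yields $\sum_{k=0}^n f_k f_{n-k} = 1$ for every $n \ge 0$.

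The final claim \cref{eq:lem:fhu-properties3} as written involves $\sum_{k=1}^\infty f_k^2$, but this sum diverges by $(1)$; I read it as the partial sum $\sum_{k=1}^{n-1} f_k^2$ that actually appears in \cref{eq:jalaj}. Using $\frac{1}{\pi(k+1)} \le f_k^2 \le \frac{1}{\pi k}$ from $(1)$, I would sandwich this partial sum between harmonic-type sums and then apply the classical expansion $H_n = \log n + \gamma + O(1/n)$. The only mildly delicate step is lining up the additive constants so that the two sides read exactly $(\gamma+\log n)/\pi$ and $(\gamma+\log n - 1)/\pi$; everything else follows by routine manipulation. Overall, the main work is concentrated in part $(1)$ (choosing clean envelope bounds for $f_k$) and in the bookkeeping of constants in $(3)$; the generating-function claims $(2)$ and $(4)$ are essentially one-line consequences of classical identities once $(1)$ is in hand.
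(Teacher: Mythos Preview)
Your proposal is correct and follows essentially the same route as the paper: closed form for $f_k$ from the recurrence, Taylor expansion of $(1-x)^{-1/2}$ (the paper computes derivatives directly, which is the same computation as your generalized binomial coefficient), squaring the generating function for the convolution identity, and sandwiching $f_k^2$ between $1/(\pi(k+1))$ and $1/(\pi k)$ together with harmonic-number estimates for the last part. You also correctly flag that \cref{eq:lem:fhu-properties3} should involve the partial sum $\sum_{k=1}^{n-1} f_k^2$ rather than the divergent infinite sum.
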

\begin{proof}
The first part of \cref{eq:lem:fhu-properties1} can be shown by induction:
For $k \ge 1$,
\[4^{-k} {2k \choose k} = \frac14 \cdot 4^{-(k-1)} {2k-2 \choose k-1} \frac{2k (2k-1)}{k^2} = \frac{2k(2k-1)}{4k^2} f_{k-1} = \left( 1 - \frac{1}{2k} \right) f_{k-1} = f_k.\]
Next consider the derivatives of $f(x) \coloneqq \frac{1}{\sqrt{1-x}}$: For $k \ge 1$, we have \begin{equation}f^{(k)}(x) = (1-x)^{-k-1/2} \prod_{\ell=0}^{k-1} \left(\ell+\frac12\right).\end{equation}
Thus \[\frac{f^{(k)}(0)}{k!} = \prod_{\ell=0}^{k-1} \frac{1}{\ell+1}\left(\ell+\frac12\right) = \prod_{\ell=0}^{k-1} \frac{(2\ell+1)}{2(\ell+1)} \frac{(2\ell+2)}{2(\ell+1)} = \frac{(2k)!}{(k!)^2 2^{2k}} = 4^{-k} {2k \choose k} = f_k.\]
By Taylor's theorem, for all $x \in \mathbb{C}$ with $|x|<1$, \[\frac{1}{\sqrt{1-x}} = f(x) = \sum_{k=0}^\infty \frac{f^{(k)}(0)}{k!} x^k = \sum_{k=0}^\infty f_k x^k, \] as required to prove \cref{eq:lem:fhu-properties4}.
Furthermore, for all $x \in \mathbb{C}$ with $|x|<1$, we have \[\sum_{n=0}^\infty x^n = \frac{1}{1-x} = f(x)^2 = \left( \sum_{k=0}^\infty f_k x^k \right)^2 = \sum_{n=0}^\infty x^n \sum_{k=0}^n f_k f_{n-k}.\]
Matching coefficients proves \cref{eq:lem:fhu-properties2}

To prove the second part of \cref{eq:lem:fhu-properties1}, we use standard bounds on the central binomial coefficient \cite{CentralBinomialCoefficient,SpeyerMO}: For all integers $k \ge 1$, \begin{equation} \frac{4^k}{\sqrt{\pi(k+1)}} \le {2k \choose k} \le \frac{4^k}{\sqrt{\pi k}},\end{equation} whence $\frac{1}{\sqrt{\pi (k+1)}} \le f_k \le \frac{1}{\sqrt{\pi k}}$ . 
Next we use standard bounds on the harmonic numbers to prove \cref{eq:lem:fhu-properties3}: For all integers $n \ge 1$,
\begin{equation} \log(n) + \gamma + \frac{1}{2(n+1)} \le \sum_{k=1}^n \frac{1}{k} \le \log(n) + \gamma + \frac{1}{2n}\end{equation}
where $0.57721 \le \gamma \le 0.57722$ is the Euler-Mascheroni constant \cite{EulerMascheroniConstant}.
It follows that, for all $n \ge 2$,
\begin{equation} \sum_{k=1}^{n-1} f_k^2 \le \sum_{k=1}^{n-1} \frac{1}{\pi k} \le  \frac{\gamma + \log(n-1) + 1/2(n-1)}{\pi} \le \frac{\gamma + \log(n)}{\pi}.\end{equation}
Similarly, for all $n \ge 2$
\begin{equation}
    \sum_{k=1}^{n-1} f_k^2 \ge \sum_{k=1}^{n-1} \frac{1}{\pi(k+1)} =  \sum_{k=2}^n \frac{1}{\pi k} \ge \frac{\gamma + \log(n) + 1/(2n+2)-1}{\pi}.
\end{equation}
\end{proof}

Most importantly, we show that the factorization of \citet{fichtenberger2022constant} is optimal among the class of lower triangular Toeplitz factorizations:

\begin{prop}[Optimal lower triangular Toeplitz factorization]\label{lem:opt-ltt}
For any integer $n \ge 1$, consider the optimization problem
\begin{equation}
\begin{array}{rrclcl}
\displaystyle \min_{\bfb, \bfc \, \in \, \mathbb{R}^n} & \multicolumn{3}{l}{\ltwo{\bfb}\ltwo{\bfc}}\\
\textrm{s.t.} & \forall k  <n ~~ \sum_{i=0}^k b_ic_{k-i}=1.\\
\end{array}
\label{eq:optimization}
\end{equation}
The minimum is achieved at $\bfb=\bfc=(f_k)_{k=0}^{n-1}$, where $f_0, f_1, \cdots$ is given by $f_0 = 1$ and $f_{k}=\br{1-\frac{1}{2k}}f_{k-1}$ for all $k \ge 1$, which are the coefficients derived by \citet{fichtenberger2022constant}.
\label{lem:lbstruct}
\end{prop}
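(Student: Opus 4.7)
The plan is to reduce the problem to the symmetric case $\bfb=\bfc$ via Lagrangian/KKT analysis, and then solve the symmetric problem in closed form using the Cauchy-product recursion of \cref{lem:fhu-properties}.

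First, the rescaling $(\bfb,\bfc)\mapsto(\alpha\bfb,\alpha^{-1}\bfc)$ preserves every constraint and the objective $\ltwo{\bfb}\ltwo{\bfc}$, so without loss of generality $\ltwo{\bfb}=\ltwo{\bfc}$; by AM--GM the objective then equals $\tfrac{1}{2}(\ltwo{\bfb}^2+\ltwo{\bfc}^2)$, and the reformulated problem with this smooth quadratic objective has the same optimum value. A minimizer exists by coercivity on the symmetry slice together with the nonempty feasible set (since $\bfb=\bfc=\bff$ is feasible by \cref{eq:lem:fhu-properties2}). Assembling Lagrange multipliers $\lambda_0,\ldots,\lambda_{n-1}$ into the symmetric Hankel matrix $\Lambda\in\R^{n\times n}$ with $\Lambda_{ij}=\lambda_{i+j}$ for $i+j\le n-1$ (and $0$ otherwise), stationarity of the Lagrangian yields the coupled equations $\bfb=\Lambda\bfc$ and $\bfc=\Lambda\bfb$, hence $\Lambda^2\bfb=\bfb$. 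Because $\Lambda$ is real symmetric, $\bfb$ decomposes as $\bfb_++\bfb_-$ with $\Lambda\bfb_\pm=\pm\bfb_\pm$ and $\bfb_+\perp\bfb_-$; writing $\bfc=\bfb_+-\bfb_-$, the constraint becomes $\bfb_+*\bfb_+-\bfb_-*\bfb_-=(1,\ldots,1)$, and the objective becomes $\ltwo{\bfb_+}^2+\ltwo{\bfb_-}^2$.

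The remaining, and main, step is to show $\bfb_-=\mathbf{0}$ at the minimizer. I would proceed by comparing $(\bfb,\bfc)$ to the symmetrized candidate $\tilde\bfb=\tilde\bfc$ given by $\tilde b(x)^2\equiv b(x)c(x)\pmod{x^n}$ with $\tilde b(0)>0$: since $bc\equiv (1-x)^{-1}\pmod{x^n}$ by hypothesis, the Cauchy-product recursion $\tilde b_0=1$, $2\tilde b_0\tilde b_k=1-\sum_{i=1}^{k-1}\tilde b_i\tilde b_{k-i}$ uniquely determines $\tilde\bfb=\bff$ (matching $f_k=(1-\tfrac{1}{2k})f_{k-1}$ from \cref{lem:fhu-properties}), so the symmetrized point is feasible with objective $\ltwo{\bff}^2=\sum_{k<n}f_k^2$. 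The needed inequality $\ltwo{\tilde\bfb}^2\le\ltwo{\bfb}\ltwo{\bfc}$ I plan to establish via a weighted Cauchy--Schwarz dual certificate: summing the $n$ constraints against weights $w_0,\ldots,w_{n-1}$ yields
\[
\sum_{k=0}^{n-1} w_k \;=\; \bfb^T W \bfc \;\le\; \sigma_{\max}(W)\,\ltwo{\bfb}\ltwo{\bfc},
\]
where $W_{ij}=w_{i+j}$ for $i+j\le n-1$ (and $0$ otherwise) is a symmetric Hankel matrix.

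Choosing the weights by back-substituting the triangular system $W\bff=\sigma\bff$ uniquely determines $\bfw$ up to a positive scale, and the identity $\bff^T W\bff=\sum_k w_k(\bff*\bff)_k=\sum_k w_k$ combined with \cref{eq:lem:fhu-properties2} gives $\sum w_k=\sigma\,\ltwo{\bff}^2$, yielding $\ltwo{\bfb}\ltwo{\bfc}\ge\ltwo{\bff}^2$ as soon as $\bff$ is the \emph{dominant} eigenvector of $W$. This dominance is the principal obstacle of the plan: it can be checked directly for small $n$ (I verified $n=2$ and $n=3$, where the spectrum of $W$ is $\{1,-\tfrac14\}$ and approximately $\{1,0.272,-0.194\}$ respectively, with $\bff$ the eigenvector of the unit eigenvalue in each case), and I expect the general proof to follow either from the total positivity of Hankel matrices generated by the central-binomial sequence $f_k=4^{-k}\binom{2k}{k}$, or from KKT complementary slackness linking $W$ to the Lagrangian multipliers derived above, which forces the primal--dual gap to close exactly at $\bfb=\bfc=\bff$.
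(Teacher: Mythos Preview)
Your plan is essentially the same framework as the paper's proof: reformulate as $\tfrac12(\ltwo{\bfb}^2+\ltwo{\bfc}^2)$, introduce Lagrange multipliers assembled into a symmetric Hankel matrix (your $W$, the paper's $\Gamma(\lambda)$), and reduce everything to showing that $\bff$ is the \emph{dominant} eigenvector of this Hankel matrix. You correctly isolate this as the main obstacle. However, you have not found the idea that closes it, and neither of your speculative routes points at the actual mechanism.

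The missing ingredient is a Perron--Frobenius argument applied to $W^2$. Back-substituting $W\bff=\bff$ determines the weights $w_k=\lambda^\star_{n-1-k}$ explicitly, and a direct computation (which is the real work of the proof) shows that every $\lambda^\star_k$ is \emph{strictly positive}; the paper obtains this from the bound
\[
\lambda^\star_{n-1-i} \;\ge\; f_{n-1-i}\Bigl(1-\sum_{j\ge 0}\tfrac{f_j}{2(j+1)}\Bigr),
\]
together with the identity $\sum_{j\ge 0}\tfrac{f_j}{j+1}=2$, which follows by integrating $\sum f_j x^j=(1-x)^{-1/2}$. Once all entries of $W$ are positive, every entry of $W^2$ is positive, so Perron--Frobenius gives a unique eigenvector with all-positive coordinates, and it corresponds to the spectral radius. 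Since $\bff$ is entrywise positive and satisfies $W^2\bff=\bff$, it must be that eigenvector, giving $\sigma_{\max}(W)=1$ and closing your inequality.

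Your ``total positivity'' speculation is off target: $W$ is generated by the dual weights $w_k$, not by the central-binomial sequence $f_k$, and what is needed is entrywise positivity of $W$, not total positivity of any Hankel matrix built from $\bff$. Your $\bfb_+/\bfb_-$ decomposition is also an unnecessary detour; once $\sigma_{\max}(W)\le 1$ is known, the Lagrangian is jointly convex in $(\bfb,\bfc)$ and the stationary point $(\bff,\bff,\lambda^\star)$ is automatically a global saddle, so there is no need to argue symmetry separately.
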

\begin{proof}
The proof proceeds in 3 steps: we first reformulate the optimization as a quadratically constrained quadratic program, then derive a natural Lagrangian relaxation of it, and finally construct a primal dual solution to the Lagrangian relaxation that is globally optimal with the primal solution $\bfb=\bfc=(f_k)_{k=0}^{n-1}$.

We use the following properties of $f_k$'s established by \citet{fichtenberger2022constant} and also shown in \cref{lem:fhu-properties}:
\begin{itemize}
    \item $f_k > 0 \quad \forall k \in \{0, \ldots, n-1\}$.
    \item Let $F \in \mathbb{R}^{n \times n}$ denote the lower triangular Toeplitz matrix whose first column is $f_0, f_1, \ldots, f_{n-1}$. Then, we have $F^2=A$.
    \item $\sum_{k=0}^{\infty}f_k x_k = \frac{1}{\sqrt{1-x}} \quad \forall x \in [0, 1]$.
\end{itemize}

\paragraph{Reformulation as a quadratically constrained quadratic program:}
We begin by observing that the optimization problem can be rewritten as a quadratic optimization as follows:
\begin{align*}
\ltwo{\bfb}\ltwo{\bfc} = \inf \left\{\frac\nu2 \ltwo{b}^2 + \frac{1}{2\nu} \ltwo{c}^2 : \nu>0 \right\} = \min \left\{\frac12 \ltwo{\nu b}^2 + \frac12 \ltwo{\frac{c}{\nu}}^2 : \nu \in \mathbb{R}\setminus\{0\} \right\}. 
\end{align*}
Furthermore, for any feasible $(b, c)$ that satisfy the constraints of \cref{eq:optimization} and any $\nu \ne 0$, the pair $(\nu b, \frac{c}{\nu})$ also satisfies the constraints. If $S$ denotes the feasible set of \cref{eq:optimization}, we have 
\[\min_{\bfb,\bfc \in S} \ltwo{\bfb}\ltwo{\bfc} = \min_{\bfb,\bfc \in S}\min_{\nu \in \mathbb{R}\setminus\{0\}} \frac12 \ltwo{\nu b}^2 + \frac12\ltwo{\frac{c}{\nu}}^2 = \min_{\bfb, \bfc \in S} \frac12 \ltwo{\bfb}^2 + \frac12 \ltwo{\bfc}^2. \]

Thus, we can simply write the overall optimization problem as 
\begin{equation}
\begin{array}{rrclcl}
\displaystyle \min_{\bfb,\bfc\in\mathbb{R}^n} & \multicolumn{3}{l}{\frac{1}{2}{\br{\ltwo{\bfb}^2 + \ltwo{\bfc}^2}}}\\
\textrm{s.t.} & \forall k  <n~~ \sum\limits_{i=0}^k b_ic_{k-i}=1.\\
\end{array}
\label{eq:optimization_alt}
\end{equation}

\paragraph{Lagrangian relaxation:}
We write the Lagrangian relaxation of this optimization problem introducing dual variables $\lambda_0, \cdots, \lambda_{n-1} \in \mathbb{R}$ corresponding to the constraints:
\begin{align}
&L\br{b,c,\lambda} = \frac{1}{2}\br{\ltwo{\bfb}^2 + \ltwo{\bfc}^2}    - \sum_{k=0}^{n-1} \lambda_k \br{\sum_{i=0}^k b_ic_{k-i}-1} \\
&=\frac{1}{2}\br{\ltwo{\bfb}^2 + \ltwo{\bfc}^2} - \tran{\bfb}\Gamma\br{\lambda}\bfc + \tran{\One}\lambda \notag\\
& = \frac{1}{2}\tran{\begin{pmatrix}b \\ c\end{pmatrix}} \begin{pmatrix} I & - \Gamma\br{\lambda} \\ -\Gamma\br{\lambda} & I\end{pmatrix}\begin{pmatrix}b \\ c\end{pmatrix} + \tran{\One}\lambda , \notag
\end{align}
where $\Gamma\br{\lambda}$ is the symmetric Hankel matrix
\[
\Gamma\br{\lambda} \coloneqq \begin{pmatrix}
\lambda_0 & \lambda_1 & \lambda_2 & \ldots & \lambda_{n-1}\\
\lambda_1 & \lambda_2 & \lambda_3 & \ldots & 0 \\
\lambda_2 & \lambda_3 & \lambda_4 & \ldots & 0\\
\vdots & \vdots & \vdots & \vdots & \vdots \\
\lambda_{n-1} & 0 & 0 & \ldots & 0
\end{pmatrix}.\]
The constrained optimization problem in \cref{eq:optimization_alt} is \emph{equivalent} to the unconstrained min-max problem \begin{equation}\min_{b,c\in\mathbb{R}^n} \max_{\lambda\in\mathbb{R}^n} L(b,c,\lambda).\end{equation}
Letting $\vec f = (f_0, \ldots, f_{n-1}) \in \mathbb{R}^n$, we will set $b=c=\vec f$ and exhibit a setting $\lambda=\lambda^*$ such that $\nabla_{b,c,\lambda}L(b,c,\lambda) = 0$ and $L(b,c,\lambda)$ is jointly convex in $b,c$ and concave in $\lambda$. It follows that $b=c= \vec f$ is the optimal solution to the original problem, as required.
Since $b=c=\vec f$ is feasible, $\nabla_\lambda L(b,c,\lambda) = \nabla_\lambda L(\vec f, \vec f,\lambda) = 0$.
The Lagrangian is linear in $\lambda$ and hence trivially concave.
The Lagrangian is convex in $\br{\bfb, \bfc}$ if and only if 
\[\begin{pmatrix} I & - \Gamma\br{\lambda} \\ -\Gamma\br{\lambda} & I\end{pmatrix} \succeq 0 \]
or equivalently (by Schur complements) if $\norm{\Gamma\br{\lambda}}_\star\leq 1$ where $\norm{Q}_\star \coloneqq \sup \{ \|Qx\|_2 : \|x\|_2 \le 1 \}$ denotes the operator norm. 

\paragraph{Construction of primal-dual optimal solution:}
Now it only remains to compute $\lambda=\lambda^*$ such that $\nabla_{b,c} L(b,c,\lambda) = 0$ for $b=c=\vec f$ and $\norm{\Gamma\br{\lambda}^2}_\star\leq 1$.

For all $k \in [n]$, we have $\frac{\partial}{\partial b_k} L(b,c,\lambda) = b_k - \sum_{i=k}^{n-1} \lambda_i c_{i-k}$ and $\frac{\partial}{\partial c_k} L(b,c,\lambda) = c_k - \sum_{i=k}^{n-1} \lambda_i b_{i-k}$. Equivalently, $\nabla_b L(b,c,\lambda) = b - \Gamma(\lambda) c$ and $\nabla_c L(b,c,\lambda) = c - \Gamma(\lambda) b$. 
Thus, in order to ensure $\nabla_{b,c} L(b,c,\lambda) = 0$  when $b=c=\vec f$, it suffices to set $\lambda = \lambda^\star$, where $\lambda^\star$ is chosen so as to solve 
\begin{align*}
    \vec f = \Gamma\br{\lambda^\star} \vec f &\iff P \vec f = P \Gamma\br{\lambda^\star} \vec f \\&\iff \begin{pmatrix}f_{n-1} \\ f_{n-2} \\ \vdots \\ f_0 \end{pmatrix} = \begin{pmatrix}
\lambda^\star_{n-1} & 0 & 0 & \ldots & 0\\
\lambda^\star_{n-2} & \lambda^\star_{n-1} & 0 & \ldots & 0 \\
    \lambda^\star_{n-3} & \lambda^\star_{n-2} & \lambda^\star_{n-1} & \ldots & 0\\
\vdots & \vdots & \vdots & \vdots & \vdots \\
\lambda^\star_{0} & \lambda^\star_1 & \lambda^\star_2 & \ldots & \lambda^\star_{n-1}
\end{pmatrix}\begin{pmatrix}f_0 \\ f_{1} \\ \vdots \\ f_{n-1} \end{pmatrix}\\
&\iff \begin{pmatrix}f_{n-1} \\ f_{n-2} \\ \vdots \\ f_0 \end{pmatrix} = \begin{pmatrix}
f_0 & 0 & 0 & \ldots & 0\\
f_{1} & f_{0} & 0 & \ldots & 0 \\
    f_{2} & f_{1} & f_{0} & \ldots & 0\\
\vdots & \vdots & \vdots & \vdots & \vdots \\
f_{n-1} & f_{n-2} & f_{n-3} & \ldots & f_{0}
\end{pmatrix}\begin{pmatrix}\lambda^\star_{n-1} \\ \lambda^\star_{n-2} \\ \vdots \\ \lambda^\star_{0} \end{pmatrix} ,
\end{align*}
where $P$ is the permutation matrix such that $Px$ is the reversal of the vector $x$. Thus, we can solve for $\lambda^\star$ to obtain
\[P \lambda^\star = \begin{pmatrix}\lambda^\star_{n-1} \\ \lambda^\star_{n-2} \\ \vdots \\ \lambda^\star_{0} \end{pmatrix} = \begin{pmatrix}
f_0 & 0 & 0 & \ldots & 0\\
f_{1}-f_0 & f_{0} & 0 & \ldots & 0 \\
    f_{2}-f_1 & f_{1} - f_{0} & f_0 & \ldots & 0\\
\vdots & \vdots & \vdots & \vdots & \vdots \\
f_{n-1}-f_{n-2} & f_{n-2}-f_{n-3} & f_{n-3}-f_{n-4} & \ldots & f_{0}
\end{pmatrix}\begin{pmatrix}f_{n-1} \\ f_{n-2} \\ \vdots \\ f_0 \end{pmatrix},\]
where we used the fact that $\vec f$ is a feasible solution:
\begin{align*}\begin{pmatrix}
f_0 & 0 & 0 & \ldots & 0\\
f_{1} & f_{0} & 0 & \ldots & 0 \\
    f_{2} & f_{1} & f_0 & \ldots & 0\\
\vdots & \vdots & \vdots & \vdots & \vdots \\
f_{n-1} & f_{n-2}& f_{n-3} & \ldots & f_{0}
\end{pmatrix}\begin{pmatrix}
f_0 & 0 & 0 & \ldots & 0\\
f_{1} & f_{0} & 0 & \ldots & 0 \\
    f_{2} & f_{1} & f_0 & \ldots & 0\\
\vdots & \vdots & \vdots & \vdots & \vdots \\
f_{n-1} & f_{n-2}& f_{n-3} & \ldots & f_{0}
\end{pmatrix} = \begin{pmatrix}
1 & 0 & 0 & \ldots & 0\\
1 & 1 & 0 & \ldots & 0 \\
    1 & 1 & 1 & \ldots & 0\\
\vdots & \vdots & \vdots & \vdots & \vdots \\
1 & 1 & 1 & \ldots & 1
\end{pmatrix} \\
\iff
{\begin{pmatrix}
f_0 & 0 & 0 & \ldots & 0\\
f_{1} & f_{0} & 0 & \ldots & 0 \\
    f_{2} & f_{1} & f_0 & \ldots & 0\\
\vdots & \vdots & \vdots & \vdots & \vdots \\
f_{n-1} & f_{n-2}& f_{n-3} & \ldots & f_{0}
\end{pmatrix}}^{-1} = {\begin{pmatrix}
1 & 0 & 0 & \ldots & 0\\
1 & 1 & 0 & \ldots & 0 \\
    1 & 1 & 1 & \ldots & 0\\
\vdots & \vdots & \vdots & \vdots & \vdots \\
1 & 1 & 1 & \ldots & 1
\end{pmatrix}}^{-1}\begin{pmatrix}
f_0 & 0 & 0 & \ldots & 0\\
f_{1} & f_{0} & 0 & \ldots & 0 \\
    f_{2} & f_{1} & f_0 & \ldots & 0\\
\vdots & \vdots & \vdots & \vdots & \vdots \\
f_{n-1} & f_{n-2}& f_{n-3} & \ldots & f_{0}
\end{pmatrix} \\
\iff {\begin{pmatrix}
f_0 & 0 & 0 & \ldots & 0\\
f_{1} & f_{0} & 0 & \ldots & 0 \\
    f_{2} & f_{1} & f_0 & \ldots & 0\\
\vdots & \vdots & \vdots & \vdots & \vdots \\
f_{n-1} & f_{n-2}& f_{n-3} & \ldots & f_{0}
\end{pmatrix}}^{-1} = {\begin{pmatrix}
1 & 0 & 0 & \ldots & 0\\
-1 & 1 & 0 & \ldots & 0 \\
    0 & -1 & 1 & \ldots & 0\\
\vdots & \vdots & \vdots & \vdots & \vdots \\
0 & 0 & 0 & \ldots & 1
\end{pmatrix}}\begin{pmatrix}
f_0 & 0 & 0 & \ldots & 0\\
f_{1} & f_{0} & 0 & \ldots & 0 \\
    f_{2} & f_{1} & f_0 & \ldots & 0\\
\vdots & \vdots & \vdots & \vdots & \vdots \\
f_{n-1} & f_{n-2}& f_{n-3} & \ldots & f_{0}
\end{pmatrix}.
\end{align*}

From the definition $f_0=1$ and $f_k = f_{k-1}(1-1/2k)$, for all $k \ge 1$, we have
\begin{align*}\lambda^\star_{n-1-i} &= f_0 f_{n-1-i} - \sum_{j=1}^i \br{f_j-f_{j-1}}f_{n-1-i+j} \\&= f_{n-1-i} - \sum_{j=1}^i f_{j-1}\br{\frac{1}{2j}}f_{n-1-i+j}\\&=f_{n-1-i}\br{1-\sum_{j=1}^if_{j-1}\br{\frac{1}{2j}}\frac{f_{n-1-i+j}}{f_{n-1-i}}}.\end{align*}
Since $\frac{f_{n-1-i+j}}{f_{n-1-i}} = \prod_{\ell=1}^j \br{1-\frac{1}{2(n-i+\ell)}} \le 1$, we have
\begin{align} \lambda^\star_{n-1-i} \ge f_{n-1-i}\br{1-\sum_{j=1}^if_{j-1}\br{\frac{1}{2j}}} > f_{n-1-i}\br{1-\sum_{j=1}^\infty  f_{j-1}\br{\frac{1}{2j}}} = f_{n-1-i}\br{1-\sum_{j=0}^\infty\frac{f_{j}}{2(j+1)}}\label{eq:lam_eqn}.\end{align}
Since $\sum_{j=0}^\infty f_j x^j = \frac{1}{\sqrt{1-x}}$ for all $x \in [0 , 1)$, integrating both sides between the limits $0$ and $t$, we have
\[\sum_{j=0}^\infty \frac{f_j}{j+1} t^{j+1}=2\br{1-\sqrt{1-t}}.\]
Taking limits as $t \to 1$, we obtain
\[\sum_{j=0}^\infty \frac{f_j}{j+1}=2.\]
Thus, from \cref{eq:lam_eqn}, we can conclude  that $\lambda^\star_{n-1-i} > 0$ for all $i \in [n]$ and hence $\Gamma\br{\lambda^\star}^2$ is a matrix with all entries $(\Gamma(\lambda^\star)^2)_{i,j} = \sum_{k=0}^{n-1-\max\{i,j\}} \lambda^\star_{i+k}\lambda^\star_{j+k}$ strictly positive. Hence, by the Perron-Frobenius theorem, we have that $\Gamma\br{\lambda^\star}^2$ has a unique eigenvector with all coordinates positive, and this corresponds to the maximum eigenvalue. Since $f$ is an eigenvector of $\Gamma\br{\lambda^\star}^2$ by construction, it must be the unique leading eigenvector corresponding to the eigenvalue $1$. Hence $\norm{\Gamma\br{\lambda^\star}^2}_* \leq 1$.

Thus, $L\br{\bfb, \bfc, \lambda^\star}$ is convex in $\br{\bfb, \bfc}$, since $\norm{\Gamma\br{\lambda}^2}_\star \leq 1 \iff \norm{\Gamma\br{\lambda}}_\star \leq 1$.
And $\lambda^\star$ was chosen to ensure $\nabla_{b,c} L(b,c,\lambda^\star)=0$.
This lets us conclude that $\br{\vec f, \vec f, \lambda^\star}$ is a stationary point of the Lagrangian that is convex in $\bfb, \bfc$ and concave in $\lambda$, hence it is the global optimum. 

\end{proof}

\Cref{lem:opt-ltt} fully characterizes the optimal lower triangular Toeplitz factorization.
We now compare the value of the matrix factorization objective for lower triangular Toeplitz factorizations versus general factorizations.
For convenience, we define a notation for these values:

\begin{defn}
Let $\bfA\in\{0,1\}^{n\times n}$ be the lower triangular matrix of all ones.
Recall that $C \in \mathbb{R}^{n \times n}$ is lower triangular Toeplitz if $C_{i,j}=0$ for all $i < j$ and $C_{i,j} = c_{i-j}$ for all $i \ge j$, where $c \in \mathbb{R}^n$ is a vector.
Define 
\begin{align}
    \mathsf{OptLTToe}(n) &\coloneqq \inf \left\{ \MaxErr(B,C) : B,C \in \mathbb{R}^{n \times n} \text{ lower triangular Toeplitz}, BC=A \right\}, \label{eq:optlttoe} \\
    \mathsf{Opt}(n) &\coloneqq \inf \left\{ \MaxErr(B,C) : B,C \in \mathbb{R}^{n \times n} , BC=A \right\}, \label{eq:optgen}
\end{align}
where $\MaxErr(B,C) \coloneqq \|B\|_{2\to\infty} \|C\|_{1 \to 2} = \sqrt{\max_i \sum_j B_{i,j}^2} \sqrt{\max_j \sum_i C_{i,j}^2}$.
\end{defn}
By \Cref{lem:lbstruct} and \Cref{lem:fhu-properties}, for all integers $n \ge 1$, we have
\begin{equation}
    1 + \frac{\gamma-1 + \log(n)}{\pi} \le \mathsf{OptLTToe}(n) = 1 + \sum_{k=1}^{n-1} \left( 2^{-2k} {2k \choose k} \right)^2 \le 1 + \frac{\gamma+\log(n)}{\pi}.
\end{equation}
The lower bound of \citet{matouvsek2020factorization} in \cref{eq:sasho} states the optimal over all factorizations is \[\mathsf{Opt}(n) = \gamma_2(A^{(n)}) \ge \frac{\log\left(2n+1\right)}{\pi}.\]
Combining these bounds gives a bound on the gap between the optimal general factorization and the optimal lower triangular Toeplitz factorization.
\begin{cor} \label{cor:toeplitz_gap}
    The gap between the lower triangular Toeplitz and general optimal factorizations is 
\[\mathsf{OptLTToe}(n)-\mathsf{Opt}(n) \le \left(1 + \frac{\gamma+\log(n)}{\pi}\right)-\left( \frac{\log(2n+1)}{\pi}\right) \le 1+\frac{\gamma-\log(2)}{\pi} \le 0.97.\]
\end{cor}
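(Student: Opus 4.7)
The proof is essentially an arithmetic chase combining two bounds that are already established in the excerpt, so my plan is short.

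First, I would invoke the upper bound on $\mathsf{OptLTToe}(n)$ obtained by combining \Cref{lem:opt-ltt} (which identifies $\mathsf{OptLTToe}(n) = \sum_{k=0}^{n-1} f_k^2$ with Fichtenberger--Henzinger--Upadhyay's sequence) with \Cref{eq:lem:fhu-properties3} of \Cref{lem:fhu-properties}, yielding $\mathsf{OptLTToe}(n) \le 1 + \frac{\gamma+\log(n)}{\pi}$. Second, I would invoke the Matou\v{s}ek--Nikolov--Talwar lower bound stated in \Cref{eq:sasho}, namely $\mathsf{Opt}(n) = \gamma_2(A^{(n)}) \ge \frac{\log(2n+1)}{\pi}$. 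Subtracting the two yields the first displayed inequality of the corollary:
\[
\mathsf{OptLTToe}(n) - \mathsf{Opt}(n) \le 1 + \frac{\gamma + \log(n) - \log(2n+1)}{\pi}.
\]

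For the second inequality, I would rewrite $\log(n) - \log(2n+1) = -\log(2 + 1/n)$ and observe that for every integer $n \ge 1$, $2 + 1/n > 2$, hence $-\log(2+1/n) < -\log(2)$. This immediately gives
\[
1 + \frac{\gamma + \log(n) - \log(2n+1)}{\pi} \le 1 + \frac{\gamma - \log(2)}{\pi}.
\]

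Finally, for the numerical bound $\le 0.97$, I would plug in the stated bound $\gamma \le 0.57722$ together with $\log(2) \ge 0.6931$ and $\pi \ge 3.1415$, giving $\gamma - \log(2) \le -0.11588$ and therefore $1 + (\gamma - \log(2))/\pi \le 1 - 0.0368 \le 0.97$. There is no real obstacle here; the only thing to be slightly careful about is that the two bounds being combined are valid for all $n \ge 1$ (the Fichtenberger--Henzinger--Upadhyay upper bound via \Cref{lem:fhu-properties} and Matou\v{s}ek--Nikolov--Talwar's lower bound), which they are as stated.
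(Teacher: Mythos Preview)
Your proposal is correct and follows exactly the approach the paper takes: the corollary is stated immediately after the paper records the upper bound $\mathsf{OptLTToe}(n)\le 1+\frac{\gamma+\log n}{\pi}$ (from \Cref{lem:lbstruct} and \Cref{lem:fhu-properties}) and the lower bound $\mathsf{Opt}(n)\ge \frac{\log(2n+1)}{\pi}$ (from \cref{eq:sasho}), and simply subtracts them. Your treatment of the second inequality via $\log(2n+1)>\log(2n)=\log 2+\log n$ and the final numerical check are exactly what is needed.
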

Numerically, we have $\mathsf{OptLTToe}(n)-\mathsf{Opt}(n) \le 0.365$ for all $n \ge 1$.
\Cref{fig:bounds} shows numerically how the various known upper and lower bounds compare for differing values of $n$.

We conclude this subsection by remarking that, without the Toeplitz constraint, we can assume that the factors are lower triangular without loss of generality.
This is a consequence of the fact that two factorizations $A=BC$ and $A=\hat B \hat C$ are functionally equivalent as long as $BB^T = \hat B \hat B^T$. To see why this fact is true, note that for $z \gets \mathcal{N}(0,\sigma^2I)$, the distribution of $Bz$ is $\mathcal{N}(0,\sigma^2 BB^T)$. So if $BB^T = \hat B \hat B^T$, then $Bz$ and $\hat B z$ have the exact same distribution. Note that this fact is specific to the Gaussian distribution.
Given this fact, we can take an arbitrary factorization $A=BC$ and perform a Cholesky decomposition of $BB^T$ to obtain a lower triangular $\hat B$ satisfying $BB^T = \hat B \hat B^T$.
However, if we restrict to Toeplitz matrices, then we cannot assume they are lower triangular without loss of generality, since the Cholesky decomposition may not preserve this structure.

\subsection{Applications of Matrix Factorizations in Machine Learning} \label{sec:dpftrl}

In the recent years, Matrix Factorization (MF) have found extensive use both in obtaining optimal private regret/stochastic convex optimization (SCO) guarantees~\cite{thakurta2013nearly, agarwal2017price, kairouz2021practical,asi2023near, Cutosky21}, and in practical deployments~\cite{kairouz2021practical, denisov22matfact, choquette22multiepoch,choquette2024amplified,xu2023federated}. The main observation that maps MF to the (convex) optimization setup is that (unconstrained) stochastic graident descent update with constant learning rate can be viewed as computing prefix sums over adaptively chosen gradient vectors.\footnote{There are more general mappings for constrained optimization, and adaptive optimizers like momentum methods, but a detailed discussion on those is tangential to this work.} Differentially Private Follow-the-regularized-leader (DP-FTRL)~\cite{kairouz2021practical,thakurta2013nearly} is an optimization algorithm that arises naturally from this mapping. DP-FTRL has found extensive usage in production grade deployment of DP learning. For example, all next-word-prediction models on Android Gboard is trained with DP-FTRL~\cite{xu2023federated}. 

For a given set of $\ell_2$-Lipschitz loss functions $\{f_1(\theta),\ldots,f_\nd(\theta)\}$ (where each loss function $f_i$ can be thought of as the optimization loss on a disjoint minibatch of data samples), the simplest version of DP-FTRL can be described as follows:
\vspace{-0.2cm}
\begin{equation}
    \texttt{DP-FTRL:}\ \ \ \  \theta_{t}\leftarrow \theta_0 -\eta \left( \sum_{k=0}^{t-1} \nabla f_k(\theta_k) + \hat{z}_k \right).\label{eq:DP-FTRL}
\end{equation}
In~\cref{eq:DP-FTRL}, $\eta$ is the learning rate, and $\{\hat{z}_t\}$ are the correlated Gaussian noise added to ensure that the computation of the sequence of $\{\theta_t\}$'s are differentially private. Mapping to the MF view in~\cref{eq:mfviews}, the adaptively chosen data set is set to $x=\{\nabla f_t(\theta_t): t\in[\nd]\}$, and the noise sequence $\{\hat{z}_t\}$ is set to $B z$, where $z \leftarrow  \calN(0,\sigma^2I)$. DP-FTRL captures a large class of noise addition mechanisms (parameterized by the specific factorization scheme $A=BC$ chosen). In particular, it captures DP-SGD~\cite{song2013stochastic,BST14,DP-DL} (another heavily used private learning algorithm) as a special case~\cite{choquette2024amplified}, and has much better privacy/utility trade-offs in general~\cite{choquette2023correlated}. \cite{kairouz2021practical,asi2023near} showed that for obtaining the best regret/SCO guarantees from DP-FTRL style algorithms, it suffices to optimize for $\MaxErr(B, C)$ to instantiate the noise mechanism in~\cref{eq:DP-FTRL}.

One major challenge in large deployments of DP-FTRL is that noise generation for an arbitrary matrix mechanism can be prohibitively expensive, as discussed in \cref{sec:intro}. Our work provides a practical approach that only requires a constant factor blow-up (usually at most 4-6 times) in memory as compared to that of DP-SGD, and at privacy/utility trade-off comparable to DP-FTRL with an optimal matrix-factorization mechanism. (DP-SGD adds independent Gaussian noise to the gradients at each state update.)


\subsection{Other Related Work}

\paragraph{Beyond additive noise:}
Our algorithms are based on adding Gaussian noise (with a carefully chosen covariance). Since the noise is additive, the error is independent of the number of data points.\footnote{The error of our algorithms depends on the number of iterations $\nd$. This is usually closely related to the number of data points $|x|$, but, in general, these two quantities can be unrelated.} The expected maximum error is \begin{equation}
    \forall x ~~~ \ex{\calM}{\|\calM(x)-Ax\|_\infty} = \ex{\calM}{\max_k |(\calM(x)-Ax)_k|} \le O(\log(n))^{3/2}, \label{eq:Emaxerr}   
\end{equation}
while the maximum expected error is 
\begin{equation}
    \forall x ~~~ \max_k \ex{\calM}{|(\calM(x)-Ax)_k|} \le O(\log(n)). \label{eq:maxEerr}
\end{equation}
The additional $\sqrt{\log(n)}$ term in \Cref{eq:Emaxerr} compared to \Cref{eq:maxEerr} comes from taking a union bound over $n$ Gaussians.
Note that, for simplicity, we suppress the privacy parameters; to achieve $(\varepsilon,\delta)$-differential privacy, the error above scales with $\frac1\varepsilon\log(1/\delta)$; to achieve $\rho$-zCDP \cite{bun2016concentrated}, the error above scales with $1/\sqrt{\rho}$.

The bound in \cref{eq:maxEerr} on the maximum expected error is tight up to constants in the streaming setting -- in fact the lower bound holds for \emph{average} expected error: 
\begin{equation}
    \forall \calM ~ \exists x ~~~ \frac1n \sum_k \ex{\calM}{|(\calM(x)-Ax)_k|} \ge \Omega(\log(n)),
\end{equation}
assuming $\calM$ is differentially private \cite[Theorem 4]{henzinger2023almost}.
However, the bound in \cref{eq:Emaxerr} on expected maximum error is \emph{not} tight if we move beyond additive noise mechanisms \cite{dwork2015pure}.
\citet{dwork2015pure} show that we can obtain an improved error guarantee:\footnote{\citet{dwork2015pure} stated this result for pure differential privacy, with $\log(|x|)^{2}$ instead of $\log(|x|)^{3/2}$.}
\begin{equation}
    \exists \calM ~ \forall x ~~~ \ex{\calM}{\|\calM(x)-Ax\|_\infty} \le O(\log(n) + \log(|x|)^{3/2}), \label{eq:dwork2015pure}   
\end{equation}
where $|x|$ is the number of data points.\footnote{In most of the differential privacy literature $n$ is the number of data points / people, but we use $n$ for the number of iterations; $|x|$ should be read as the size of the dataset $x$.}
The high level idea is to ``freeze'' the output of $\calM$ until the partial sum has changed significantly and only then update the output.
This works because deciding whether the partial sum has changed significantly is easier than evaluating it. Specifically, we can use the so-called ``sparse vector'' technique to decide when to update.
This algorithm is a reduction -- when it does decide to update, it uses additive noise -- the ``win'' is that we need to use additive noise for $\le |x|$ updates, rather than $n$ updates. Thus our improvements for additive noise mechanisms can be applied in this setting.

\cref{eq:dwork2015pure} is only an improvement over \cref{eq:Emaxerr} in the sparse setting ($|x| \ll n$). Machine learning applications are typically dense ($|x| \ge n$).

\paragraph{Offline setting:}
Our work is in the streaming setting, where we must output an approximation to each partial sum as soon as it is complete.
It is also natural to consider the offline setting, where we are given the entire input at once.
In the offline setting, the memory constraints that we focus on are not as relevant.

The offline version of continual counting is known as threshold queries or quantiles and this task has been studied extensively \cite{beimel2013private,bun2015differentially,bun2018composable,alon2019private,kaplan2020privately,gillenwater2021differentially,cohen2023optimal}.
The offline setting permits more sophisticated algorithms, which achieve asymptotically better error guarantees.
\citet{cohen2023optimal} give an $(\varepsilon,\delta)$-differentially private algorithm $\calM$ with error
\begin{equation}
    \forall x ~~~ \ex{\calM}{\|\calM(x)-Ax\|_\infty} \le \widetilde{O}\left(\frac1\varepsilon \log^2(1/\delta) \log^*(n)\right)
\end{equation}
Note that offline algorithms are not applicable in the machine learning applications discussed in \Cref{sec:dpftrl}. 

\paragraph{Pure differential privacy:}
The early work on continual counting \cite{Dwork-continual,CSS11-continual,xiao2010differential} works under pure $(\varepsilon,0)$-differential privacy, which rules out Gaussian noise.
Instead Laplace noise can be used. In this case, we must bound the $L_1$ sensitivity. So the objective for the matrix factorization is to minimize $\|B\|_{2\to\infty} \cdot \|C\|_{1 \to 1}$, where $\|C\|_{1 \to 1} = \max_j \sum_i |C_{i,j}|$ is the maximum $1$-norm of a column. 
This subtly changes the problem. The binary tree factorization still works well with this objective. However, Toeplitz factorizations do not work as well. Intuitively, the binary tree factorization relies on sparsity which controls both the $1$- and $2$-norms, whereas Toeplitz factorizations produce dense matrices for which the the $1$- and $2$-norms differ substantially.

\paragraph{Factoring other matrices:}
We exclusively look at factorizing the all-ones lower triangular matrix. This is a natural and well-motivated problem.
However, there is also work on factorizing other matrices.
Most closely related to the all-ones lower triangular matrix $A$, \citet{mathias1993hadamard} gives exact expressions for $\gamma_2(A+A^T)$ and $\gamma_2(A+A^T+I)$.
There is a different line of work \cite{li2015matrix,mckenna2021hdmm} considering factorizations of generic workload matrices, which have a completely different structure than the matrices we consider.

\subsection*{Notation Summary}
Before proceeding, we briefly summarize the key symbols and notation used throughout:
\begin{table}[H]
\renewcommand{\arraystretch}{1.4}%
\begin{tabular}{ll}
\toprule
\nd & Number of steps on which DP estimates are released.  \\
$\nbuf$ & Order of recurrence / degree / number of memory buffers. \\
$\mdim$ & Dimension of per-step user contributions (e.g., model size). \\
$[k]$ & $=\{0, \dots, k - 1\}$ for $k \ge 1$. \\
$\N$ & $= \{0, 1, 2, \dots \}$, the natural numbers including zero. \\ 
$f$ & Rational (generating) function where $f(x) = f_0 + f_1 x + f_2 x^2 \cdots$. \\
$p$ & Polynomial of degree $d$ with $p(x) = p_0 + p_1 x + p_2 x^2 + \cdots + p_d x^d$. \\
$\seqk{f_k}$ & Sequence $f_0, f_1, \dots$ with corresponding generating function $f$. \\
$\LTT(f, \nd) \in \R^\dimdim$ & Lower-triangular Toeplitz matrices defined by a generating function $f$. \\
$\cdot$ & Context-dependent multiplication, used where it improves readability. \\
$f * r$ & Cauchy product / convolution $(f * r)_n = \sum_{i=0}^{n} f_i \cdot r_{n-i}$. \\
$A\tp$  & Transpose of a matrix $\bfA$. \\
$A^\star$  & A matrix $A$ that is ``optimal'' in a context-dependent sense. \\
$A^\dagger$  & Moore-Penrose pseudoinverse of matrix $\bfA$. \\
$A\idx{i}{j}$ & The $(i, j)^{\text{th}}$ entry of matrix $\bfA$, zero-indexed.\\
$A\idx{i}{:}$ and $\bfA\idx{:}{j}$  & The $i^{\text{th}}$ row and $j^{\text{th}}$ column. \\
$\lfrob{A}$  & The Frobenius norm of a matrix $A$. \\
$\bfA \in \R^\dimdim$  & Specifically, the lower-triangular matrix of ones to be factorized as $\bfA = \bfB \bfC$. \\
$\log$ & Natural logarithm, $\log(2.718) \approx 1$.\\
\bottomrule
\end{tabular}
\caption{Summary of notation}\label{tab:notation}
\end{table}

\section{Efficiently-Sampleable Factorization via Rational Generating Functions} \label{sec:genfunc-framework}

In this section we describe our main algorithmic tools. 
First (in \S\ref{sec:genfuncmat}) we present the view of lower triangular Toeplitz matrices in terms of generating functions or sequences. This is a convenient mathematical formalism for analysis \cite{fichtenberger2022constant}.
Second (in \S\ref{sec:rational-constrec}) we discuss the special case of rational generating functions or, equivalently, constant-recurrent sequences. Our results focus on this special case, as the additional structure is useful for our algorithms; this structure is where we depart from the prior literature.
Third (in \S\ref{sec:algorithm}) we present our algorithm for sampling from lower triangular Toeplitz matrices with rational generating functions (that is, the multiplication algorithm for \BLT matrices).
In subsequent sections (\S\ref{sec:rationalfunctionapprox}, \S\ref{sec:practical-optimization}) we instantiate these rational generating functions.

\subsection{Sequences, Lower Triangular Toeplitz Matrices, \& Generating Functions} \label{sec:genfuncmat}

We begin by describing the generating function view of lower triangular Toeplitz matrices used by \citet{fichtenberger2022constant}. 
Our analysis moves fluidly between three different views of the same mathematical object:
\begin{itemize}
    \item The sequence $\seqk{f_k} = f_0, f_1, f_2, \dots$ with $f_k \in \R$ for each index $k$.
    \item The ordinary generating function of the sequence, $f: \C \rightarrow \C$,
      \[f(x) = \sum_{k=0}^\infty f_k x^k = \sum_{k=0}^\infty \frac{f^{(k)}(0)}{k!} x^k.
      \]
    (It will be clear from context if $f$ refers to the sequence entries or the generating function).
    \item The family of $\dimdim$ lower-triangular Toeplitz matrices $M(f,\nd) \in \mathbb{R}^{\nd \times \nd}$ for $\nd \in \N$ that are generated by $f$ as
    \begin{equation}
    \forall i,j \in [\nd] ~~~~~~~~~~
    M(f,\nd)_{i,j} \coloneqq \left\{\begin{array}{cl} f_{i-j} & \text{ if } i \ge j \\ 0 & \text{ if } i<j \end{array}\right\} 
    \label{eq:genfuncmatrix}
\end{equation} 
    When the specific $\nd$ is unimportant or clear from context, we write $M(f)$ to refer to any matrix in this family (or equivalently, the infinite-dimensional linear operator).
\end{itemize}
%
%
As a canonical example, we have the all-ones sequence,
\begin{equation}
\seqk{g_k} = 1, 1, 1, 1, \dots, 
\quad
 g(x) = \frac{1}{1-x} = \sum_{k=0}^\infty x^k, 
 \quad
A^{(4)} \coloneqq M(g, 4) = \begin{bmatrix}
1 & 0 & 0 & 0 \\
1 & 1 & 0 & 0 \\
1 & 1 & 1 & 0 \\
1 & 1 & 1 & 1 \\
\end{bmatrix} 
\label{eq:allonesgf}  
\end{equation}

It is straightforward to see that addition (under the usual definitions) is equivalent across all three representations, and (slightly less obviously) the same fact holds for suitable definitions of multiplication. We denote the Cauchy product or convolution $*$ for sequences $f$ and $g$ by
\begin{equation}\label{eq:cauchyprod}
    \seqk{f_k} * \seqk{g_k} = \seqk{h_k}
    \quad \text{where} \quad 
    h_k \coloneqq \sum_{i=0}^{k} f_i \cdot g_{k-i}, 
\end{equation}
and write $h = f * g$ as a shorthand.

Then, the mathematical structure summarized in the following lemma is key to our analysis:
\begin{lem}\label{lem:mgfmult}
  Let $f$, $g$, and $h$ be ordinary generating functions with corresponding sequences. Then the Cauchy product of sequences, multiplication of the generating functions, and matrix multiplication are all equivalent. That is,
         \[ 
         \left(h = f * g \right) 
         \quad \Longleftrightarrow \quad
         \big( h(x) = f(x) g(x) \big)
         \quad \Longleftrightarrow \quad
         \big( \forall \nd ~~ M(h,\nd) = M(f,\nd) \cdot M(g,\nd) \big).
      \]
      Similarly the usual definitions of addition are equivalent:
      \[
      \big( \forall i ~~ h_i = f_i + g_i \big)
         \quad \Longleftrightarrow \quad
         \big( h(x) = f(x) + g(x) \big)
         \quad \Longleftrightarrow \quad
         \big( \forall \nd ~~ M(h,\nd) = M(f,\nd) + M(g,\nd) \big).  
         \]
\end{lem}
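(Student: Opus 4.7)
The plan is to verify the three equivalences by direct computation, moving through the chain sequences $\Leftrightarrow$ generating functions $\Leftrightarrow$ matrices. Each direction is essentially a one-line calculation; the only subtlety is bookkeeping indices in the matrix product.

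First I would handle the addition claim, which is immediate: termwise addition of sequences $(f_k + g_k)$ corresponds directly to summing power series coefficient-by-coefficient (yielding $f(x)+g(x)$), and the entries of $M(f,\nd) + M(g,\nd)$ are $f_{i-j} + g_{i-j}$ for $i \ge j$, which matches $M(h,\nd)$ with $h_k = f_k + g_k$. So all three characterizations coincide tautologically.

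For the multiplication claim, I would show the generating-function $\Leftrightarrow$ Cauchy product equivalence by expanding the formal product
\[
f(x) g(x) = \Bigl(\sum_{i \ge 0} f_i x^i\Bigr)\Bigl(\sum_{j \ge 0} g_j x^j\Bigr) = \sum_{k \ge 0} \Bigl(\sum_{i=0}^{k} f_i \, g_{k-i}\Bigr) x^k,
\]
so that the coefficient sequence of $f(x)g(x)$ is exactly $f * g$ as defined in \cref{eq:cauchyprod}. Since the map from a sequence to its generating function is a bijection (two power series are equal iff their coefficients agree), we get $h = f * g \Leftrightarrow h(x) = f(x)g(x)$.

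For the matrix side, fix $\nd$ and compute $(M(f,\nd) \cdot M(g,\nd))\idx{i}{j}$ for any $i, j \in [\nd]$. When $i < j$ the product is zero because $M(f,\nd)\idx{i}{k} = 0$ unless $k \le i$ and $M(g,\nd)\idx{k}{j} = 0$ unless $k \ge j$, so the two ranges are disjoint; the result is therefore lower triangular. When $i \ge j$, the nonzero terms have $j \le k \le i$ and, substituting $\ell = k - j$,
\[
(M(f,\nd) \cdot M(g,\nd))\idx{i}{j} = \sum_{k=j}^{i} f_{i-k}\, g_{k-j} = \sum_{\ell=0}^{i-j} f_{(i-j)-\ell}\, g_{\ell} = (f * g)_{i-j}.
\]
This depends on $(i,j)$ only through $i-j$, so $M(f,\nd)\cdot M(g,\nd)$ is lower triangular Toeplitz with generating sequence $f*g$; equivalently, it equals $M(h,\nd)$ iff $h = f*g$. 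Combined with the generating-function equivalence, this closes the chain in both directions. There is no real obstacle — the only thing to be careful about is the index shift $\ell = k - j$ in the matrix product and the observation that the equivalences must hold for all $\nd$, which follows because the formula for each entry is independent of the ambient dimension.
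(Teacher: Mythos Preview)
Your proof is correct and follows essentially the same approach as the paper: both verify the matrix-product equivalence by direct entrywise computation, with addition and the sequence/generating-function equivalence treated as immediate. The only cosmetic difference is that the paper writes the entries as Taylor coefficients $M(f,\nd)_{i,j} = f^{(i-j)}(0)/(i-j)!$ and invokes the Leibniz product rule to expand $h^{(i-j)}(0)$, whereas you work directly with the sequence indices $f_{i-j}$ and the convolution sum; your version is arguably cleaner.
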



\begin{proof}
    The results for sequences and their ordinary generating functions are well known, see e.g. \citet{kauers11concrete} or their equivalent results based on well-known properties of the Z-transform; cf.~\cite{oppenheim1997signals}. The results for addition are also immediate. Hence, it is sufficient to show $h(x) = f(x) g(x)$ iff 
    $\forall \nd \in \N, M(h,\nd) = M(f,\nd) \cdot M(g,\nd).$ 

    Fix $i,j \in [\nd]$. 
    If $i<j$, then $M(h,\nd)_{i,j} = 0$ and $(M(f,\nd) \cdot M(g,\nd))_{i,j} = \sum_{k \in [\nd]} M(f,\nd)_{i,k} \cdot M(g,\nd)_{k,j} = \sum_k 0$, since, for all $k \in [\nd]$, either $i<k$, whence $M(f,\nd)_{i,k}=0$, or $k<j$, whence $M(g,\nd)_{k,j}=0$, (or both). 
    Thus we can assume $i \ge j$.

    By the product rule and induction, we have the derivative $h^{(\ell)}(x) = \sum_{k = 0}^\ell {\ell \choose k} f^{(\ell-k)}(x) \cdot g^{(k)}(x)$ for all $\ell \ge 0$ and all applicable $x$.
    Thus
    \begin{align*}
        M(h,\nd)_{i,j} &= \frac{h^{(i-j)}(0)}{(i-j)!} \\
        &= \frac{1}{(i-j)!} \sum_{k = 0}^{i-j} {i-j \choose k} f^{(i-j-k)}(0) \cdot g^{(k)}(0) \\
        &= \sum_{k = 0}^{i-j} \frac{f^{(i-j-k)}(0)}{(i-j-k)!} \cdot \frac{g^{(k)}(0)}{k!} \\
        &= \sum_{k' = j}^{i} \frac{f^{(i-k')}(0)}{(i-k')!} \cdot \frac{g^{(k'-j)}(0)}{(k'-j)!} \tag{$k'=k+j$}\\
        &= \sum_{k' = j}^{i} M(f,\nd)_{i,k'} \cdot M(g,\nd)_{k',j} \\
        &= \sum_{k' \in [\nd]} M(f,\nd)_{i,k'} \cdot M(g,\nd)_{k',j} \tag{$i<k' \implies M(f,\nd)_{i,k'}=0$, $k'<j \implies M(g,\nd)_{k',j}=0$} \\
        &= \left( M(f,\nd) \cdot M(g,\nd) \right)_{i,j}.
    \end{align*}
\end{proof}

Given \cref{eq:allonesgf} and the above result for multiplication, in order to factorize $\bfA = \bfB \bfC$, it is natural to factorize by taking the square root of the generating function. Let $f(x) = \frac{1}{\sqrt{1-x}}$. Since $f(x) \cdot f(x) = \frac{1}{1-x} = g(x)$ is the generating function of the all ones sequence, it follows that $M(f,\nd) \cdot M(f,\nd) = M(g,\nd) = A^{(\nd)}$.
Indeed, this is the factorization given by \citet{fichtenberger2022constant}, which we show to be optimal among the class of lower triangular Toeplitz matrices in \Cref{lem:opt-ltt}.

Given any generating function $r(x)$, we can obtain a valid factorization $\bfA = \bfB \bfC$ where $\bfB = M(b,\nd)$ for $b(x)=r(x)/(1-x)$ and $\bfC = M(c,\nd)$ for $c(x) = 1/r(x)$.
Our approach is to choose a generating function such that $r(x) \approx \sqrt{1-x}$ but which also permits us to design an efficient algorithm.

\subsection{Rational Generating Functions and Constant-Recurrent Sequences} \label{sec:rational-constrec}

We will focus our attention on rational generating functions of the form
\begin{equation}
    r(x) = \frac{p(x)}{q(x)} = \frac{\sum_{i=0}^{d-1} p_i x^i}{1+\sum_{j=1}^d q_j x^j} .\label{eq:rationalpoly}
\end{equation}
These have convenient algorithmic properties. In this subsection, we discuss several equivalent representations of rational functions.
As a starting point, a rational function of degree $d$ can be represented as a ratio of polynomials, where the numerator $p(x)$ has degree $<d$ and the denominator has degree $\le d$.
Note that we normalize the denominator so that $q(0)=q_0=1$.\footnote{We can rescale both the numerator and denominator by a constant to make this assumption true. Thus this assumption is without loss of generality, unless the rational function has a pole at $0$ -- but we do not consider such functions, as they are not valid generating functions.}

\Cref{eq:rationalpoly} is a convenient mathematical representation, but for algorithmic purposes we need the sequence $\seqk{r_k}$ of Taylor series coefficients
\begin{equation}
    r(x) = \sum_{k=0}^\infty r_k x^k. \label{eq:r-series}
\end{equation}
Thus our first task is to map from \Cref{eq:rationalpoly} to \Cref{eq:r-series} with a convenient representation that is amenable to efficient computation. 

The generating function $r(x)$ being a rational function is equivalent to the sequence $\seqk{r_k}$ being a constant-recurrent sequence (a.k.a.~linear-recursive or C-finite sequence); see for example \citet{kauers11concrete}.
The sequence terms can be computed by taking powers of a matrix; this is the representation that we will use in our algorithm.
We summarize this result in \cref{lem:rationaltoconstrec} and provide a proof for completeness.

\begin{lem}[Constant-recurrent Taylor series representation of a rational function]\label{lem:rationaltoconstrec}
    Let 
    \[r(x) = \frac{p(x)}{q(x)} = \frac{\sum_{i=0}^{d-1} p_i x^i}{1+\sum_{j=1}^d q_j x^j}
    \]
    be a rational function of degree $ \le d$.    
    As in \cref{sec:genfuncmat}, let \[r(x) = \sum_{k=0}^\infty r_k x^k.\] That is, $r(x)$ generates the sequence $\seqk{r_k}$. 
    Then, for all $k \ge d$, this sequence satisfies the recurrence
    \begin{equation}
        r_k = - \sum_{j=1}^d q_j r_{k-j}.
    \end{equation}
    And, for $0 \le k < d$, we have $r_k = p_k - \sum_{j=1}^k q_j r_{k-j}$. 
    Furthermore, for all $k$, we have 
    \begin{equation} 
    r_k = u^T W^k v,
    \end{equation}
    where
    \begin{equation}
        u \coloneqq \left(\begin{array}{c} 1 \\ 0 \\ 0 \\ 0 \\ \vdots \\ 0 \\ 0 \end{array}\right) \in \mathbb{R}^{d \times 1}, \qquad
        v \coloneqq \left(\begin{array}{c} p_0 \\ p_1 \\ p_2 \\ p_3 \\ \vdots \\ p_{d-2} \\ p_{d-1} \end{array}\right) \in \mathbb{R}^{d \times 1},
    \end{equation}
    and
    \begin{equation}\label{eq:companionmatrix}
        W \coloneqq \left(\begin{array}{ccccccc} 
            -q_1 & 1 & 0 & 0 & \cdots & 0 & 0 \\
            -q_2 & 0 & 1 & 0 & \cdots & 0 & 0\\
            -q_3 & 0 & 0 & 1 & \cdots & 0 & 0\\
            -q_4 & 0 & 0 & 0 & \cdots & 0 & 0\\
            \vdots & \vdots & \vdots & \vdots & \ddots & \vdots & \vdots \\
            -q_{d-1} & 0 & 0 & 0 & \cdots & 0 & 1 \\
            -q_d & 0 & 0 & 0 & \cdots & 0 & 0
        \end{array}\right) \in \mathbb{R}^{d \times d}. 
    \end{equation}
\end{lem}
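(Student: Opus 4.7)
The plan is to prove the two claims in sequence: first the scalar recurrence for $\seqk{r_k}$, and then the matrix-power closed form. Both follow from the polynomial identity $r(x) q(x) = p(x)$, which is just the definition of $r$ rewritten.

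For the recurrence, I would expand $r(x) q(x) = p(x)$ as formal power series. Using $q(x) = \sum_{j=0}^{d} q_j x^j$ with $q_0 = 1$, the Cauchy product (\cref{eq:cauchyprod}, equivalently \cref{lem:mgfmult}) gives
\[
p(x) = r(x)q(x) = \sum_{k=0}^\infty \left(\sum_{j=0}^{\min(k,d)} q_j\, r_{k-j}\right) x^k.
\]
Matching the coefficient of $x^k$ against $p(x) = \sum_{i=0}^{d-1} p_i x^i$ yields, for $k < d$, the stated initial relation $r_k = p_k - \sum_{j=1}^{k} q_j r_{k-j}$, and for $k \ge d$ (where the right-hand side is zero), the full recurrence $r_k = -\sum_{j=1}^{d} q_j r_{k-j}$.

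For the matrix-power representation, the key construction is an auxiliary state vector $s_k \in \R^d$ whose top entry tracks $r_k$ and which satisfies $s_{k+1} = W s_k$. Guided by how $W$ acts (first column $-q_j$ weights, superdiagonal of $1$'s), I would define, for $i = 0, 1, \dots, d-1$,
\[
(s_k)_{i+1} \;\coloneqq\; \sum_{j=0}^{i} q_j\, r_{k+i-j},
\]
so in particular $(s_k)_1 = r_k$, giving $u^T s_k = r_k$. I then need to verify three things: (i) $s_0 = v$, (ii) $s_{k+1} = W s_k$ for all $k \ge 0$, and (iii) conclude $s_k = W^k v$ by induction, yielding $r_k = u^T W^k v$.

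Claim (i) is exactly the $k<d$ case of the recurrence rewritten: $(s_0)_{i+1} = \sum_{j=0}^{i} q_j r_{i-j} = p_i$. For claim (ii), an entry-by-entry check is routine: for the first $d-1$ entries the telescoping in the definition of $(s_k)_{i+1}$ precisely cancels the $-q_{i+1} r_k$ term produced by the first column of $W$; the only non-trivial check is the last coordinate, where $(Ws_k)_d = -q_d r_k$ must equal $(s_{k+1})_d = \sum_{j=0}^{d-1} q_j r_{k+d-j}$. This equality is exactly the $k+d \ge d$ instance of the recurrence proved in the first part, so it goes through. The main (mild) obstacle is simply writing down the correct state vector; once the definition $(s_k)_{i+1} = \sum_{j=0}^{i} q_j r_{k+i-j}$ is in hand, everything reduces to bookkeeping against the two cases of the recurrence.
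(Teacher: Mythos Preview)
Your proof is correct and follows essentially the same route as the paper: both establish the recurrence by matching coefficients in $p(x) = q(x)\,r(x)$, and both show that the state vector with entries $\sum_{j=0}^{i} q_j\, r_{k+i-j}$ satisfies $s_0 = v$ and $s_{k+1} = W s_k$, with the last coordinate reducing exactly to the recurrence at index $k+d$. The only difference is presentational --- the paper derives this same state vector via polynomial long division (so that $s_k$ is the coefficient vector of the numerator remaining after peeling off $r_0,\dots,r_{k-1}$), whereas you write the closed formula for $s_k$ directly and verify.
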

\begin{proof}
 
    We have 
    \[ r(x) = \frac{p(x)}{q(x)} = \frac{\sum_{i=0}^{d-1} p_i x^i}{1+\sum_{j=1}^d q_j x^j} = \sum_{k=0}^\infty r_k x^k ,\]
    which rearranges to 
    \[p(x) = \sum_{i=0}^{d-1} p_i x^i = q(x) \cdot r(x) = \left( 1+\sum_{j=1}^d q_j x^j \right) \cdot \left( \sum_{k=0}^\infty r_k x^k \right) = \sum_{k=0}^\infty \left( r_k + \sum_{j=1}^{\min\{d,k\}} q_j r_{k-j} \right) \cdot x^k. \]
    Matching coefficients gives $p_k = r_k + \sum_{j=1}^k q_j r_{k-j}$ for $1 \le k \le d-1$ and $0 = r_k + \sum_{j=1}^d q_j r_{k-j}$ for $k \ge d$, which rearranges to give the recurrence.

    Next we turn to the matrix representation.
    Define $p^{(0)}(x) = p(x)$ -- i.e., $p^{(0)}_i = p_i$ for $0 \le i \le d-1$.
    For all $k \ge 1$, let ${p}_j^{(k)} \coloneqq p_{j+1}^{(k-1)} - p_0^{(k-1)} q_{j+1}$ for $0 \le j \le d-2$ and let ${p}_{d-1}^{(k)} \coloneqq - p_0^{(k-1)} q_d$.
    We can check that these values satisfy
    \begin{equation}
        \frac{\sum_{i=0}^{d-1} p_i^{(k-1)} x^i}{1+\sum_{j=1}^d q_j x^j} = p_0^{(k-1)} + x \cdot \frac{\sum_{i=0}^{d-1} {p}_i^{(k)} x^i}{1+\sum_{j=1}^d q_j x^j}. \label{eq:polylongdivision}
    \end{equation}
    The left hand side of \Cref{eq:polylongdivision} with $k=1$ is simply $r(x)$; from this we can conclude that $r_0 = p_0^{(0)} = p_0$.
    Expanding \Cref{eq:polylongdivision} for $k=1$ and $k=2$ gives $r_1 = {p}_0^{(1)}$. Specifically, we have
    \[r(x) = \frac{\sum_{i=0}^{d-1} p_i^{(0)} x^i}{1+\sum_{j=1}^d q_j x^j} = p_0^{(0)} + x \cdot \frac{\sum_{i=0}^{d-1} {p}_i^{(1)} x^i}{1+\sum_{j=1}^d q_j x^j} = p_0^{(0)} + x \cdot \left( p_0^{(1)} + x \cdot \frac{\sum_{i=0}^{d-1} {p}_i^{(2)} x^i}{1+\sum_{j=1}^d q_j x^j} \right).\]
    We can iteratively repeat this expansion to extract the entire Taylor series.
    (This algorithm is known as polynomial long division.)
    That is, we can show by induction that, for all $k$, we have
    \[ r(x) = \sum_{\ell=0}^{k-1} p_0^{(\ell)} x^\ell + x^k \cdot \frac{\sum_{i=0}^{d-1} {p}_i^{(k)} x^i}{1+\sum_{j=1}^d q_j x^j}.\]
    Thus $r_k = p_0^{(k)}$ for all $k$.
    
    Now we can write the recursive definition of $p^{(k)}_j$ in matrix form: For all $k \ge 1$, we have
    \[
        \left(\begin{array}{c} {p}_0^{(k)} \\ {p}_1^{(k)} \\  \vdots \\ {p}_{d-1}^{(k)} \end{array}\right)
         = W \cdot 
         \left(\begin{array}{c} {p}_0^{(k-1)} \\ {p}_1^{(k-1)} \\  \vdots \\ {p}_{d-1}^{(k-1)} \end{array}\right),
    \]
    where $W \in \mathbb{R}^{d \times d}$ is given in Equation \ref{eq:companionmatrix}.
    Given this, induction shows that, for all $k \ge 0$,
    \[
        \left(\begin{array}{c} {p}_0^{(k)} \\ {p}_1^{(k)} \\  \vdots \\ {p}_{d-1}^{(k)} \end{array}\right)
         = W^k \cdot 
         \left(\begin{array}{c} {p}_0^{(0)} \\ {p}_1^{(0)} \\  \vdots \\ {p}_{d-1}^{(0)} \end{array}\right)
         = W^k \cdot v
    \]
    and, hence, $r_k = p_0^{(k)} = u^T \cdot W^k \cdot v$, as required.
\end{proof}

We make some remarks about \Cref{lem:rationaltoconstrec}:
\begin{enumerate}
    \item \cref{lem:rationaltoconstrec} assumes $\deg(p)<d$ and $\deg(q)\le d$, but it allows for the possibility that some of the coefficients $p_j,q_j$ (including $q_d$) may be zero. Thus by padding, we can accommodate arbitrary degree of both the numerator and denominator.
    
    \item 
    Lemma \ref{lem:rationaltoconstrec} is in fact an ``if and only if'' -- that is, any Taylor series satisfying either of the conclusions of the Lemma must be a rational function (see e.g. \citep{kauers11concrete}). However, we do not use the converse of the result.
    
    \item
    The eigenvalues of $W$ can be related to the poles of the corresponding rational function. Specifically,
    \begin{equation}
        \det(\lambda I - W) = 0 \iff \lambda^d + \sum_{j=1}^d q_j \lambda^{d-j} = 0 \iff \big( q(\tfrac{1}{\lambda}) = 0 \text{ or } \lambda=0=q_d \big). \label{eq:companion}
    \end{equation}
    
    \item
    Furthermore, the eigenvectors of $W$ can be written in terms of the poles: Suppose $q(\tfrac{1}{\lambda}) = 1 + \sum_{j=1}^d q_j (\tfrac{1}{\lambda})^j = 0$. Let $u^T = (\lambda^{d-1}, \lambda^{d-2}, \cdots, \lambda, 1) \in \mathbb{R}^{1 \times d}$ (or $\lambda=0$ and $q_d=0$). Then $u^T W = \lambda \cdot u^T$ or, equivalently, $W^T u = \lambda \cdot u$.
    
    \item
    If the eigenvalues of $W$ are distinct (and real), then the above remark produces a complete (real) eigenbasis, and so we can diagonalize the matrix $W$ (over $\R$). A diagonal matrix $W$ is particularly convenient for our streaming algorithm (\Cref{alg:streamingmatrixpower}).    
    
    \item
    If the eigenvalues of $W$ are contained in the unit circle (i.e., $|\lambda|\le 1$), then the computation of $W^k$ is numerically stable. By Equation \ref{eq:companion}, this is equivalent to the poles of the rational function having magnitude $\ge 1$.
    
    \item
    Suppose the rational function $r$ has distinct poles $\frac{1}{\theta_i}$, and can be presented as\footnote{Diagonalizing the matrix $W$ given by \Cref{lem:rationaltoconstrec} corresponds to finding a representation of this form.}
    \[
      r(x) = \sum_{i \in [\dgr]}  \frac{\omega_i}{1 - \theta_i \cdot x}
    \]
    via a partial-fraction decomposition (instead of as a ratio of polynomials as in \Cref{lem:rationaltoconstrec}).
    (Indeed, our rational function in \Cref{sec:rationalsqrt} is presented in this form in \cref{eq:explicitrationalsqrt}.)
    Then, we have a simpler closed form for the sequence,
    \[
      r_k  = \sum_{i \in [\dgr]}  \omega_i \theta_i^k.
     \]
    This immediately yields a matrix representation: $W$ is a diagonal matrix with $W_{i,i}=\theta_i$ for all $i$, and the vectors $u$ and $v$ simply need to satisfy $u_i v_i = \omega_i$ for all $i$.
    This matrix representation may be more convenient than the one given by \Cref{lem:rationaltoconstrec}; we use this approach extensively in \cref{sec:practical-optimization}.

    \item
    The form of the matrix $W$ in \Cref{lem:rationaltoconstrec} is known as a ``companion matrix'' \cite{enwiki:1201009369}. Namely, the matrix $W$ is a companion to the denominator polynomial $q(x)$ since the roots of the polynomial correspond to the eigenvalues of the matrix (per \cref{eq:companion}).
\end{enumerate}

Our choice in how to represent $r(x)$ has some ramifications for the design of the noise generation (multiplication) algorithm in the next section.

\begin{rem}\label{rem:degree}
    The representation in \Cref{eq:rationalpoly} assumes the numerator has lower degree than the denominator, i.e., $\deg(p)<d$ versus $\deg(q)\le d$. But, in this work, we typically consider rational functions with the numerator and denominator both having the same degree, i.e., $\deg(\overline{p})=\deg(q)=d$.
    This case can either be handled by padding (i.e., increment $d \mapsto d+1$ and set $q_{d+1}=0$) or by including an additive constant:
    \begin{equation}
        r(x) = \frac{\overline{p}(x)}{q(x)} = t + \frac{{p}(x)}{q(x)} 
          = \frac{{p}(x) + tq(x)}{q(x)},
         \quad \text{ where } \quad
          t = \frac{\overline{p}_d}{q_d} 
        \quad \text{and} \quad 
        {p}(x) = \overline{p}(x) - t q(x), 
    \end{equation}
    so that $\deg({p})\le d-1$.  We then have
    \begin{equation}\label{eq:rkwitht}
    r_k = u^T W^k v + t \mathbb{I}[k=0]
    \end{equation}
    with $(u, W, v)$ as given in \cref{lem:rationaltoconstrec} applied to $p(x)/q(x)$. Here $\mathbb{I}$ is the indicator function taking value $1$ if the condition holds and $0$ otherwise.        
    We prefer the representation with an additive constant, as it is more efficient to implement algorithmically.
\end{rem}
We note the indicator function can also be written as $\mathbb{I}[k=0] = 0^k$ for $k \ge 0$.
Thus we can incorporate the $t \mathbb{I}[k=0]$ into the matrix representation by appending a row/column of $0$s to $W$, appending $1$ to $u$ and $t$ to $v$:
\begin{equation}
    \left( \begin{array}{c} u \\ 1 \end{array} \right)^T \left( \begin{array}{cc} W & \mathbf{0} \\ \mathbf{0}^T & 0 \end{array} \right)^k \left( \begin{array}{c} v \\ t \end{array} \right)
    = \left( \begin{array}{c} u \\ 1 \end{array} \right)^T \left( \begin{array}{cc} W^k & \mathbf{0} \\ \mathbf{0}^T & 0^k \end{array} \right) \left( \begin{array}{c} v \\ t \end{array} \right)
    = u^T W^k v + t \cdot \mathbb{I}[k=0].\label{eq:consttopad}
\end{equation}
This increases the dimension $d$ by $1$, which is equivalent to padding. However, for efficient streaming algorithm we introduce next, we can save one memory buffer by working directly with the $r_k = u^T W^k v + t \mathbb{I}[k=0]$ representation rather than the pure matrix representation in \cref{eq:consttopad}.

\subsection{Efficient Sampling via \BLT Multiplication} \label{sec:algorithm}

The generating function view outlined in \Cref{sec:genfuncmat} gives us a matrix factorization from any function $r(x)$. Namely, $A^{(\nd)} = M(b,\nd) \cdot M(c,\nd)$ for $b(x)=r(x)/(1-x)$ and $c(x)=1/r(x)$. This turns out to be a good matrix factorization as long as $r(x) \approx \sqrt{1-x}$.
The other desideratum is being able to efficiently sample noise according to the matrix factorization.

In this subsection, we show that if $r(x)$ is a rational function of low degree, then there exists an efficient sampling algorithm or, equivalently, an efficient algorithm for streaming multiplication by the lower-triangular Toeplitz matrix $M(r(x),\nd)$.
This algorithm relies on the representation in \cref{sec:rational-constrec}.
The final missing ingredient, which we provide in \Cref{sec:rationalfunctionapprox,sec:practical-optimization}, is to instantiate $r(x)$.

Our algorithmic task is as follows. Fix a lower-triangular Toeplitz matrix $M(r,\nd) \in \mathbb{R}^{\nd \times \nd}$ with a generating function $r$. We are given as input a matrix $Z \in \mathbb{R}^{\nd \times \mdim}$ and must produce as output $\widehat{Z} \coloneqq M(r,\nd) \cdot Z \in \mathbb{R}^{\nd \times \mdim}$. 
The obvious algorithm for this task is to use standard matrix multiplication; this would take $O(\nd^2\mdim)$ time and $O(\nd\mdim)$ space. Another approach is to exploit the Toeplitz structure of $M(r,\nd)$ which makes the matrix multiplication a convolution; this means it can be accelerated using the fast Fourier transform to take $O(\nd \mdim \log \nd)$ time, but this is not a streaming algorithm and it still requires $O(\nd\mdim)$ space. In order to reduce the space required we need an algorithm that is tailored to the streaming setting.

For streaming prefix sum applications (and private deep learning applications like DP-SGD and DP-FTRL in particular), the input $Z$ consists of $\nd \mdim$ \emph{independent} samples from a Gaussian and $\widehat{Z} = M(r,\nd) \cdot Z$ is the \emph{correlated} noise we add to our private learning procedure -- each row corresponds to one training iteration and each column to one parameter of the model.
In particular, the rows $Z$ can be sampled as needed -- since they are independent noise -- and the output $\widehat{Z}$ can be returned one row at a time. 

This is our streaming setting: At each iteration $k$, our algorithm receives as input the next row $Z_{k,\cdot} \in \mathbb{R}^{\mdim}$ and must output the next row $\widehat{Z}_{k,\cdot} \in \mathbb{R}^{\mdim}$.
Our goal is to develop an algorithm that runs in time and space $O(\mdim)$ per iteration (i.e., $O(\nd\mdim)$ time in total).
Reducing the space usage relies on the fact that we do not need to store all of $Z$ or $\widehat{Z}$ in the streaming setting. Accomplishing this relies on the structure of the lower triangular Toeplitz matrix $M(r,\nd) \in \mathbb{R}^{d \times d}$ generated by a rational function 
\[
r(x) = t + \frac{\sum_{i=0}^{d-1} p_i x^i}{1+\sum_{j=1}^d q_j x^j} = r_0 + r_1 x + r_2 x^2 + r_3 x^3 + \cdots. 
\]
Following \cref{rem:degree} and \cref{lem:rationaltoconstrec}, we use the representation of the sequence $\seqk{r_k}$ in terms of matrix powers: $r_k = u^T W^k v + t \mathbb{I}[k=0]$.  This is the representation that we use for our algorithm.

\begin{algorithm}
    \begin{algorithmic}
    \State \textbf{Parameters:} Matrix $M(r,\nd)$ defined following \cref{lem:rationaltoconstrec,rem:degree} via column vectors $u,v \in \mathbb{R}^{d \times 1}$, a matrix $W \in \mathbb{R}^{d \times d}$, and a scalar $t \in \mathbb{R}$.
    \State \textbf{Streaming Input:} Row vectors $Z_{0,\cdot},Z_{1,\cdot}, \cdots, Z_{\nd-1,\cdot} \in \mathbb{R}^{1 \times \mdim}$.
    \State \textbf{Streaming Output:} Row vectors $\widehat{Z}_{0,\cdot},\widehat{Z}_{1,\cdot}, \cdots, \widehat{Z}_{\nd-1,\cdot} \in \mathbb{R}^{1 \times \mdim}$  satisfying $\widehat{Z}_k = (\LTT(r) Z)_k$.
    \State \textbf{Goal:} $\widehat{Z}_{k,\cdot} = \sum_{i=0}^{k} r_{i} Z_{k-i, \cdot}$, where $r_k = u\tp W^k v + t \mathbb{I}[k=0]$.
    \State Initialize $S_0 = 0 \in \mathbb{R}^{d \times \mdim}$. \Comment{State of the Algorithm}
    \For{$k = 0 , \cdots , \nd-1$}
        \State Receive input $Z_{k,\cdot} \in \mathbb{R}^{1 \times \mdim}$.
        \State Compute $S_{k+1} = v Z_{k,\cdot} + W \cdot S_{k} \in \mathbb{R}^{d \times \mdim}$.
        \State Return output $\widehat{Z}_{k,\cdot} = t Z_{k,\cdot} + u^T S_{k+1} \in \mathbb{R}^{1 \times \mdim}$.
    \EndFor
    \end{algorithmic}
    \caption{Streaming Multiplication by a \BLT Matrix \label{alg:streamingmatrixpower}}
\end{algorithm}


For a \BLT given via a rational generating function $r$ in the matrix representation of \cref{lem:rationaltoconstrec}, \cref{alg:streamingmatrixpower} in fact computes $\hat{Z} = M(r) Z$ in row-by-row streaming fashion:
\begin{lem}[Properties of \Cref{alg:streamingmatrixpower}]\label{lem:algstreamingmatrixpower}
    \Cref{alg:streamingmatrixpower} taskes as input a stream $Z_{0,\cdot},Z_{1,\cdot}, \cdots, Z_{\nd-1,\cdot} \in \mathbb{R}^{1 \times \mdim}$ and outputs a stream $\widehat{Z}_{0,\cdot},\widehat{Z}_{1,\cdot}, \cdots, \widehat{Z}_{\nd-1,\cdot} \in \mathbb{R}^{1 \times \mdim}$. 
    It takes parameters $u,v \in \mathbb{R}^{d \times 1}$ and $W \in \mathbb{R}^{d \times d}$ and $t \in \mathbb{R}$.
    At each iteration $k$, output satisfies 
    \begin{equation}
        \widehat{Z}_{k,\cdot} = \sum_{j=0}^{k} r_{k-j}  \cdot Z_{j,\cdot}  = t Z_{k,\cdot} + \sum_{j=0}^{k} u^T W^{k-j} v \cdot Z_{j,\cdot} , \label{eq:algstreamingmatrixpower}
    \end{equation} 
    where $r_k = u\tp W^k v + t \mathbb{I}[k=0]$.
    The space usage is $O(d \mdim + d^2)$ and the runtime per iteration is dominated by a matrix multiplication $W \cdot S$, where $S \in \mathbb{R}^{d \times \mdim}$.
\end{lem}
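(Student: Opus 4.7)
The plan is to prove correctness by induction on $k$, after which the space and runtime claims follow from a direct inspection of the algorithm.

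First, I would establish the key invariant that after the $k$-th iteration, the state vector satisfies
\[
S_{k+1} = \sum_{j=0}^{k} W^{k-j} v Z_{j,\cdot}.
\]
The base case $k=0$ gives $S_1 = v Z_{0,\cdot} + W \cdot 0 = W^0 v Z_{0,\cdot}$, matching the formula. For the inductive step, I would substitute the update rule $S_{k+1} = v Z_{k,\cdot} + W S_k$ and apply the inductive hypothesis:
\[
S_{k+1} = v Z_{k,\cdot} + W \sum_{j=0}^{k-1} W^{k-1-j} v Z_{j,\cdot} = W^0 v Z_{k,\cdot} + \sum_{j=0}^{k-1} W^{k-j} v Z_{j,\cdot} = \sum_{j=0}^{k} W^{k-j} v Z_{j,\cdot}.
\]

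Next, I would left-multiply by $u^T$ and add $t Z_{k,\cdot}$ to obtain the output:
\[
\widehat{Z}_{k,\cdot} = t Z_{k,\cdot} + u^T S_{k+1} = t Z_{k,\cdot} + \sum_{j=0}^{k} u^T W^{k-j} v \cdot Z_{j,\cdot}.
\]
To rewrite this as $\sum_{j=0}^{k} r_{k-j} Z_{j,\cdot}$ with $r_i = u^T W^i v + t\,\mathbb{I}[i=0]$, I would observe that the term $t Z_{k,\cdot}$ equals $t\,\mathbb{I}[k-j=0] Z_{j,\cdot}$ at $j=k$ and is zero otherwise, so it can be absorbed into the sum, giving the second equality of \cref{eq:algstreamingmatrixpower}. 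The first equality is just the definition of $r_{k-j}$.

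Finally, for the resource bounds, I would note that the algorithm stores the fixed parameters $W \in \mathbb{R}^{d \times d}$, $u,v \in \mathbb{R}^d$, and $t \in \mathbb{R}$, taking $O(d^2)$ space, plus the state $S_k \in \mathbb{R}^{d \times \mdim}$, taking $O(d\mdim)$ space; note that the streams $Z$ and $\widehat Z$ are processed one row at a time and need not be stored. For per-iteration runtime, the operations are the outer product $v Z_{k,\cdot}$ at cost $O(d\mdim)$, the matrix multiplication $W \cdot S_k$ at cost $O(d^2 \mdim)$, the product $u^T S_{k+1}$ at cost $O(d\mdim)$, and the scalar-vector product $t Z_{k,\cdot}$ at cost $O(\mdim)$. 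The dominant term is the matrix multiplication $W \cdot S_k$, as claimed. There is no serious obstacle here; the only thing to be careful about is the bookkeeping associated with the additive constant $t$ from \cref{rem:degree}, which cleanly aligns only with the $j=k$ term by virtue of the indicator $\mathbb{I}[k-j=0]$.
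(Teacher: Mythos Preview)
Your proposal is correct and follows essentially the same approach as the paper: both prove the invariant $S_{k+1} = \sum_{j=0}^{k} W^{k-j} v Z_{j,\cdot}$ by induction, then left-multiply by $u^T$ and add $t Z_{k,\cdot}$ to obtain the output formula, and finally read off the space and runtime bounds directly from the algorithm. Your write-up is slightly more explicit about the base case and the absorption of the $t$ term via the indicator, but there is no substantive difference.
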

\begin{proof}
    By induction, for all $k$, we have
    \[
        S_{k+1} = \sum_{j=0}^k W^{k-j} v Z_{j,\cdot} 
    \]
    and, hence,
    \[
        t Z_{k,\cdot} + u^T S_{k+1} = t Z_{k,\cdot} + \sum_{j=0}^k u^T W^{k-j} v Z_{j,\cdot} = \sum_{j=0}^k r_{k-j} Z_{j,\cdot} = \widehat{Z}_{k,\cdot},
    \]
    where $\widehat{Z} = M(r,\nd) Z$.
    
    The state of the algorithm after iteration $k$ is given by $S_{k+1} \in \mathbb{R}^{d \times \mdim}$.
    The space usage of the algorithm is dominated by storing this state ($d \mdim$ registers) plus storing the parameters $u,v,W$ ($2d+d^2$ registers).
    More precisely, the only space that the algorithm requires is the state, plus whatever registers are required to perform the update $S_{k+1} = v Z_{k,\cdot} + W \cdot S_{k}$. (The output computation $\widehat{Z}_{k,\cdot} = t Z_{k,\cdot} + u^T S_{k+1}$ requires $\mdim$ registers to store the output.)
    The exact number of registers required for this update depends on the structure of $v$ and $W$. 
    If $W$ is a diagonal matrix or has the structure given in \cref{eq:companionmatrix}, then the update can be performed ``in place'' and we only require $\mdim$ extra registers to store the input $Z_{k,\cdot}$.
    In general, we can always perform this update using $d\mdim$ additional registers.
    
    The running time of one iteration consists of the update $S_{k+1} = v Z_{k,\cdot} + W \cdot S_{k}$ and the matrix-vector product $\widehat{Z}_{k,\cdot} = u^T S_{k+1}$.
    The matrix-matrix product will dominate over the matrix-vector products.
    The runtime of the matrix-matrix product depends on the structure of $W$. If $W$ is sparse (e.g., if it is diagonal), then this is $O(d\mdim)$ time.
    In general, we can perform this update in $O(d^2\mdim)$ time.
\end{proof}

\section{Factorizations via Rational Function Approximation}
\label{sec:rationalfunctionapprox}

In this section we prove \Cref{thm:main-inf}.
We follow the generating functions view presented in \Cref{sec:genfunc-framework} and use the corresponding sampling algorithm (\cref{alg:streamingmatrixpower}).
The proof is split into three steps. 
In \Cref{sec:genfuncmat-approx} we recap how a generating function $r$ yields a valid matrix factorization
 and prove that, if $r(x) \approx \sqrt{1-x}$, then this yields an approximately optimal factorization.
In \Cref{sec:rationalsqrt} we construct a low-degree rational function $r$ that appropriately approximates the square root.
Finally we assemble the parts of the proof in \Cref{sec:proof-main}.

\subsection{Reduction to Approximating the Square Root}\label{sec:genfuncmat-approx}

\Cref{lem:lbstruct} shows that the optimal matrix factorization is given by $\bfA^{(\nd)}=\bfB\bfC$ where $\bfB=\bfC=M(f,\nd)$ and $f(x)=1/\sqrt{1-x}$.
Suppose we have a generating function $r(x) \approx \sqrt{1-x}$. 
Let $b(x) \coloneqq \frac{r(x)}{1-x} \approx \frac{1}{\sqrt{1-x}} = f(x)$ and $c(x) \coloneqq \frac{1}{r(x)} \approx \frac{1}{\sqrt{1-x}}$. 
Since $ b(x) \cdot c(x) = \frac{r(x)}{1-x} \cdot \frac{1}{r(x)} = \frac{1}{1-x} = g(x)$, we always obtain a valid factorization $M(b,\nd) \cdot M(c,\nd) = \bfA^{(\nd)}$.
We can bound the objective value of this factorization in terms of the approximation $r(x) \approx \sqrt{1-x}$.

\begin{prop}[Approximating the square root approximates the matrix factorization objective]\label{prop:parseval}
    Let $\tau > 0$ and $\nd \in \mathbb{N}$.
    Let $r : C \to \mathbb{C}$, where $C \coloneqq \{ z \in \mathbb{C} : |z|<1 \}$ is the open unit disc in the complex plane centered at zero.
    Let \[\gamma_\tau \coloneqq \sup \left\{|r(x)-\sqrt{1-x}| : x \in \mathbb{C}, |x|=\exp(-\tau)\right\}.\]
    Let $b(x) \coloneqq \frac{r(x)}{1-x}$ and $c(x) \coloneqq \frac{1}{r(x)}$ and $f(x) = \frac{1}{\sqrt{1-x}}$.
    Then $M(b,\nd) \cdot M(c,\nd) = M(c,\nd) \cdot M(b,\nd) = A^{(\nd)} = M(f,\nd)^2$, where $M(\cdot,\nd)$ is defined in Equation \ref{eq:genfuncmatrix} and $A^{(\nd)}$ is defined in Equation \ref{eq:allonesM}.
    Moreover, if $\gamma_\tau \le \left(\frac{1-\exp(-2\tau)}{4}\right)^2$, then
    \begin{align*}
    \|M(b,\nd)\|_{2 \to \infty} &\le \|M(f,\nd)\|_{2 \to \infty} +  \frac{\exp(\tau \nd) \cdot \gamma_\tau}{\sqrt{\exp(2\tau)-1} },\\
    \|M(c,\nd)\|_{1 \to 2} &\le \|M(f,\nd)\|_{1 \to 2} + \frac{\exp(\tau \nd) \cdot \gamma_\tau}{\sqrt[4]{(\exp(2\tau)-1)^2 - 2^{7/2} \gamma_\tau \exp(4\tau)}}.
    \end{align*}
\end{prop}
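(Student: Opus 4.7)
The plan is to exploit the generating function view from \Cref{sec:genfuncmat} to recast the matrix norm bounds as contour integrals on a suitable circle in $\mathbb{C}$, then control these integrals using Parseval's identity.

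The validity claim is immediate: since $b(x)c(x) = (r(x)/(1-x))\cdot(1/r(x)) = 1/(1-x) = g(x)$ and $f(x)^2 = g(x)$, \Cref{lem:mgfmult} gives $M(b,\nd)M(c,\nd) = M(c,\nd)M(b,\nd) = M(f,\nd)^2 = A^{(\nd)}$. For the norm bounds, note that both $M(b,\nd)$ and $M(c,\nd)$ are lower triangular Toeplitz, so (by inspecting row/column sums and using that $\sum_{k=0}^{i} h_k^2$ is non-decreasing in $i$) $\|M(b,\nd)\|_{2\to\infty} = \sqrt{\sum_{k=0}^{\nd-1} b_k^2}$ is attained in the last row and $\|M(c,\nd)\|_{1\to 2} = \sqrt{\sum_{k=0}^{\nd-1} c_k^2}$ is attained in the first column; the same formulas hold for $f$. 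The triangle inequality in $\ell_2$ then reduces the task to bounding $\sqrt{\sum_{k=0}^{\nd-1}(b_k-f_k)^2}$ and $\sqrt{\sum_{k=0}^{\nd-1}(c_k-f_k)^2}$.

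The main tool is a weighted Parseval inequality. For any generating function $h$ analytic on the closed disc $|x|\le e^{-\tau}$, the pointwise bound $e^{-2\tau k} \ge e^{-2\tau(\nd-1)}$ for $0\le k<\nd$ combined with Parseval's identity on the circle of radius $e^{-\tau}$ yields
\[
\sum_{k=0}^{\nd-1}|h_k|^2 \;\le\; e^{2\tau(\nd-1)}\sum_{k=0}^{\infty}|h_k|^2 e^{-2\tau k} \;=\; \frac{e^{2\tau(\nd-1)}}{2\pi}\int_{-\pi}^{\pi}|h(e^{-\tau}e^{i\theta})|^2\,d\theta.
\]
For $h = b - f = (r - \sqrt{1-x})/(1-x)$, the pointwise inequality $|b-f|^2 \le \gamma_\tau^2/|1-x|^2$ on the contour combines with the identity $\int d\theta/(2\pi|1-e^{-\tau}e^{i\theta}|^2) = 1/(1-e^{-2\tau})$ (Parseval applied to $1/(1-x) = \sum_{k\ge 0} x^k$); simplifying $e^{2\tau(\nd-1)}/(1-e^{-2\tau}) = e^{2\tau\nd}/(e^{2\tau}-1)$ and taking square roots produces the first stated bound.

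The $c$-bound is the subtle step. The algebraic factorization $c - f = -(b-f)\sqrt{1-x}/r$ and the reverse triangle inequality $|r(x)| \ge \sqrt{|1-x|} - \gamma_\tau$ let me apply Cauchy--Schwarz to the integrand $|b-f|^2 \cdot |1-x|/|r|^2$, splitting its integral into a product of $L^2$-norms of $|b-f|^2$ (computed via Parseval on $1/(1-x)^2 = \sum (k+1)x^k$) and of $|1-x|/|r|^2$ (controlled by the lower bound on $|r|^2$). Taking the outer square root yields the fourth-root denominator $\sqrt[4]{(e^{2\tau}-1)^2 - 2^{7/2}\gamma_\tau e^{4\tau}}$: its leading term recovers the $b$-type bound as $\gamma_\tau\to 0$, and the $2^{7/2}\gamma_\tau e^{4\tau}$ correction captures the first-order loss from lower-bounding $|r|^2$. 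The smallness hypothesis $\gamma_\tau \le ((1-e^{-2\tau})/4)^2$ is calibrated precisely so that $|r|$ is positive throughout the contour (guaranteeing analyticity of $1/r$, and hence of $c$, on the disc) and the discriminant under the fourth root remains positive. The main obstacle is tracking constants through the Cauchy--Schwarz split so that the bound degenerates cleanly to the $b$-type bound as $\gamma_\tau\to 0$ and matches the precise algebraic form in the statement.
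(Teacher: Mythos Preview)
Your setup is correct and matches the paper: the validity claim, the triangle-inequality reduction to $\sqrt{\sum_{k<\nd}(b_k-f_k)^2}$ and $\sqrt{\sum_{k<\nd}(c_k-f_k)^2}$, the weighted Parseval inequality, and the $b$-bound all go through exactly as you describe. Your Parseval shortcut for evaluating $\tfrac{1}{2\pi}\int d\theta/|1-x|^2 = \sum_{k\ge 0}e^{-2\tau k} = 1/(1-e^{-2\tau})$ is in fact cleaner than the paper's explicit cosine integral.

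The gap is in your plan for the $c$-bound. Your Cauchy--Schwarz split of $\int |b-f|^2\cdot(|1-x|/|r|^2)$ into $\big(\int |b-f|^4\big)^{1/2}\big(\int |1-x|^2/|r|^4\big)^{1/2}$ is genuinely lossy and does \emph{not} recover the stated bound. In the limit $\gamma_\tau\to 0$ one has $|1-x|/|r|^2\to 1$, so the true integral behaves like $\int|b-f|^2\le \gamma_\tau^2/(1-e^{-2\tau})$; but your first CS factor is $\gamma_\tau^2\sqrt{\sum(k+1)^2e^{-2\tau k}}=\gamma_\tau^2\sqrt{(1+e^{-2\tau})/(1-e^{-2\tau})^3}$ while the second tends to~$1$, giving $\gamma_\tau^2/(1-e^{-2\tau})^{3/2}$. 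After the outer square root this is $(e^{2\tau}-1)^{-3/4}$ rather than the stated $(e^{2\tau}-1)^{-1/2}$, a loss of order $n^{1/4}$ when $\tau=1/(2n)$. So the fourth root in the statement is \emph{not} an artifact of a Cauchy--Schwarz step.

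What the paper actually does for $c$: write $|c-f|\le \gamma_\tau/|r\sqrt{1-x}|$, then lower-bound $|r\sqrt{1-x}|=|1-x+(r-\sqrt{1-x})\sqrt{1-x}|\ge |1-x|-\gamma_\tau\sqrt{|1-x|}$ and hence $|r\sqrt{1-x}|^2\ge |1-x|^2-2\sqrt{2}\gamma_\tau$. The resulting integrand $\gamma_\tau^2/(|1-x|^2-2\sqrt{2}\gamma_\tau)$ is, via $|1-x|^2=2e^{-\tau}(\cosh\tau-\cos\theta)$, still of the form $\mathrm{const}/(\sigma+\cos\theta)$ with a $\gamma_\tau$-shifted $\sigma$, so the closed form $\tfrac{1}{2\pi}\int d\theta/(\sigma+\cos\theta)=1/\sqrt{\sigma^2-1}$ applies directly. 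That inner square root, combined with the outer square root from Parseval, is the source of the fourth root $\sqrt[4]{(e^{2\tau}-1)^2-2^{7/2}\gamma_\tau e^{4\tau}}$. No Cauchy--Schwarz is used.
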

We will end up setting $\tau=\frac{1}{2\nd}$. This ensures that $\frac{\exp(\tau \nd) \cdot \gamma_\tau}{\sqrt{\exp(2\tau)-1}} \le \sqrt{\nd} \cdot \exp(1/2) \cdot \gamma_\tau < 2 \sqrt{n} \gamma_\tau$ and, if $\nd \ge 6$ and $\gamma_\tau \le \frac{1}{32 \cdot \nd^2}$, then $\frac{\exp(\tau \nd) \cdot \gamma_\tau}{\sqrt[4]{(\exp(2\tau)-1)^2 - 2^{7/2} \gamma_\tau \exp(4\tau)}} \le \sqrt[4]{2} \cdot \sqrt{\nd} \cdot \exp(1/2) \cdot \gamma_\tau < 2 \sqrt{n} \gamma_\tau$.
\begin{proof}
Let $b(x) = \sum_{k=0}^\infty b_k x^k$, $c(x) = \sum_{k=0}^\infty c_k x^k$, and $f(x) = \sum_{k=0}^\infty f_k x^k$.
By linearity and the triangle inequality,
\begin{align*}
    \|M(b,\nd)\|_{2 \to \infty} &= \|M(f,\nd) + M(b-f,\nd)\|_{2 \to \infty} \le \|M(f,\nd)\|_{2 \to \infty} + \|M(b-f,\nd)\|_{2 \to \infty} , \\
    \|M(c,\nd)\|_{1 \to 2} &= \|M(f,\nd) + M(c-f,\nd)\|_{1 \to 2} \le \|M(f,\nd)\|_{1 \to 2} + \|M(c-f,\nd)\|_{1 \to 2} .
\end{align*}
Thus it suffices to bound $\|M(b-f,\nd)\|_{2 \to \infty} = \sqrt{\sum_{k=0}^{\nd-1} (b_k-f_k)^2}$ and $\|M(c-f,\nd)\|_{1 \to 2} = \sqrt{\sum_{k=0}^{\nd-1} (c_k-f_k)^2}$.
We bound these using a weighted version of Parseval's identity.

\begin{lem}[Weighted Parseval's Identity]
    Let $g,h : C \to \mathbb{C}$ be analytic, where $C \coloneqq \{ z \in \mathbb{C} : |z|<1 \}$ is the open unit disc in the complex plane centered at zero.
    Let $g(x) = \sum_{k=0}^\infty g_k x^k$ and $h(x) = \sum_{k=0}^\infty h_k x^k$ for $x \in C$, where $g_k,h_k \in \mathbb{C}$.
    Let $\tau > 0$. Then
    \begin{align*}
        \frac{1}{2\pi} \int_{-\pi}^{\pi} |g(\exp(\sqrt{-1}\theta-\tau)) - h(\exp(\sqrt{-1}\theta-\tau))|^2 \mathrm{d}\theta
        &= \sum_{k=0}^\infty |g_k-h_k|^2 \cdot \exp(-2 \tau k) \\
        &\ge \exp(-2 \tau (\nd-1)) \sum_{k=0}^{\nd-1} |g_k-h_k|^2 .
    \end{align*}
\end{lem}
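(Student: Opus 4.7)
The plan is to recognize this as the standard Parseval/Plancherel identity applied to the function $d(x) \coloneqq g(x) - h(x)$ evaluated along the circle of radius $e^{-\tau}$ in the complex plane, rather than the unit circle.

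First I would set $d \coloneqq g - h$, so that $d$ is analytic on the open unit disc with Taylor series $d(x) = \sum_{k=0}^\infty (g_k - h_k) x^k$. Substituting $x = \exp(\sqrt{-1}\theta - \tau)$, the series becomes the Fourier series in $\theta$
\[
d(\exp(\sqrt{-1}\theta - \tau)) = \sum_{k=0}^\infty (g_k - h_k) \exp(-\tau k) \exp(\sqrt{-1}k \theta),
\]
with Fourier coefficients $(g_k - h_k)\exp(-\tau k)$ for $k \ge 0$ and zero for $k < 0$. Since $\exp(-\tau) < 1$ and both $g$ and $h$ are analytic on the open unit disc (so the radius of convergence of each Taylor series is at least one), the series converges absolutely and uniformly on the compact circle $\{|x| = \exp(-\tau)\}$.

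Next I would expand $|d(\exp(\sqrt{-1}\theta - \tau))|^2 = d(\exp(\sqrt{-1}\theta - \tau)) \cdot \overline{d(\exp(\sqrt{-1}\theta - \tau))}$ as a double sum
\[
\sum_{k=0}^\infty \sum_{\ell=0}^\infty (g_k - h_k) \overline{(g_\ell - h_\ell)} \exp(-\tau(k+\ell)) \exp(\sqrt{-1}(k-\ell)\theta),
\]
and integrate term by term over $\theta \in [-\pi,\pi]$. Uniform convergence justifies the exchange of sum and integral, and the orthogonality $\frac{1}{2\pi}\int_{-\pi}^\pi \exp(\sqrt{-1}(k-\ell)\theta)\,\mathrm{d}\theta = \mathbb{I}[k=\ell]$ kills every off-diagonal term, leaving exactly $\sum_{k=0}^\infty |g_k - h_k|^2 \exp(-2\tau k)$. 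This establishes the stated equality.

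Finally, for the inequality, I would drop all terms with $k \ge \nd$ (they are non-negative) and use that $\exp(-2\tau k) \ge \exp(-2\tau(\nd-1))$ for every $k \in \{0,1,\ldots,\nd-1\}$ since $\tau > 0$, giving
\[
\sum_{k=0}^\infty |g_k - h_k|^2 \exp(-2\tau k) \;\ge\; \exp(-2\tau(\nd-1)) \sum_{k=0}^{\nd-1} |g_k - h_k|^2.
\]
Honestly, there is no real obstacle here; the only point that requires any care is justifying the interchange of sum and integral in the Parseval computation, which is immediate from uniform convergence on the circle $|x| = \exp(-\tau)$ that lies strictly inside the common radius of convergence of $g$ and $h$.
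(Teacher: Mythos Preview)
Your proposal is correct and follows essentially the same approach as the paper: define the difference function, expand $|d|^2$ as a double sum, integrate term by term using the orthogonality of $\exp(\sqrt{-1}(k-\ell)\theta)$, and then bound the tail for the inequality. If anything, you are slightly more careful than the paper in explicitly justifying the interchange of sum and integral via uniform convergence on the circle $|x|=e^{-\tau}$.
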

\begin{proof}
    Let $w(x)=g(x)-h(x) = \sum_{k=0}^\infty w_k x^k$, where $w_k=g_k-h_k$. Then 
    \begin{align*}
         &\frac{1}{2\pi} \int_{-\pi}^{\pi} |g(\exp(\sqrt{-1}\theta-\tau)) - h(\exp(\sqrt{-1}\theta-\tau))|^2 \mathrm{d}\theta \\
         &= \frac{1}{2\pi} \int_{-\pi}^\pi |w(\exp(\sqrt{-1}\theta-\tau))|^2 \mathrm{d}\theta \\
         &= \frac{1}{2\pi} \int_{-\pi}^\pi w(\exp(\sqrt{-1}\theta-\tau)) \cdot \overline{w(\exp(\sqrt{-1}\theta-\tau))} \mathrm{d}\theta \\
         &= \frac{1}{2\pi} \int_{-\pi}^\pi \left( \sum_{k=0}^\infty w_k \cdot (\exp(\sqrt{-1}\theta-\tau))^k \right) \cdot \overline{\left( \sum_{\ell=0}^\infty w_\ell \cdot (\exp(\sqrt{-1}\theta-\tau))^\ell \right)} \mathrm{d}\theta \\
         &= \sum_{k,\ell=0}^\infty \frac{1}{2\pi} \int_{-\pi}^\pi  w_k \cdot (\exp(\sqrt{-1}\theta-\tau))^k \cdot \overline{w_\ell} \cdot (\exp(-\sqrt{-1}\theta-\tau))^\ell \mathrm{d}\theta \\
         &= \sum_{k,\ell=0}^\infty w_k \cdot \overline{w_\ell} \cdot \exp(-(k+\ell)\tau) \cdot \frac{1}{2\pi} \int_{-\pi}^\pi   \exp((k-\ell)\sqrt{-1}\theta) \mathrm{d}\theta \\
         &= \sum_{k,\ell=0}^\infty w_k \cdot \overline{w_\ell} \cdot \exp(-(k+\ell)\tau) \cdot \mathbb{I}[k-\ell=0] \\
         &= \sum_{k=0}^\infty |w_k|^2 \cdot \exp(-2k\tau) \\
         &= \sum_{k=0}^\infty |g_k-h_k|^2 \cdot \exp(-2\tau k).
    \end{align*}
\end{proof}

Recall $b(x) \coloneqq \frac{r(x)}{1-x}$ and $f(x) = \frac{1}{\sqrt{1-x}}$, whence $b(x)-f(x) = \frac{r(x)-\sqrt{1-x}}{1-x}$.
Now we have 
\begin{align*}
    \|M(b-f,\nd)\|_{2 \to \infty} &= \sqrt{\sum_{k=0}^{\nd-1} (b_k-f_k)^2} \\
    &\le \exp(\tau (\nd-1)) \sqrt{ \frac{1}{2\pi} \int_{-\pi}^{\pi} |b(\exp(\sqrt{-1}\theta-\tau)) - f(\exp(\sqrt{-1}\theta-\tau))|^2 \mathrm{d}\theta} \\
    &= \exp(\tau (\nd-1)) \sqrt{ \frac{1}{2\pi} \int_{-\pi}^{\pi} \frac{\left|r(\exp(\sqrt{-1}\theta-\tau)) - \sqrt{1-\exp(\sqrt{-1}\theta-\tau))}\right|^2}{\left|1-\exp(\sqrt{-1}\theta-\tau))\right|^2} \mathrm{d}\theta} \\
    &\le \exp(\tau (\nd-1)) \sqrt{ \frac{1}{2\pi} \int_{-\pi}^{\pi} \frac{\gamma_\tau^2}{\left|1-\exp(\sqrt{-1}\theta-\tau))\right|^2} \mathrm{d}\theta} \\
    &= \exp(\tau (\nd-1)) \cdot \gamma_\tau \cdot \sqrt{ \frac{\exp(\tau)}{4\pi} \int_{-\pi}^{\pi} \frac{1}{\cosh(\tau) + \cos(\theta)} \mathrm{d}\theta} \\
    &= \exp(\tau (\nd-1)) \cdot \gamma_\tau \cdot \sqrt{ \frac{\exp(\tau)}{2} \frac{1}{\sqrt{\cosh(\tau)^2-1}} } \\
    &= \frac{\exp(\tau (\nd-1)) \cdot \gamma_\tau}{\sqrt{ 1-\exp(-2\tau) }} = \frac{\exp(\tau \nd) \cdot \gamma_\tau}{\sqrt{ \exp(2\tau) -1 }} \le \frac{\exp(\tau \nd) \cdot \gamma_\tau}{\sqrt{ 2\tau }}.
\end{align*}
In the above we use the identity
\begin{align*}
    |1-\exp(\sqrt{-1}\theta-\tau)|^2 &= (1-e^{-\tau} \cos \theta)^2 + (e^{-\tau} \sin \theta)^2 = 1 - 2 e^{-\tau} \cos \theta + e^{-2\tau} (\cos^2 \theta + \sin^2 \theta) \\& = e^{-\tau} \cdot \left( e^\tau - 2 \cos \theta + e^{-\tau} \right) = 2 e^{-\tau} \cdot \left( \cosh \tau + \cos \theta \right)
\end{align*}
and the integral 
\begin{equation}
    \forall \sigma>1 ~~~~~ \frac{1}{2\pi} \int_{-\pi}^{\pi} \frac{1}{\sigma+\cos(\theta)} \mathrm{d} \theta = \frac{1}{2\pi} \int_{-\pi}^{\pi} \frac{1}{\sigma+\sin(\theta)} \mathrm{d} \theta = \frac{1}{\sqrt{\sigma^2-1}},
\end{equation}
which follows from the derivative $\frac{\mathrm{d}}{\mathrm{d}\theta} \left[ \frac{2}{\sqrt{\sigma^2-1}} \tan^{-1}\left(\frac{1 + \sigma \tan (\theta/2)}{\sqrt{\sigma^2-1}}\right) \right] = \frac{1}{\sigma+\sin(\theta)}$.

Similarly, recall $c(x) \coloneqq \frac{1}{r(x)}$ and $f(x) = \frac{1}{\sqrt{1-x}}$, whence $c(x) - f(x) = \frac{\sqrt{1-x}-r(x)}{r(x) \cdot \sqrt{1-x}}$.
Assuming $\gamma_\tau < \frac{(1-\exp(-2\tau))^2}{2^{7/2}}$, we have
\begin{small}
\begin{align*}
    &\|M(c-f,\nd)\|_{1 \to 2} \\
    &= \sqrt{\sum_{k=0}^{\nd-1} (c_k-f_k)^2} \\
    &\le \exp(\tau (\nd-1)) \sqrt{ \frac{1}{2\pi} \int_{-\pi}^{\pi} |c(\exp(\sqrt{-1}\theta-\tau)) - f(\exp(\sqrt{-1}\theta-\tau))|^2 \mathrm{d}\theta} \\
    &= \exp(\tau (\nd\!-\!1)) \!\! \sqrt{ \!\frac{1}{2\pi} \!\int_{-\pi}^{\pi}\! \frac{\left|r(\exp(\sqrt{-1}\theta-\tau)) - \sqrt{1-\exp(\sqrt{-1}\theta-\tau)}\right|^2}{\left| r(\exp(\sqrt{-1}\theta\!-\!\tau))  \!\cdot\! \sqrt{1\!-\!\exp(\sqrt{-1}\theta\!-\!\tau)} \right|^2}\! \mathrm{d}\theta} \\
    &= \exp(\tau (\nd\!-\!1)) \!\! \sqrt{ \!\frac{1}{2\pi} \!\int_{-\pi}^{\pi}\! \frac{\left|r(\exp(\sqrt{-1}\theta-\tau)) - \sqrt{1-\exp(\sqrt{-1}\theta-\tau)}\right|^2}{\left| 1\!-\!\exp(\!\sqrt{-1}\theta\!-\!\tau\!)\! +\! \left(r(\exp(\sqrt{-1}\theta\!-\!\tau)) \!-\! \sqrt{1\!-\!\exp(\sqrt{-1}\theta\!-\!\tau)}\right) \!\cdot\! \sqrt{1\!-\!\exp(\sqrt{-1}\theta\!-\!\tau)} \right|^2}\! \mathrm{d}\theta} \\
    &\le \exp(\tau (\nd\!-\!1)) \!\! \sqrt{ \!\frac{1}{2\pi} \!\int_{-\pi}^{\pi}\! \frac{\gamma_\tau^2}{\left(\left| 1\!-\!\exp(\!\sqrt{-1}\theta\!-\!\tau\!)\right|-\gamma_\tau \cdot \sqrt{|1-\exp(\sqrt{-1}\theta-\tau)|}\right)^2} \mathrm{d}\theta} \\
    &\le \exp(\tau (\nd\!-\!1)) \!\! \sqrt{ \!\frac{1}{2\pi} \!\int_{-\pi}^{\pi}\! \frac{\gamma_\tau^2}{\left| 1\!-\!\exp(\!\sqrt{-1}\theta\!-\!\tau\!)\right|^2 -2\sqrt{2} \gamma_\tau } \mathrm{d}\theta} \\
    &= \exp(\tau (\nd\!-\!1)) \! \cdot \! \gamma_\tau \!\cdot\! \sqrt{ \!\frac{\exp(\tau)}{4\pi} \!\int_{-\pi}^{\pi}\! \frac{1}{\cosh(\tau) - \sqrt{2} \exp(\tau) \gamma_\tau + \cos(\theta) } \mathrm{d}\theta} \\
    &= \exp(\tau (\nd\!-\!1)) \! \cdot \! \gamma_\tau \!\cdot\! \sqrt{ \!\frac{\exp(\tau)}{2} \frac{1}{\sqrt{(\cosh(\tau) - \sqrt{2} \exp(\tau) \gamma_\tau )^2-1}} } \\
    &= \frac{\exp(\tau \nd) \cdot \gamma_\tau}{\sqrt{2\exp(\tau)\sqrt{(\cosh(\tau)-\sqrt{2}\exp(\tau)\gamma_\tau)^2-1}}} \\
    &= \frac{\exp(\tau \nd) \cdot \gamma_\tau}{\sqrt[4]{{(\exp(2\tau)(1-2\sqrt{2}\gamma_\tau) + 1)^2-4\exp(2\tau)}}} \\
    &= \frac{\exp(\tau \nd) \cdot \gamma_\tau}{\sqrt[4]{1 -2(1+2^{3/2}\gamma_\tau)\exp(2\tau) + (1-2^{3/2}\gamma_\tau)^2\exp(4\tau)}} \\
    &= \frac{\exp(\tau \nd) \cdot \gamma_\tau}{\sqrt[4]{(\exp(2\tau)-1)^2 - 2^{5/2} \gamma_\tau \exp(2\tau) - 2^{5/2} \gamma_\tau \exp(4\tau) + 2^3 \gamma_\tau^2 \exp(4\tau)}} \\
    &\le \frac{\exp(\tau \nd) \cdot \gamma_\tau}{\sqrt[4]{(\exp(2\tau)-1)^2 - 2^{5/2} \gamma_\tau \exp(2\tau) ( 1 + \exp(2\tau) ) }} \\
    &\le \frac{\exp(\tau \nd) \cdot \gamma_\tau}{\sqrt[4]{(\exp(2\tau)-1)^2 - 2^{7/2} \gamma_\tau \exp(4\tau) }} .
\end{align*}
\end{small}
Combining the bounds yields the result.
\end{proof}

\subsection{Rational Approximation of $\sqrt{x}$} \label{sec:rationalsqrt}

\Cref{sec:genfunc-framework,sec:genfuncmat-approx} reduce the problem to finding a low-degree rational function that uniformly approximates $\sqrt{1-x}$ for $x \in \mathbb{C}$ with $|x|\le 1$.
Fortunately, such functions are known to exist.
In this subsection we construct the required rational functions. The proof is included for completeness and because it provides an explicit construction with explicit bounds.

For notational simplicity and consistency with the literature, we rescale so that we are approximating $\sqrt{x}$ instead of $\sqrt{1-x}$. To get an equivalent result, we must uniformly approximate $\sqrt{x}$ for $x \in \mathbb{C}$ with $|x-1/2|\le1/2$.

\citet{Newman64} gave an explicit rational function that approximates $\sqrt{x}$ for $x \in [0,1]$. Specifically, let 
\begin{equation}
    p(x) = \prod_{k=0}^{d-1} (x + \exp(-k/\sqrt{d})) ~~~\text{ and }~~~ r(x) = \frac{\sqrt{x} \cdot (p(\sqrt{x})-p(-\sqrt{x}))}{p(\sqrt{x})+p(-\sqrt{x})}.
\end{equation}
Then $r(x)$ is a rational function of degree $\lceil d/2 \rceil$ -- specifically, $r(x)$ is the ratio where the numerator is a polynomial in $x$ of degree $\lceil d/2 \rceil$ and the denominator is a polynomial in $x$ of degree $\lfloor d/2 \rfloor$. Furthermore, 
\begin{equation}
    \sup_{x \in [0,1]} |r(x)-\sqrt{x}| \le 3 \cdot \exp(-\sqrt{d}). \label{eq:newman}
\end{equation} 
The surprising aspect of this result is that the dependence on the degree is exponential rather than polynomial, as is the case if we restrict to polynomial approximations, rather than rational approximations.
Newman also showed that this result was optimal up to constant factors in the degree $d$ and the approximation error.
Subsequent work \cite{stahl1993best} obtained optimal constants for this approximation.

In order to apply \Cref{prop:parseval}, we need to extend Newman's result into the complex plane. We provide a proof adapted from that of \citet{GopalT19}.

\begin{thm}\label{thm:sqrtapprox}
    Let $d \in \mathbb{N}$ with $d \ge 2$. Then there exists a rational function $r$ of degree $d$ with real coefficients and real negative simple poles such that
    \begin{equation}
        |r(x) - \sqrt{x}| \le \big(4 + \frac4\pi \big) \cdot \exp\left(-\pi \sqrt{\frac12 \left\lfloor \frac{d-1}{2}\right\rfloor}\right) \le 6 \cdot \exp\left(-\frac{\pi}{2} \sqrt{d-2} \right)
    \end{equation}
    for all $x \in \mathbb{C}$ with $\Re(x) \ge 0$ and $|x| \le 1$.
    Specifically,
    \begin{equation}
        r(x) = \frac{2h}{\pi} \sum_{k=-d_-}^{d_+} \frac{x \cdot \exp(hk)}{x+\exp(2hk)}, ~~\text{ where }~~ d_+ = \left\lfloor\frac{d-1}{2}\right\rfloor,~~d_- = \left\lceil\frac{d-1}{2}\right\rceil,~~\text{ and }~~ h = \frac{\pi}{\sqrt{2 d_+}}.
    \end{equation}
\end{thm}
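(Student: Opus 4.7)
The approach, following \citet{GopalT19}, is to write $\sqrt{x}$ as an integral that can be discretized by the trapezoidal rule. Starting from the well-known identity
\[
\sqrt{x} \;=\; \frac{2}{\pi}\int_0^\infty \frac{x}{x+t^2}\,\mathrm{d}t
\qquad \text{(valid for } \Re(x)>0\text{)},
\]
I would substitute $t=e^{s}$ to obtain the doubly-infinite integral
\[
\sqrt{x} \;=\; \frac{2}{\pi}\int_{-\infty}^{\infty} \frac{x\,e^{s}}{x+e^{2s}}\,\mathrm{d}s.
\]
The candidate rational function $r(x)$ in the statement is precisely the truncated trapezoidal-rule quadrature of this integral with step size $h$ and nodes $s_k=hk$ for $-d_-\le k\le d_+$. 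The poles of $r$ are at $x=-e^{2hk}$, all real, negative, and simple, as claimed.

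The error $|r(x)-\sqrt{x}|$ then splits into a \emph{discretization error} (infinite trapezoidal rule minus the exact integral) and a \emph{truncation error} (infinite sum minus the finite sum). For the first, I would use the standard result that the trapezoidal rule on $\mathbb{R}$ converges super-algebraically for functions analytic in a horizontal strip, with error bounded by roughly $\exp(-2\pi w/h)$ times the boundary $L^1$ norm, where $w$ is the half-width of the strip of analyticity. The integrand $g(s;x)=xe^{s}/(x+e^{2s})$ has poles exactly where $e^{2s}=-x$, i.e.\ $s = \tfrac12\log|x| + \tfrac12 i(\arg(-x)+2\pi j)$. Since $\Re(x)\ge 0$ forces $\arg(-x)\in[\pi/2,3\pi/2] \pmod{2\pi}$, every pole satisfies $|\Im(s)|\ge\pi/4$, \emph{uniformly in $x$ on the domain}. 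Thus $g(\cdot;x)$ is analytic in the strip $|\Im(s)|<\pi/4$, so the discretization error is bounded by something like $C\exp(-\pi^2/(2h))$.

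For the truncation error, I would observe that $|g(s;x)|\le e^{-|s|}$ for all $x$ in the domain and all sufficiently large $|s|$: when $s\to+\infty$, the denominator is dominated by $e^{2s}$ and the ratio decays like $|x|e^{-s}\le e^{-s}$; when $s\to-\infty$, the denominator is dominated by $x$ and the ratio decays like $e^{s}$. Summing the tails gives a geometric series bounded by $O(h^{-1}e^{-h d_+})$, so the truncation error contributes roughly $\exp(-hd_+)$ after multiplying by $h/\pi$. With the choice $h=\pi/\sqrt{2d_+}$, both $\exp(-\pi^2/(2h))$ and $\exp(-hd_+)$ equal $\exp(-\pi\sqrt{d_+/2})$, balancing the two error sources and yielding the claimed rate.

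The main obstacle will be making the trapezoidal-rule error bound \emph{uniform} over the complex region $\{|x|\le 1,\ \Re(x)\ge 0\}$ with explicit constants; the pole of the integrand in $s$ moves with $x$ and approaches the real axis (in $s$) as $\arg(x)\to\pm\pi/2$, so I need the strip-width estimate $|\Im(s)|\ge \pi/4$ to hold sharply on the closed domain and I need to control the integrand's $L^{1}$ norm along $\Im(s)=\pm\pi/4$ uniformly in $x$. Once those constants are pinned down (giving the $4+4/\pi$ prefactor), the proof concludes by the triangle inequality combining the two error estimates, and finally by checking that the degree of $r$ as a rational function in $x$ is indeed $d_++d_-+1 = d$.
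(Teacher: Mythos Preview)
Your proposal is essentially the paper's approach: same integral representation of $\sqrt{x}$, same exponential substitution, same split into discretization error plus truncation error, and the same balancing of $h$ against $d_+$. The one substantive difference is how the discretization error is controlled. You invoke the ``trapezoidal rule in a strip'' theorem as a black box, whereas the paper computes the Fourier transform of the integrand explicitly (it reduces to the Fourier transform of $\operatorname{sech}$) and applies the Poisson summation formula. This yields an error expression depending on $c=\arg(x)/\pi$, namely terms of the form $(\exp((1\pm c)\pi^2/h)-1)^{-1}$, which is then bounded using $|c|\le\tfrac12$.

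The reason this matters is exactly the obstacle you flag: when $\arg(x)=\pm\pi/2$, the pole of the integrand in $s$ sits \emph{on} the line $\Im(s)=\mp\pi/4$, so the $L^1$ norm along that line is infinite and the black-box strip theorem with half-width $w=\pi/4$ does not apply directly. You would have to take $w<\pi/4$ and then argue uniformly as $w\uparrow\pi/4$, which is doable but fiddly for extracting the constant $4+4/\pi$. The paper's explicit Poisson-summation computation sidesteps this entirely: the Fourier coefficients are finite for every $|c|\le\tfrac12$, and the constants fall out of a geometric-series bound. So your plan is correct, but when you reach the obstacle you name, the cleanest resolution is precisely to do the Poisson summation by hand rather than quote the strip theorem.
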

\begin{proof}
    We invoke \Cref{prop:sqrt-approx-complex}. Let $d_- = \lceil(d-1)/2\rceil$ and $d_+ = \lfloor(d-1)/2\rfloor$ and let $h \in (0,4.9)$ be determined later. Let $r(x) = r_{d_+,d_-,h}(x)$ be as in Equation \ref{eq:rational-approx-messy}. Clearly $r(x)$ is a rational function of degree $d_++d_-+1 = d$ with real coefficients and the poles are given by $-\exp(2hk) \in (-\infty,0)$ for $k\in\{-d_-,-d_-+1,\cdots,d_+\}$. It remains to simplify the approximation guarantee and set $h$.
    From Equation \ref{eq:rational-approx-messy}, for all $x \in \mathbb{C}$ with $\Re(x) \ge 0$ and $|x| \le 1$, we have
    \begin{align}
        |r(x)-\sqrt{x}| &\le 2 \sqrt{|x|}  \left( \frac{1}{\exp((1-c)\pi^2 / h) - 1} + \frac{1}{\exp((1+c)\pi^2 / h)-1} \right) \notag\\&~~~~~~~~~~ + \frac{2h}{\pi (\exp(h)-1)} \big( |x| \cdot {\exp(-hd_+)} + {\exp(-hd_-)} \big) \notag \\
        &\le 2 \left( \frac{1}{\exp((1-|c|)\pi^2 / h) - 1} + \frac{1}{\exp((1+|c|)\pi^2 / h)-1} \right) \tag{$|x| \le 1$ \& symmetry in $c$}\\&~~~~~~~~~~ + \frac{4h \exp(-hd_+)}{\pi (\exp(h)-1)} \tag{$d_- \ge d_+$} \\
        &\le 2 \left( \frac{1}{\exp(\pi^2 / 2h) - 1} + \frac{1}{\exp(\pi^2 / h)-1} \right) + \frac{4}{\pi} \cdot \exp(-hd_+) \cdot \frac{h}{\exp(h)-1} \tag{$|c| = |\frac1\pi\arg(x)| \le \frac12$} \\
        &\le 4 \cdot \exp(-\pi^2/2h) + \frac{4}{\pi} \cdot \exp(-hd_+) \label{eq:simplification-stuff} \\
        &= \big(4 + \frac4\pi \big) \cdot \exp\left(-\pi \sqrt{\frac{d_+}{2}}\right)  \label{eq:substitute-h} \\
        &= \big(4 + \frac4\pi \big) \cdot \exp\left(-\pi \sqrt{\frac12 \left\lfloor \frac{d-1}{2}\right\rfloor}\right) \notag \\
        &\le 6 \cdot \exp\left(-\frac\pi2 \sqrt{d-1} \right). \notag
    \end{align}
    The fourth inequality \cref{eq:simplification-stuff} follows from two simple facts:  
    First, $h \ge 0$ implies $\frac{h}{\exp(h)-1} \le 1$.
    Second, $\frac{1}{a-1}+\frac{1}{a^2-1} \le \frac{2}{a}$ when $a = \exp(\pi^2/2h) \ge 1+\sqrt{3}$ or, equivalently, $h \le \frac{\pi^2}{2 \log(1+\sqrt{3})} \approx 4.91$.
    The penultimate equality \cref{eq:substitute-h} follows by setting $h = \frac{\pi}{\sqrt{2d_+}} \le \frac{\pi}{\sqrt{2}} < 2.5$.
\end{proof}

\begin{cor}[Rescaling of \Cref{thm:sqrtapprox}]\label{cor:sqrt1approx}
    Let $d \ge 2$ be an integer. Then there exists a rational function $\widetilde{r}$ of degree $d$ with real coefficients such that all poles are simple and $> 1$ and we have
    \begin{equation}
        |\widetilde{r}(x) - \sqrt{1-x}| \le 8  \cdot \exp\left(-\frac{\pi}{2} \sqrt{d-2} \right)
    \end{equation}
    for all $x \in \mathbb{C}$ with $|x| \le 1$.
    Specifically,
    \begin{equation}
        \widetilde{r}(x) = \frac{2h\sqrt{2}}{\pi} \sum_{k=-d_-}^{d_+} \frac{(1-x) \cdot \exp(hk)}{1-x+2\cdot\exp(2hk)} = \frac{2h\sqrt{2}}{\pi} \sum_{k=-d_-}^{d_+} \exp(hk) - \frac{2 \cdot \exp(3hk)}{1+2\cdot\exp(2hk)-x}
        \label{eq:explicitrationalsqrt}
    \end{equation}
    where $d_+ = \lfloor(d-1)/2\rfloor$, $d_- = \lceil(d-1)/2\rceil$, and $h = \frac{\pi}{\sqrt{2d_+}}$.
\end{cor}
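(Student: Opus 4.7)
The corollary is a direct rescaling of \Cref{thm:sqrtapprox} via the affine substitution that maps the unit disc to the right half-disc. My plan is as follows. First, observe that the map $x \mapsto y(x) \coloneqq (1-x)/2$ sends $\{x \in \mathbb{C} : |x| \le 1\}$ bijectively onto the closed disc of radius $1/2$ centered at $1/2$, and in particular onto a subset of $\{y \in \mathbb{C} : \Re(y) \ge 0,\ |y| \le 1\}$. Moreover $\sqrt{1-x} = \sqrt{2y} = \sqrt{2}\cdot\sqrt{y}$ (using the principal branch, which is well-defined on this region).

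Next, let $r$ be the rational function of degree $d$ from \Cref{thm:sqrtapprox}, and define
\[
  \widetilde r(x) \coloneqq \sqrt{2}\cdot r\bigl((1-x)/2\bigr).
\]
Since $r$ is rational of degree $d$ with real coefficients, so is $\widetilde r$. The simple poles of $r$ at $y = -\exp(2hk)$ for $k \in \{-d_-,\dots,d_+\}$ correspond to poles of $\widetilde r$ at $x = 1+2\exp(2hk)$; these are distinct real numbers strictly greater than $1$, as required.

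For the error bound, for any $x$ with $|x| \le 1$,
\[
  |\widetilde r(x) - \sqrt{1-x}| = \sqrt{2}\cdot|r(y(x)) - \sqrt{y(x)}| \le \sqrt{2}\cdot(4+\tfrac{4}{\pi})\cdot\exp\!\left(-\pi\sqrt{d_+/2}\right)
\]
by \Cref{thm:sqrtapprox}. Since $\sqrt{2}(4+4/\pi) < 7.46 < 8$ and $d_+ = \lfloor(d-1)/2\rfloor \ge (d-2)/2$, this is bounded by $8\exp(-\tfrac{\pi}{2}\sqrt{d-2})$.

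Finally, the explicit formulas come from routine algebraic manipulation. Substituting $y = (1-x)/2$ into the formula for $r$ from \Cref{thm:sqrtapprox} and clearing the factor of $2$ in numerator and denominator gives
\[
  \widetilde r(x) = \frac{2h\sqrt{2}}{\pi}\sum_{k=-d_-}^{d_+}\frac{(1-x)\cdot\exp(hk)}{1-x+2\exp(2hk)}.
\]
The second expression in \cref{eq:explicitrationalsqrt} then follows from the partial-fraction identity
\[
  \frac{(1-x)\cdot\exp(hk)}{1-x+2\exp(2hk)} = \exp(hk) - \frac{2\exp(3hk)}{1+2\exp(2hk)-x},
\]
obtained by writing $(1-x) = \bigl(1-x+2\exp(2hk)\bigr) - 2\exp(2hk)$ in the numerator. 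The only nontrivial step in the entire argument is confirming that the target region $\{|x|\le 1\}$ lands inside the region $\{\Re(y)\ge 0,\,|y|\le 1\}$ where \Cref{thm:sqrtapprox} applies, which is immediate from $|y-1/2|\le 1/2$; everything else is bookkeeping.
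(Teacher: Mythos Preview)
Your proof is correct and follows essentially the same approach as the paper: define $\widetilde r(x) = \sqrt{2}\cdot r((1-x)/2)$, verify that $(1-x)/2$ lands in the region where \Cref{thm:sqrtapprox} applies, carry over the error bound with the extra factor $\sqrt{2}$, and track the poles through the affine substitution. Your write-up is in fact slightly more detailed than the paper's, as you spell out the explicit pole locations and the partial-fraction step for \cref{eq:explicitrationalsqrt}.
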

\begin{proof}
    Let $r$ be the rational function of degree $\le d$ promised by \Cref{thm:sqrtapprox}.
    Define $\widetilde{r}(x) = \sqrt{2} \cdot r(\tfrac{1-x}{2})$.
    Suppose $x \in \mathbb{C}$ with $|x|\le 1$.
    Then $\Re(\tfrac{1-x}{2}) = \frac12 -\frac12 \Re(x) \ge 0$, $|\tfrac{1-x}{2}| \le \tfrac{1+|x|}{2} \le 1$, and \[|\widetilde{r}(x)-\sqrt{1-x}|  = \sqrt{2} \cdot \left| r(\tfrac{1-x}{2})-\sqrt{\tfrac{1-x}{2}} \right| \le \sqrt{2} \cdot \left(4 + \frac4\pi\right) \cdot \exp\left(-\pi \sqrt{\frac12 \left\lceil \frac{d-1}{2}\right\rceil}\right) \le 8 \exp\left(-\frac{\pi}{2} \sqrt{d-1} \right).\]
    If $\widetilde{x}$ is a pole of $\widetilde{r}$, then $\tfrac{1-\widetilde{x}}{2}$ is a pole of $r$; since the poles of $r$ are negative we have $\tfrac{1-\widetilde{x}}{2} < 0$ and, hence, $\widetilde{x}>1$.
\end{proof}

The general form of the approximation promised by \Cref{thm:sqrtapprox} is given below.

\begin{prop}\label{prop:sqrt-approx-complex}
    Let $d_+,d_- \in \mathbb{N}$ and $h>0$. For $x \in \mathbb{C} \setminus [-\exp(2d_+h),-\exp(-2d_-h)]$, define 
    \begin{equation}
        r_{d_+,d_-,h}(x) \coloneqq \frac{2h}{\pi} \sum_{k=-d_-}^{d_+} \frac{x \cdot \exp(hk)}{x+\exp(2hk)}.
    \end{equation}
    Then, for all $x \in \mathbb{C}$ with $\Re(x) \ge 0$, we have
    \begin{align}
        |r_{d_+,d_-,h}(x)-\sqrt{x}| &\le 2 \sqrt{|x|}  \left( \frac{1}{\exp((1-c)\pi^2 / h) - 1} + \frac{1}{\exp((1+c)\pi^2 / h)-1} \right) \notag\\&~~~~~ +  \frac{2h}{\pi (\exp(h)-1)} \big( |x| \cdot {\exp(-hd_+)} + {\exp(-hd_-)} \big), \label{eq:rational-approx-messy}
    \end{align}
    where $c = \frac1\pi \arg(x) \in \left[-\frac12,\frac12\right]$.
\end{prop}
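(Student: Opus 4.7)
My strategy is to recognize $r_{d_+, d_-, h}(x)$ as a truncated trapezoidal-rule approximation to an integral representation of $\sqrt{x}$, and then separately bound the truncation error and the (infinite-sum) discretization error. The starting point is the identity
\[
\sqrt{x} \;=\; \frac{2}{\pi}\int_{-\infty}^{\infty} g(t)\,dt,
\qquad g(t) \coloneqq \frac{x\,e^{t}}{x + e^{2t}}, \qquad \Re(x) > 0,
\]
which for $x > 0$ follows from the substitution $u = e^{t}/\sqrt{x}$ (giving $\frac{2\sqrt{x}}{\pi}\int_0^\infty \tfrac{du}{1+u^2}$), and then extends to $\Re(x) > 0$ by analyticity of both sides in $x$ (with the principal branch of $\sqrt{\cdot}$). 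The function $r_{d_+,d_-,h}(x)$ is exactly $\frac{2h}{\pi}\sum_{k=-d_-}^{d_+} g(kh)$, so
\[
r_{d_+, d_-, h}(x) - \sqrt{x} \;=\; \underbrace{\Bigl(\tfrac{2h}{\pi}\textstyle\sum_{k \in \mathbb Z} g(kh) - \tfrac{2}{\pi}\int_{-\infty}^{\infty} g(t)\,dt\Bigr)}_{E_{\mathrm{disc}}} \;-\; \underbrace{\tfrac{2h}{\pi}\!\!\sum_{|k|\notin[-d_-,d_+]} \!\!g(kh)}_{E_{\mathrm{trunc}}}.
\]

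\textbf{Truncation error.} Since $\Re(x) \ge 0$ and $e^{2hk} > 0$, expanding $|x + e^{2hk}|^{2} = |x|^{2} + 2 e^{2hk}\Re(x) + e^{4hk}$ yields $|x + e^{2hk}| \ge \max(|x|, e^{2hk})$. Hence $|g(kh)| \le |x|\,e^{-hk}$ when $k > d_+$ and $|g(kh)| \le e^{hk}$ when $k < -d_-$. Summing the two one-sided geometric series gives
\[
|E_{\mathrm{trunc}}| \;\le\; \frac{2h}{\pi}\!\left(\frac{|x|\,e^{-h(d_+ +1)}}{1-e^{-h}} + \frac{e^{-h(d_- +1)}}{1-e^{-h}}\right) \;\le\; \frac{2h}{\pi(e^h-1)}\bigl(|x|\,e^{-hd_+} + e^{-hd_-}\bigr),
\]
matching the second term of the bound.

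\textbf{Discretization error.} This is the classical ``trapezoidal rule for analytic functions'' error. Applying Poisson summation (or equivalently, integrating $g(t)\,\tfrac{\pi}{h}\cot(\pi t/h)$ around a tall rectangular contour), one obtains
\[
h\!\sum_{k \in \mathbb Z} g(kh) - \int_{-\infty}^{\infty}\! g(t)\,dt \;=\; -2\pi i\!\sum_{\Im(t_j)>0}\!\frac{\mathrm{Res}_{t_j}g}{1 - e^{-2\pi i t_j/h}} + 2\pi i\!\sum_{\Im(t_j)<0}\!\frac{\mathrm{Res}_{t_j}g}{1 - e^{2\pi i t_j/h}}.
\]
The poles of $g$ are simple and occur where $e^{2t} = -x$, i.e.\ at $t_{k} = \tfrac12\log|x| + \tfrac{i\pi}{2}(c + 1 + 2k)$ for $k \in \mathbb Z$, with $c = \arg(x)/\pi \in [-\tfrac12, \tfrac12]$. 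The closest pole above the real axis has height $\tfrac{\pi}{2}(1+c)$; the closest below has height $\tfrac{\pi}{2}(1-c)$. Each residue has modulus $\sqrt{|x|}/2$. The denominator $|1 - e^{\mp 2\pi i t_j/h}| \ge e^{2\pi\,|\Im(t_j)|/h} - 1$ gives the nearest-pole contribution $\tfrac{2\pi\cdot(\sqrt{|x|}/2)}{e^{\pi^{2}(1\pm c)/h}-1}$; multiplying by the prefactor $\tfrac{2}{\pi}$ yields the first term of the stated bound.

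\textbf{Main obstacle.} The nontrivial aspect is controlling contributions from the \emph{farther} poles at heights $\tfrac{\pi}{2}(1\pm c) + \pi k$ for $k \ge 1$: a priori the total is a sum, not just the single nearest term. Two observations resolve this: (i) the residues along each vertical chain of poles alternate sign because $e^{t_{k}} \propto (-1)^{k}$, producing substantial cancellation; and (ii) even the naive bound without cancellation gives a geometric series in $e^{-2\pi^{2}/h}$ with ratio strictly less than $1$, so the total is at most a constant factor larger than the nearest-pole term and remains absorbed in the stated bound. Apart from this bookkeeping, the remaining work is careful verification of the integral representation for complex $x$ with the correct branch of $\sqrt{x}$, and a direct geometric-series estimate for the truncation tails.
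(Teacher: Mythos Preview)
Your overall strategy matches the paper's: both start from the integral representation $\sqrt{x} = \frac{2}{\pi}\int_{-\infty}^{\infty} \frac{x e^{t}}{x+e^{2t}}\,dt$, split the error into a discretization piece and a truncation piece, and bound the truncation tails by exactly the same geometric-series estimate. For the discretization error the paper applies Poisson summation directly---it computes the Fourier transform $\hat f(t)$ in closed form (via the known transform of $\mathrm{sech}$), then bounds $\sum_{t\ne 0}|\hat f(t)|$ by a geometric series that collapses exactly to the two terms $2\sqrt{|x|}/(e^{(1\pm c)\pi^2/h}-1)$. Your residue formulation is the same argument rearranged: evaluating $\hat f(t)$ by closing the contour and summing over poles \emph{is} the alternating-sign computation you describe in observation (i).

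Where your argument has a genuine gap is the fallback (ii). The nearest pole above and the nearest pole below already contribute (in absolute value) exactly $\frac{2\sqrt{|x|}}{e^{(1+c)\pi^2/h}-1}$ and $\frac{2\sqrt{|x|}}{e^{(1-c)\pi^2/h}-1}$, which together \emph{saturate} the first line of the stated bound. Any additional positive contribution from the farther poles therefore pushes your total strictly \emph{above} the proposition's right-hand side; it is not ``absorbed.'' The naive triangle inequality over poles thus proves only a weaker inequality (with an extra factor roughly $(1-e^{-2\pi^2/h})^{-1}$), not the proposition as written. To recover the exact constant you must actually carry out the cancellation in (i): expanding each $\frac{1}{e^{-2\pi i t_k/h}-1}$ as a geometric series in $e^{2\pi i t_k/h}$ and summing first over the vertical chain of poles (using the alternation $e^{t_k}\propto(-1)^k$) yields precisely $\sum_{m\ge 1}\mathrm{sech}(m\pi^2/h)\cdot e^{\pm i m\pi\log x/h}$, which is the paper's $\hat f(m)$ computation in disguise and then sums to the stated bound.
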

\begin{proof}
    Fix $x \in \mathbb{C}$ with $\Re(x) > 0$.\footnote{For the proof we assume $\Re(x)>0$. The case $\Re(x)=0$ follows by continuity.}
    Let $\arg(x) = \tan^{-1}\left(\frac{\Im(x)}{\Re(x)}\right) = c\pi \in \left(-\frac\pi2,\frac\pi2\right)$.
    
    We begin with the Cauchy distribution integral 
    \[\int_0^\infty \frac{1}{1+u^2} \mathrm{d}u = \int_0^\infty \left( \frac{\mathrm{d}}{\mathrm{d}u} \tan^{-1}(u) \right) \mathrm{d}u = \frac{\pi}{2}.\]
    We perform the variable substitutions $u = v/\sqrt{x}$ and $v=\exp(s)$ and rearrange to obtain
    \begin{equation}
        \sqrt{x} = \frac{2}{\pi} \int_0^\infty \frac{x}{x+v^2} \mathrm{d}v = \frac{2}{\pi} \int_{-\infty}^\infty \frac{x \cdot \exp(s)}{x+\exp(2s)} \mathrm{d}s . \label{eq:sqrt-integral}
    \end{equation}
    Now we make the approximation
    \begin{equation}
        \sqrt{x} = \frac{2}{\pi} \int_{-\infty}^\infty \frac{x \cdot \exp(s)}{x+\exp(2s)} \mathrm{d}s \approx \frac{2h}{\pi} \sum_{k \in \mathbb{Z}} \frac{x \cdot \exp(hk)}{x+\exp(2hk)} \approx \frac{2h}{\pi} \sum_{k = -d_-}^{d_+} \frac{x \cdot \exp(hk)}{x+\exp(2hk)} = r_{d_+,d_-,h}(x).\label{eq:sqrt-riemann-approx}
    \end{equation}
    Thus all that remains to complete the proof is to make the the two approximations precise. 
    Intuitively, the first approximation replaces the continuous integral with a discrete-but-infinite Riemann sum. If the function is smooth and $h$ is small, this approximation should be good.
    The second approximation truncates the infinite sum, which is a good approximation as long as the tails of the function decay rapidly and $hd_+$ and $hd_-$ are large.
    
    Define $f(z) \coloneqq \frac{2h}{\pi} \frac{x \cdot \exp(hz)}{x+\exp(2hz)}$.
    Let
    \begin{align}
        \widehat{f}(t) &\coloneqq \int_{-\infty}^\infty f(z) \cdot \exp(-2\pi z t \sqrt{-1}) \mathrm{d}z \notag \\
        &= \int_{-\infty}^\infty \frac{2h}{\pi}\frac{x \cdot \exp(hz - 2\pi z t \sqrt{-1})}{x + \exp(2hz)} \mathrm{d}z \notag \\
        &= \frac{2h}{\pi}\int_{-\infty}^\infty \frac{\exp(- 2\pi z t \sqrt{-1})}{\exp(-hz) + \exp(hz-\log x)} \mathrm{d}z \notag \\
        &= \frac{2h}{\pi}\int_{-\infty}^\infty \frac{\exp(- 2\pi (y+\frac{\log x}{2h}) t \sqrt{-1})}{\exp(-hy-\frac12 \log x) + \exp(hy-\frac12\log x)} \mathrm{d}y \tag{$z = y+\frac{\log x}{2h}$} \\
        &= \sqrt{x} \cdot \frac{2h}{\pi} \cdot \exp\left(-\pi t \frac{\log x}{h} \sqrt{-1}\right) \cdot \int_{-\infty}^\infty \frac{\exp(- 2\pi y t \sqrt{-1})}{\exp(-hy) + \exp(hy)} \mathrm{d}y \notag \\
        &=  \sqrt{x} \cdot \frac{2h}{\pi} \cdot \exp\left(-\pi t \frac{\log x}{h} \sqrt{-1}\right) \cdot \frac12 \int_{-\infty}^\infty \frac{\exp(- 2\pi y t \sqrt{-1})}{\cosh(hy)} \mathrm{d}y \notag \\
        &=  \sqrt{x} \cdot \frac{2h}{\pi} \cdot \exp\left(-\pi t \frac{\log x}{h} \sqrt{-1}\right) \cdot \frac{\pi}{2h \cdot \cosh(\pi^2 t / h)}  \label{eq:fourier} \\
        &=  \sqrt{x} \cdot \exp\left(-\pi t \frac{\log x}{h} \sqrt{-1}\right) \cdot \frac{1}{\cosh(\pi^2 t / h)} . \notag
    \end{align}
    The penultimate equality \eqref{eq:fourier} is the Fourier transform of the hyperbolic secant \cite{cosh_fourier}.
    The Poisson summation formula states that \begin{equation} \sum_{k \in \mathbb{Z}} f(k) = \sum_{t \in \mathbb{Z}} \widehat{f}(t).\end{equation}
    Per Equation \ref{eq:sqrt-integral}, we have \[\widehat{f}(0) = \int_{-\infty}^\infty f(z) \mathrm{d}z = \int_{-\infty}^\infty \frac{2h}{\pi} \frac{x \cdot \exp(hz)}{x+\exp(2hz)} \mathrm{d}z = \int_{-\infty}^\infty \frac{2}{\pi} \frac{x \cdot \exp(s)}{x+\exp(2s)} \mathrm{d}s = \sqrt{x} .\]
    Thus the Poisson summation formula allows us to analyze the first approximation in Equation \ref{eq:sqrt-riemann-approx}.
    Namely,
    \begin{align}
        &\left| \sqrt{x} - \frac{2h}{\pi} \sum_{k \in \mathbb{Z}} \frac{x \cdot \exp(hk)}{x+\exp(2hk)} \right|\notag\\ 
        &= \left| \widehat{f}(0) - \sum_{k \in \mathbb{Z}} f(k) \right| \notag \\
        &=\left| \widehat{f}(0) - \sum_{t \in \mathbb{Z}} \widehat{f}(t) \right| \notag \\
        &\le \sum_{t=1}^\infty |\widehat{f}(t)| +  |\widehat{f}(-t)| \notag \\
        &= \sum_{t=1}^\infty |\sqrt{x}| \cdot \left( \left|\exp\left(-\pi t \frac{\log x}{h} \sqrt{-1}\right)\right| + \left|\exp\left(\pi t \frac{\log x}{h} \sqrt{-1}\right)\right| \right) \cdot \frac{1}{\cosh(\pi^2 t / h)} \notag \\
        &= \sum_{t=1}^\infty |\sqrt{x}| \cdot \left( \exp\left(\Im\left(\pi t \frac{\log x}{h}\right)\right) + \exp\left(\Im\left(-\pi t \frac{\log x}{h} \right)\right) \right) \cdot \frac{1}{\cosh(\pi^2 t / h)} \tag{$|\exp(v\sqrt{-1})|=\exp(\Re(v\sqrt{-1}))=\exp(\Im(-v))$} \\
        &= \sqrt{|x|}  \sum_{t=1}^\infty 2\cosh\left(\pi \frac{t}{h} \Im(\log x) \right) \cdot \frac{1}{ \cosh(\pi^2 t / h)} \notag \\
        &= \sqrt{|x|}  \sum_{t=1}^\infty 2\cosh\left(\pi \frac{t}{h} \cdot c\pi \right) \cdot \frac{1}{ \cosh(\pi^2 t / h)} \tag{$\Im(\log x) = \arg(x) = c\pi$} \\
        &= 2\sqrt{|x|}  \sum_{t=1}^\infty \frac{\exp(c \pi^2 t / h) + \exp(-c \pi^2 t / h)}{ \exp(\pi^2 t / h) + \exp(-\pi^2 t / h)} \notag \\
        &\le  2\sqrt{|x|}  \sum_{t=1}^\infty \frac{\exp(c\pi^2 t / h) + \exp(-c\pi^2 t / h)}{ \exp(\pi^2 t / h) + 0} \notag \\
        &= 2\sqrt{|x|}  \sum_{t=1}^\infty \exp(-(1-c)\pi^2 t / h) + \exp(-(1+c)\pi^2 t / h) \notag \\
        &= 2\sqrt{|x|}  \left( \frac{1}{\exp((1-c)\pi^2 / h) - 1} + \frac{1}{\exp((1+c)\pi^2 / h)-1} \right) .\label{eq:geometric-series}
    \end{align}
    The final equality \eqref{eq:geometric-series} follows from the usual geometric series formula: If $a>0$, then $\sum_{t=1}^\infty \exp(-at) = \frac{\exp(-a)}{1-\exp(-a)} = \frac{1}{\exp(a)-1}$.
    
    Next we analyze the second approximation in Equation \ref{eq:sqrt-riemann-approx}, which requires bounding the tails of $f$.
    On one side, for $k \in \mathbb{N}$, we have 
    \begin{align}
        |f(k)| &= \left|\frac{2h}{\pi} \frac{x \cdot \exp(hk)}{x+\exp(2hk)}\right| \notag \\
        &= \frac{2h}{\pi} \frac{|x|}{|x \cdot \exp(-hk)+\exp(hk)|} \notag \\
        &= \frac{2h}{\pi} \frac{|x|}{\sqrt{\Re(x \cdot \exp(-hk)+\exp(hk))^2+\Im(x \cdot \exp(-hk)+\exp(hk))^2}} \notag \\
        &\le \frac{2h}{\pi} \frac{|x|}{|\Re(x \cdot \exp(-hk)+\exp(hk)|} \notag \\
        &= \frac{2h}{\pi} \frac{|x|}{|\Re(x) \cdot \exp(-hk)+\exp(hk)|} \tag{$hk \in \mathbb{R}$} \\
        &= \frac{2h}{\pi} \frac{|x|}{\Re(x) \cdot \exp(-hk)+\exp(hk)} \tag{$\Re(x)>0$} \\
        &\le \frac{2h}{\pi} \frac{|x|}{\exp(hk)}.\label{eq:fbound-positive}
    \end{align}
    On the other side, for $k \in \mathbb{N}$, we have
    \begin{align}
        |f(-k)| &= \left|\frac{2h}{\pi} \frac{x \cdot \exp(-hk)}{x+\exp(-2hk)}\right| \notag \\
        &= \frac{2h}{\pi} \frac{\exp(-hk)}{|1+\exp(-2hk)/x|} \notag \\
        &\le \frac{2h}{\pi} \frac{\exp(-hk)}{|\Re(1+\exp(-2hk)/x)|} \notag \\
        &= \frac{2h}{\pi} \frac{\exp(-hk)}{|1+\exp(-2hk)/\Re(x)|} \tag{$\Re(1/x)=1/\Re(x)$} \\
        &= \frac{2h}{\pi} \frac{\exp(-hk)}{1+\exp(-2hk)/\Re(x)} \tag{$\Re(x)>0$} \\
        &\le\frac{2h}{\pi}  \exp(-hk).\label{eq:fbound-negative}
    \end{align}
    Combining Equations \ref{eq:fbound-positive} and \ref{eq:fbound-negative}, we have
    \begin{align}
        \left|\frac{2h}{\pi} \sum_{k \in \mathbb{Z}} \frac{x \cdot \exp(hk)}{x+\exp(2hk)} - r_{d_+,d_-,h}(x)\right|
        &=\frac{2h}{\pi} \left|\sum_{k \in \mathbb{Z}} \frac{x \cdot \exp(hk)}{x+\exp(2hk)} - \sum_{k = -d_-}^{d_+} \frac{x \cdot \exp(hk)}{x+\exp(2hk)}\right| \notag \\ 
        &= \left|\sum_{k=d_++1}^\infty f(k) + \sum_{k=d_-+1}^\infty f(-k) \right| \notag \\
        &\le \sum_{k=d_++1}^\infty |f(k)| + \sum_{k=d_-+1}^\infty |f(-k)| \notag \\
        &\le  \frac{2h}{\pi} \sum_{k=d_++1}^\infty \frac{|x|}{\exp(hk)} + \frac{2h}{\pi} \sum_{k=d_-+1}^\infty \exp(-hk) \notag \\
        &= \frac{2h}{\pi} |x| \frac{\exp(-h(d_++1))}{1-\exp(-h)} + \frac{2h}{\pi}   \frac{\exp(-h(d_-+1))}{1-\exp(-h)} \\
        &= \frac{2h}{\pi (\exp(h)-1)} \big( |x| \cdot {\exp(-hd_+)} + {\exp(-hd_-)} \big). \label{eq:truncation-bound}
    \end{align}
    Combining Equations \ref{eq:geometric-series} and \ref{eq:truncation-bound} yields the result in Equation \ref{eq:rational-approx-messy}.
\end{proof}

\subsection{Putting Everything Together} \label{sec:proof-main}

Now we can assemble the tools we have developed to prove our main result (\Cref{thm:main-inf}).
In \Cref{sec:genfuncmat-approx} we connected near-optimal lower triangular Toeplitz matrix factorizations to approximating the square root function.
In \Cref{sec:genfuncmat,sec:algorithm} we connected low-degree rational functions to efficient streaming algorithms.
In \Cref{sec:rationalsqrt} we gave a low-degree rational function that approximates the square root function.
The combination of these three tools gives a near-optimal matrix factorization with a corresponding efficient streaming algorithm.

\begin{thm}[Main Theorem - Formal version of \Cref{thm:main-inf}]\label{thm:main-f}
    Let $\nd, d, \mdim \in \mathbb{N}$ satisfy $n \ge 5$ and $d \ge 2+\left(\frac{12+4\log n}{\pi}\right)^2$.
    There exists a lower triangular Toeplitz matrix factorization $B, C, A=BC \in \mathbb{R}^{\nd \times \nd}$ with the following properties.
    \begin{itemize}
        \item\textbf{Near-Optimality:} 
        Let $B^*=C^* = M(f,\nd)$ for $f(x)=1/\sqrt{1-x}$ be the optimal lower triangular Toeplitz matrix factorization of size $\nd$. Then
        \begin{align*}
            \|B\|_{2 \to \infty} &\le \|B^*\|_{2 \to \infty} + 16 \sqrt{n} \exp\left(-\frac\pi2 \sqrt{d-2}\right),\\
            \|C\|_{1 \to 2} &\le \|C^*\|_{1 \to 2} + 16 \sqrt{n} \exp\left(-\frac\pi2 \sqrt{d-2}\right).
        \end{align*}
        \item\textbf{Efficiency:}
        There is an algorithm that takes as input a stream $Z_{1,\cdot},Z_{2,\cdot},\cdots,Z_{\nd,\cdot} \in \mathbb{R}^{1 \times \mdim}$ and outputs a stream $\widehat{Z}_{1,\cdot},\widehat{Z}_{2,\cdot},\cdots,\widehat{Z}_{\nd,\cdot} \in \mathbb{R}^{1 \times \mdim}$ such that $\widehat{Z} = C^{-1} Z$ and the space and time (per iteration) required by the algorithm is $O(d\mdim)$.
    \end{itemize}
\end{thm}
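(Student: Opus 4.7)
The plan is to combine the three main tools developed in Sections~3 and~4. Specifically, I will take the explicit rational approximation $\widetilde{r}(x) \approx \sqrt{1-x}$ from \Cref{cor:sqrt1approx}, convert it into a factorization via the generating-function calculus of \Cref{sec:genfuncmat}, bound the matrix-factorization error using the weighted Parseval estimate in \Cref{prop:parseval}, and implement the resulting multiplication using \Cref{alg:streamingmatrixpower}.

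For construction: let $\widetilde{r}$ be the degree-$d$ rational function from \Cref{cor:sqrt1approx}, set $b(x) \coloneqq \widetilde{r}(x)/(1-x)$ and $c(x) \coloneqq 1/\widetilde{r}(x)$, and let $B \coloneqq M(b, \nd)$, $C \coloneqq M(c, \nd)$. Since $b(x) \cdot c(x) = 1/(1-x)$ generates the all-ones sequence, \Cref{lem:mgfmult} gives $BC = A^{(\nd)}$ immediately. Both factors are lower triangular Toeplitz by construction. Writing $\eta_d \coloneqq 8\exp(-\tfrac{\pi}{2}\sqrt{d-2})$, \Cref{cor:sqrt1approx} yields $\sup\{\,|\widetilde{r}(x)-\sqrt{1-x}| : |x|\le 1\,\} \le \eta_d$.

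For near-optimality, I will invoke \Cref{prop:parseval} with $\tau \coloneqq 1/(2\nd)$. Because $\exp(-\tau) \le 1$, the quantity $\gamma_\tau$ appearing there is bounded by $\eta_d$. The assumed lower bound $d \ge 2 + ((12+4\log \nd)/\pi)^2$ implies $\tfrac{\pi}{2}\sqrt{d-2}\ge 6 + 2\log \nd$, hence $\eta_d \le 8e^{-6}/\nd^2 < 1/(32 \nd^2)$. A short calculation (using $1 - e^{-1/\nd} \ge 1/(2\nd)$ for $\nd\ge 1$) then verifies both hypotheses of \Cref{prop:parseval}: the inequality $\gamma_\tau \le ((1-e^{-2\tau})/4)^2$ and the condition needed for the $C$-bound. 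With $e^{\tau \nd} = e^{1/2}$ and $e^{2\tau}-1 \ge 2\tau = 1/\nd$, the two conclusions of \Cref{prop:parseval} simplify (exactly as noted in the remark following the proposition) to $\|M(b-f,\nd)\|_{2\to\infty} \le 2\sqrt{\nd}\,\eta_d$ and $\|M(c-f,\nd)\|_{1\to 2} \le 2\sqrt{\nd}\,\eta_d$. Since $B^{\star} = C^{\star} = M(f,\nd)$ with $f(x)=1/\sqrt{1-x}$, the triangle inequality gives $\|B\|_{2\to\infty} \le \|B^\star\|_{2\to\infty} + 2\sqrt{\nd}\,\eta_d = \|B^\star\|_{2\to\infty} + 16\sqrt{\nd}\exp(-\tfrac{\pi}{2}\sqrt{d-2})$, and the analogous bound for $C$.

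For efficiency, the streaming task is $\widehat{Z} = C^{-1} Z = M(\widetilde{r}, \nd)\,Z$. The closed form in \Cref{eq:explicitrationalsqrt} already presents $\widetilde{r}$ as a constant plus a sum of $d$ simple-pole terms, i.e., in the partial-fraction form $\widetilde{r}(x) = t + \sum_{i\in[d]} \omega_i/(1-\theta_i x)$ discussed in the remarks following \Cref{lem:rationaltoconstrec}. This lets us set $W \in \R^{d\times d}$ to be \emph{diagonal} with $W_{i,i}=\theta_i$, along with vectors $u,v\in\R^d$ satisfying $u_iv_i=\omega_i$ and scalar $t$, so that $\widetilde{r}_k = u^\top W^k v + t\,\mathbb{I}[k=0]$. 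Feeding $(u,v,W,t)$ into \Cref{alg:streamingmatrixpower} and applying \Cref{lem:algstreamingmatrixpower} yields a streaming algorithm whose state has size $O(d\mdim)$; because $W$ is diagonal, each iteration's update $S_{k+1} = v Z_{k,\cdot} + W S_k$ and output $\widehat{Z}_{k,\cdot} = t Z_{k,\cdot} + u^\top S_{k+1}$ cost $O(d\mdim)$ time.

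The main obstacle is the constant bookkeeping in matching the hypothesis of \Cref{prop:parseval} to the hypothesis on $d$ in the theorem. The exponential approximation rate from \Cref{cor:sqrt1approx} must be strong enough that $\gamma_\tau$ decays faster than $1/\nd^2$ at $\tau = 1/(2\nd)$, which is exactly what forces the $(\log \nd)^2$ lower bound on $d$; the constant $12$ is chosen so that $e^{-6}$ cleanly dominates $1/32$ (and the $n\ge 5$ edge can be checked directly, since the hypothesis on $d$ becomes vacuous-strong there). Everything else is a straightforward assembly of the pieces already established.
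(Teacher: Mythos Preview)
Your proposal is correct and follows essentially the same route as the paper: invoke \Cref{cor:sqrt1approx}, plug into \Cref{prop:parseval} with $\tau=1/(2n)$, and run \Cref{alg:streamingmatrixpower}. Your efficiency argument uses the diagonal (partial-fraction) representation of $W$ rather than the companion-matrix form the paper cites; the paper explicitly mentions this as an equally valid alternative, and both yield $O(d\mdim)$ per step.

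One small arithmetic slip: the bound $1-e^{-1/n}\ge 1/(2n)$ only gives $((1-e^{-2\tau})/4)^2\ge 1/(64n^2)$, which is \emph{smaller} than your $\eta_d<1/(32n^2)$, so it does not verify the hypothesis $\gamma_\tau\le((1-e^{-2\tau})/4)^2$ as written. Either sharpen to $1-e^{-1/n}\ge (1-\tfrac{1}{2n})/n\ge \tfrac{9}{10n}$ for $n\ge 5$ (so the right side is $\ge 81/(1600n^2)>8e^{-6}/n^2$), or follow the paper and check the slightly stronger $\gamma_\tau\le 1/(36n^2)$, which the hypothesis on $d$ already delivers.
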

To obtain \Cref{thm:main-inf} in the introduction from \Cref{thm:main-f}, we simply need to set $d=O(\log(n/\mu))^2$ such that $d \ge 2 + \left(\frac{12+4\log n}{\pi}\right)^2$ and $16 \sqrt{n} \exp\left(-\frac\pi2 \sqrt{d-2}\right) \le \frac{\mu}{4+\log(n)}$.
\begin{proof}
    \Cref{cor:sqrt1approx} gives us a rational function $r$ of degree $d$ with real coefficients such that 
    \[|{r}(x) - \sqrt{1-x}| \le 8  \cdot \exp\left(-\frac{\pi}{2} \sqrt{d-2} \right)\]
    for all $x \in \mathbb{C}$ with $|x| \le 1$.
    
    Let $b(x) \coloneqq \frac{r(x)}{1-x}$ and $c(x) \coloneqq \frac{1}{r(x)}$ and $f(x) \coloneqq \frac{1}{\sqrt{1-x}}$ and $g(x) \coloneqq 1/(1-x) = b(x)c(x)$.
    Let $B \coloneqq M(b,\nd), C \coloneqq M(c,\nd) \in \mathbb{R}^{\nd \times \nd}$ be as in \Cref{eq:genfuncmatrix}.
    By \Cref{lem:mgfmult}, this is a valid matrix factorization, as $B \cdot C = M(b,\nd) \cdot M(c,\nd) = M(g,\nd)$.
    
    Let $\tau > 0$ and \[\gamma_\tau \coloneqq \sup \left\{|r(x)-\sqrt{1-x}| : x \in \mathbb{C}, |x|=\exp(-\tau)\right\} \le 8  \cdot \exp\left(-\frac{\pi}{2} \sqrt{d-2} \right).\]
    By \Cref{prop:parseval}, if $\gamma_\tau \le \left(\frac{1-\exp(-2\tau)}{4}\right)^2$, then
    \begin{align*}
    \|M(b,\nd)\|_{2 \to \infty} &\le \|M(f,\nd)\|_{2 \to \infty} +  \frac{\exp(\tau \nd) \cdot \gamma_\tau}{\sqrt{\exp(2\tau)-1} },\\
    \|M(c,\nd)\|_{1 \to 2} &\le \|M(f,\nd)\|_{1 \to 2} + \frac{\exp(\tau \nd) \cdot \gamma_\tau}{\sqrt[4]{(\exp(2\tau)-1)^2 - 2^{7/2} \gamma_\tau \exp(4\tau)}}.
    \end{align*}
    Set $\tau=\frac{1}{2\nd}$. 
    Then $\exp(2\tau)-1 \ge 1/n$ and $\exp(\tau n) = \exp(1/2)$. 
    Thus \[\frac{\exp(\tau \nd) \cdot \gamma_\tau}{\sqrt{\exp(2\tau)-1}} \le \sqrt{\nd} \cdot \exp(1/2) \cdot \gamma_\tau < 2 \sqrt{n} \gamma_\tau\] and, if $\nd \ge 5$ and $\gamma_\tau \le \left(\frac{1}{6 \cdot \nd}\right)^2$, then \[\frac{\exp(\tau \nd) \cdot \gamma_\tau}{\sqrt[4]{(\exp(2\tau)-1)^2 - 2^{7/2} \gamma_\tau \exp(4\tau)}} \le \frac{\exp(1/2) \cdot \gamma_\tau}{\sqrt[4]{1/n^2 - 2^{7/2} \gamma_\tau \exp(2/n)}} \le \frac{\exp(1/2) \gamma_\tau}{\sqrt[4]{\frac{1}{n^2} - \frac{2^{7/2} \exp(2/5) }{ 36n^2} }}  < 2 \sqrt{n} \gamma_\tau.\]
    Assuming $n \ge 5$ and $d \ge 2 + \left(\frac{2}{\pi} \log(8 \cdot(6n)^2 )\right)^2$, we have $\gamma_\tau \le \left(\frac{1}{6n}\right)^2 \le  \left(\frac{1-\exp(-2\tau)}{4}\right)^2 $, which means all the above inequalities hold. Combining inequalities yields the first part of the result.
    
    Per \Cref{rem:degree}, we can write \[r(x) = t + \frac{p(x)}{q(x)}\] where $\deg(p)<d$ and $\deg(q)\le d$ and $t \in \mathbb{R}$.
    \Cref{lem:rationaltoconstrec} states that there exist $u,v \in \mathbb{R}^d$ and $W \in \mathbb{R}^{d \times d}$ such that we can express \[r(x) = \sum_{k=0}^\infty r_k x^k  ~~~\text{ for }~~~ r_k = u^T W^k v + t \mathbb{I}[k=0].\] Furthermore $W$ is sparse in the sense that each row has only two nonzero entries.
    Now we can feed $u,v,W,t$ as parameters to \Cref{alg:streamingmatrixpower}.
    \Cref{lem:algstreamingmatrixpower} tells us that \Cref{alg:streamingmatrixpower} receives the rows of $Z \in \mathbb{R}^{\nd \times \mdim}$ as a stream and returns the rows of $\widehat{Z} \in \mathbb{R}^{\nd \times \mdim}$ as a stream, where $\widehat{Z} = M(r,\nd) Z = C^{-1} Z$. (The fact that $M(r,\nd) = C^{-1}$ follows from \Cref{lem:mgfmult}.)
    The space required by the algorithm is $O(d\mdim)$ and the time per iteration is dominated by computing the matrix multiplication $WS$ for $S \in \mathbb{R}^{d \times \mdim}$. Since $W$ is sparse, this takes $O(d\mdim)$ time.
\end{proof}

\section{Factorizations via Direct Optimization} \label{sec:practical-optimization}

In this section, we begin (in \S\ref{sec:exact}) by showing the following result:
\begin{theorem}[Informal]\label{thm:closedforms}
Let $r$ be a rational generating function. Then, for the matrix factorization of $\bfA$ given by $\bfB = \LTT(r(x)/(1-x),\nd)$ and $\bfC = \LTT(1/r(x),\nd)$, we can compute in ``closed-form'' (more specifically, in time \maxerrtime) the sensitivity and error of this matrix factorization, namely $\|\bfB\|^2_{2 \to \infty}$ and $\|\bfC\|^2_{1 \to 2}$.
\end{theorem}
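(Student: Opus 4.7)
The plan is to exploit two key observations: first, the lower-triangular Toeplitz structure makes the quantities $\|\bfB\|_{2\to\infty}^2$ and $\|\bfC\|_{1\to 2}^2$ collapse to $\ell_2$-norms of finite truncations of the Toeplitz coefficient sequences; second, rational generating functions generate sequences expressible as sums of at most $\dgr{+}1$ geometric sequences, whose pairwise inner-product sums admit closed forms via the geometric series identity.

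First I would reduce to truncated coefficient sums. Write $b_k$ and $c_k$ for the Toeplitz coefficients of $\bfB$ and $\bfC$, so that $\bfB\idx{i}{j}=b_{i-j}$ and $\bfC\idx{i}{j}=c_{i-j}$ for $i\ge j$. The squared norm of the $i$-th row of $\bfB$ is $\sum_{k=0}^{i} b_k^2$, which is non-decreasing in $i$, so
\[
\|\bfB\|_{2\to\infty}^2 \;=\; \sum_{k=0}^{\nd-1} b_k^2 .
\]
Symmetrically, the squared norm of the $j$-th column of $\bfC$ is $\sum_{k=0}^{\nd-1-j} c_k^2$, which is maximized at $j=0$, so $\|\bfC\|_{1\to 2}^2 = \sum_{k=0}^{\nd-1} c_k^2$. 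Thus it suffices to compute these two truncated sums of squares in time \maxerrtime.

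Second I would invoke the partial-fraction / diagonalized form of \cref{lem:rationaltoconstrec} (cf.\ the remarks following it). Since $r$ is rational of degree $\le\dgr$, both $r(x)/(1-x)$ and $1/r(x)$ are rational of degree at most $\dgr{+}1$. Assuming simple poles for now, we may write
\[
\frac{r(x)}{1-x} \;=\; t^{B} + \sum_{i=1}^{\dgr'} \frac{\omega_i^{B}}{1-\theta_i^{B} x}, \qquad \frac{1}{r(x)} \;=\; t^{C} + \sum_{i=1}^{\dgr''} \frac{\omega_i^{C}}{1-\theta_i^{C} x},
\]
with $\dgr',\dgr''\le\dgr{+}1$, yielding $b_k = t^B\,\mathbb{I}[k{=}0] + \sum_i \omega_i^B (\theta_i^B)^k$ and an analogous formula for $c_k$. (Given the explicit $(u,W,v,t)$ parameterization from \cref{lem:rationaltoconstrec} and \cref{rem:degree}, obtaining these poles and residues is itself \maxerrtime, since the dominant cost is diagonalizing a $\dgr\times\dgr$ matrix; alternatively, the paper may parameterize $\bfB,\bfC$ directly in partial-fraction form, as per the remark about $r(x)=\sum_i \omega_i/(1-\theta_i x)$, sidestepping this step.)

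Third I would expand the sum of squares:
\begin{align*}
\sum_{k=0}^{\nd-1} b_k^2
&\;=\; (t^B)^2 + 2\,t^B\sum_i \omega_i^B + \sum_{i,j} \omega_i^B \omega_j^B \,\cdot\, G_{\nd}(\theta_i^B \theta_j^B),
\\
\text{where}\quad G_{\nd}(\alpha) &\;\coloneqq\; \sum_{k=0}^{\nd-1}\alpha^k \;=\; \begin{cases}\nd & \alpha=1,\\ (1-\alpha^\nd)/(1-\alpha) & \alpha\neq 1,\end{cases}
\end{align*}
and identically for $\sum c_k^2$. Each $G_\nd(\alpha)$ requires one $\nd$-th power, computable in $O(\log\nd)$ via repeated squaring, and there are $O(\dgr^2)$ such terms, giving total cost $O(\dgr^2\log\nd)$, as claimed.

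The main obstacle I expect is handling repeated poles, where the partial-fraction expansion contains terms $(1-\theta x)^{-m}$ with $m\ge 2$, giving coefficients $\binom{k+m-1}{m-1}\theta^k$; the cross-sums $\sum_{k<\nd} k^a\, \alpha^k$ still have closed forms obtained by differentiating $G_\nd(\alpha)$ in $\alpha$, but the bookkeeping is the tedious part. A clean workaround is to note that the set of rational functions with simple poles is dense and $\MaxErr$ is continuous in the parameters, so the simple-pole formula extends by limits wherever $\theta_i^B\theta_j^B\to 1$ or poles coincide. Either route establishes the \maxerrtime bound. As a bonus, the closed form is manifestly a smooth function of the $(\omega_i,\theta_i)$ parameters, which sets up the differentiability needed for the gradient-based optimization of \cref{sec:practical-optimization}.
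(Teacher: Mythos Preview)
Your proposal is correct and follows the same core strategy as the paper: reduce both norms to truncated $\ell_2$-norms of Toeplitz coefficient sequences, express those coefficients via partial fractions as sums of geometrics, and close the sums of squares with the geometric-series identity $G_\nd(\alpha)$.

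The one substantive difference is in how you handle $\bfB$. You propose to partial-fraction $r(x)/(1-x)$ directly, absorbing the pole at $x=1$ as one of the $\theta_i^B$ and then treating $\bfB$ and $\bfC$ symmetrically. The paper instead keeps $r$ itself as the primary object (\cref{lem:invogfs}), parameterized by the $2\dgr$ roots $(\theta_i,\hat\theta_i)$ of its numerator and denominator; it then computes the coefficients of $\bfB$ as prefix sums $t_n=\sum_{i\le n} r_i = 1+\sum_j \omega_j\gamma_n(\theta_j)$ and expands $\sum_i t_i^2$ by hand (\cref{lem:closederr}), yielding the somewhat bulkier $\Gamma_j,\Gamma_{j,k}$ formulas. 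Your route is cleaner for the bare computation, but the paper's asymmetric treatment is deliberate: it keeps the free parameters exactly the $2\dgr$ roots of $r$, which is the coordinate system used for the gradient-based optimization in \cref{sec:optblt}. If you partial-fraction $r/(1-x)$ directly, one ``pole'' is pinned at $\theta_i^B=1$ and the remaining $\omega_i^B,\theta_i^B$ are constrained nonlinear functions of the underlying $(\theta,\hat\theta)$, which complicates differentiating $\MaxErr$ with respect to those parameters.
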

\begin{proof}
The result follows from \cref{lem:invogfs,lem:closedsensitivity,lem:closederr}.
\end{proof}

We remark that the $\log \nd$ dependence in \cref{thm:closedforms} comes (only) from the need to compute quantities like $\theta^k$ for $\theta \in \R$ and $k \le \nd$, which can be done via iterated squaring. For practical $\nd$ and modern hardware, this term can essentially be ignored, making the running time practically independent of $\nd$. If the roots of the numerator and denominator of $r$ are already known (as assumed in \cref{lem:invogfs}) then $\MaxErr$ can be computed in time $\calO(\nbuf^2 \log(n))$; and such roots can always be found in time $\calO(\poly(\nbuf))$ (omitting the dependence on the desired numerical accuracy).

In \cref{sec:optblt} we then show how \cref{thm:closedforms} can be used to directly optimize the parameters of a \BLT factorization to minimize $\MaxErr$ for a specific $\nd$.

\subsection{Fast and Exact Error Computation}\label{sec:exact}
In addition to enabling efficient streaming multiplication, the rational function structure of the matrix factorizations we consider allows us to compute the total error of these mechanisms efficiently for any $\nd$. 

\begin{lem}\label{lem:invogfs}
Let $r(x) = p(x)/q(x)$ be a rational function where $\deg(p) = \deg(q) = \nbuf$ with $q(0) = p(0) = 1$. Further, suppose $p$ and $q$ have pairwise distinct real roots, in particular there exist $\theta_i, \hth_i \in (0, 1]$ for $i \in [\nbuf]$ such that
\begin{align*}
    p(x) &= (1 - \hth_0 x) (1 - \hth_1 x) \cdots (1 - \hth_{d-1} x), \\
    q(x) &= (1 - \theta_0 x) (1 - \theta_1 x) \cdots (1 - \theta_{d-1} x)
\end{align*}
with $\theta_i \ne \theta_j$ and $\hat\theta_i \ne \hat\theta_j$ when $i \ne j$.
Then, there exist $\outp_i, \houtp_i \in \R$ for $i \in [\nbuf]$ such that 
\begin{equation}\label{eq:niceogf}
r(x) = 1 + x \left(\sum_{i=0}^{\nbuf-1} \frac{\omega_i}{1 - \theta_i x}\right)
\qquad \text{and} \qquad
  \rinv(x) \coloneqq \frac{1}{r(x)} = 1 + x \left(\sum_{i=0}^{\nbuf-1} \frac{\hat{\omega}_i}{1 - \hat{\theta}_i x}\right).
\end{equation}
These $\outp_i, \houtp_i$ can in fact be computed in closed form, \cref{eq:closedomega}. Further, the coefficients of the corresponding sequences can be computed directly as 
\begin{equation} \label{eq:closedcoefs}
\rc_i = 
\begin{cases}
1  & i = 0 \\
\sum_{j \in [\nbuf]} \outp_j \theta_j^{i-1} & i > 0.
\end{cases}
\quad \text{and} \quad
\rinv_i = 
\begin{cases}
1  & i = 0 \\
\sum_{j \in [\nbuf]} \houtp_j \hth_j^{i-1} & i > 0.
\end{cases}
\end{equation}
\end{lem}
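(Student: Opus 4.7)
The plan is to apply standard partial fraction decomposition to $r$ and $\rinv$ after first peeling off a constant term, so that what remains is a \emph{proper} rational function with simple poles at which the residues can be read off in closed form.

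First I would observe that since $p(0) = q(0) = 1$, the polynomial $p(x) - q(x)$ has no constant term. Thus $p(x) - q(x) = x \cdot n(x)$ for a polynomial $n$ of degree $\le d - 1$ (which is well-defined even though $\deg(p-q)$ could be as large as $d$). Dividing through by $q(x)$ yields
\[
  r(x) \;=\; \frac{p(x)}{q(x)} \;=\; 1 \;+\; x \cdot \frac{n(x)}{q(x)}.
\]
Now $n(x)/q(x)$ is a proper rational function with a denominator having $d$ \emph{distinct} simple roots $1/\theta_i$, so standard partial fractions gives the existence of constants $\omega_0, \dots, \omega_{d-1} \in \R$ with
\[
  \frac{n(x)}{q(x)} \;=\; \sum_{i=0}^{d-1} \frac{\omega_i}{1 - \theta_i x},
\]
which rearranges to the first identity of \eqref{eq:niceogf}. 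This is the step where the hypothesis of distinct roots is essential; if there were a repeated root we would need higher-order terms $1/(1 - \theta_i x)^k$.

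Next I would extract closed forms for the $\omega_i$ by the usual residue trick: multiply both sides by $(1 - \theta_i x)$, use the factorization $q(x) = (1 - \theta_i x) \prod_{j \ne i}(1 - \theta_j x)$ to cancel the pole, and then evaluate at $x = 1/\theta_i$. Using $q(1/\theta_i) = 0$ and hence $n(1/\theta_i) = \theta_i \cdot p(1/\theta_i)$, this yields
\[
  \omega_i \;=\; \frac{n(1/\theta_i)}{\prod_{j \ne i}(1 - \theta_j/\theta_i)} \;=\; \frac{\theta_i \, p(1/\theta_i)}{\prod_{j \ne i}(1 - \theta_j/\theta_i)}.
\]
Since $\rinv(x) = q(x)/p(x)$ has exactly the same structural form (both numerator and denominator of degree $d$, both evaluating to $1$ at zero, and distinct simple roots $1/\hth_i$ in the denominator), the identical argument with the roles of $p,q$ and $\theta,\hth$ swapped gives the second identity of \eqref{eq:niceogf} together with
\[
  \hat\omega_i \;=\; \frac{\hat\theta_i \, q(1/\hat\theta_i)}{\prod_{j \ne i}(1 - \hat\theta_j/\hat\theta_i)}.
\]
These are the closed forms referenced by \eqref{eq:closedomega}.

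Finally, the coefficient formula \eqref{eq:closedcoefs} follows by expanding each summand as a geometric series: $\omega_i/(1 - \theta_i x) = \omega_i \sum_{k \ge 0} \theta_i^k x^k$, so multiplying by $x$ reindexes to $\sum_{k \ge 1} \omega_i \theta_i^{k-1} x^k$. Summing over $i$ and adding the constant $1$ produces $\rc_0 = 1$ and $\rc_k = \sum_i \omega_i \theta_i^{k-1}$ for $k \ge 1$, and the same expansion applied to $\rinv$ yields the $\hat{}$ case. There is no substantial obstacle here: the whole argument is routine partial fractions, and the only point that requires a moment of care is that $r$ is not strictly proper (since $\deg p = \deg q$), which is handled by first subtracting the constant $1$ so that what remains, $r(x) - 1$, is proper with a convenient factor of $x$ out front.
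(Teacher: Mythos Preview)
Your proof is correct and follows essentially the same approach as the paper: both subtract off the constant $1$, factor out $x$, and then perform a partial-fraction decomposition of the resulting proper rational function. The only cosmetic difference is that the paper phrases the computation of the $\omega_i$ via Lagrange interpolation (writing the target polynomial $f(x)=(p(x)-q(x))/x$ in the basis $\{q^{(-i)}\}$), whereas you use the equivalent residue trick of multiplying through by $(1-\theta_i x)$ and evaluating at $x=1/\theta_i$; the resulting closed forms agree.
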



Before giving a proof, we observe that \cref{eq:closedcoefs} yields a diagonalized companion matrix of the form of \cref{lem:rationaltoconstrec}, and using the representation of \cref{rem:degree} we can take
$t = 1 - \sum_{i=0}^{\nbuf-1} \outp_i/\theta_i$ so
\[
    \rc_k = t\mathbb{I}[k=0] + u\tp W^k v \coloneqq
    \begin{bmatrix}
    1 \\ 1 \\  \vdots \\ 1
    \end{bmatrix}\tp
    \begin{bmatrix}
        \theta_0  & 0        & 0      & 0\\
        0         &\theta_1  & 0      & \vdots \\
        \vdots    &  0       & \ddots &  0\\
        0         & 0        & \dots      & \theta_{\nbuf-1} \\
    \end{bmatrix}^k
    \begin{bmatrix}
    \outp_0/\theta_0 \\ \outp_1/\theta_1 \\ \vdots \\ \outp_{\nbuf-1}/\theta_{\nbuf-1}
    \end{bmatrix},
\]
and hence we can efficiently multiply by $M(r)$ using \cref{alg:streamingmatrixpower}; a similar construction with $(\houtp, \hth)$ enables efficient multiplication by $M(1/r)$ using only $\nbuf$ buffers. 

This construction is particularly useful for optimization, but assumes we already have factored $p$ and $q$; one can of course always find the roots of $p$ and $q$ if they are given in some other polynomial representation.\footnote{While we only need to consider real roots in our constructions and our optimization, the results of this section go through for complex roots as well.} Specifically, if one is instead given the matrix-power representation of $r$, \cref{prop:rationalreciprocal} of \cref{sec:bltinvmatrixparams} shows these can directly be converted to a parameterization of $1/r$.


\begin{proof}[Proof of \cref{lem:invogfs}]
\cref{eq:closedcoefs} follows from \cref{eq:niceogf} and the Taylor series
\[ 
\frac{\outp x}{1 - \theta x} = \sum_{i=1}^\infty \omega \theta^{i-1}  x^i. 
\]
It remains to show we can find $\outp_i$ to instantiate \cref{eq:niceogf} for $r(x)$ (the case of $\houtp_i$ for $\rinv(x)$ is completely symmetric). 
Letting
\[
q^{(-i)}(x) \coloneqq \frac{q(x)}{1 - \theta_i x} = \prod_{\substack{j \in [\nbuf]\\ j \neq i}} (1 - \theta_j x),
\]
we need $\outp_i$ that satisfy
\[
1 + x \left(\sum_{i=0}^{\nbuf-1} \frac{\omega_i}{1 - \theta_i x}\right)
= \frac{q(x) + \sum_{i=0}^{\nbuf-1} \outp_i x q^{(-i)}(x)}{q(x)}
= \frac{p(x)}{q(x)}.
\]
Since $p(0) = q(0) = 1$, $p(x) - q(x)$ is divisible by $x$, so $f(x) \coloneqq \frac{1}{x}(p(x) - q(x))$ is a polynomial of degree $\nbuf - 1$.  Hence, we wish to solve
\[
 \sum_{i=0}^{\nbuf-1} \outp_i q^{(-i)}(x) = f(x)
\] 
for $\outp_0, \dots, \outp_{\nbuf - 1}$. Equating the polynomial coefficients, this is a linear system of $\nbuf$ equations and $\nbuf$ unknowns. It remains to show that this system has full-rank or, equivalently, that the set of polynomials $\{q^{(-i)}\}_{i \in [\nbuf]}$ forms a basis for the vector space of polynomials of degree $\le \nbuf - 1$.

Let $z \coloneqq \prod_{i \in [\nbuf]} -\theta_i$ and $w_i \coloneqq \left(\prod_{j \ne i} (\theta_i\inv - \theta_j\inv)\right)\inv$ and define
\[\ell(x) \coloneqq \frac{q(x)}{z}
 = (x - \theta_0\inv) (x - \theta_1\inv) \cdots (x - \theta_{\nbuf-1}\inv)
 \quad \text{and} \quad
\ell^{(-i)} (x) \coloneqq \ell(x) \frac{w_i}{x - \theta_i\inv}.
\]
Then $\{\ell^{(-i)}\}_{i \in [\nbuf]}$ is exactly the Lagrange polynomial basis for interpolating points at $\theta_0\inv, \dots, \theta_{\nbuf-1}\inv$;
further, we have
\[ q^{(-i)}(x) 
   = \frac{ q(x)}{-\theta_i (x - \theta_i\inv)} 
  = \frac{-z}{\theta_i w_i} \ell^{(-i)}(x),\]
and since the $\theta_i$ are distinct and non-zero, $\frac{-\theta_i z}{w_i}$ is non-zero. It follows from Lagrange interpolation that
\[
f(x) 
  = \sum_{i \in [\nbuf]} f(\theta_i\inv) \ell^{(-i)}(x)
  = \sum_{i \in [\nbuf]} f(\theta_i\inv)\frac{-\theta_i w_i}{ z} q^{(-i)}(x)
\]
and so
\begin{equation}\label{eq:closedomega}
\omega_i = f(\theta_i\inv)\frac{-\theta_i w_i}{z}.
\end{equation}
\end{proof}

It will be useful to define the prefix sums of the geometric series by
\[
\gamma_n(\theta) \coloneqq 1 + \theta + \theta^2 + \dots + \theta^{n-1} = \sum_{i=0}^{n-1} \theta^i.
\]
with $\gamma_\infty(\theta) = 1 / (1 - \theta)$ and $\gamma_0(\theta) = 0$.
When $\abs{\theta} < 1$, we have of course
\begin{equation}\label{eq:geometricsum}
\gamma_n(\theta) = \sum_{i=0}^{n-1} \theta^i = \frac{1 -\theta^n}{1 - \theta}.
\end{equation}
(This use of $\gamma$ is unrelated to the $\gamma_2(A)$ factorization norm discussed earlier.)

\begin{lem}\label{lem:closedsensitivity}
Let $\bfC^{(\nd)} = \LTT(s(x), \nd)$ for a rational function 
\[
  \rinv(x) = 1 + x \left(\sum_{i=0}^{\nbuf-1} \frac{\houtp_i}{1 - \hth_i x}\right),
\]
for distinct $\hth_i$ as per the assumptions of \cref{lem:invogfs}. Then, there exists a closed form expression for the sensitivity of $\bfC^{(\nd)}$, $\sens(\bfC^{(\nd)}) = \|\bfC^{(\nd)}\|_{1\to2} $ allowing computation in time and space 
$\calO(\nbuf^2 \log \nd)$.
\end{lem}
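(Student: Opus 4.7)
The plan is straightforward once we exploit both the lower-triangular Toeplitz structure of $\bfC^{(\nd)}$ and the closed-form for its entries from \cref{eq:closedcoefs}.

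First, I would reduce $\|\bfC^{(\nd)}\|_{1\to 2}^2 = \max_{j \in [\nd]} \sum_{i \in [\nd]} C_{i,j}^2$ to a single sum. Because $\bfC^{(\nd)} = \LTT(\rinv, \nd)$ is lower-triangular Toeplitz, the squared norm of column $j$ is $\sum_{i=j}^{\nd-1} \rinv_{i-j}^2 = \sum_{k=0}^{\nd-1-j} \rinv_k^2$, which is monotonically decreasing in $j$, so the maximum is attained at $j=0$:
\[
\sens(\bfC^{(\nd)})^2 = \sum_{k=0}^{\nd-1} \rinv_k^2 = 1 + \sum_{k=1}^{\nd-1} \rinv_k^2.
\]

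Next, I would substitute the closed form $\rinv_k = \sum_{j \in [\nbuf]} \houtp_j \hth_j^{k-1}$ for $k \ge 1$ from \cref{eq:closedcoefs} and expand the square:
\[
\rinv_k^2 = \sum_{i,j \in [\nbuf]} \houtp_i \houtp_j (\hth_i \hth_j)^{k-1}.
\]
Swapping the order of summation then yields the key identity
\[
\sum_{k=1}^{\nd-1} \rinv_k^2 = \sum_{i,j \in [\nbuf]} \houtp_i \houtp_j \sum_{k=1}^{\nd-1} (\hth_i \hth_j)^{k-1} = \sum_{i,j \in [\nbuf]} \houtp_i \houtp_j \, \gamma_{\nd-1}(\hth_i \hth_j),
\]
where $\gamma_{\nd-1}(\cdot)$ is the geometric prefix sum defined above. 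This is the desired closed form:
\[
\sens(\bfC^{(\nd)})^2 = 1 + \sum_{i,j \in [\nbuf]} \houtp_i \houtp_j \, \gamma_{\nd-1}(\hth_i \hth_j).
\]

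Finally, I would analyze the complexity. The outer double sum has $\nbuf^2$ terms. For each pair $(i,j)$, evaluating $\gamma_{\nd-1}(\hth_i \hth_j) = (1 - (\hth_i \hth_j)^{\nd-1})/(1 - \hth_i \hth_j)$ via \cref{eq:geometricsum} requires one repeated-squaring exponentiation, costing $\calO(\log \nd)$ time and $\calO(1)$ space (the denominator is nonzero since $|\hth_i \hth_j| < 1$ unless both equal $1$, in which case $\gamma_{\nd-1}(1) = \nd - 1$ handles the degenerate case in constant time). Summing gives total time $\calO(\nbuf^2 \log \nd)$ and space $\calO(\nbuf^2)$, as claimed. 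No obstacle here is serious: the only thing to watch is ensuring the monotonicity argument for the column norms is clean and that the formula for $\gamma_{\nd-1}$ is safe when $\hth_i \hth_j = 1$, both of which are routine.
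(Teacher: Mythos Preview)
Your proposal is correct and follows essentially the same approach as the paper: reduce the sensitivity to $\sum_{k=0}^{\nd-1}\rinv_k^2$, plug in \cref{eq:closedcoefs}, expand the square, interchange sums, and recognize the geometric prefix sum $\gamma_{\nd-1}(\hth_i\hth_j)$. Your version is slightly more detailed in that you justify explicitly why the first column attains the maximum and you spell out the repeated-squaring cost, whereas the paper's proof states the identity for $\sens^2(\bfC)$ directly and omits the complexity discussion.
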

\begin{proof}
The squared sensitivity can be computed as
\begin{align*}
\sens^2(\bfC) = \sum_{i=0}^{\nd - 1} \rinv_i^2 
  &= 1 + \sum_{i=0}^{n-2} \left(\sum_{j \in [\nbuf]} \houtp_j \hth^i_j\right)^2
     && \text{by \cref{eq:closedcoefs}} \\
  &= 1 + \sum_{i=0}^{n-2} 
    \left(\sum_{j \in [\nbuf]}\sum_{k \in [\nbuf]} \houtp_j\houtp_k \hth^i_j\hth^i_k\right)\\
  &= 1 +  \sum_{j \in [\nbuf]}\sum_{k \in [\nbuf]} \left(\sum_{i=0}^{n-2} 
    \houtp_j\houtp_k (\hth_j\hth_k)^i\right)\\
  &= 1 +  \sum_{j \in [\nbuf]}\sum_{k \in [\nbuf]} 
      \houtp_j\houtp_k \gamma_{n-1}(\hth_j\hth_k),
\end{align*}
and taking a square root completes the proof.
\end{proof}

\begin{lem}\label{lem:closederr}
Let $\bfB^{(\nd)} = \bfA^{(\nd)} (\bfC^{(\nd)})\inv = \bfA^{(\nd)} \LTT(r(x), \nd) = \LTT(r(x)/(1-x), \nd)$ for a rational function
\[
r(x) = 1 + x \left(\sum_{i=0}^{\nbuf-1} \frac{\omega_i}{1 - \theta_i x}\right)
\]
for distinct $\theta_i$ as per the assumptions of \cref{lem:invogfs}.
Then, there exists a closed form expression for the variance induced in the $n$th prefix sum,
$\|\bfB^{(n)}\|^2_{2 \to \infty}$ 
allowing computation in time and space 
$\calO(\nbuf^2 \log n)$.
\end{lem}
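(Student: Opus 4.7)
The plan is to reduce the statement to an explicit calculation analogous to the proof of \cref{lem:closedsensitivity}. First, since $\bfB^{(n)} = \LTT(b, n)$ for $b(x) = r(x)/(1-x)$, row $i$ of $\bfB^{(n)}$ has squared norm $\sum_{k=0}^{i} b_k^2$, which is monotone non-decreasing in $i$. Hence
\[
\|\bfB^{(n)}\|^2_{2\to\infty} \;=\; \sum_{k=0}^{n-1} b_k^2,
\]
and it suffices to exhibit a closed form for this sum in terms of the $\outp_i, \theta_i$. Along the way this also shows that $\bfB^{(n)}$ is itself a \BLT (with one extra buffer coming from the pole at $x=1$).

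The key step is to put $b(x)$ into the same partial-fraction form as \cref{eq:niceogf}, so that we can invoke the same closed-form machinery. By a standard partial fractions computation,
\[
\frac{x}{(1-x)(1-\theta_i x)} \;=\; \frac{1}{1-\theta_i}\left(\frac{1}{1-x}-\frac{1}{1-\theta_i x}\right),
\]
(which is where we would need $\theta_i \ne 1$; the boundary case can be absorbed into the $1/(1-x)$ term). Substituting into $b(x) = \tfrac{1}{1-x} + \tfrac{1}{1-x}\sum_i \tfrac{x\outp_i}{1-\theta_i x}$ and collecting terms gives
\[
b(x) \;=\; \frac{\alpha}{1-x} \;-\; \sum_{i\in[\dgr]} \frac{\beta_i}{1-\theta_i x},
\qquad
\alpha \coloneqq 1+\sum_{i\in[\dgr]}\frac{\outp_i}{1-\theta_i} = r(1),\quad \beta_i \coloneqq \frac{\outp_i}{1-\theta_i}.
\]
Reading off Taylor coefficients (exactly as in \cref{eq:closedcoefs}) yields
\[
b_k \;=\; \alpha - \sum_{i\in[\dgr]} \beta_i \,\theta_i^{\,k}\qquad\text{for all }k\ge 0.
\]

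Squaring and summing, and using $\sum_{k=0}^{n-1} \theta^k = \gamma_n(\theta)$ term by term exactly as in the proof of \cref{lem:closedsensitivity}, we obtain the closed form
\[
\|\bfB^{(n)}\|^2_{2\to\infty}
\;=\; n\alpha^2 \;-\; 2\alpha\!\sum_{i\in[\dgr]}\beta_i\,\gamma_n(\theta_i) \;+\; \sum_{i\in[\dgr]}\sum_{j\in[\dgr]} \beta_i \beta_j\,\gamma_n(\theta_i\theta_j).
\]
This is a sum of $\calO(\dgr^2)$ terms, and each $\gamma_n(\theta)=(1-\theta^n)/(1-\theta)$ can be evaluated in $\calO(\log n)$ time and $\calO(1)$ space via repeated squaring (with the convention $\gamma_n(1)=n$ covering the degenerate case). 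The total cost is therefore $\calO(\dgr^2 \log n)$ time and $\calO(\dgr^2)$ space, matching the claim. The only substantive step is the partial-fraction identity above; everything else is a direct adaptation of the argument for \cref{lem:closedsensitivity}.
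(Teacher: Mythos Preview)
Your proof is correct and follows essentially the same strategy as the paper: obtain a closed form for the Toeplitz coefficients $b_k$ of $B$, square, and sum using geometric series identities. The only cosmetic difference is that you perform the partial-fraction decomposition at the level of the generating function $b(x)=r(x)/(1-x)$, arriving directly at $b_k=\alpha-\sum_i\beta_i\theta_i^{\,k}$, whereas the paper computes $b_n=\sum_{i=0}^{n}r_i=1+\sum_j\omega_j\gamma_n(\theta_j)$ as a prefix sum and then expands the squares $\gamma_i(\theta_j)\gamma_i(\theta_k)$ when summing over $i$; the two expressions are algebraically identical (indeed $1+\sum_j\omega_j\gamma_n(\theta_j)=\alpha-\sum_j\beta_j\theta_j^{\,n}$), and your route is arguably a bit tidier.
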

\begin{proof}
In order to make our notation more compact, we will compute sensitivity for an arbitrary row index $\nd$ of $\LTT(r(x)/(1-x), \nd+1)$.  Since multiplication by $(1-x)\inv$ corresponds to taking prefix sums, we know $\bfB = \LTT(t)$ where $t_0 = r_0 = 1$ and for any $n > 0$,
\begin{equation}\label{eq:dtocoefsums2}
t_n \coloneqq \sum_{i=0}^{n} r_i 
    = 1 + \sum_{j \in [\nbuf]} \sum_{i=0}^{n-1} \outp_j \bftheta^i_j
    = 1 + \sum_{j \in [\nbuf]} \outp_j \gamma_{n}(\bftheta_j)
\end{equation}
and we calculate
\begin{equation}\label{eq:tsquared2}
    t_{n}^2 
    =  \left( 1 + \sum_{j \in [\nbuf]} \outp_j\gamma_{n}(\bftheta_j)\right)^2 
    = 1 + 2\sum_{j \in [\nbuf]} \outp_j \gamma_{n}(\bftheta_j) 
        + \sum_{j \in [\nbuf]}\sum_{k \in [\nbuf]} \outp_j \gamma_{n}(\bftheta_j)\outp_k \gamma_{n}(\bftheta_k).
\end{equation}
It is thus sufficient to compute
\begin{align}
   \norm{\bfB\idx{n}{:} }_2^2 
    &=\sum_{i=0}^{n-1} t_i^2 \label{eq:maxerr_is_at_n} \\
    &= 1 + \sum_{i=1}^{n-1} t^2_{i} \notag \\
    &= 1 + \sum_{i=1}^{n-1}  \left( 1 + 2\sum_{j \in [\nbuf]} \outp_j \gamma_{i}(\bftheta_j) 
        + \sum_{j \in [\nbuf]}\sum_{k \in [\nbuf]} \outp_j \gamma_{i}(\bftheta_j) \outp_k \gamma_{i}(\bftheta_k)  \right).  \notag
\end{align}
where we used \cref{eq:tsquared2} for the last step.
Bringing the sum over $i$ inside the sums over the buffers $j$ and $k$, it will be sufficient to consider the following terms. For any $j \in [b]$,
\begin{align}
    \Gamma_j 
     &\coloneqq \sum_{i=1}^{n-1} \outp_j\gamma_{i}(\bftheta_j) 
     = \sum_{i=1}^{n-1}  \frac{\outp_j(1 -\bftheta_j^{i})}{1-\bftheta_j} ~~ \text{assuming $0 \le \bftheta < 1$}\notag \\
     &= \frac{\outp_j}{1 - \bftheta_j}\left((n - 1) - \sum_{i=1}^{n-1} \bftheta_j^i\right) \notag \\
     &= \frac{\outp_j}{1 - \bftheta_j}\left(n - \gamma_n(1, \bftheta_j))\right) \label{eq:ierr_j}
\end{align}
Similarly, for any $j \in [\nbuf], k \in [\nbuf]$,  let
\begin{align}
    \Gamma_{j,k} \coloneqq \sum_{i=1}^{n-1}& \outp_j \gamma_{i}( \bftheta_j) \outp_k \gamma_{i}(\bftheta_k) \notag \\
     &= \sum_{i=1}^{n-1}  \frac{\outp_j(1 -\bftheta_j^{i})}{1-\bftheta_j}
      \frac{\outp_k(1 -\bftheta_k^{i})}{1-\bftheta_k} && \text{since $0 \le \bftheta < 1$} \notag \\
     &= \frac{\outp_j\outp_k}{(1 - \bftheta_j)(1 - \bftheta_k)} 
       \sum_{i=1}^{n-1}
       \big(1 - \bftheta_j^i - \bftheta_k^i + (\bftheta_j \bftheta_k)^i\big) \notag\\     
    &= \frac{\outp_j\outp_k}{(1 - \bftheta_j)(1 - \bftheta_k)} 
       \sum_{i=0}^{n-1}
       \big(1 - \bftheta_j^i - \bftheta_k^i + (\bftheta_j \bftheta_k)^i\big) \notag\\
    &= \frac{\outp_j\outp_k}{(1 - \bftheta_j)(1 - \bftheta_k)}        
       \big(n - \gamma_n(1, \bftheta_j) - \gamma_n(1, \bftheta_k) + \gamma_n(1, \bftheta_j \bftheta_k)\big). \label{eq:ierr_jk} 
\end{align}
Putting everything together we have
\begin{equation} \label{eq:iter_err}
      \norm{\bfB\idx{n}{:} }_2^2 = \sum_{i=0}^{\nd - 1} t_i^2 = n + 2\sum_{j \in [\nbuf]} \Gamma_j + \sum_{j \in [\nbuf]}\sum_{k \in [\nbuf]} \Gamma_{j,k}.
\end{equation}
Since \cref{eq:iter_err} computes the $L_2$-norm squared for the row indexed by $n$, and this quantity is non-decreasing in $n$, it follows that
\[
\|\bfB^{(n)}\|_{2 \to \infty} = \sqrt{\norm{\bfB\idx{n-1}{:} }_2^2},
\]
which can be computed using \cref{eq:iter_err}.
\end{proof}

\begin{rem}\label{rem:growth}
Rearranging \cref{eq:iter_err} and in particular expanding the terms $\Gamma_j$ (\cref{eq:ierr_j}) and $\Gamma_{j, k}$ (\cref{eq:ierr_jk}) reveals there exists constants $\alpha_0, \alpha_1 \in \R$ and $\tilde{\omega}_i \in \R, \tilde{\theta}_i \in (0, 1]$ for $i \in [d + d^2]$ such that
\[
\|\bfB^{(n)}\|_{2 \to \infty} = \alpha_0 + \alpha_1 \cdot \nd + \sum_{i \in [d + d^2]} \tilde{\omega} \tilde{\theta}^{\nd -1}
\]
where in particular
\[
\alpha_1
 = 1 + 2\sum_{j \in [\nbuf]} \frac{\outp_j}{1 - \bftheta_j} 
      + \sum_{j \in [\nbuf]}\sum_{k \in [\nbuf]} \frac{\outp_j\outp_k}{(1 - \bftheta_j)(1 - \bftheta_k)}.
\]
We must have $\alpha_1 \ge 0$ as for a valid factorization the norm must remain positive for all $\nd$, and $\lim_{n \rightarrow \infty} \tilde{\omega}_i \tilde{\theta}^{\nd -1} = 0$ (or another constant $\tilde{\omega}_i$ if $\tilde{\theta}_i = 1$). For all the \BLTs we consider (and we believe all useful \BLTs), we find $\alpha_1 > 0$. Since sensitivity, \cref{lem:closedsensitivity}, is non-decreasing in $\nd$, for any such \BLT, eventually $\MaxErr$ must grow linearly. Hence, one must either increase $\nbuf$ with $\nd$ or optimize the \BLT for a specific $\nd$ in order to hope for good performance. In particular, by optimizing for a specific $\nd$ (\S\ref{sec:optblt}), we must keep $\alpha$ small enough that the $\alpha_1 \nd$ term remains relatively small (say, of the same order as $\alpha_0)$. However, for larger $\nd$ this term can quickly explode; we see exactly this behavior in \cref{fig:bltcompare} (Left column). 
\end{rem}

\subsection{Direct Optimization of \BLTs}\label{sec:optblt}
Inspection of \cref{lem:invogfs,lem:closedsensitivity,lem:closederr} reveals that the function $(\theta, \hth) \rightarrow \MaxErr$ is in fact differentiable. With this result, together with the automatic differentiation capabilities of JAX \citep{jax2018github}, we can in fact directly optimize for rational functions of a given degree $\nbuf$ which minimize $\MaxErr$ for a given number of steps $\nd$.

Some care is needed in the implementation. In particular, we find:
\begin{itemize}
    \item Using \texttt{float64} precision is necessary, particularly when optimizing for $\nd > 10^6$.
    \item As $\nd$ becomes large, $\max_{i \in [\nbuf]} \theta_i$ tends to 1.  \cref{eq:geometricsum,eq:ierr_j,eq:ierr_jk} can be numerically unstable in such situations; this can be remedied by switching to a Taylor-series approximation taken around $\theta=1$ for $\theta$ sufficiently near (or equal to) 1.
    \item The optimization is more stable (particularly for larger $\nd$) if we add a barrier function
    \[\ell(\theta, \omega) = 10^{-7}
         \sum_{i \in [\nbuf]} -\log(\theta_i) -\log(\outp_i)
    \]
    to ensure these parameters remain strictly positive. 
    \item For large $\nd$ (e.g $\nd >10^7$), larger numbers of buffers $\nbuf$ can make the optimization less stable. For example, with our current implementation we find for $\nd = 10^8$, optimizing for more than $\nbuf=6$ buffers actually slightly decreases performance (while theoretically it should only help).
\end{itemize}
With the above setup, we find the L-BFGS optimizer works well. Convergence generally takes less than a second running on CPUs on a modern workstration\footnote{There is approximately 5-15 seconds of overhead for JAX to just-in-time compile the loss function; this is only incurred once even if mechanisms are optimized for many different $\nd$.}, even from a naive initialization (initialization from \cref{sec:rationalfunctionapprox} would likely require even fewer iterations). Numerical results for these \OptBLT mechanisms are given in \cref{fig:main_blt_results,fig:bltcompare}.

Proving the convergence properties of this approach and deriving more robust optimization algorithms are interesting directions for future work.

\subsection{Derivation of the Inverse \BLT Parameterization}\label{sec:bltinvmatrixparams}
We now show that the matrix-power representation of a rational function $r(x)$ (see \Cref{lem:rationaltoconstrec}) can be converted to a matrix-power representation for its reciprocal $1/r(x)$. This is useful for our algorithm.
Note that the dimension of the representation increases by one.
\begin{prop}[Representation of the reciprocal of a rational generating function]\label{prop:rationalreciprocal}
    Let $u,v\in\mathbb{R}^d$ and $W \in \mathbb{R}^{d \times d}$. Assume $\langle u , v \rangle = 1$.\footnote{This assumption is only made for simplicity and holds without loss of generality, as we can always rescale the function by $r(0) = u^Tv$: That is, let $\widehat{r}(x) \coloneqq r(x)/r(0)$, apply the result, and rescale back $1/r(x) = r(0) \cdot 1/\widehat{r}(x)$.}
    Define $r(x) = \sum_{k=0}^\infty u^T W^k v \cdot x^k$.
    Then
    \begin{align}
        \frac{1}{r(x)} &= 1 - \sum_{\ell=1}^\infty u^T W (W - v u^T W)^{\ell-1} v \cdot x^\ell \label{eq:prop:rationalreciprocal1}\\
        &= \sum_{k=0}^\infty \widetilde{u}^T \widetilde{W}_0^k \widetilde{v} \cdot x^k \label{eq:prop:rationalreciprocal2}
    \end{align}
    and
    \begin{equation}
        \frac{1}{r(x) \cdot (1-x)} = \sum_{k=0}^\infty \widetilde{u}^T \widetilde{W}_1^k \widetilde{v} \cdot x^k
    \end{equation}
    where, for both $\beta \in \{0,1\}$,
    \begin{align}
        \widetilde{W}_\beta &= \left( \begin{array}{cc}
            \beta & 0 \cdots 0 \\
            v & W - v u^T W
        \end{array} \right) \in \mathbb{R}^{(d+1) \times (d+1)}, \\
        \widetilde{u} &= \left( \begin{array}{c}
            1 \\
            - W^T u
        \end{array} \right) \in \mathbb{R}^{d+1}, \\
        \widetilde{v} &= \left( \begin{array}{c}
            1 \\
            0 \\
            \vdots \\
            0
        \end{array} \right) \in \mathbb{R}^{d+1}.
    \end{align}
\end{prop}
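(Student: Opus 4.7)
The plan is to first establish the scalar identity \cref{eq:prop:rationalreciprocal1} by a short resolvent-identity calculation in the ring of formal power series, and then derive both matrix representations by unrolling the block structure of $\widetilde W_\beta$ acting on $\widetilde v$.

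For the first step, summing the geometric series gives the closed form $r(x) = 1 + x \cdot u^T W (I - xW)^{-1} v$; let $h(x) \coloneqq 1 - x \cdot u^T W (I - x\widetilde W)^{-1} v$ be the candidate reciprocal, where $\widetilde W = W - vu^T W$. Writing $\alpha \coloneqq u^T W (I - xW)^{-1} v$ and $\beta \coloneqq u^T W (I - x\widetilde W)^{-1} v$, we have $r(x) h(x) = 1 + x(\alpha - \beta) - x^2 \alpha \beta$, so the identity $r(x) h(x) = 1$ reduces to showing $\alpha - \beta = x \alpha \beta$. The standard resolvent identity $A^{-1} - B^{-1} = A^{-1}(B-A)B^{-1}$ together with $W - \widetilde W = v u^T W$ gives
\begin{equation*}
(I - xW)^{-1} - (I - x\widetilde W)^{-1} = (I - xW)^{-1} \cdot x v u^T W \cdot (I - x\widetilde W)^{-1}.
\end{equation*}
Sandwiching with $u^T W$ on the left and $v$ on the right, the rank-one middle factor $v u^T$ separates out as the scalar product $\alpha \beta$, yielding $\alpha - \beta = x\alpha\beta$. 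Expanding the geometric series in $h(x)$ then recovers \cref{eq:prop:rationalreciprocal1}.

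For the two matrix representations, I would write $\widetilde W_\beta^k \widetilde v = (x_k, y_k)^T$ with $x_k \in \mathbb{R}$ and $y_k \in \mathbb{R}^d$. The block form of $\widetilde W_\beta$ immediately gives the recursions $x_{k+1} = \beta x_k$ and $y_{k+1} = v x_k + \widetilde W y_k$, with $x_0 = 1$, $y_0 = 0$. Unrolling gives $x_k = \beta^k$ and $y_k = \sum_{j=0}^{k-1} \beta^j \widetilde W^{k-1-j} v$, so pairing against $\widetilde u = (1, -W^T u)^T$,
\begin{equation*}
\widetilde u^T \widetilde W_\beta^k \widetilde v = \beta^k - \sum_{j=0}^{k-1} \beta^j \cdot u^T W \widetilde W^{k-1-j} v.
\end{equation*}
For $\beta = 0$ (with the convention $0^0 = 1$), only the $j = 0$ term survives when $k \ge 1$, producing $-u^T W \widetilde W^{k-1} v$, which is exactly the coefficient of $x^k$ in \cref{eq:prop:rationalreciprocal1}; this establishes \cref{eq:prop:rationalreciprocal2}. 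For $\beta = 1$ the expression collapses to $1 - \sum_{\ell=1}^{k} u^T W \widetilde W^{\ell-1} v$, which is the partial sum of the first $k+1$ Taylor coefficients of $1/r(x)$; since $1/(r(x)(1-x))$ is the Cauchy product of $1/r(x)$ with the constant-one sequence (\cref{lem:mgfmult}), this partial sum is precisely its $k^{\text{th}}$ Taylor coefficient.

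The main obstacle is the first step, namely getting the sign of $W - \widetilde W$ right and noticing that the hypothesis $u^T v = 1$ is exactly what is needed to collapse the rank-one product in the resolvent step; after \cref{eq:prop:rationalreciprocal1} is in hand the block-matrix verifications are essentially mechanical bookkeeping.
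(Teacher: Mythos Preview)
Your proof is correct. The main difference from the paper is in the first step: the paper multiplies out the two power series term by term and reduces \cref{eq:prop:rationalreciprocal1} to the matrix identity
\[
W^{i-1} = (W - Wvu^T)^{i-1} + \sum_{\ell=1}^{i-1} W^{i-\ell}\, v u^T (W - Wvu^T)^{\ell-1},
\]
which it then proves by induction on $i$. Your resolvent approach bypasses this induction entirely: summing the geometric series first turns the question into the scalar equation $\alpha - \beta = x\alpha\beta$, which drops out of the one-line resolvent identity because the perturbation $W - \widetilde W = v u^T W$ is rank one. This is cleaner and more conceptual; the paper's version has the minor advantage of staying at the level of individual coefficients and never writing $(I - xW)^{-1}$ as a formal object. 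Your block-recursion treatment of $\widetilde W_\beta^k \widetilde v$ is essentially identical to the paper's (the paper computes the full matrix $\widetilde W_\beta^k$ by induction, you track only the first column, but the content is the same). One small remark on your closing commentary: the hypothesis $u^T v = 1$ is not what makes the rank-one product collapse in the resolvent step---that collapse is automatic from the rank-one structure---rather, $u^T v = 1$ is used at the outset to make the constant terms of $r$ and $h$ both equal to $1$.
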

\begin{proof}
    To prove the first part of the result \eqref{eq:prop:rationalreciprocal1} it suffices to show that
    \[
        r(x) \cdot \left( 1 - \sum_{\ell=1}^\infty u^T W (W - v u^T W)^{\ell-1} v \cdot x^\ell \right) = 1.
    \]
    We have
    \begin{align*}
        & \!\! r(x) \cdot \left( 1 - \sum_{\ell=1}^\infty u^T W (W - v u^T W)^{\ell-1} v \cdot x^\ell \right) \\
        &=   \sum_{k=0}^\infty u^T W^k v \cdot x^k \cdot \left( 1 - \sum_{\ell=1}^\infty u^T W (W - v u^T W)^{\ell-1} v \cdot x^\ell \right) \\
        &= \sum_{i=0}^\infty \left( u^T W^i v - \sum_{\ell=1}^i u^T W^{i-\ell} v \cdot u^T W (W - v u^T W)^{\ell-1} v\right) \cdot x^i \\
        &= \sum_{i=0}^\infty \left( u^T W^i v - u^T W (W - v u^T W)^{i-1} v - \sum_{\ell=1}^{i-1} u^T W^{i-\ell} v \cdot u^T W (W - v u^T W)^{\ell-1} v\right) \cdot x^i \\
        &= u^T v + \sum_{i=1}^\infty u^T \left( W^i - W (W - v u^T W)^{i-1} - \sum_{\ell=1}^{i-1} W^{i-\ell} v \cdot u^T W (W - v u^T W)^{\ell-1} \right) v \cdot x^i \\
        &= 1 + \sum_{i=1}^\infty u^T \left( W^{i-1} - (W-Wvu^T)^{i-1} - \sum_{\ell=1}^{i-1} W^{i-\ell} v \cdot u^T (W - W v u^T)^{\ell-1} \right) W v \cdot x^i ,
    \end{align*}
    where the final equality uses the fact \[ W (W - v u^T W)^{\ell-1} = W ((I-vu^T)W)^{\ell-1} = (W(I-vu^T))^{\ell-1}W = (W - W v u^T)^{\ell-1} W .\]
    Now it suffices to show that, for all $i \ge 1$, we have
    \begin{equation}
        W^{i-1} = (W-Wvu^T)^{i-1} + \sum_{\ell=1}^{i-1} W^{i-\ell} v \cdot u^T (W - W v u^T)^{\ell-1}. \label{eq:proof:prop:rationalreciprocal1}
    \end{equation}
    We prove \Cref{eq:proof:prop:rationalreciprocal1} by induction on $i$. For $i=1$, the equation is trivially true (both sides are the identity).
    Assuming \Cref{eq:proof:prop:rationalreciprocal1} holds for a given $i \ge 1$, we have
    \begin{align*}
        W^i &= W \cdot W^{i-1} = ((W-Wvu^T)+Wvu^T) \cdot W^i  \\
        &= ((W-Wvu^T)+Wvu^T) \cdot (W-Wvu^T)^{i-1} + W \cdot \sum_{\ell=1}^{i-1} W^{i-\ell} v \cdot u^T (W - W v u^T)^{\ell-1}  \\
        &= (W-Wvu^T)^i + Wvu^T \cdot (W-Wvu^T)^{i-1} + \sum_{\ell=1}^{i-1} W^{i+1-\ell} v \cdot u^T (W - W v u^T)^{\ell-1}  \\
        &= (W-Wvu^T)^i + \sum_{\ell=1}^{i} W^{i+1-\ell} v \cdot u^T (W - W v u^T)^{\ell-1}  \\ &~~~~~ - W^{i+1-i} v \cdot u^T (W-Wvu^T)^{i-1} + Wvu^T \cdot (W-Wvu^T)^{i-1} \\
        &= (W-Wvu^T)^i + \sum_{\ell=1}^{i} W^{i+1-\ell} v \cdot u^T (W - W v u^T)^{\ell-1},
    \end{align*}
    which establishes that \Cref{eq:proof:prop:rationalreciprocal1} holds for the next value of $i$.
    
    To prove \Cref{eq:prop:rationalreciprocal2}, we must show that $\widetilde{u}^T \widetilde{W}_0^0 \widetilde{v} = 1$ and $\widetilde{u}^T \widetilde{W}_0^k \widetilde{v} = - u^T W (W-vu^TW)^{k-1} v$ for $k \ge 1$.
    We immediately have $\widetilde{u}^T \widetilde{v} = 1$.
    By induction we can show that, for all $k \ge 1$, we have 
    \[
        \widetilde{W}_0^k = \left( \begin{array}{cc}
            0 & 0 \cdots 0 \\
            (W-vu^TW)^{k-1} v & (W-vu^TW)^k
        \end{array} \right) \in \mathbb{R}^{(d+1) \times (d+1)}.
    \]
    From this it follows that $\widetilde{u}^T \widetilde{W}_0^k \widetilde{v} = - u^T W (W-vu^TW)^{k-1} v$ for all $k \ge 1$, as required.
    
    More generally, for all $\beta \in \mathbb{R}$ and $k \ge 1$,
    \[
        \widetilde{W}_\beta^k = \left( \begin{array}{cc}
            \beta^k & 0 \cdots 0 \\
            \sum_{\ell=0}^{k-1} \beta^{k-1-\ell} (W-vu^TW)^{\ell} v & (W-vu^TW)^k
        \end{array} \right) \in \mathbb{R}^{(d+1) \times (d+1)}.
    \]
    Setting $\beta=1$ still gives $\widetilde{u}^T \widetilde{W}_1^0 \widetilde{v} = 1$ and now \[\widetilde{u}^T \widetilde{W}_1^k \widetilde{v} = 1 - \sum_{\ell=0}^{k-1} u^T W (W-vu^TW)^{\ell} v = 1+ \sum_{\ell=0}^{k-1} \widetilde{u}^T \widetilde{W}_0^{\ell+1} \widetilde{v} = \sum_{j=0}^{k} \widetilde{u}^T \widetilde{W}_0^{j} \widetilde{v}\] for $k \ge 1$.
    
    We have $\frac{1}{r(x)} = \sum_{k=0}^\infty \widetilde{u}^T \widetilde{W}_0^k \widetilde{v} \cdot x^k$ and, hence, \[\frac{1}{r(x)\cdot(1-x)} = \left( \sum_{k=0}^\infty \widetilde{u}^T \widetilde{W}_0^k \widetilde{v} \cdot x^k\right) \cdot \left(\sum_{i=0}^\infty x^i\right) = \sum_{\ell=0}^\infty x^\ell \sum_{j=0}^\ell \widetilde{u}^T \widetilde{W}_0^j \widetilde{v} = \sum_{\ell=0}^\infty \widetilde{u}^T \widetilde{W}_1^\ell \widetilde{v} \cdot x^\ell,\]
    which completes the proof.
\end{proof}

\section{Generalizing the Binary Tree Approach} \label{sec:generalized-binary-tree}

In this section we prove \Cref{thm:main2-inf}. Compared to \Cref{thm:main-inf} this attains a weaker approximation to the optimal matrix factorization, but is asymptotically better in terms of space usage.

Our starting point is the binary tree mechanism of \citet{Dwork-continual,CSS11-continual}.
The binary tree mechanism can be viewed as a recursive construction of a matrix factorization; see \Cref{eq:bintree} for a precise statement.
A recursion of depth $\ell$ yields a matrix factorization of size $\nd=2^\ell$ and an algorithm running in time and space $O(\ell)$.
The binary tree mechanism is within a constant factor of optimal; we will improve the constant factor to be arbitrarily close to optimal.
Our approach is to combine a recursive construction with the factorization we already developed to prove \Cref{thm:main-inf}.
We will start with a matrix factorization of size $\nd_1$ and, after a recursion of depth $\ell$, we obtain a matrix factorization of size $\nd_1^\ell$. 
The space required is $O(\ell \cdot \log^2 \nd_1)$. Appropriately setting the parameters yields  \Cref{thm:main2-inf}.



    First we define the non-cyclic shift matrix $S^{(\nd)}\in\{0,1\}^{\nd \times \nd}$ by $S_{i,j}=1 \iff i=j+1$ for all $i,j\in [\nd]$.
    If we multply the non-cyclic shift matrix with the lower-triangular all-ones matrix, it has the effect of zeroing out the diagonal and producing a strictly lower-triangular matrix.
    For example,
    \[
    S^{(6)} =
    \left(\begin{array}{cccccc}
    0 & 0 & 0 & 0 & 0 & 0\\
    1 & 0 & 0 & 0 & 0 & 0\\
    0 & 1 & 0 & 0 & 0 & 0\\
    0 & 0 & 1 & 0 & 0 & 0\\
    0 & 0 & 0 & 1 & 0 & 0\\
    0 & 0 & 0 & 0 & 1 & 0
    \end{array}\right)
    ~~~\text{ and }~~~
    A^{(6)} S^{(6)} =  S^{(6)} A^{(6)} =
    \left(\begin{array}{cccccc}
    0 & 0 & 0 & 0 & 0 & 0\\
    1 & 0 & 0 & 0 & 0 & 0\\
    1 & 1 & 0 & 0 & 0 & 0\\
    1 & 1 & 1 & 0 & 0 & 0\\
    1 & 1 & 1 & 1 & 0 & 0\\
    1 & 1 & 1 & 1 & 1 & 0
    \end{array}\right).
    \]
    The non-cyclic shift matrix corresponds to the generating function $h(x)=x$ -- i.e., $S^{(\nd)}=M(h,\nd)$.
    
    Given matrices $M \in \mathbb{R}^{n \times m}$ and $M' \in \mathbb{R}^{n' \times m'}$, define the Kronecker product 
    \iffalse 
    \begin{equation}
        M \otimes M'  = \left(
            \begin{array}{cccc}
             M_{1,1} \cdot M' & M_{1,2} \cdot M' & \cdots & M_{1,m} \cdot M' \\
             M_{2,1} \cdot M' & M_{2,2} \cdot M' & \cdots & M_{2,m} \cdot M' \\
             \vdots & \vdots & \ddots & \vdots \\
             M_{n,1} \cdot M' & M_{n,2} \cdot M' & \cdots & M_{n,m} \cdot M' \\
            \end{array}
        \right) \in \mathbb{R}^{(n \cdot n') \times (m \cdot m')} \label{eq:kronecker}
    \end{equation}
    by $(M \otimes M')_{i,j} = M_{i_1,j_1} \cdot M'_{i_2,j_2}$ where $i = (i_1-1) \cdot n' + i_2 \in [n\cdot n']$ and $j = (j_1-1) \cdot m' + j_2 \in [m\cdot m']$.
    \else 
        \begin{equation}
        M \otimes M'  = \left(
            \begin{array}{cccc}
             M_{0,0} \cdot M' & M_{0,1} \cdot M' & \cdots & M_{0,m-1} \cdot M' \\
             M_{1,0} \cdot M' & M_{1,1} \cdot M' & \cdots & M_{1,m-1} \cdot M' \\
             \vdots & \vdots & \ddots & \vdots \\
             M_{n-1,0} \cdot M' & M_{n-1,1} \cdot M' & \cdots & M_{n-1,m-1} \cdot M' \\
            \end{array}
        \right) \in \mathbb{R}^{(n \cdot n') \times (m \cdot m')} \label{eq:kronecker}
    \end{equation}
    by $(M \otimes M')_{i,j} = M_{i_1,j_1} \cdot M'_{i_2,j_2}$ where $i = i_1 \cdot n' + i_2 \in [n\cdot n']$ and $j = j_1 \cdot m' + j_2 \in [m\cdot m']$.
    \fi
    The Kronecker product has the ``mixed-product property'' with matrix multiplication: We have $(A \otimes B) (A' \otimes B') = (AA') \otimes (BB')$ whenever the matrix dimensions match so that $AA'$ and $BB'$ are well-defined.

\subsection{Recursive Matrix Factorization}   
Without further ado, we present the basis of our recursive construction. 
Suppose we are given two matrix factorizations ${A}^{(\nd_1)} = B_1 C_2$ and ${A}^{(\nd_2)} = B_2 C_2$. We will combine these into one matrix factorization for ${A}^{(\nd_1 \cdot \nd_2)}$.
\begin{defn}[Combining Matrix Factorizations]\label{defn:combc}
    For $B_1 \in \mathbb{R}^{\nd_1 \times \nd_1'}$ and $B_2 \in \mathbb{R}^{\nd_2 \times \nd_2'}$, define
    \begin{equation}
        \comb{B_1}{B_2} \coloneqq \big( ~\underbrace{~I^{(\nd_1)} \otimes B_2 ~}_{\in \mathbb{R}^{(\nd_1\nd_2) \times (\nd_1\nd_2')}}~~\underbrace{~ S^{(\nd_1)} B_1 \otimes \mathbf{1}^{(\nd_2)} ~}_{\in \mathbb{R}^{(\nd_1\nd_2) \times \nd_1'}}~ \big) \in \mathbb{R}^{(\nd_1\nd_2) \times (\nd_1\nd_2'+\nd_1')}, \label{eq:comb}
    \end{equation}
    where $I^{(\nd_1)}$ is the $\nd_1 \times \nd_1$ identity matrix and $\mathbf{1}^{(\nd_2)}$ is the all-ones column vector of length $\nd_2$.
    For $C_1 \in \mathbb{R}^{\nd_1' \times \nd_1}$ and $C_2 \in \mathbb{R}^{\nd_2' \times \nd_2}$, define
    \begin{equation}
        \comc{C_1}{C_2} \coloneqq \left( \begin{array}{c} I^{(\nd_1)} \otimes C_2 \\ C_1 \otimes (\mathbf{1}^{(\nd_2)})^T \end{array} \right) \in \mathbb{R}^{(\nd_1\nd_2'+\nd_1')\times(\nd_1\nd_2)},
    \end{equation}
    where $I^{(\nd_1)}$ is the $\nd_1 \times \nd_1$ identitiy matrix and $(\mathbf{1}^{(\nd_2)})^T$ is the all-ones row vector of length $\nd_2$
\end{defn}
\begin{lem}[Properties of \Cref{defn:combc}]\label{lem:combc}
    Let  $B_1,C_1^T \in \mathbb{R}^{\nd_1 \times \nd_1'}$ and $B_2,C_2^T \in \mathbb{R}^{\nd_2 \times \nd_2'}$.
    Suppose $B_1C_1=A^{(\nd_1)}$ and $B_2C_2=A^{(\nd_2)}$, where $A^{(\nd)}$ is the all-ones lower triangular matrix defined in \Cref{eq:allonesM}.
    Then \[(\comb{B_1}{B_2}) (\comc{C_1}{C_2}) = A^{(\nd_1\nd_2)}.\]
    Furthermore, 
    \begin{align*}
        \|\comb{B_1}{B_2}\|_{2\to\infty}^2 &= \|S^{(\nd_1)}B_1\|_{2\to\infty}^2 + \|B_2\|_{2\to\infty}^2 \le \|B_1\|_{2\to\infty}^2 + \|B_2\|_{2\to\infty}^2, \\
        \|\comc{C_1}{C_2}\|_{1\to2}^2 &= \|C_1\|_{1\to2}^2 + \|C_2\|_{1\to2}^2 .
    \end{align*}
\end{lem}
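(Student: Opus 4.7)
My plan is to verify the three claims in order, all using the block-Kronecker structure of the combined factors.

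For the factorization identity, the strategy is to expand the product using the mixed-product property $(A \otimes B)(A' \otimes B') = (AA') \otimes (BB')$. This gives
\[
(\comb{B_1}{B_2})(\comc{C_1}{C_2})
= (I^{(\nd_1)} \otimes B_2)(I^{(\nd_1)} \otimes C_2) + (S^{(\nd_1)} B_1 \otimes \mathbf{1}^{(\nd_2)})(C_1 \otimes (\mathbf{1}^{(\nd_2)})^T)
= I^{(\nd_1)} \otimes A^{(\nd_2)} + (S^{(\nd_1)} A^{(\nd_1)}) \otimes J^{(\nd_2)},
\]
where $J^{(\nd_2)} = \mathbf{1}^{(\nd_2)} (\mathbf{1}^{(\nd_2)})^T$ is the $\nd_2\times\nd_2$ all-ones matrix. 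I would then index rows/columns as $i = i_1 \nd_2 + i_2$, $j = j_1 \nd_2 + j_2$ with $i_1,j_1 \in [\nd_1]$, $i_2,j_2 \in [\nd_2]$, and check in cases: if $i_1 > j_1$, only the second term contributes (giving 1); if $i_1 = j_1$, only the first term contributes (giving $\mathbb{I}[i_2 \geq j_2]$); if $i_1 < j_1$, both terms are 0. This matches $A^{(\nd_1\nd_2)}_{i,j} = \mathbb{I}[i \geq j]$ in every case. (The key role of $S^{(\nd_1)}$ is to zero out the diagonal block so that it is contributed only by $I \otimes A^{(\nd_2)}$, avoiding double-counting.)

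For the row-norm bound, I will inspect row $i = i_1 \nd_2 + i_2$ of $\comb{B_1}{B_2}$ directly. In the left block $I^{(\nd_1)} \otimes B_2$, this row is zero outside the $i_1$-th block of columns, where it equals $(B_2)_{i_2,:}$; in the right block $S^{(\nd_1)}B_1 \otimes \mathbf{1}^{(\nd_2)}$, this row equals $(S^{(\nd_1)} B_1)_{i_1,:}$ (since Kronecker with the column vector $\mathbf{1}^{(\nd_2)}$ replicates each row of $S^{(\nd_1)}B_1$ exactly $\nd_2$ times). The two blocks share no columns, so the squared $\ell_2$-norm of the row is simply the sum $\|(B_2)_{i_2,:}\|_2^2 + \|(S^{(\nd_1)}B_1)_{i_1,:}\|_2^2$. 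Taking the max over $(i_1,i_2)$ decouples into a max over $i_2$ plus a max over $i_1$, giving the claimed equality. The inequality $\|S^{(\nd_1)}B_1\|_{2\to\infty} \leq \|B_1\|_{2\to\infty}$ is immediate since $S^{(\nd_1)}$ merely shifts rows down and introduces a zero top row, which can only decrease row norms.

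For the column-norm identity, I would perform the symmetric calculation on column $j = j_1 \nd_2 + j_2$ of $\comc{C_1}{C_2}$: in the top block $I^{(\nd_1)} \otimes C_2$, this column is zero outside the $j_1$-th block of rows, where it equals $(C_2)_{:,j_2}$; in the bottom block $C_1 \otimes (\mathbf{1}^{(\nd_2)})^T$, this column equals $(C_1)_{:,j_1}$ (since Kronecker with the row vector $(\mathbf{1}^{(\nd_2)})^T$ replicates each column of $C_1$ exactly $\nd_2$ times). Disjoint row supports give $\|(\comc{C_1}{C_2})_{:,j}\|_2^2 = \|(C_2)_{:,j_2}\|_2^2 + \|(C_1)_{:,j_1}\|_2^2$, and taking the max over $(j_1,j_2)$ again decouples. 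None of these steps present any real obstacle; the only thing to keep straight is the orientation of $\mathbf{1}^{(\nd_2)}$ (column vs.\ row) in the two Kronecker products, which is exactly what ensures row-indices decouple for $B$ and column-indices decouple for $C$.
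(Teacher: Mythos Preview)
Your proposal is correct and follows essentially the same approach as the paper: expand the block product via the mixed-product property to obtain $I^{(\nd_1)} \otimes A^{(\nd_2)} + (S^{(\nd_1)} A^{(\nd_1)}) \otimes \mathbf{1}^{(\nd_2)}(\mathbf{1}^{(\nd_2)})^T$, then verify this is $A^{(\nd_1\nd_2)}$ and compute the row/column norms block by block. If anything, your entry-wise case analysis on $i_1$ versus $j_1$ is slightly more rigorous than the paper's verification, which is done via a worked $3\times 2$ example with a footnote conceding that a formal proof would proceed by an index-by-index case analysis---exactly what you do.
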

\begin{proof}
    We have
    \begin{align*}
        (\comb{B_1}{B_2}) (\comc{C_1}{C_2})
        &= \big( I^{(\nd_1)} \otimes B_2 ~~ S^{(\nd_1)} B_1 \otimes \mathbf{1}^{(\nd_2)}  \big) \left( \begin{array}{c} I^{(\nd_1)} \otimes C_2 \\ C_1 \otimes (\mathbf{1}^{(\nd_2)})^T \end{array} \right) \\
        &= (I^{(\nd_1)} \otimes B_2) (I^{(\nd_1)} \otimes C_2) + (S^{(\nd_1)} B_1 \otimes \mathbf{1}^{(\nd_2)}) (C_1 \otimes (\mathbf{1}^{(\nd_2)})^T) \\
        &= (I^{(\nd_1)} I^{(\nd_1)}) \otimes (B_2 C_2) + (S^{(\nd_1)} B_1 C_1) \otimes (\mathbf{1}^{(\nd_2)} (\mathbf{1}^{(\nd_2)})^T) \\
        &= I^{(\nd_1)} \otimes A^{(\nd_2)} + (S^{(\nd_1)} A^{(\nd_2)}) \otimes (\mathbf{1}^{(\nd_2)} (\mathbf{1}^{(\nd_2)})^T) \\
        &= A^{(\nd_1\nd_2)}.
    \end{align*}
    The last equality is best understood by looking at an example for $\nd_1=3$ and $\nd_2=2$:\footnote{If you don't like this proof by example, imagine some ellipses ($\cdots,\vdots,\ddots$) inserted into the matrices and it will look like a general proof. Formally, we can do an index-by-index case analysis.}
    \begin{align*}
        &I^{(3)} \otimes A^{(2)} + \left( S^{(3)} A^{(3)} \right) \otimes \left( \mathbf{1}^{(2)} (\mathbf{1}^{(2)})^T \right) \\
        &= 
        \left(\begin{array}{ccc}
        1 & 0 & 0 \\
        0 & 1 & 0 \\
        0 & 0 & 1
        \end{array}\right) 
        \otimes 
        \left(\begin{array}{cc}
        1 & 0 \\
        1 & 1
        \end{array}\right)
        +
        \left(\begin{array}{ccc}
        0 & 0 & 0 \\
        1 & 0 & 0 \\
        1 & 1 & 0
        \end{array}\right) 
        \otimes 
        \left(\begin{array}{cc}
        1 & 1 \\
        1 & 1
        \end{array}\right) \\    
        &=
        \left(\begin{array}{cccccc}
        1 & 0 & 0 & 0 & 0 & 0\\
        1 & 1 & 0 & 0 & 0 & 0\\
        0 & 0 & 1 & 0 & 0 & 0\\
        0 & 0 & 1 & 1 & 0 & 0\\
        0 & 0 & 0 & 0 & 1 & 0\\
        0 & 0 & 0 & 0 & 1 & 1
        \end{array}\right)
        + 
        \left(\begin{array}{cccccc}
        0 & 0 & 0 & 0 & 0 & 0\\
        0 & 0 & 0 & 0 & 0 & 0\\
        1 & 1 & 0 & 0 & 0 & 0\\
        1 & 1 & 0 & 0 & 0 & 0\\
        1 & 1 & 1 & 1 & 0 & 0\\
        1 & 1 & 1 & 1 & 0 & 0\\
        \end{array}\right) \\
        &=
        \left(\begin{array}{cccccc}
        1 & 0 & 0 & 0 & 0 & 0\\
        1 & 1 & 0 & 0 & 0 & 0\\
        1 & 1 & 1 & 0 & 0 & 0\\
        1 & 1 & 1 & 1 & 0 & 0\\
        1 & 1 & 1 & 1 & 1 & 0\\
        1 & 1 & 1 & 1 & 1 & 1
        \end{array}\right) \\
        &= A^{(6)}.
    \end{align*}
    
    Now we have
    \begin{align*}
        \|\comb{B_1}{B_2}\|_{2\to\infty}^2
        &= \max_{i \in [\nd_1\nd_2]} \sum_{j \in [\nd_1\nd_2'+\nd_1']} (\comb{B_1}{B_2})_{i,j}^2 \\
        &= \max_{i \in [\nd_1\nd_2]} \sum_{j' \in [\nd_1\nd_2']} (I^{(\nd_1)} \otimes B_2)_{i,j'}^2 + \sum_{j'' \in [\nd_1]} (S^{(\nd_1)} B_1 \otimes \mathbf{1}^{(\nd_2)})_{i,j''}^2 \\
        &= \max_{i_1 \in [\nd_1], i_2 \in [\nd_2]} \sum_{j_1' \in [\nd_1], j_2' \in [\nd_2']} (I^{(\nd_1)}_{i_1,j_1} \cdot (B_2)_{i_2,j_2'})^2 + \sum_{j'' \in [\nd_1]} ((S^{(\nd_1)} B_1)_{i_1,j''} \cdot \mathbf{1}^{(\nd_2)})_{j''})^2 \\
        &= \max_{i_1 \in [\nd_1], i_2 \in [\nd_2]} \sum_{ j_2' \in [\nd_2']} (B_2)_{i_2,j_2'}^2 + \sum_{j'' \in [\nd_1]} (S^{(\nd_1)} B_1)_{i_1,j''} ^2 \\
        &= \max_{i_1 \in [\nd_1]} \sum_{ j_2' \in [\nd_2']} (B_2)_{i_2,j_2'}^2 + \max_{ i_2 \in [\nd_2]} \sum_{j'' \in [\nd_1]} (S^{(\nd_1)} B_1)_{i_1,j''} ^2 \\
        &= \|B_2\|_{2\to\infty}^2 + \|S^{(\nd_1)}B_1\|_{2\to\infty}^2.
    \end{align*}
    Similarly, $\|\comc{C_1}{C_2}\|_{1\to2}^2 = \|C_1\|_{1\to2}^2 + \|C_2\|_{1\to2}^2$.
    Finally, \[\|S^{(\nd_1)}B_1\|_{2\to\infty}^2 = \max_{i \in [\nd_1]} \sum_{j \in [\nd_1']} (S^{(\nd_1)}B_1)_{i,j}^2  = \max_{i \in [\nd_1-1]} \sum_{j \in [\nd_1']} (B_1)_{i,j}^2 \le \|B_1\|_{2\to\infty}^2.\]
\end{proof}

\Cref{lem:combc} can be applied recursively. We start with some base factorization $B_1 C_1 = A^{(\nd_1)}$ and inductively define a factorization satisfying $B_\ell C_\ell = A^{(\nd_1^\ell)}$ and \[\|B_\ell\|_{2\to\infty} \cdot \|C_\ell\|_{1\to2} \le \ell \cdot \|B_1\|_{2\to\infty} \cdot \|C_1\|_{1\to2}.\]

\begin{prop}\label{prop:recursivemf}
    Let $B_1, C_1^T \in \mathbb{R}^{\nd_1 \times \nd'_1}$. 
    Following \Cref{defn:combc}, for $\ell \ge 2$, define 
    \begin{equation}
        B_\ell = \comb{B_1}{B_{\ell-1}} 
        ~~~ \text{ and } ~~~
        C_\ell = \comc{C_1}{C_{\ell-1}}.
    \end{equation}
    Then, for all $\ell \ge 1$, we have $B_\ell, C_\ell^T \in \mathbb{R}^{\nd_\ell \times \nd'_\ell}$, where $\nd_\ell \coloneqq \nd_1^\ell$ and \[\nd'_\ell \coloneqq \nd_1' \cdot \sum_{k=0}^{\ell-1} \nd_1^k = \nd_1' \cdot \frac{\nd_1^\ell-1}{\nd_1-1}\]
    If $B_1 C_1 = A^{(\nd_1)}$, then $B_\ell C_\ell = A^{(\nd_\ell)}$ for all $\ell \ge 1$.
    Furthermore,
    \begin{align*}
        \|B_\ell\|_{2\to\infty} &\le \sqrt{\ell} \cdot \|B_1\|_{2\to\infty},\\
        \|C_\ell\|_{1\to2} &= \sqrt{\ell} \cdot \|C_1\|_{1\to2}.
    \end{align*}
\end{prop}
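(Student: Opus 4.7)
The plan is to prove \Cref{prop:recursivemf} by straightforward induction on $\ell$, with \Cref{lem:combc} doing essentially all the heavy lifting at each inductive step. The base case $\ell = 1$ is immediate: the hypothesis already gives $B_1 C_1 = A^{(\nd_1)}$, the dimensions match trivially, and $\sqrt{1} = 1$ so the norm bounds hold with equality. (Note also that $\nd'_1 = \nd'_1 \cdot \frac{\nd_1 - 1}{\nd_1 - 1}$.)

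For the inductive step, suppose the claim holds for $\ell - 1$. First, I would verify the dimensions. By \Cref{defn:combc} applied to $B_1 \in \R^{\nd_1 \times \nd'_1}$ and $B_{\ell-1} \in \R^{\nd_{\ell-1} \times \nd'_{\ell-1}}$, we have $B_\ell = \comb{B_1}{B_{\ell-1}} \in \R^{(\nd_1 \nd_{\ell-1}) \times (\nd_1 \nd'_{\ell-1} + \nd'_1)}$. Using the inductive hypotheses $\nd_{\ell-1} = \nd_1^{\ell-1}$ and $\nd'_{\ell-1} = \nd'_1 \cdot \frac{\nd_1^{\ell-1} - 1}{\nd_1 - 1}$, this gives $\nd_\ell = \nd_1 \cdot \nd_1^{\ell-1} = \nd_1^\ell$ and
\[
\nd'_\ell = \nd_1 \cdot \nd'_1 \cdot \frac{\nd_1^{\ell-1} - 1}{\nd_1 - 1} + \nd'_1 = \nd'_1 \cdot \frac{\nd_1^\ell - \nd_1 + \nd_1 - 1}{\nd_1 - 1} = \nd'_1 \cdot \frac{\nd_1^\ell - 1}{\nd_1 - 1},
\]
as desired. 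The same calculation for $C_\ell$ yields $C_\ell^T \in \R^{\nd_\ell \times \nd'_\ell}$.

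Next, the factorization identity follows immediately from \Cref{lem:combc}: since by induction $B_{\ell-1} C_{\ell-1} = A^{(\nd_{\ell-1})}$, the lemma gives
\[
B_\ell C_\ell = (\comb{B_1}{B_{\ell-1}})(\comc{C_1}{C_{\ell-1}}) = A^{(\nd_1 \cdot \nd_{\ell-1})} = A^{(\nd_\ell)}.
\]
For the norm bounds, \Cref{lem:combc} gives $\|C_\ell\|_{1 \to 2}^2 = \|C_1\|_{1 \to 2}^2 + \|C_{\ell-1}\|_{1 \to 2}^2$, which combined with the inductive hypothesis $\|C_{\ell-1}\|_{1 \to 2}^2 = (\ell-1) \|C_1\|_{1 \to 2}^2$ gives the desired equality $\|C_\ell\|_{1 \to 2}^2 = \ell \|C_1\|_{1 \to 2}^2$. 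Similarly, \Cref{lem:combc} provides the inequality
\[
\|B_\ell\|_{2 \to \infty}^2 \le \|B_1\|_{2 \to \infty}^2 + \|B_{\ell-1}\|_{2 \to \infty}^2 \le \|B_1\|_{2 \to \infty}^2 + (\ell-1)\|B_1\|_{2 \to \infty}^2 = \ell \|B_1\|_{2 \to \infty}^2,
\]
completing the induction. There is no substantial obstacle here; the work has all been done in setting up the combiners and proving \Cref{lem:combc}. The only thing to be careful about is keeping track of which side of the recursion $B_1, C_1$ versus $B_{\ell-1}, C_{\ell-1}$ appear on in \Cref{defn:combc}, since the combiner is not symmetric in its two arguments — and indeed this asymmetry is what produces the slightly weaker ``$\le$'' rather than ``$=$'' for the $B$ norm (via the $\|S^{(\nd_1)} B_1\|_{2 \to \infty} \le \|B_1\|_{2 \to \infty}$ step in \Cref{lem:combc}).
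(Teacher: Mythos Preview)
Your proof is correct and follows exactly the same approach as the paper's own proof, which simply reads ``Perform induction on $\ell$ using \Cref{lem:combc} with $B_2 = B_{\ell-1}$ and $C_2 = C_{\ell-1}$.'' You have written out the details that the paper leaves implicit, including the dimension check and the explicit unwinding of the norm recursions, but the strategy is identical.
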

\begin{proof}
    Perform induction on $\ell$ using \Cref{lem:combc} with $B_2=B_{\ell-1}$ and $C_2=C_{\ell-1}$.
\end{proof}

\subsection{Recursive Algorithm}\label{sec:recursivealg}

Now we show that the recursive matrix factorization in \Cref{prop:recursivemf} corresponds to an efficient algorithm.
The recursive construction naturally corresponds to a recursive algorithm.
That is, for streaming multiplication by $B_\ell$ we must call a base algorithm for $B_1$ and recursively call the algorithm for $B_{\ell-1}$.

Recall the streaming multiplication setting: We receive the rows of $Z \in \mathbb{R}^{\nd'_\ell \times \mdim}$ one by one as a stream and we must output the rows of $B_\ell \cdot Z \in \mathbb{R}^{\nd_\ell \times \mdim}$ as a stream.
Note that our matrix $B_\ell$ is not a lower triangular matrix; in fact, it is not even square. 
This makes the problem setting not entirely straightforward, as we do not have a one-to-one correspondence between streaming inputs and outputs.

In our application, the input $Z$ is independent random noise. In particular, the order of the rows is unimportant. 
The time when we read each row is also not important as the noise can be generated as needed. The constraint is simply that we can only read each row once. (We can store the rows for later use, but that requires memory, so we want to avoid this.) 
In other words, the input is a stream of independent random noise, which we can read at any time, but we cannot ``rewind'' it.

\begin{algorithm}
    \begin{algorithmic}
    \State \textbf{Parameters:} Base streaming algorithm $\mathcal{A}_1$. Recursion depth $\ell \ge 1$.
    \State \textbf{Streaming Input:} Row vectors $Z_{0,\cdot},Z_{1,\cdot}, \cdots, Z_{\nd'_\ell-1,\cdot} \in \mathbb{R}^{1 \times \mdim}$.
    \State \textbf{Streaming Output:} Row vectors $\widetilde{Z}_{0,\cdot},\widetilde{Z}_{1,\cdot}, \cdots, \widetilde{Z}_{\nd_\ell-1,\cdot} \in \mathbb{R}^{1 \times \mdim}$ such that
    $\widetilde{Z} = B_\ell \cdot Z$.
    \If{$\ell=1$}
        \State Run $\mathcal{A}_1$ and directly output what it outputs.
    \Else
        \State Start a copy of $\mathcal{A}_1$. (Copy I)
        \State Let $z'_0=(\mathbf{0}^{(m)})^T \in \mathbb{R}^{1 \times \mdim}$.
        \For{$i = 0 \cdots \nd_1-1$}
            \State Start a copy of $\mathcal{A}_{\ell-1}$. (Copy II) \Comment{Recursion $\ell \mapsto \ell-1$.}
            \For{$j = 0 \cdots \nd_{\ell-1}-1$}
                    \State Get $z''_{i,j} \in \mathbb{R}^{1 \times \mdim}$ the next output of $\mathcal{A}_{\ell-1}$ (copy II).
                    \State Output $\widetilde{Z}_{\nd_{\ell-1}i+j+1,\cdot}=z'_i+z''_{i,j} \in \mathbb{R}^{1 \times \mdim}$.
                    \State Delete $z''_{i,j}$.
            \EndFor
            \State Terminate $\mathcal{A}_{\ell-1}$ (copy II) and delete $z'_i$. \Comment{Free up memory.}
            \State Get $z'_{i+1} \in \mathbb{R}^{1 \times \mdim}$ the next output of $\mathcal{A}_1$ (copy I).
        \EndFor
        \State Terminate $\mathcal{A}_1$ (copy I).
    \EndIf
    \end{algorithmic}
    \caption{Recursive Streaming Algorithm corresponding to \Cref{prop:recursivemf} \label{alg:streamingrecursive}}
\end{algorithm}

Our recursive streaming algorithm is presented as \Cref{alg:streamingrecursive}.
We will instantiate the base streaming algorithm $\mathcal{A}_1$ with \Cref{alg:streamingmatrixpower}.
The recursive algorithm runs many copies of $\mathcal{A}_1$.
The streaming input is simply split among these copies. For simplicity, we do not belabor the precise order in which the input stream is read.

\begin{lem}[Properties of \Cref{alg:streamingrecursive}]\label{lem:alg-combc}
    For $\ell \ge 1$, let $B_\ell,C_\ell^T \in \mathbb{R}^{\nd_\ell \times \nd_\ell'}$ be as in \Cref{prop:recursivemf}.
    Let $\mathcal{A}_1$ be a streaming algorithm that takes as input a stream $Z_{0,\cdot},Z_{1,\cdot},\cdots,Z_{\nd_{\ell-1}'-1,\cdot} \in \mathbb{R}^{1 \times \mdim}$ and outputs a stream $\widehat{Z}_{0,\cdot},\widehat{Z}_{1,\cdot},\cdots,\widehat{Z}_{\nd_{\ell-1}-1,\cdot} \in \mathbb{R}^{1 \times \mdim}$ such that $\widehat{Z} = B_1 Z$.
    Let $\mathcal{A}_\ell$ denote \Cref{alg:streamingrecursive} instantiated with the base algorithm $\mathcal{A}_1$ and recursion depth $\ell \ge 1$.
    
    Then $\mathcal{A}_\ell$ is a streaming algorithm that takes as input a stream $Z_{0,\cdot},Z_{1,\cdot},\cdots,Z_{\nd'_\ell-1,\cdot} \in \mathbb{R}^{1 \times \mdim}$ (in an arbitrary -- but fixed -- order) and outputs a stream $\widetilde{Z}_{0,\cdot},\widetilde{Z}_{1,\cdot},\cdots,\widetilde{Z}_{\nd_\ell-1,\cdot} \in \mathbb{R}^{1 \times \mdim}$ such that $\widetilde{Z} = B_1 Z$.

    Moreover, the space required by $\mathcal{A}_\ell$ is at most $\ell$ times the space required by $\mathcal{A}_1$. And the time required by $\mathcal{A}_\ell$ is $\frac{\nd_1^{\ell+1}-1}{\nd_1-1} \le O(\nd_\ell)$ times the time required by $\mathcal{A}_1$.
\end{lem}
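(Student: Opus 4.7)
The plan is to prove all three claims by induction on the recursion depth $\ell$. The base case $\ell = 1$ is immediate from the hypothesis on $\mathcal{A}_1$, so all the content is in the inductive step. Assume the lemma holds for $\mathcal{A}_{\ell-1}$, and consider $\mathcal{A}_\ell$. First I will split the input stream $Z$ into two pieces $Z^{(2)} \in \mathbb{R}^{n_1 n'_{\ell-1} \times \mdim}$ and $Z^{(1)} \in \mathbb{R}^{n'_1 \times \mdim}$ matching the two column-blocks of
\[ B_\ell = \comb{B_1}{B_{\ell-1}} = \bigl( I^{(n_1)} \otimes B_{\ell-1} ~~\bigm|~~ S^{(n_1)} B_1 \otimes \mathbf{1}^{(n_{\ell-1})} \bigr). \]
The mixed-product property of $\otimes$ then lets me write row $i n_{\ell-1} + j$ of $B_\ell Z$ (for $i \in [n_1], j \in [n_{\ell-1}]$) as the sum of two terms: (a) row $j$ of $B_{\ell-1}$ applied to the $i$-th block of $Z^{(2)}$, and (b) row $i$ of $S^{(n_1)} B_1 Z^{(1)}$, which equals row $i-1$ of $B_1 Z^{(1)}$ for $i \ge 1$ and equals $0$ for $i = 0$.

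Next I will match this decomposition to the variables of Algorithm~\ref{alg:streamingrecursive}. The induction hypothesis applied to the $i$-th copy of $\mathcal{A}_{\ell-1}$ (fed with the $i$-th block of $Z^{(2)}$) gives $z''_{i,j} = (B_{\ell-1} Z^{(2)}_{(i)})_{j,:}$, i.e.\ exactly term (a). For term (b), the single copy I of $\mathcal{A}_1$ is run on $Z^{(1)}$, producing $y = B_1 Z^{(1)} \in \mathbb{R}^{n_1 \times \mdim}$ one row at a time; the algorithm sets $z'_0 = 0$ and $z'_{i+1} = y_{i,:}$, so $z'_i$ is precisely row $i$ of $S^{(n_1)} y = S^{(n_1)} B_1 Z^{(1)}$, which is term (b). Hence $z'_i + z''_{i,j}$ is the correct row of $B_\ell Z$, proving the streaming-correctness claim. (The only delicate point is checking that the order in which copy I and copy II consume their inputs is compatible with reading each row of $Z$ only once and in some fixed order; this is exactly why the statement allows an ``arbitrary but fixed'' input ordering.)

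For the space bound, I will argue that at every moment only one copy of $\mathcal{A}_1$ (copy I) and one copy of $\mathcal{A}_{\ell-1}$ (copy II) are alive, together with a single register $z'_i \in \mathbb{R}^{1 \times \mdim}$ that is overwritten across iterations; the explicit termination of copy II and deletion of $z'_i$ before the next outer iteration is what makes this possible. By the inductive hypothesis, copy II uses at most $(\ell-1)$ times the space of $\mathcal{A}_1$, so the total is at most $\ell$ times the space of $\mathcal{A}_1$ (absorbing the $O(\mdim)$ register for $z'_i$ into the space used by $\mathcal{A}_1$, which already manipulates $1 \times \mdim$ vectors). For the time bound, let $T_\ell$ denote the total running time of $\mathcal{A}_\ell$. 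One full run of copy I costs $T_1$ and the $n_1$ sequential instantiations of copy II cost $n_1 T_{\ell-1}$, giving the recurrence $T_\ell = T_1 + n_1 T_{\ell-1}$ with $T_1 = T_1$; unrolling gives $T_\ell \le \bigl(\sum_{k=0}^{\ell} n_1^k\bigr) T_1 = \tfrac{n_1^{\ell+1}-1}{n_1-1}\, T_1$, as claimed.

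The main obstacle is the first paragraph: carefully matching the linear-algebraic decomposition of $B_\ell Z$ to the index bookkeeping of the nested loops, in particular verifying that the shift $S^{(n_1)}$ is exactly accounted for by initializing $z'_0 = 0$ and fetching $z'_{i+1}$ only after the $i$-th inner loop completes. Once this correspondence is in place, the inductive proofs of correctness, space, and time follow mechanically from \Cref{lem:combc} and the inductive hypothesis.
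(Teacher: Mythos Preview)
Your proposal is correct and follows essentially the same approach as the paper: induction on $\ell$, splitting $Z$ into the two blocks corresponding to the two column-blocks of $\comb{B_1}{B_{\ell-1}}$, identifying $z''_{i,j}$ with the output of the recursive copy of $\mathcal{A}_{\ell-1}$ and $z'_i$ with the shifted output of copy I of $\mathcal{A}_1$ (with the shift realized by $z'_0=0$), and then the same space/time recurrences. The paper's write-up is slightly more explicit about building the matrices $Z'$ and $Z''$ and verifying the Kronecker identities, but the structure and all key ideas are the same as yours.
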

\begin{proof}
    We may inductively assume that the lemma holds for $\ell-1$ in place of $\ell$. That is, we assume $\mathcal{A}_{\ell-1}$ is a streaming algorithm that takes as input a stream $Z_{0,\cdot},Z_{1,\cdot},\cdots,Z_{\nd'_\ell-1,\cdot} \in \mathbb{R}^{1 \times \mdim}$ (in an arbitrary -- but fixed -- order) and outputs a stream $\check{Z}_{0,\cdot},\check{Z}_{1,\cdot},\cdots,\check{Z}_{\nd_{\ell-1}-1,\cdot} \in \mathbb{R}^{1 \times \mdim}$ such that $\check{Z} = B_{\ell-1} Z$.
    Let 
    \[
    Z' = \left( \begin{array}{c} z'_0 \\ z'_0 \\ \vdots \\ z'_0 \\ z'_1 \\ z'_1 \\ \vdots \\ z'_1 \\ \vdots \\ z'_{\nd_1-1} \\ z'_{\nd_1-1} \\ \vdots \\ z'_{\nd_1-1}  \end{array} \right) = \left( \begin{array}{c} z'_0 \\ z'_1 \\ \vdots \\  z'_{\nd_1-1}  \end{array} \right) \otimes \mathbf{1}^{(\nd_{\ell-1})} \in \mathbb{R}^{\nd_1\nd_{\ell-1}\times\mdim}
    \]
    and
    \[
    Z'' = \left(\begin{array}{c} z''_{0,0} \\ z''_{0,1} \\ \vdots \\ z''_{0,\nd_{\ell-1}} \\ z''_{1,0} \\ z''_{1,1} \\ \vdots \\ z''_{1,\nd_{\ell-1}} \\ \vdots \\ z''_{\nd_1-1,0} \\ z''_{\nd_1-1,1} \\ \vdots \\ z''_{\nd_1-1,\nd_{\ell-1}} \\\end{array}\right)  \in \mathbb{R}^{\nd_1\nd_{\ell-1}\times\mdim}.
    \]
    We have $\widetilde{Z} = Z' + Z''$. We want to show that 
    \[\widetilde{Z} = B_\ell Z = (\comb{B_1}{B_{\ell}}) Z = \big( I^{(\nd_1)} \otimes B_2 ~~ S^{(\nd_1)} B_1 \otimes \mathbf{1}^{(\nd_2)}  \big) \left( \begin{array}{c} Z_{0:\nd_1\nd'_{\ell-1},\cdot} \\ Z_{\nd_1\nd'_{\ell-1}:\nd'_\ell,\cdot} \end{array} \right),\]
    where $Z_{\nd_1\nd'_{\ell-1}:\nd'_\ell,\cdot} \in \mathbb{R}^{\nd'_1 \times \mdim}$ and $Z_{0:\nd_1\nd'_{\ell-1},\cdot} \in \mathbb{R}^{\nd_1 \nd'_{\ell-1} \times \mdim}$ are a partition of the rows of $Z \in \mathbb{R}^{\nd'_{\ell} \times \mdim}$. (Note that $\nd'_\ell = \nd_1\nd'_{\ell-1}+\nd'_1$. The notation $Z_{i:j,\cdot}$ denotes the submatrix of $Z$ formed by rows $i, i+1, \cdots, j-1$ and all columns.)
    Thus it suffices to show that $Z' = (S^{(\nd_1)} B_1 \otimes \mathbf{1}^{(\nd_2)}) Z_{\nd_1\nd'_{\ell-1}:\nd'_\ell,\cdot}$ and $Z'' = (I^{(\nd_1)} \otimes B_2) Z_{0:\nd_1\nd'_{\ell-1},\cdot}$.
    By our inductive assumption on $\mathcal{A}_{\ell-1}$, for all $i \in [\nd_1]$, \[\left(\begin{array}{c} z''_{i,0} \\ z''_{i,1} \\ \vdots \\ z''_{i,\nd_{\ell-1}}\end{array}\right) = B_\ell \cdot Z_{i\nd_{\ell-1}':(i+1)\nd_{\ell-1}',\cdot}.\]
    Thus 
    \[
    Z'' = \left(\begin{array}{c} B_\ell \cdot Z_{0:\nd_{\ell-1}',\cdot} \\ B_\ell \cdot Z_{\nd_{\ell-1}':2\nd_{\ell-1}',\cdot} \\ \vdots \\ B_\ell \cdot Z_{(\nd_1-1)\nd_{\ell-1}':\nd_1\nd_{\ell-1}',\cdot} \end{array}\right)  = (I^{(\nd_1)} \otimes B_{\ell-1}) \cdot Z_{0:\nd_1\nd_{\ell-1},\cdot} ,
    \]
    as required.
    By our assumption about the base algorithm $\mathcal{A}_1$, we have 
    \[
        \left( \begin{array}{c} z'_1 \\ z'_2 \\ \vdots \\  z'_{\nd_1}  \end{array} \right) = B_1 \cdot Z_{\nd_1\nd'_{\ell-1}:\nd_\ell',\cdot} \in \mathbb{R}^{\nd_1 \times \mdim}.
    \]
    (Note that $z'_{\nd_1}$ is never actually used in the algorithm.)
    With $z'_0=(\mathbf{0}^{(m)})^T$, the non-cyclic shift $S^{(\nd_1)}$ gives
    \[
        \left( \begin{array}{c} z'_0 \\ z'_1 \\ \vdots \\  z'_{\nd_1-1}  \end{array} \right) = S^{(\nd_1)} \cdot B_1 \cdot Z_{\nd_1\nd'_{\ell-1}:\nd_\ell',\cdot} \in \mathbb{R}^{\nd_1 \times \mdim}.
    \]
    Thus \[Z' = \left( \begin{array}{c} z'_0 \\ z'_1 \\ \vdots \\  z'_{\nd_1-1}  \end{array} \right) \otimes \mathbf{1}^{(\nd_{\ell-1})} = ( S^{(\nd_1)} \cdot B_1 \cdot Z_{\nd_1\nd'_{\ell-1}:\nd_\ell',\cdot}) \otimes  \mathbf{1}^{(\nd_{\ell-1})} = \big( ( S^{(\nd_1)} \cdot B_1 ) \otimes  \mathbf{1}^{(\nd_{\ell-1})} \big) \cdot Z_{\nd_1\nd'_{\ell-1}:\nd_\ell',\cdot},\]
    which establishes the correctness of the algorithm.
    
    The space usage of $\mathcal{A}_\ell$ is is comprised of that of $\mathcal{A}_1$ plus that of $\mathcal{A}_{\ell-1}$. (Note that, although many copies of $\mathcal{A}_{\ell-1}$ are run, they are run sequentially, so the space usage does not add up.) Thus, by induction, the space usage of $\mathcal{A}_\ell$ is $\ell$ times that of $\mathcal{A}_1$.
    
    The runtime of $\mathcal{A}_\ell$ is comprised of that of $\mathcal{A}_1$ plus $\nd_1$ times that of $\mathcal{A}_{\ell-1}$. By induction, this gives a runtime of $1 + \nd_1 + \nd_1^2 + \cdots + \nd_1^{\ell} = \frac{\nd_1^{\ell+1}-1}{\nd_1-1} = \frac{\nd'_\ell}{\nd'_1}$ times that of $\mathcal{A}_1$, as required.
\end{proof}

\subsection{Combining Recursion with \BLTs}

The algorithm in \Cref{sec:recursivealg} requires a base factorization and a corresponding algorithm.
We can instantiate this with the \RABLT rational function approximation algorithm from \Cref{sec:genfunc-framework,sec:rationalfunctionapprox}.
This combination yields the following result.

\begin{prop}[Instantiating the recursive construction with the rational function approximation]\label{prop:instantiate-recursive}
    Let $\mdim, \ell \ge 1$, $\nd_1 \ge 5$, and $d \ge 2+\left(\frac{12+4\log \nd_1}{\pi}\right)^2$ be integers.
    Let $\nd=\nd_1^\ell$ and $\nd' = \frac{\nd_1^{\ell+1}-\nd_1}{\nd_1-1}$.
    Then there exist matrices $B,C^T \in \mathbb{R}^{\nd \times \nd'}$ and a streaming algorithm $\mathcal{A}$ satisfying the following.
    \begin{itemize}
        \item \textbf{Valid Matrix Factorization:}
        $BC=A^{(\nd)}$.
        \item \textbf{Near-Optimality:}
        \[\MaxErr(B,C) \le \ell \cdot \left(\sqrt{\mathsf{OptLTToe}(\nd_1)} + 16\sqrt{\nd_1}\exp\left(-\frac{\pi}{2}\sqrt{d-2}\right)\right)^2,\]
        where $\mathsf{OptLTToe}(\nd_1) = 1+\sum_{k=1}^{\nd_1-1} \left( 2^{-2k} {2k \choose k} \right)^2 \le 1 + \frac{0.57722+\log(\nd_1)}{\pi}$.
        \item \textbf{Valid Algorithm:}
        The algorithm $\mathcal{A}$ takes as input a stream $Z_{0,\cdot},Z_{1,\cdot},\cdots,Z_{\nd'-1,\cdot} \in \mathbb{R}^{1 \times \mdim}$ in some arbitrary-but-fixed order (but only reading each $Z_{i,\cdot}$ once) and outputs a stream $\widetilde{Z}_{0,\cdot},\widetilde{Z}_{1,\cdot},\cdots,\widetilde{Z}_{\nd-1,\cdot} \in \mathbb{R}^{1 \times \mdim}$ such that $\widetilde{Z} = B \cdot Z$.
        \item \textbf{Efficient Algorithm:}
        The algorithm $\mathcal{A}$ uses space $O(\ell d \mdim)$ and has total runtime $O(\nd d \mdim)$.
    \end{itemize}
\end{prop}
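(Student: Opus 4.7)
The plan is to instantiate the recursive construction of \Cref{prop:recursivemf} and the associated streaming algorithm of \Cref{lem:alg-combc} with the base factorization and algorithm provided by \Cref{thm:main-f}. Since \Cref{thm:main-f} applies whenever $\nd_1 \ge 5$ and $d \ge 2 + ((12+4\log \nd_1)/\pi)^2$, the hypotheses of the proposition are exactly what we need to invoke it at size $\nd_1$ with approximation degree $d$.

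First, I would fix the base. Apply \Cref{thm:main-f} at $\nd_1$ to obtain a lower triangular Toeplitz factorization $B_1, C_1 \in \mathbb{R}^{\nd_1 \times \nd_1}$ of $A^{(\nd_1)}$ together with a streaming algorithm $\mathcal{A}_1$ for multiplication by $B_1$ in space $O(d \mdim)$ and time $O(\nd_1 d \mdim)$. (The theorem is stated for multiplication by $C_1^{-1}$, but this equals $B_1$ up to the factor of $M(1/(1-x),\nd_1)$ coming from $b(x) = r(x)/(1-x)$; this prefix-sum step is absorbed into the same \BLT via \cref{lem:invogfs} and adds only a constant to the state size.) Using the optimal Toeplitz baseline $\|B^*\|_{2\to\infty} = \|C^*\|_{1\to 2} = \sqrt{\mathsf{OptLTToe}(\nd_1)}$ and writing $\epsilon \coloneqq 16\sqrt{\nd_1}\exp(-\tfrac\pi 2\sqrt{d-2})$, the near-optimality clause of \Cref{thm:main-f} yields
\[
\|B_1\|_{2\to\infty} \le \sqrt{\mathsf{OptLTToe}(\nd_1)} + \epsilon
\quad\text{and}\quad
\|C_1\|_{1\to 2} \le \sqrt{\mathsf{OptLTToe}(\nd_1)} + \epsilon.
\]

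Second, I would lift this base to depth $\ell$ via \Cref{prop:recursivemf}. With $\nd_1' = \nd_1$ (the base is square), the proposition produces $B_\ell, C_\ell^T \in \mathbb{R}^{\nd_\ell \times \nd_\ell'}$ with $\nd_\ell = \nd_1^\ell = \nd$ and $\nd_\ell' = \nd_1 \cdot (\nd_1^\ell - 1)/(\nd_1 - 1) = (\nd_1^{\ell+1}-\nd_1)/(\nd_1-1) = \nd'$, satisfying $B_\ell C_\ell = A^{(\nd)}$ and
\[
\MaxErr(B_\ell, C_\ell) = \|B_\ell\|_{2\to\infty} \cdot \|C_\ell\|_{1\to 2} \le \sqrt{\ell}\,\|B_1\|_{2\to\infty} \cdot \sqrt{\ell}\,\|C_1\|_{1\to 2} \le \ell \left(\sqrt{\mathsf{OptLTToe}(\nd_1)} + \epsilon\right)^2,
\]
which is precisely the near-optimality bound claimed. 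Setting $B \coloneqq B_\ell$ and $C \coloneqq C_\ell$ handles both the validity and near-optimality clauses.

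Third, I would assemble the algorithm by instantiating \Cref{lem:alg-combc} with base $\mathcal{A}_1$ and recursion depth $\ell$. The lemma immediately delivers a streaming algorithm $\mathcal{A}$ that reads each row of $Z \in \mathbb{R}^{\nd' \times \mdim}$ at most once (in some fixed order) and outputs the stream $\widetilde{Z} = B Z$ row by row. Its space usage is $\ell$ times that of $\mathcal{A}_1$, i.e.\ $O(\ell d \mdim)$, and its total runtime is at most $\frac{\nd_1^{\ell+1}-1}{\nd_1 -1} = O(\nd)$ times the per-invocation runtime of $\mathcal{A}_1$, i.e.\ $O(\nd \cdot d \mdim)$. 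Both efficiency bounds match the statement.

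There is no genuine obstacle here, as the result is a bookkeeping composition. The only point requiring care is unit-checking the constants: \Cref{thm:main-f} bounds each factor $\|B_1\|_{2\to\infty}$ and $\|C_1\|_{1\to 2}$ separately by $\sqrt{\mathsf{OptLTToe}(\nd_1)} + \epsilon$, so their product (not their sum) picks up the square in the near-optimality bound, and the factor of $\ell$ (not $\sqrt{\ell}$) comes from the product $\sqrt{\ell} \cdot \sqrt{\ell}$ in \Cref{prop:recursivemf}. A second minor care point is the non-square shape: because $\nd_1' = \nd_1$ the recursion exactly reproduces the formula for $\nd'$ in the statement, so no off-by-one adjustment is needed.
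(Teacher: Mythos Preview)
Your proposal is correct and follows essentially the same approach as the paper: invoke \Cref{thm:main-f} at size $\nd_1$ to obtain the base factorization and streaming algorithm, then lift via \Cref{prop:recursivemf} and \Cref{lem:alg-combc}. You are in fact slightly more careful than the paper in flagging that \Cref{thm:main-f} literally provides streaming multiplication by $C_1^{-1}$ rather than $B_1$, and in explaining how the extra prefix-sum factor is absorbed into the same \BLT framework; the paper simply asserts that the base algorithm computes $B_1 Z$.
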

\begin{proof}
    \Cref{thm:main-f} provides a base factorization $B_1, C_1 \in \mathbb{R}^{\nd_1 \times \nd_1}$ and a base algorithm $\mathcal{A}_1$ (\Cref{alg:streamingmatrixpower}) for computing $B_1Z$ in a streaming manner.
    We are guaranteed that
    \begin{align*}
            \|B_1\|_{2 \to \infty} &\le \|B_1^*\|_{2 \to \infty} + 16 \sqrt{\nd_1} \exp\left(-\frac\pi2 \sqrt{d-2}\right),\\
            \|C_1\|_{1 \to 2} &\le \|C_1^*\|_{1 \to 2} + 16 \sqrt{\nd_1} \exp\left(-\frac\pi2 \sqrt{d-2}\right),
    \end{align*}
    where $B_1^*=C_1^*=M(f,\nd_1)$ for $f(x)=1/\sqrt{1-x}$.
    By \Cref{lem:lbstruct}, $B_1^*,C_1^*$ is the optimal matrix factorization. In particular, \[\|B_1^*\|_{2 \to \infty} = \|C_1^*\|_{1 \to 2} = \sqrt{\mathsf{OptLTToe}(\nd_1)} = \sqrt{1+\sum_{k=1}^{\nd_1-1} \left( 2^{-2k} {2k \choose k} \right)^2} \le \sqrt{1 + \frac{\gamma+\log(\nd_1)}{\pi}},\]
    where $\gamma \le 0.57722$ is the Euler-Mascheroni constant.
    The algorithm $\mathcal{A}_1$ runs in space and time (per iteration) $O(d\mdim)$.
     
    Now we apply the recursive algorithm with this base construction.
    By \Cref{prop:recursivemf}, we obtain a matrix factorization $B_\ell,C_\ell\in\mathbb{R}^{\nd \times \nd'}$ such that $B_\ell C_\ell = A^{(n)}$,
    \[\|B_\ell\|_{2\to\infty} \le \sqrt{\ell} \|B_1\|_{2\to\infty} \le \sqrt{\ell}\left( \sqrt{\mathsf{OptLTToe}(\nd_1)} + 16 \sqrt{\nd_1} \exp\left(-\frac\pi2 \sqrt{d-2}\right) \right) ,\]
    and
    \[ \|C_\ell\|_{1\to2} = \sqrt{\ell} \|C_1\|_{1\to2} \le \sqrt{\ell}\left(\sqrt{\mathsf{OptLTToe}(\nd_1)} + 16 \sqrt{\nd_1} \exp\left(-\frac\pi2 \sqrt{d-2}\right)\right).\]

    By \Cref{lem:alg-combc}, \Cref{alg:streamingrecursive} is a valid streaming algorithm for computing $\widetilde{Z} = B_\ell Z$ and runs in space $O(\ell d \mdim)$ and total time $O(n d \mdim)$, as required. 
\end{proof}

Setting parameters in \Cref{prop:instantiate-recursive} yields our result.

\begin{thm}[Formal version of \Cref{thm:main2-inf}]\label{thm:main2-f}
    Let $\nd, \nd_1, \mdim \ge 1$ be integers.
    Then there exist matrices $B,C^T \in \mathbb{R}^{\nd \times \nd'}$ (for some $\nd'\le O(\nd)$) and a streaming algorithm $\mathcal{A}$ satisfying the following.
    \begin{itemize}
        \item \textbf{Valid Matrix Factorization:}
        $BC=A^{(\nd)}$.
        \item \textbf{Near-Optimality:}
        \[\MaxErr(B,C) \le \mathsf{Opt}(n) \cdot \left(1 + O\left(\frac{1}{\log(\nd_1)}\right)\right) + O(\log(\nd_1)),\]
        where $\mathsf{Opt}(n) = \inf\{\MaxErr(B_*,C_*) : B_*C_*=A^{(n)}\} = \frac{\log n}{\pi} \pm O(1)$ is the optimal value of the matrix factorization objective over all possible factorizations.
        \item \textbf{Valid Algorithm:}
        The algorithm $\mathcal{A}$ takes as input a stream $Z_{0,\cdot},Z_{1,\cdot},\cdots,Z_{\nd'-1,\cdot} \in \mathbb{R}^{1 \times \mdim}$ in some arbitrary-but-fixed order (but only reading each $Z_{i,\cdot}$ once) and outputs a stream $\widetilde{Z}_{0,\cdot},\widetilde{Z}_{1,\cdot},\cdots,\widetilde{Z}_{\nd-1,\cdot} \in \mathbb{R}^{1 \times \mdim}$ such that $\widetilde{Z} = B \cdot Z$.
        \item \textbf{Efficient Algorithm:}
        The algorithm $\mathcal{A}$ uses space $O(\log(n)\log(n_1)\mdim)$ and has total runtime $O(\nd \log(n_1)^2 \mdim)$.
    \end{itemize}
\end{thm}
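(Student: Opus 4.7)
The plan is to deduce \Cref{thm:main2-f} from \Cref{prop:instantiate-recursive} by choosing its parameters $(n_1,d,\ell)$ appropriately, then truncating the resulting factorization down to size exactly $n$. Specifically, I would set $\ell = \lceil \log n / \log n_1 \rceil$ so that $\tilde{n}\coloneqq n_1^\ell \ge n$, and choose $d = \Theta(\log^2 n_1)$ with a sufficiently large hidden constant so that the ``error parameter'' from \Cref{prop:instantiate-recursive}, namely $\epsilon \coloneqq 16\sqrt{n_1}\exp(-\tfrac{\pi}{2}\sqrt{d-2})$, is polynomially small in $n_1$ (say $\epsilon \le 1/n_1^2$). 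Note $d\ge 2+(\frac{12+4\log n_1}{\pi})^2$ is satisfied. The restriction $n_1\ge 5$ required by \Cref{prop:instantiate-recursive} can be handled by noting the $n_1 = O(1)$ cases are already covered by the binary-tree mechanism (or by simply replacing $n_1$ with $\max(n_1,5)$, which only affects constants).

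Next, I would handle the truncation. \Cref{prop:instantiate-recursive} produces $\tilde B, \tilde C^T \in \mathbb{R}^{\tilde n \times \tilde n'}$ with $\tilde B \tilde C = A^{(\tilde n)}$. Since the top-left $n\times n$ block of $A^{(\tilde n)}$ equals $A^{(n)}$, taking $B$ to be the first $n$ rows of $\tilde B$ and $C$ to be the first $n$ columns of $\tilde C$ yields $BC = A^{(n)}$, with $\|B\|_{2\to\infty} \le \|\tilde B\|_{2\to\infty}$ and $\|C\|_{1\to2} \le \|\tilde C\|_{1\to2}$, so $\MaxErr(B,C) \le \MaxErr(\tilde B,\tilde C)$. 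For the streaming algorithm, I would run $\mathcal{A}_\ell$ from \Cref{alg:streamingrecursive} and simply stop after emitting the first $n$ outputs; the inputs corresponding to later outputs are never read, and the space bound inherits unchanged from the full $\mathcal{A}_\ell$.

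The main work is then the arithmetic for the near-optimality bound. Expanding the guarantee from \Cref{prop:instantiate-recursive},
\[
\MaxErr(B,C) \le \ell \cdot \bigl(\sqrt{\mathsf{OptLTToe}(n_1)} + \epsilon\bigr)^2
= \ell\cdot\mathsf{OptLTToe}(n_1) + 2\ell\epsilon\sqrt{\mathsf{OptLTToe}(n_1)} + \ell\epsilon^2.
\]
Using $\mathsf{OptLTToe}(n_1)\le 1+\frac{\gamma+\log n_1}{\pi}$ and $\ell \le 1+\log n/\log n_1$, the first term is
\[
\ell\cdot\mathsf{OptLTToe}(n_1) \;\le\; \tfrac{\log n}{\pi} + O\!\bigl(\tfrac{\log n}{\log n_1}\bigr) + O(\log n_1).
\]
Combining with $\mathsf{Opt}(n) \ge \frac{\log(2n+1)}{\pi} = \frac{\log n}{\pi} - O(1)$ from \cref{eq:sasho} gives
$\ell\cdot\mathsf{OptLTToe}(n_1) \le \mathsf{Opt}(n)\bigl(1+O(1/\log n_1)\bigr)+O(\log n_1)$, since $\mathsf{Opt}(n)=\Theta(\log n)$. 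With our choice of $d$ making $\epsilon$ polynomially small, the cross term $2\ell\epsilon\sqrt{\mathsf{OptLTToe}(n_1)}$ and the quadratic term $\ell\epsilon^2$ are both $o(1)$, and so can be absorbed into the $O(\log n_1)$ additive slack.

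Finally, the resource bounds are routine: \Cref{prop:instantiate-recursive} gives space $O(\ell d m) = O(\tfrac{\log n}{\log n_1}\cdot\log^2 n_1\cdot m) = O(\log(n)\log(n_1)\, m)$, matching the statement, and the total runtime is $O(\tilde n d m)$, which with $\tilde n = O(n)$ for reasonable $n_1$ and $d=O(\log^2 n_1)$ gives the claimed runtime bound. The main obstacle, conceptually, is just packaging the error bound so that the linear-in-$\ell$ error from the recursive mechanism, together with the non-$\ell$-scaling error from the base \BLT approximation, collapses to the clean multiplicative-plus-additive form $\mathsf{Opt}(n)(1+O(1/\log n_1))+O(\log n_1)$; the rest is bookkeeping and parameter selection.
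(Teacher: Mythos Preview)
Your proposal is correct and follows essentially the same approach as the paper: choose $\ell=\lceil \log n/\log n_1\rceil$ and $d=\Theta(\log^2 n_1)$, invoke \Cref{prop:instantiate-recursive}, truncate the size-$n_1^\ell$ factorization to size $n$, and then do the same algebra using $\mathsf{OptLTToe}(n_1)\le 1+\tfrac{\gamma+\log n_1}{\pi}$ and the lower bound on $\mathsf{Opt}(n)$ to rewrite $\ell\cdot\mathsf{OptLTToe}(n_1)$ in the desired multiplicative-plus-additive form. One small caveat: your claim that $\tilde n = O(n)$ is not quite right (in fact $\tilde n < n\cdot n_1$), but since you stop the recursive algorithm after $n$ outputs and the per-output amortized work is $O(dm)$, the stated runtime $O(n\log^2(n_1)\,m)$ still follows.
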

To obtain \Cref{thm:main2-inf} from \Cref{thm:main2-f} we simply need to set $\nd_1$ to a value that is superconstant (so that $1/O(\log(\nd_1))=o(1)$) but not too large (so that  $O(\log(n)\log(n_1)\mdim)=\widetilde{O}(\log(n)\mdim)$). For example, we can set $\nd_1=\Theta(\log(\nd))$.
\begin{proof}
    Let $\nd_1 \ge 5$, $d = \left\lceil 2+\left(\frac{12+4\log \nd_1}{\pi}\right)^2 \right\rceil$, $\ell = \lceil \log(\nd)/\log(\nd_1)\rceil$, and $\nd' = \frac{\nd_1^{\ell+1}-\nd_1}{\nd_1-1}$ be integers.
    
    Then $\nd \le \nd_1^\ell < \nd \cdot \nd_1$. We will provide a matrix factorization of size $\nd_1^\ell$, which can be truncated to one of size $\nd$.

    \Cref{prop:instantiate-recursive} provides the matrix factorization and algorithm. It gives the near-optimality guarantee
    \[\|B_\ell\|_{2\to\infty} \cdot \|C_\ell\|_{1\to2} \le \ell \cdot \left(\sqrt{\mathsf{OptLTToe}(\nd_1)} + 16\sqrt{\nd_1}\exp\left(-\frac{\pi}{2}\sqrt{d-2}\right)\right)^2,\]
    where $\mathsf{OptLTToe}(\nd_1) = 1+\sum_{k=1}^{\nd_1-1} \left( 2^{-2k} {2k \choose k} \right)^2 \le 1 + \frac{\gamma+\log(\nd_1)}{\pi}$.
    It only remains for us to simplify this guarantee.    
        
    We assume $d \ge 2 + \left(\frac{2\log(16) + 3\log(\nd_1)}{\pi}\right)^2$ so that $16 \sqrt{\nd_1} \exp\left(-\frac\pi2 \sqrt{d-2}\right) \le 1/\nd_1$.

    Thus
    \begin{align*}
        \|B_\ell\|_{2\to\infty} \|C_\ell\|_{1\to2} 
        &\le \ell \cdot \left(\sqrt{\mathsf{OptLTToe}(\nd_1)} + \frac{1}{\nd_1} \right)^2 \\
        &= \ell \cdot \mathsf{OptLTToe}(\nd_1) + \frac{\ell}{\nd_1} \cdot \left(2\sqrt{\mathsf{OptLTToe}(\nd_1)} + \frac{1}{\nd_1}\right) \\
        &\le \ell \cdot \left( 1 + \frac{\gamma + \log(\nd_1)}{\pi} \right) + \frac{\ell}{\nd_1} \cdot  \left(2\sqrt{\nd_1} + \frac{1}{\nd_1} \right ) \\
        &=  \frac{\ell \log(\nd_1)}{\pi}  + \ell \cdot (1+\gamma/\pi) + \ell \cdot \left(\frac{2}{\sqrt{\nd_1}} + \frac{1}{\nd_1^2}\right ) \\
        &=  \frac{\log(\nd_1^\ell)}{\pi}  + \ell \cdot \left(1+\frac\gamma\pi + \frac{2}{\sqrt{\nd_1}} + \frac{1}{\nd_1^2}  \right) \\
        &\le  \frac{\log(\nd)+\log(\nd_1)}{\pi}  + \ell \cdot \left(1+\frac\gamma\pi + \frac{2}{\sqrt{5}} + \frac{1}{5^2}  \right) \\
        &\le  \frac{\log(\nd)}{\pi}  + 3\ell + \frac{\log(\nd_1)}{\pi}.
    \end{align*}
    Now we invoke the lower bound of \citet{matouvsek2020factorization} or \citet{mathias1993hadamard} (see \Cref{eq:sasho}), which tells us \[\mathsf{Opt}(\nd) \ge \frac{\log(\nd)}{\pi} -1,\]
    where $\mathsf{Opt}(\nd)$ is the optimal value of the objective function over all factorizations of size $\nd$.
    Thus
    \begin{align*}
        \|B_\ell\|_{2\to\infty} \|C_\ell\|_{1\to2} 
        &\le \mathsf{Opt}(\nd) + 1 + 3\ell + \frac{\log(\nd_1)}{\pi} \\
        &\le \mathsf{Opt}(\nd) + 4 + 3 \frac{\log(\nd)}{\log(\nd_1)} + \frac{\log(\nd_1)}{\pi} \\
        &\le \mathsf{Opt}(\nd) + 4 + 3 \frac{1+\pi \mathsf{Opt}(\nd)}{\log(\nd_1)} + \frac{\log(\nd_1)}{\pi} \\
        &= \mathsf{Opt}(\nd) \cdot\left(1 + \frac{3\pi}{\log(\nd_1)}\right) + 4 + \frac{3}{\log(\nd_1)} + \frac{\log(\nd_1)}{\pi} .
    \end{align*}
    By \Cref{prop:instantiate-recursive}, the space usage of the algorithm is $O(\ell d \mdim) = O\left(\frac{\log(\nd)}{\log(\nd_1)} (\log(\nd_1))^2 \mdim\right) = O(\log(\nd)\log(\nd_1)\mdim)$ and the total runtime is $O(\nd d \mdim) = O(\nd \log(\nd_1)^2 \mdim)$.
\end{proof}

\section{Numerical Lower Bound on Optimal Performance}\label{sec:sdp_lower}
In this section, we develop \emph{numerical} lower bounds on the optimal matrix factorizations of the all-ones lower triangular matrix $A^{(n)}$ for various classes of matrices. More precisely, given the sequence length $n$, we write down a semidefinite program that provides a lower bound on the lowest achievable error over a specific class of correlation matrices (for example, Toeplitz matrices or triangular matrices). 

In general, we consider mechanisms whose matrices can be written as 
$C=\Jmat\br{c}=\sum_i \Jmat_i c_i$ where $c \in \R^m$ denotes the parameters of the mechanism (to be optimized) and $\Jmat: \R^m \mapsto \R^{n \times n}$ is an linear operator and $\Jmat_i \in \R^{n \times n}$ are ``basis'' matrices. Any linear subspace of the space of $n \times n$ matrices (including lower triangular matrices and Toeplitz matrices) can be expressed in this form.

All results in this section derive from the following fundamental result which is a straightforward consequence of weak duality:

\begin{theorem}\label{thm:SDPlower}
Let $\Amatrix$ denote the $n \times n$ lower triangular matrix with all entries equal to $1$. 
For any $c \in \R^m$, define the objective for the correlated mechanism with correlation matrix $\Jmat\br{c}$ by
\[F\br{c}=\MaxErr(\invb{\Jmat\br{c}}\Amatrix,\Jmat\br{c}) = \norm{\Jmat\br{c}}_{1 \to 2}^2\norm{\invb{\Jmat\br{c}}\Amatrix}_{2 \to \infty}^2.\]
Further, for any set $\mathfrak{C} \subseteq \R^m$, we have 
\begin{align}\min_{c \in \mathfrak{C}} F\br{c} \geq \frac{1}{\displaystyle\min_{\substack{\Gamma_0, \ldots, \Gamma_N \in \Sym^n_+\\ \sum_i \Amatrix_i^T \Gamma_i \Amatrix_i=1}} \max_{\norm{\Jmat\br{c}}_{1\mapsto 2}=1}\inner{\tran{\Jmat\br{c}}\Jmat\br{c}}{\sum_i \Gamma_i} } \label{eq:tightlb}
\end{align}
where $\Sym^n_+$ denotes the space of symmetric positive semidefinite $n \times n$ matrices, $\Amatrix_i$ denotes the $i$-th column of $\Amatrix$ and $\inner{A}{B}=\text{trace}\br{AB}$ denotes the standard inner product between symmetric matrices $A, B \in \Sym^N$.

Furthermore, if the inner maximization can be relaxed to a concave maximization problem, i.e, there exists a compact convex set $\mathcal{C} \subset \Sym^N$ such that
\[\max_{\substack{c \in \mathfrak{C} \\ \norm{\Jmat\br{c}}_{1\mapsto 2}=1}}\inner{\tran{\Jmat\br{c}}\Jmat\br{c}}{\Gamma} = \max_{S \in \mathcal{C}} \inner{S}{\Gamma} \quad \forall \Gamma \in \Sym^N_+\]
then the inequality in \cref{eq:tightlb} is an equality. 
\end{theorem}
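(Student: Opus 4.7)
My plan is to prove the inequality as weak duality for an SDP reformulation of the primal matrix factorization $\min_{c \in \mathfrak{C}} F(c)$, and then use a minimax theorem to obtain strong duality under the concavity hypothesis in the ``furthermore''.

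For weak duality, I would fix an arbitrary $c \in \mathfrak{C}$ with $K = \Jmat(c)$ and $B = K^{-1}A$, and an arbitrary dual-feasible $\{\Gamma_i \succeq 0\}$ with $\sum_i A_i^T \Gamma_i A_i = 1$. Since $F$ is scale-invariant under $K \mapsto \lambda K$, the point $c' = c/\|\Jmat(c)\|_{1\to 2}$ is admissible in the inner maximization on the RHS, which reduces the claim to the coupling inequality
\[
\|B\|_{2\to\infty}^2 \cdot \sum_i \tr\bigl(K^T K\, \Gamma_i\bigr) \;\geq\; \sum_i A_i^T \Gamma_i A_i \;=\; 1.
\]
To prove this, I would write $A_i = KB_i$, so that $A_i^T \Gamma_i A_i = \|\Gamma_i^{1/2} K B_i\|_2^2$, and apply a weighted Cauchy-Schwarz after decomposing $K B_i = \sum_k B_{k,i}\,K_{:,k}$. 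The weights are chosen so that, upon summing over $i$, the per-column contributions $\|B_i\|^2$ aggregate into the maximum row norm $\|B\|_{2\to\infty}^2$ while the companion factor collapses to the trace of $K^T K$ against $\sum_i \Gamma_i$. Rearranging and minimizing over feasible $\Gamma$ yields the theorem's lower bound.

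For the ``furthermore'' part, the hypothesis that $\max_{c \in \mathfrak{C},\, \|\Jmat(c)\|_{1\to 2}=1} \langle \Jmat(c)^T \Jmat(c), \Gamma\rangle = \max_{S \in \mathcal{C}} \langle S, \Gamma\rangle$ on a compact convex $\mathcal{C} \subset \Sym^n$ makes the inner maximization linear (hence concave) in $S$ over a compact convex set, while the outer minimization is linear in $\Gamma$ over the convex feasible set defined by $\Gamma_i \succeq 0$ and $\sum_i A_i^T \Gamma_i A_i = 1$. Sion's minimax theorem then permits exchanging the order of min and max; the inner minimization over $\Gamma$ at fixed $S$ is a convex SDP that can be solved in closed form via Lagrangian duality (yielding $\min_i 1/(A_i^T S^{-1} A_i)$ for $S \succ 0$), and this is matched by the primal via the identification $S = K^T K/\|K\|_{1\to 2}^2$, closing the loop.

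The main obstacle is executing the coupling inequality cleanly: a straightforward Cauchy-Schwarz on $\|\Gamma_i^{1/2} K B_i\|^2$ produces a bound involving the \emph{column} norm $\|B_i\|^2$ and the ``other'' trace $\tr(KK^T \Gamma_i)$, whereas the theorem requires the \emph{row} norm $\|B\|_{2\to\infty}^2$ and $\tr(K^T K \Gamma_i)$. Bridging this gap will likely require a more delicate weighting whose weights depend on the row-norm profile of $B$, or an alternative derivation via an explicit SDP Lagrangian in which the $\Gamma_i$ emerge as dual multipliers for the column-wise factorization constraints $K B_i = A_i$.
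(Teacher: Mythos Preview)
Your high-level strategy (scale-normalize, prove a weak-duality coupling inequality, then invoke a minimax theorem for tightness) is exactly the paper's strategy, and your ``furthermore'' argument via Sion/Von Neumann is essentially identical to theirs. The obstacle you flagged is real for the Cauchy--Schwarz you wrote down, and neither of your proposed workarounds (row-weighted Cauchy--Schwarz or an SDP Lagrangian with $\Gamma_i$ as multipliers on $KB_i = A_i$) is how the paper closes it.

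The paper never introduces $B_i$ at all. Instead it treats $M = K^TK$ as a single PSD matrix and proves the pointwise inequality
\[
\bigl(e^T M^{-1} e\bigr)\,\langle M,\Gamma\rangle \;\geq\; e^T\Gamma e \qquad\text{for all } M \succ 0,\ \Gamma \succeq 0,
\]
with equality at $\Gamma \propto M^{-1}ee^TM^{-1}$ (their Lemma~7.3). The proof is a two-line Cauchy--Schwarz in the eigenbasis of $M$: writing $M = \sum_j \lambda_j u_ju_j^T$ and $e = \sum_j \sigma_j u_j$, one has $\bigl(\sum_j \sigma_j^2/\lambda_j\bigr)\bigl(\sum_j \lambda_j\|\Gamma^{1/2}u_j\|^2\bigr) \geq \bigl(\sum_j |\sigma_j|\,\|\Gamma^{1/2}u_j\|\bigr)^2 \geq \|\Gamma^{1/2}e\|^2$. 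Applying this with $e = A_i$ for each $i$ and summing immediately yields your coupling inequality $\bigl(\max_i A_i^T M^{-1}A_i\bigr)\langle M,\sum_i\Gamma_i\rangle \geq \sum_i A_i^T\Gamma_i A_i = 1$, with no row/column tension because $B$ never appears. In fact the lemma gives an \emph{exact} variational identity $e^TM^{-1}e = 1/\min_{\Gamma\succeq 0,\, e^T\Gamma e = 1}\langle M,\Gamma\rangle$, so the paper first rewrites the primal \emph{exactly} as the max--min in the denominator (absorbing the $\max_i$ into a simplex variable and rescaling the $\Gamma_i$), and only then swaps min and max by weak duality. This is cleaner than proving the coupling inequality directly, and it makes the tightness case transparent.
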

\begin{proof}
Since the objective is invariant to scaling $c$ by any positive constant, i.e., $F\br{\alpha c}=F\br{c}$ for all $\alpha \in \R, \alpha > 0$, we can set $\alpha=\frac{1}{\norm{\Jmat{c}}_{1 \to 2}}$ so that the optimization problem becomes
\begin{align}\min_{\substack{c \in \mathfrak{C} \\ \norm{\Jmat\br{c}}_{1 \to 2}=1}} \norm{\invb{\Jmat\br{c}}\Amatrix}_{2\to \inf}^2=\min_{\substack{c \in \mathfrak{C} \\ \norm{\Jmat\br{c}}_{1 \to 2}=1}}\max_i\norm{\invb{\tran{\Jmat\br{c}}}\Amatrix_i}^2 \label{eq:proof_lb_a}\end{align}

From Lemma \ref{lem:matrix_frac_lb}, we have that
\[\tran{\evec}\invb{M}\evec = \frac{1}{\min_{\substack{\Gamma \in \Sym^N_+ \\ \tran{\evec}\Gamma\evec}} \inner{\Gamma}{M}}\]
for any $M \in \Sym^n_++$. Applying this to $M=\tran{\Jmat\br{c}}\Jmat\br{c}$ and $\evec=\Amatrix_i $ (for $i=0, \ldots, n$) and plugging this into \eqref{eq:proof_lb_a}, we obtain
\[\max_i\norm{\invb{\tran{\Jmat\br{c}}}\Amatrix_i}^2 = \max_i \max_{\substack{\Gamma_i \succeq 0 \\ \Amatrix_i^T \Gamma_i \Amatrix_i=1}} \frac{1}{\inner{\tran{\Jmat\br{c}}\Jmat\br{c}}{\Gamma_i}}=\frac{1}{\min_i \min_{\substack{\Gamma_i \succeq 0 \\ \Amatrix_i^T \Gamma_i \Amatrix_i=1}} \inner{\tran{\Jmat\br{c}}\Jmat\br{c}}{\Gamma_i}}\]
The denominator can be rewritten as
\begin{align*}
&\min_{\nu \in \Delta^n}\sum_i \nu_i\br{\min_{\substack{\Gamma_i \succeq 0 \\ \Amatrix_i^T \Gamma_i \Amatrix_i=1}} \inner{\tran{\Jmat\br{c}}\Jmat\br{c}}{\Gamma_i}}=\min_{\nu \in \Delta^n} \sum_i \min_{\substack{\Gamma_i \succeq 0 \\ \Amatrix_i^T \Gamma_i \Amatrix_i=\nu_i}} \inner{\tran{\Jmat\br{c}}\Jmat\br{c}}{\Gamma_i}\nonumber \\
&=\min_{\substack{\Gamma_0, \ldots, \Gamma_N \in \Sym^n_+\\ \sum_i \Amatrix_i^T \Gamma_i \Amatrix_i=1}} \inner{\tran{\Jmat\br{c}}\Jmat\br{c}}{\sum_i \Gamma_i}    
\end{align*}
where $\Delta^n$ is the simplex in $n$ dimensions. Thus, the optimization problem \eqref{eq:proof_lb_a} can be rewritten as
\[\min_{\norm{\Jmat\br{c}}_{1 \to 2} \leq 1} \frac{1}{\min_{\substack{\Gamma_0, \ldots, \Gamma_N \in \Sym^n_+\\ \sum_i \Amatrix_i^T \Gamma_i \Amatrix_i=1}} \inner{\tran{\Jmat\br{c}}\Jmat\br{c}}{\sum_i \Gamma_i}} = \frac{1}{\displaystyle\max_{\norm{\Jmat\br{c}}_{1 \to 2} \leq 1}\min_{\substack{\Gamma_0, \ldots, \Gamma_n \in \Sym^n_+\\ \sum_i \Amatrix_i^T \Gamma_i \Amatrix_i=1}} \inner{\tran{\Jmat\br{c}}\Jmat\br{c}}{\sum_i \Gamma_i}}\]
By weak duality, we have that the optimal value is bounded below by
\[\frac{1}{\displaystyle\min_{\substack{\Gamma_0, \ldots, \Gamma_n \in \Sym^n_+\\ \sum_i \Amatrix_i^T \Gamma_i \Amatrix_i=1}}\max_{\norm{\Jmat\br{c}}_{1 \to 2} \leq 1} \inner{\tran{\Jmat\br{c}}\Jmat\br{c}}{\sum_i \Gamma_i}}\]
which gives \cref{eq:tightlb}.

Furthermore, if the inner maximization can be rewritten as a concave maximization problem, the overall problem is a convex-concave problem and strong duality holds by the Von-Neumann minimax theorem \citep{v1928theorie}, so that the order of the min and max can be interchanged without changing the optimal value. Hence, the lower bound is tight.

\end{proof}

\begin{cor}
Let $\mathcal{J} = \{\Jmat\br{c}: c \in \mathfrak{C}\}$. A tight lower bound in \cref{thm:SDPlower} is achieved in the following 3 cases:
\begin{itemize}
    \item $\mathcal{J}$ is the set of all lower triangular matrices.
    \item $\mathcal{J}$ is the set of all lower triangular Toeplitz matrices.
    \item $\mathcal{J}$ is the set of all lower triangular Toeplitz matrices formed from a degree $1$ constant recurrent sequence.
\end{itemize}
\end{cor}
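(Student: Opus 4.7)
The plan is to exhibit, for each of the three cases, a compact convex set $\mathcal{C} \subset \Sym^n$ that verifies the tightness hypothesis of \cref{thm:SDPlower}, namely
\[
\max_{\substack{c \in \mathfrak{C}\\ \norm{\Jmat(c)}_{1\to 2}=1}}\inner{\tran{\Jmat(c)}\Jmat(c)}{\Gamma} \;=\; \max_{S \in \mathcal{C}}\inner{S}{\Gamma} \qquad \text{for every } \Gamma \in \Sym^n_+.
\]
Because the objective on the left is nonnegative, degree-two homogeneous in $c$, and linear in the lift $\tran{\Jmat(c)}\Jmat(c)$, the canonical candidate is the closed convex hull of $\{\tran{\Jmat(c)}\Jmat(c) : c \in \mathfrak{C},\ \norm{\Jmat(c)}_{1\to 2}\le 1\}$, and the work reduces to verifying compactness and giving each hull a workable description.

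For arbitrary lower triangular factors I would take $\mathcal{C}_1 = \{S \in \Sym^n_+ : S_{jj} \le 1 \text{ for all } j\}$, which is convex as an intersection of the PSD cone with half-spaces and compact since $S \succeq 0$ with $S_{jj}\le 1$ forces $|S_{ij}| \le \sqrt{S_{ii}S_{jj}} \le 1$. The inclusion $\tran{C}C \in \mathcal{C}_1$ for every admissible $C$ uses only $(\tran{C}C)_{jj} = \norm{C_{:,j}}_2^2$; the reverse inclusion is the Cholesky factorization, which supplies, for every $S \in \mathcal{C}_1$, a lower triangular $C$ with $\tran{C}C = S$ and $\norm{C_{:,j}}_2^2 = S_{jj}\le 1$ (the rank-deficient subcase is handled by a limiting argument in $S$).

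For the lower triangular Toeplitz cases the key step is the identity
\[
\tran{C(c)}\,C(c) \;=\; \sum_{k=0}^{n-1} J_k (cc^T) J_k^T \;\eqqcolon\; L(cc^T),
\]
where the fixed matrices $J_k \in \R^{\dimdim}$ encode the sparsity pattern $C(c)_{k,i} = c_{k-i}\,\mathbb{I}[0\le i\le k]$ of a lower triangular Toeplitz $C(c)$; both sides evaluated at entry $(i,j)$ equal $\sum_{k\ge\max(i,j)} c_{k-i}c_{k-j}$. Since $\norm{C(c)}_{1\to 2} = \norm{c}_2$ (column zero has the largest $\ell_2$ norm) and $\conv\{cc^T : \norm{c}_2 \le 1\} = \{X \in \Sym^n_+ : \tr(X) \le 1\}$ is the spectraplex, pushing forward through the linear map $L$ yields $\mathcal{C}_2 = L(\{X \in \Sym^n_+ : \tr X \le 1\})$, a compact convex subset of $\Sym^n_+$. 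For the degree-one constant-recurrent specialization the sequence becomes $c_k = c_0\theta^k$, so $cc^T = c_0^2\, v(\theta)v(\theta)^T$ with $v(\theta) = (1,\theta,\dots,\theta^{n-1})^T$; the norm constraint becomes $c_0^2 \le 1/\norm{v(\theta)}_2^2$, and setting $\mathcal{Y}_3 = \overline{\conv}\{v(\theta)v(\theta)^T/\norm{v(\theta)}_2^2 : \theta \in [-1,1]\}$ and $\mathcal{C}_3 = L(\mathcal{Y}_3)$ furnishes the required compact convex set (compact as the linear image of the closed convex hull of a continuous image of a compact interval).

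The main obstacle I anticipate is establishing the linearization $\tran{C(c)}C(c) = L(cc^T)$ on which cases~2 and~3 rely: without expressing the quadratic-in-$c$ map $c \mapsto \tran{C}C$ as an affine function of the PSD lift $cc^T$, there is no route from the quadratic inner maximization to a linear maximization over a convex set, and the strong-duality hypothesis of \cref{thm:SDPlower} cannot be invoked. Once this linearization is in hand, the passage from rank-one lifts to the full spectraplex (case~2), the closed convex hull of the one-parameter curve $\theta \mapsto v(\theta)v(\theta)^T/\norm{v(\theta)}^2$ (case~3), and the Cholesky surjectivity argument (case~1) are all routine and should not require new ideas.
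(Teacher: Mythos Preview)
Cases 1 and 2 match the paper's proof: your Cholesky argument is identical, and your linearization $C(c)^T C(c) = L(cc^T)$ together with the spectraplex relaxation is precisely what the paper invokes when it cites the S-lemma for the trust-region subproblem. Case~3 diverges: the paper rewrites the inner maximization as $\max_{\theta\in\R} p(\theta;\Gamma)/q(\theta)$ and appeals to sum-of-squares duality to obtain an explicit SDP, whereas you simply take the closed convex hull of the one-parameter curve. Your route is more elementary---indeed it shows that the tightness hypothesis of \cref{thm:SDPlower} is satisfied for \emph{any} parametrized class, since $\mathcal{C}=\overline{\conv}\,h(K)$ is always a valid choice---while the paper's SOS construction buys a computable SDP description of $\mathcal{C}$, which is the point of \cref{sec:sdp_lower} (the numerical bounds in \cref{fig:sdp_lower_bounds} require solving these SDPs).

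One gap: in case~3 you restrict to $\theta\in[-1,1]$, but the degree-1 class allows $\theta\in\R$ (the paper's proof says so explicitly), and the curve $\theta\mapsto v(\theta)v(\theta)^T/\|v(\theta)\|^2$ for $|\theta|>1$ is not contained in the hull of its $|\theta|\le 1$ portion (for $n=2$ the $(0,0)$-entry is $1/(1+\theta^2)\ge 1/2$ on $[-1,1]$ but equals $1/5$ at $\theta=2$). Your $\mathcal{Y}_3$ is therefore too small and the required support-function equality can fail. The fix is immediate: let $\theta$ range over $\R$ and note the curve has compact closure since $v(\theta)/\|v(\theta)\|\to\pm e_{n-1}$ as $|\theta|\to\infty$.
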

\begin{proof}
1. If $\mathcal{J}$ is the set of all lower triangular matrices, for any $\mathbb{J} \in \mathcal{J}$, letting $\Delta=\tran{\mathbb{J}}\mathbb{J}$, we have that 
\[\max_{\norm{\mathbb{J}}_{1 \to 2} \leq 1, \mathbb{J}\in\mathcal{J}} \inner{\tran{\mathbb{J}}\mathbb{J}}{\Gamma} \leq \max_{\substack{\Delta \succeq 0 \\ \text{diag}\br{\Delta}\leq 1}} \inner{\Delta}{\Gamma}\]
since for any feasible $\mathbb{J}$, $\Delta=\tran{\mathbb{J}}\mathbb{J}$ satisfies the constraints listed in the optimization problem on the right hand side (note that $\text{diag}\br{\Delta}$ refers to the vector of diagonal elements of the matrix $\Delta$). 

Conversely, given any feasible solution $\Delta$ satisfying the constraints, one can compute its upper triangular Cholesky factorization (the regular Cholesky factorization multiplied by the matrix with anti-diagonals equal to $1$), to obtain $\mathbb{J}$ that satisfies the constraint $\norm{\mathbb{J}}_{1 \to 2} \leq 1$ and $\Delta=\tran{\mathbb{J}}\mathbb{J}$. Hence the two optimization problems have equal optimal values, and hence the lower bound in theorem \ref{thm:SDPlower} is tight.

2. If $\mathcal{J}$ is the set of all $n \times n$ lower triangular Toeplitz matrices, let $\mathbb{J}\br{c}$ denote the lower triangular Toeplitz matrix whose first column is $c$. Then, we have 
\[\norm{\Jmat\br{c}}_{1\to 2} = \norm{c}, \mathbb{J}\br{c}=\sum_i c_i H_i\]
where $H_i$ is the matrix with $0$s everywhere except the $i$-th subdiagonal equal to $1$. Thus
\[\max_{\norm{\mathbb{J}\br{c}}_{1 \to 2} \leq 1, \mathbb{J}\br{c}\in\mathcal{J}} \inner{\tran{\mathbb{J}}\mathbb{J}}{\Gamma} = \max_{\norm{c} = 1} \inner{\tran{\mathbb{J}\br{c}}\mathbb{J}\br{c}}{\Gamma}\]
and the right hand side is a standard trust region subproblem with a quadratic objective and a single quadratic constraint, that can be solved exactly via an SDP relaxation by the S-lemma \cite{polik2007survey}, i.e, there exists an affine map $Q: \Sym^n \to \Sym^n$ such that the right hand side equals $\max_{C \succeq 0, \text{trace}\br{C}=1} \inner{Q\br{\Gamma}}{ C}$. Hence, the lower bound in \cref{thm:SDPlower} is tight.

3. If $\mathcal{J}$ is the set of all $n \times n$ lower triangular Toeplitz matrices formed from a degree $1$ constant recurrent sequence, we have that $\mathbb{J}=\mathbb{J}\br{c}$ where $c_i = \alpha \theta^i$. where $\alpha, \theta \in \R$ are arbitrary real numbers. Then, the problem
\[\max_{\norm{c} = 1} \inner{\tran{\mathbb{J}\br{c}}\mathbb{J}\br{c}}{\Gamma}=\max_{c} \frac{\inner{\tran{\mathbb{J}\br{c}}\mathbb{J}\br{c}}{\Gamma}}{\tran{c}c}\]
can be written as
\[\max_{\theta, \alpha} 
\frac{\alpha^2 p\br{\theta;\Gamma}}{\alpha^2 q\br{\theta}} = \max_{\theta}\frac{p\br{\theta;\Gamma}}{q\br{\theta}}\]
where $p, q$ are polynomials in $\theta$ and the coefficients of $q$ are independent of $\Gamma$ while the coefficients of $p$ are affine functions of $\Gamma$. Using the fact that \[p\br{\theta}-\beta q\br{\theta} \leq 0 \iff p\br{\theta}-\beta q\br{\theta} \text{ is a sum of squares polynomial}\]
we can write the maximization problem as being equal to
\[\min_{p\br{\theta;\Gamma}-\beta q\br{\theta} \text{ is a sum of squares polynomial}} \beta\]
Thus, the inner maximization can be written as a concave maximization problem of the form required for lower bound tightness in Theorem \ref{thm:SDPlower} \cite{parrilo2012chapter}.

\end{proof}
\begin{figure}
    \centering
    \includegraphics[width=.7\textwidth]{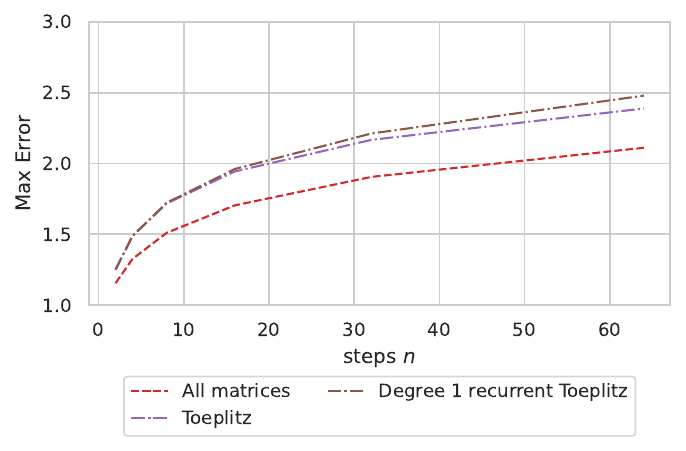}
    \caption{Semidefinite programming based lower bounds on optimal performance for various classes of matrices based on the results from theorem \ref{thm:SDPlower}}
    \label{fig:sdp_lower_bounds}
\end{figure}

\begin{lem}\label{lem:matrix_frac_lb}[Lower bound on matrix fractional function]
For any $\evec \in \R^n$ and $M \in \Sym^n_{++}$, we have
\[\tran{\evec}\invb{M}\evec = \frac{1}{\min_{\substack{\Gamma \in \Sym^n_+ \\ \tran{\evec}\Gamma\evec = 1}} \inner{\Gamma}{M}}\]
\end{lem}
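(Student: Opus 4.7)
The plan is to reduce the constrained SDP on the right-hand side to a trivial trace minimization on a one-dimensional family of PSD matrices via a congruence transformation, and then exhibit the optimizer in closed form.

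First, I would perform the change of variables $\tilde\Gamma = M^{1/2}\Gamma M^{1/2}$. Since $M \succ 0$, this is a bijection between $\Sym^n_+$ and itself, and $\inner{\Gamma}{M} = \tr(\Gamma M) = \tr(\tilde\Gamma)$. Letting $u \coloneqq M^{-1/2}\evec$, the linear constraint becomes $\tran{\evec}\Gamma\evec = \tran{u}\tilde\Gamma u = 1$. Thus the right-hand side equals
\[
\min\{\tr(\tilde\Gamma) : \tilde\Gamma \in \Sym^n_+,\ \tran{u}\tilde\Gamma u = 1\},
\]
and we must show this equals $1/\|u\|^2 = 1/(\tran{\evec}M^{-1}\evec)$ (assuming $\evec\neq 0$; otherwise the constraint set is empty and both sides equal $+\infty$ under the natural convention).

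Second, I would establish the lower bound. For any feasible $\tilde\Gamma$, the operator-norm bound gives $\tran{u}\tilde\Gamma u \le \lambda_{\max}(\tilde\Gamma)\|u\|^2 \le \tr(\tilde\Gamma)\|u\|^2$, since the eigenvalues of $\tilde\Gamma$ are non-negative. Therefore
\[
1 = \tran{u}\tilde\Gamma u \;\le\; \tr(\tilde\Gamma)\cdot \|u\|^2,
\]
so $\tr(\tilde\Gamma) \ge 1/\|u\|^2$. Translating back, every feasible $\Gamma$ satisfies $\inner{\Gamma}{M} \ge 1/(\tran{\evec}M^{-1}\evec)$.

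Third, I would exhibit an optimizer to show the bound is tight. Take $\tilde\Gamma^\star \coloneqq uu\tp / \|u\|^4$, which is PSD and satisfies $\tran{u}\tilde\Gamma^\star u = \|u\|^4/\|u\|^4 = 1$, with trace $\|u\|^2/\|u\|^4 = 1/\|u\|^2$. Undoing the substitution gives the explicit minimizer $\Gamma^\star = M^{-1}\evec\evec\tp M^{-1}/(\tran{\evec}M^{-1}\evec)^2$. Combining the matching upper and lower bounds,
\[
\min_{\substack{\Gamma\in\Sym^n_+\\ \tran{\evec}\Gamma\evec=1}} \inner{\Gamma}{M} \;=\; \frac{1}{\tran{\evec}M^{-1}\evec},
\]
which rearranges to the stated identity. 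There is no substantive obstacle here; the only care required is handling the degenerate case $\evec=0$ and confirming that the congruence $\Gamma \leftrightarrow \tilde\Gamma$ preserves both PSD-ness and the value of $\inner{\Gamma}{M}$.
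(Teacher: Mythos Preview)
Your argument is correct. The congruence $\tilde\Gamma = M^{1/2}\Gamma M^{1/2}$ is a bijection on $\Sym^n_+$, the trace identity $\inner{\Gamma}{M}=\tr(\tilde\Gamma)$ holds, and the eigenvalue bound $u^\top\tilde\Gamma u\le\tr(\tilde\Gamma)\|u\|^2$ for $\tilde\Gamma\succeq 0$ gives the lower bound cleanly. Your optimizer $\Gamma^\star = M^{-1}\evec\evec^\top M^{-1}/(\evec^\top M^{-1}\evec)^2$ coincides with the one the paper exhibits.

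The paper proceeds differently: it fixes an eigendecomposition $M=\sum_i\lambda_i u_iu_i^\top$, writes $\evec=\sum_i\sigma_i u_i$, and then proves the inequality $(\evec^\top M^{-1}\evec)\,\inner{M}{\Gamma}\ge \evec^\top\Gamma\evec$ in coordinates by chaining Cauchy--Schwarz (on the sequences $\sigma_i/\sqrt{\lambda_i}$ and $\sqrt{\lambda_i}\|\Gamma^{1/2}u_i\|$) with the triangle inequality for $\Gamma^{1/2}u_i$. Your route is more coordinate-free and shorter: the congruence strips $M$ out of the problem entirely, after which only the elementary fact $\lambda_{\max}(\tilde\Gamma)\le\tr(\tilde\Gamma)$ is needed. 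The paper's approach, by contrast, makes the role of the spectral data explicit and proves the slightly more general intermediate inequality $(\evec^\top M^{-1}\evec)\,\inner{M}{\Gamma}\ge \evec^\top\Gamma\evec$ for all $\Gamma\succeq 0$ without first normalizing, which is what one would reach for if one did not spot the congruence trick.
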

\begin{proof}
Let $\evec \in \R^n$ be an arbitrary vector. 
Now, consider any positive definite matrix $M \in \Sym^{n}_+$ and eigendecomposition $M=\sum_i \lambda_i u_i\tran{u_i}$ for some orthogonal basis $\{u_i\}$. Further, expressed in this basis, suppose that $\evec=\sum_i \sigma_i u_i$. 

Then, we have
\begin{align*}
\tran{\evec}{M^{-1}}\evec = \sum_i \frac{1}{\lambda_i} \br{\tran{u_i}\One}^2 = \sum_i \frac{\sigma_i^2}{\lambda_i}
\end{align*}
Let $\inner{M}{\Gamma}$ denote the inner product in $\Sym^n$, i.e., $\inner{M}{\Gamma}=\text{trace}\br{M\Gamma}$.
Further, for any matrix $\Gamma \in \Sym^n_+$, we have
\begin{align*}
& \br{\tran{\evec}{M^{-1}}\evec}{\inner{M}{\Gamma}} = \br{\sum_i \frac{\sigma_i^2}{\lambda_i}}\br{\sum_i \lambda_i \tran{u_i} \Gamma u_i} =\br{\sum_i \frac{\sigma_i^2}{\lambda_i}}\br{\sum_i \lambda_i \norm{\Gamma^{1/2}u_i}^2}\\ & \geq \br{\sum_i \sigma_i\norm{\Gamma^{1/2}u_i}}^2 \geq \norm{\sum_i \sigma_i \Gamma^{1/2} u_i}^2 = \evec^T \Gamma \evec
\end{align*}
where the first inequality is an application of the Cauchy-Schwartz inequality and the second inequality is an application of the triangle inequality. 

Thus, for any $M, \Gamma \in \Sym^n_+$, we have
\[\tran{\evec}{M^{-1}}\evec \geq \frac{\evec^T \Gamma \evec}{\inner{M}{\Gamma}}\]
Further, choosing $\Gamma=\invb{M}\evec\tran{\evec}\invb{M}$, the inequality is replaced with equality. Hence, we have 
\[\tran{\evec}{M^{-1}}\evec = \max_{\Gamma \succeq 0} \frac{\evec^T \Gamma \evec}{\inner{M}{\Gamma}} = \max_{\substack{\Gamma \succeq 0 \\ \evec^T \Gamma \evec=1}} \frac{1}{\inner{M}{\Gamma}}\]
\end{proof}

\section{\BLT with $d=1$ Buffers}
\label{sec:warmup}


In this section, we show that it is possible to achieve $\MaxErr(B,C) = O(n^{1/6})$ error for a degree $d=1$ constant recurrent sequence, without appealing to the sophisticated machinery of rational function approximation that build the crux of the paper. Unfortunately, this approach does not naturally extend to degrees $d>1$. In comparison, adding independent noise (that is when $d=0$) results in an $\ell_\infty$ error of $O(n)$.

We consider the factorization $A=BC$, where $B=AC^{-1}$ and  $\bfC$ and $\bfC^{-1}$ are parameterized as follows. 
Let $ a,\lambda \in [-1,1]$ with $|\lambda-a^2|\leq 1 $.
Define
\begin{equation}
    \bfC= \left(\begin{array}{cccccc}
    1 & 0 & 0 & 0&\cdots & 0\\
    a^2 & 1 & 0 & 0&\cdots & 0\\
    a^2\lambda & a^2 & 1 & 0 &\cdots & 0 \\
    a^2\lambda^2 & a^2\lambda & a^2 & 1 &\cdots & 0\\
    \vdots &\vdots &\vdots &\vdots &\vdots &\vdots
    \end{array}\right),\ \ \  
    \bfC^{-1}=\left(\begin{array}{cccccc}
    1 & 0 & 0 & 0&\cdots & 0\\
    -a^2 & 1 & 0 & 0&\cdots & 0\\
    -a^2(\lambda-a^2) & -a^2 & 1 & 0 &\cdots & 0 \\
    -a^2(\lambda-a^2)^2 & -a^2(\lambda-a^2) & -a^2 & 1 &\cdots & 0\\
    \vdots &\vdots &\vdots &\vdots &\vdots &\vdots
    \end{array}\right).
    \label{eq:2}
\end{equation}
That is, $C$ corresponds to the generating function $c(x) = 1 + \frac{a^2x}{1-\lambda x}$, which clearly has degree $1$.
It can be checked that the expression for $C^{-1}$ is correct (via \cref{lem:mgfmult}) as $1/c(x) = 1 - \frac{a^2x}{1-(\lambda-a^2)x}$.

\begin{thm} 
Setting $\lambda=1-\frac{1}{n^{2/3}}$, and $a^2=\frac{1}{n^{1/3}}\left(1-\frac{1}{n^{1/3}}\right)$ in \cref{eq:2} gives 
\begin{align}
    \|C\|_{1\to2}^2 &\le O(1),\\
    \|B\|_{2\to\infty}=\|A C^{-1}\|_{2\to\infty} &\le O(n^{1/6})
\end{align}
\end{thm}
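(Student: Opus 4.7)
The plan is to evaluate both norms by direct computation, exploiting the degree-$1$ generating function structure.

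First, observe that the first column of $\bfC$ in \cref{eq:2} is $(1, a^2, a^2\lambda, a^2\lambda^2, \ldots, a^2\lambda^{n-2})\tp$, so by the Toeplitz structure
\[
  \|\bfC\|_{1\to 2}^2 = 1 + a^4 \sum_{k=0}^{n-2} \lambda^{2k} = 1 + a^4 \cdot \frac{1-\lambda^{2(n-1)}}{1-\lambda^2}.
\]
Substituting $\lambda = 1 - n^{-2/3}$ gives $1-\lambda^2 = 2n^{-2/3} - n^{-4/3} \ge n^{-2/3}$, while $a^4 = n^{-2/3}(1-n^{-1/3})^2 \le n^{-2/3}$. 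Hence the ratio is $O(1)$, establishing the first bound.

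For the bound on $\|\bfB\|_{2\to\infty}$, I will work with generating functions, which is cleaner than direct matrix manipulation. The generating function of $\bfC$ is $c(x) = 1 + \frac{a^2 x}{1-\lambda x}$, and one verifies (or applies \cref{lem:mgfmult}) that $c(x)\inv = 1 - \frac{a^2 x}{1 - \mu x}$ where $\mu \coloneqq \lambda - a^2$, consistent with the $\bfC^{-1}$ shown in \cref{eq:2}. Since $\bfA$ corresponds to $1/(1-x)$, the matrix $\bfB = \bfA\bfC\inv$ has generating function
\[
  b(x) = \frac{1}{(1-x)\,c(x)} = \frac{1 - \mu x}{(1-x)(1 - \mu x) + a^2 x (1-x)} = \frac{1-\mu x}{(1-x)(1-\mu x + a^2 x)}.
\]
Using that $1 - \mu x + a^2 x = 1 - (\mu - a^2)x$ complicates things, so instead I perform partial fractions on $\frac{1}{(1-x)c(x)}$ directly: write $\frac{1}{(1-x)(1-\mu x)} - \frac{a^2 x}{(1-x)(1-\mu x)\,c(x)}$; but the cleanest path is to write $b(x) = \frac{c(x)^{-1}}{1-x}$ and partial-fraction $\frac{a^2 x}{(1-x)(1-\mu x)} = \frac{a^2}{1-\mu}\!\left(\frac{1}{1-x} - \frac{1}{1-\mu x}\right)$, yielding
\[
  b(x) = \left(1 - \tfrac{a^2}{1-\mu}\right)\frac{1}{1-x} + \tfrac{a^2}{1-\mu}\cdot\frac{1}{1-\mu x},
\qquad\text{so}\qquad b_k = \alpha + \beta\mu^k,
\]
with $\alpha \coloneqq 1 - \tfrac{a^2}{1-\mu}$ and $\beta \coloneqq \tfrac{a^2}{1-\mu}$.

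Now I plug in the parameter choices. With $\lambda = 1 - n^{-2/3}$ and $a^2 = n^{-1/3} - n^{-2/3}$, one computes $\mu = \lambda - a^2 = 1 - n^{-1/3}$, hence $1 - \mu = n^{-1/3}$ and $\beta = 1 - n^{-1/3}$, $\alpha = n^{-1/3}$. Since $\bfB$ is lower triangular Toeplitz, $\|\bfB\|_{2\to\infty}^2 = \sum_{k=0}^{n-1} b_k^2$. Expanding $b_k^2 = \alpha^2 + 2\alpha\beta\mu^k + \beta^2\mu^{2k}$ and summing gives three pieces:
\[
\sum_{k=0}^{n-1}\alpha^2 = n\alpha^2 = n^{1/3}, \quad
2\alpha\beta\sum_{k=0}^{n-1}\mu^k \le \frac{2\alpha\beta}{1-\mu} = 2\beta = O(1), \quad
\beta^2\sum_{k=0}^{n-1}\mu^{2k} \le \frac{\beta^2}{1-\mu^2} \le \frac{1}{n^{-1/3}} = n^{1/3}.
\]
Combining yields $\|\bfB\|_{2\to\infty}^2 = O(n^{1/3})$, which is the claimed $O(n^{1/6})$ bound after taking a square root.

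The only genuine obstacle is the algebraic verification that the parameter choices make the partial-fraction coefficients come out so cleanly, in particular that $1-\mu = n^{-1/3}$ and $\beta = 1 - n^{-1/3}$ are both $\Theta(n^{-1/3})$ and $\Theta(1)$ respectively; everything else is bookkeeping with geometric series. The choice of exponent $1/3$ in the parameters is exactly what balances the two dominant $\Theta(n^{1/3})$ contributions from the constant term $\alpha$ and the decaying $\beta\mu^k$ term.
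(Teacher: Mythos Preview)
Your proof is correct and follows essentially the same strategy as the paper: compute the Toeplitz coefficients of $\bfB$ explicitly, then bound the sum of their squares. The paper derives the formula $b_i = \frac{(1-\lambda) + a^2(\lambda-a^2)^{i-1}}{1+a^2-\lambda}$ by summing rows of $\bfC^{-1}$ directly and then expands and bounds $\sum_i b_i^2$; you obtain the equivalent closed form $b_k = \alpha + \beta\mu^k$ via partial fractions on the generating function $b(x)=c(x)^{-1}/(1-x)$, which is a bit cleaner but leads to the same three geometric-series pieces summing to $O(n^{1/3})$.
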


\begin{proof}

The maximum squared column norm of $\bfC$ is
\begin{equation}
    \|C\|_{1\to2}^2 = 1+ a^4\left(1+\lambda^2+\lambda^4+\cdots\lambda^{2(n-2)}\right)\leq 1+\frac{a^4}{1-\lambda^2}.\label{eq:3}
\end{equation}
Since $1-\lambda^2 = 2 n^{-2/3} - n^{-4/3} = \Theta(n^{-2/3})$ and $a^4 = n^{-2/3}(1-n^{-1/3})^2 = \Theta(n^{-2/3})$, we have $\|C\|_{1\to2}=\Theta(1)$.

Let $\bfb=[b_{n-1},\cdots,b_0]$ be the last row of $\bfB=\bfA\bfC^{-1}$. This is the longest row, so $\|B\|_{2\to\infty}^2 = \|b\|_2^2$. 
In the following, we bound each element of the vector $\bfb$. We have $b_0=1$ and, for any $i\in\{1,\cdots n-1\}$,
\begin{equation} 
    b_i= 1-a^2\left(1+(\lambda-a^2)+\cdots+(\lambda-a^2)^{i-2}\right)=1-\frac{a^2(1-(\lambda-a^2)^{i-1})}{1+a^2-\lambda}.
    \label{eq:5}
\end{equation}
Thus
\begin{align}
    \|B\|_{2\to\infty}^2 &=1 + \sum\limits_{i=1}^{n-1} \left(1-\frac{a^2(1-(\lambda-a^2)^{i-1}}{1+a^2-\lambda}\right)^2 \notag\\
    &= 1 + \sum\limits_{i=1}^{n-1} \left(\frac{(1-\lambda)+a^2(\lambda-a^2)^{i-1}}{1+a^2-\lambda}\right)^2 \notag\\
    &= 1 + \frac{1}{(1+a^2-\lambda)^2} \sum_{i=1}^{n-1} (1-\lambda)^2 + 2(1-\lambda)a^2(\lambda-a^2)^{i-1} + a^4(\lambda-a^2)^{2(i-1)} \notag\\
    &= 1 + \frac{(n-1)(1-\lambda)^2 + 2(1-\lambda)a^2\frac{1-(\lambda-a^2)^{n-1}}{1+a^2-\lambda}+a^4\frac{1-(\lambda-a^2)^{2(n-1)}}{1-(\lambda-a^2)^2}}{(1+a^2-\lambda)^2}\\
    &\le 1 + \frac{(n-1)(1-\lambda)^2 + 2(1-\lambda)a^2\frac{1}{1+a^2-\lambda}+a^4\frac{1}{1-(\lambda-a^2)^2}}{(1+a^2-\lambda)^2}.\label{eq:d1-b}
\end{align}
Now $a^2 = n^{-1/3}-n^{-2/3}$, $1-\lambda = -n^{-2/3}$, $1-\lambda+a^2 = n^{-1/3}$ and $1-(\lambda-a^2)^2 = 2n^{-1/3}-n^{-2/3}=\Theta(n^{-1/3})$.
Substituting these into \cref{eq:d1-b} gives
\begin{align*}
 \|B\|_{2\to\infty}^2 &\le 1 + \frac{(n-1)n^{-4/3} - 2 n^{-2/3} (n^{-1/3}-n^{-2/3})/n^{-1/3} + (n^{-1/3}-n^{-2/3})^2/(2n^{-1/3}-n^{-2/3})}{n^{-2/3}} \\
 &= 1 + \frac{n-1}{n^{2/3}} - 2 + \frac{2}{n^{1/3}} + \frac{n^{-2/3}-2n^{-1}+n^{-4/3}}{2n^{-1} - n^{-4/3}} \\
 &= 1 + \frac{n-1}{n^{2/3}} - 2 + \frac{2}{n^{1/3}} + \frac{n^{1/3}-2+n^{-1/3}}{2 - n^{-1/3}} \\
 &= \Theta(n^{1/3})
\end{align*}
as required.
\end{proof}

\section{Conjecture}

We give the following conjecture which improves the results in \cref{sec:rationalfunctionapprox}. If true, this would imply that \BLTs can work with $d = O(\log n)$ space complexity. 

\begin{conjecture}
    For all $n \in \mathbb{N}$ there exists a rational function $r : \mathbb{C} \to \mathbb{C}$ of degree $d=O(\log n)$ such that the following hold.
    \[
        \frac{1}{2\pi} \int_{-\pi}^\pi \left| \frac{1}{r(x)} - \frac{1}{\sqrt{1-x}} \right|^2 \mathrm{d}\theta \le \frac1n ~~~ \text{ where } ~~~ x = \exp(\sqrt{-1} \theta - 1/n).
    \]
    \[
        \frac{1}{2\pi} \int_{-\pi}^\pi \left| \frac{r(x)}{1-x} - \frac{1}{\sqrt{1-x}} \right|^2 \mathrm{d}\theta \le \frac1n ~~~ \text{ where } ~~~ x = \exp(\sqrt{-1} \theta - 1/n).
    \]
\end{conjecture}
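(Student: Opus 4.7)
The plan is to strengthen the rational approximation argument of Section~\ref{sec:rationalfunctionapprox} by exploiting the fact that we only need a weighted $L^2$ bound on a circle separated from the branch point $x=1$ by distance $\sim 1/n$, rather than uniform approximation on the whole unit disk (which forces $d = \Theta(\log^2 n)$ via Newman/Gopal--Trefethen). First, I would reduce both integrals to a single weighted $L^2$ approximation of $\sqrt{1-x}$ by $r(x)$. Writing
\[
\frac{1}{r(x)} - \frac{1}{\sqrt{1-x}} = \frac{\sqrt{1-x}-r(x)}{r(x)\sqrt{1-x}}, \qquad \frac{r(x)}{1-x} - \frac{1}{\sqrt{1-x}} = \frac{r(x)-\sqrt{1-x}}{1-x},
\]
and assuming $r$ has no zeros in a neighborhood of the circle $|x|=e^{-1/n}$ with $|r(x)| \gtrsim |\sqrt{1-x}|$ there, both integrals are controlled by
\[
I_r \;\coloneqq\; \frac{1}{2\pi}\int_{-\pi}^\pi \frac{|r(x) - \sqrt{1-x}|^2}{|1-x|^2}\,\mathrm{d}\theta, \quad x = \exp(\sqrt{-1}\theta - 1/n).
\]

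Second, I would invoke a Zolotarev--Stahl type rational approximation. Zolotarev's classical construction (refined via Gonchar--Stahl) produces an explicit degree-$d$ rational $\widetilde r(z)$ approximating $\sqrt{z}$ on $[a,1]$ with uniform error $\le C\exp(-\pi^2 d/\log(16/a))$, all of whose poles lie on the negative real axis. Choosing $a = c/n^2$ and $d = K\log n$ for a sufficiently large absolute constant $K$ gives uniform error $\le n^{-10}$. Setting $r(x) \coloneqq \widetilde r(1-x)$, the poles of $r$ all lie in the real set $(1,\infty)$, so $r$ is analytic in a neighborhood of the closed disk $\{|x|\le e^{-1/n}\}$. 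One then verifies the normalization $r(1)=0$ with matching leading order $\sim \sqrt{1-x}$ at the branch point; this can be arranged by choosing the Zolotarev approximant of the ``$Z_{d-1,d}$'' type (numerator degree one less than denominator).

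Third, I would upgrade the uniform bound on the real interval to a weighted complex bound on the circle $|x|=e^{-1/n}$. The error $e(x) \coloneqq r(x) - \sqrt{1-x}$ is analytic on the slit disk $\{|x|<1\} \setminus [1,\infty)$ and vanishes like $(1-x)^{1/2}$ at $x=1$, so $e(x)/\sqrt{1-x}$ extends analytically across $x=1$. Apply the Hadamard three-circles theorem (or Phragm\'en--Lindel\"of in a suitable slit domain) to propagate the $[a,1]$ bound to the arc $|x|=e^{-1/n}$, losing at most a polynomial-in-$n$ factor; this yields $|e(x)|/|1-x| \le n^{-5}$ uniformly on the circle, and hence $I_r \le n^{-10} \ll 1/n$. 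Finally, a trivial check that $|r|$ and $|\sqrt{1-x}|$ are of the same order on the circle closes the reduction from $I_r$ to the two target integrals.

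The hard part will be Step three: transferring the Zolotarev bound from $[a,1]$ to the complex circle near the branch point while preserving the exponential-in-$d$ rate. In the current proof of Section~\ref{sec:rationalsqrt}, the Gopal--Trefethen estimate gives only $\exp(-c\sqrt{d})$ on the whole unit disk, which is exactly what forces $d=\Theta(\log^2 n)$. The right tool is potential theory: the degree of best rational approximation decays like $\exp(-d/\mathrm{cap}(E,F))$ where $\mathrm{cap}(E,F)$ is the condenser capacity between the approximation set $E$ and the set $F$ carrying the singularities; for our geometry (circle of radius $e^{-1/n}$ versus the branch cut starting at distance $\sim 1/n$) this capacity scales like $\log n$, which is precisely the source of the conjectured $O(\log n)$ degree. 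Making this rigorous either via Stahl's convergence theorem for Pad\'e approximants off the branch cut, or by a direct analysis of the explicit Zolotarev elliptic-function formula evaluated on the shifted circle, is the main technical step to execute.
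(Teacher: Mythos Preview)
The paper does not prove this statement; it is explicitly labeled a \emph{conjecture} and left open. So there is no ``paper's proof'' to compare against, and your proposal is an attempt at a genuinely open problem.

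Your reduction in Step~1 is sound and mirrors Proposition~\ref{prop:parseval}. Step~2 (Zolotarev on $[a,1]$ with $a\sim 1/n^2$) is also a reasonable starting point. The difficulties are all in Step~3, and I see two concrete gaps.

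First, the analytic extension claim is incorrect. With $r$ rational and $r(1)=0$, write $r(x)=(1-x)s(x)$ for $s$ rational; then
\[
\frac{e(x)}{\sqrt{1-x}} \;=\; \frac{r(x)}{\sqrt{1-x}}-1 \;=\; \sqrt{1-x}\,s(x)-1,
\]
which is bounded near $x=1$ but still carries the branch cut of $\sqrt{1-x}$ along $[1,\infty)$. It does not extend analytically across $x=1$, so Hadamard three-circles (which needs analyticity on an annulus) does not apply as stated.

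Second, and more structurally, your route goes through a \emph{uniform} bound on the circle, and the capacity heuristic you invoke actually points back to $d=\Theta(\log^2 n)$, not $O(\log n)$. For the condenser formed by the circle $|x|=e^{-1/n}$ and the branch cut $[1,\infty)$, the gap is $\sim 1/n$, and the standard Zolotarev/Gonchar--Stahl rate is $\exp\!\big(-c\,d/\log n\big)$; achieving error $n^{-O(1)}$ then forces $d\gtrsim \log^2 n$. This is exactly the barrier the paper already hits in \Cref{sec:rationalsqrt}. The conjecture's only extra leverage over the paper's argument is that it asks for a \emph{weighted $L^2$} bound rather than a uniform one, but your plan discards that leverage by aiming for $|e(x)|/|1-x|\le n^{-5}$ pointwise. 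Any proof of the conjecture along these lines will almost certainly need to exploit the $L^2$ averaging directly (allowing larger error on the short arc near $\theta=0$ compensated by much smaller error elsewhere), rather than passing through a uniform estimate.
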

Compared to what we prove in \cref{sec:genfuncmat-approx,sec:rationalsqrt}, this conjecture is quantitatively stronger in that $d=O(\log n)$ instead of $d = O(\log^2 n)$ but qualitatively weaker in that (i) we prove an approximation guarantee for $|x|\le\exp(-1/n)$ rather than $|x|\le1$ and, (ii) rather than a uniform bound on $|r(x)-\sqrt{1-x}|$, we bound the integrals directly.

\section*{Acknowledgements}
\addcontentsline{toc}{section}{Acknowledgements}
We thank Andreas Terzis, Shuang Song, and Arun Ganesh for comments on the draft.
We thank Aleksandar Nikolov and Jalaj Upadhyay for helpful discussions about prior work.
We thank Mark Bun for making us aware of the results of \citet{Newman64}.

\addcontentsline{toc}{section}{References}
\printbibliography  
\appendix

\end{document}